\keywords{Computer Science, Logic in Computer Science}
\newcommand*\circled[1]{\tikz[baseline=(char.base)]{ % chktex 36
            \node[shape=circle,draw,solid,fill=white,inner sep=2pt] (char) {#1};}}
\newcommand*{\eventually}{\LTLdiamond}
\newcommand*{\globally}{\LTLsquare}
\newcommand*{\once}{\LTLdiamondminus}
\newcommand*{\henceforth}{\LTLsquareminus}
\newcommand*{\nextx}{\LTLcircle}
\newcommand*{\until}{\mathbin{\mathcal{U}}}
\newcommand*{\since}{\mathbin{\mathcal{S}}}
\newcommand*{\modelsfs}{\mathbin{\models_{\mkern-6mu f}^{\mkern-6mu +}}}
\newcommand*{\modelsfn}{\mathbin{\models_{\mkern-6mu f}^{\mkern-6mu \phantom{+}}}}
\newcommand*{\modelsfw}{\mathbin{\models_{\mkern-6mu f}^{\mkern-6mu -}}}
\newcommand*{\first}{\mathcal{B}^\rightarrow}
\newcommand*{\pfirst}{\mathcal{B}^\leftarrow}
\newcommand*{\guntil}{\mathbin{\mathfrak{U}}}
\newcommand*{\gsince}{\mathbin{\mathfrak{S}}}
\newcommand*{\ltl}{\textup{\textmd{$\textsf{LTL}_\textsf{fut}$}}}
\newcommand*{\ltlpast}{\textup{\textmd{\textsf{LTL}}}}
\newcommand*{\mtl}{\textup{\textmd{$\textsf{MTL}_\textsf{fut}$}}}
\newcommand*{\mtlunit}{\textup{\textmd{$\textsf{MTL}_\textsf{fut}$[$\Phi_{\textit{int}}$]}}}
\newcommand*{\mtlpast}{\textup{\textmd{\textsf{MTL}}}}
\newcommand*{\mtlpastcnt}{\textup{\textmd{\textsf{MTL[$\{C_n, \overset{\leftarrow}{C}_n\}_{n=2}^\infty$]}}}} % chktex 3
\newcommand*{\mtlpastg}{\textup{\textmd{\textsf{MTL[\texorpdfstring{$\guntil, \gsince$}{G, U}]}}}}
\newcommand*{\mtlbpast}{\textup{\textmd{\textsf{MTL[$\mathcal{B}^\leftrightarrows$]}}}}
\newcommand*{\fo}{\textup{\textmd{\textsf{FO[$<, +\mathbb{Q}$]}}}}
\newcommand*{\foone}{\textup{\textmd{\textsf{FO[\texorpdfstring{$<, +1$}{<, +1}]}}}}
\newcommand*{\mitl}{\textup{\textmd{$\textsf{MITL}_\textsf{fut}$}}}
\newcommand*{\mitlpast}{\textup{\textmd{\textsf{MITL}}}}
\newcommand{\ropen}[1]{[#1)} % chktex 9
\newcommand{\lopen}[1]{(#1]} % chktex 9
\newcommand{\maxit}{\mathit{max}} % chktex 35
\theoremstyle{defC}
\newtheorem{exaC}[thm]{Example}
\begin{document}

\title{On the Expressiveness and Monitoring of Metric Temporal Logic}
\titlecomment{Part of this work appeared in the \emph{Proceedings of the 8th International Workshop on Reachability Problems} (2014), and in the
\emph{Proceedings of the 5th International Conference on Runtime Verification} (2014). \\
\indent
Jo\"el Ouaknine was supported by ERC grant AVS-ISS (648701) and by DFG grant 389792660 as part of TRR 248 (see \url{https://perspicuous-computing.science}). James Worrell was supported by EPSRC Fellowship EP/N008197/1.}

\author[H.-M.~Ho]{Hsi-Ming~Ho\rsuper{a}}

\author[J.~Ouaknine]{Jo\"el~Ouaknine\rsuper{b}}

\author[J.~Worrell]{James~Worrell\rsuper{a}}

\address{\lsuper{a}Department of Computer Science, University of Oxford, Oxford, UK}
\email{hsimho@gmail.com, jbw@cs.ox.ac.uk}

\address{\lsuper{b}Max Planck Institute for Software Systems, Saarland Informatics Campus, Germany}
\email{joel@mpi-sws.org}

\begin{abstract}
It is known that \emph{Metric Temporal Logic} (\mtlpast{}) is strictly
less expressive than the \emph{Monadic First-Order Logic of Order and Metric}
(\foone{}) when interpreted over timed words; this remains true even when the time domain is
bounded \emph{a priori}.
In this work, we present an extension of \mtlpast{} with
the same expressive power as \foone{} over bounded timed words (and also,
trivially, over time-bounded signals).
We then show that expressive completeness also holds
in the general (time-unbounded) case if we allow the use of rational constants $q \in \mathbb{Q}$ in formulas.
This extended version of \mtlpast{} therefore yields a definitive real-time
analogue of Kamp's theorem.
As an application, we propose a \emph{trace-length independent} monitoring
procedure for our extension of \mtlpast{},
the first such procedure in a dense real-time setting.
\end{abstract}

\maketitle

\section{Introduction}

\paragraph{\emph{Expressiveness of metric temporal logics}}
One of the most prominent specification formalisms used in
verification is \emph{Linear Temporal Logic} (\ltlpast{}), which is
typically interpreted over the non-negative integers or reals. A
celebrated result of Kamp~\cite{Kamp1968} states that, in either case,
\ltlpast{} has precisely the same expressive power as the \emph{Monadic
   First-Order Logic of Order} ($\mathsf{FO[<]}$). These logics,
however, are inadequate to express specifications for systems whose
correct behaviour depends on quantitative timing requirements. Over
the last three decades, much work has therefore gone into lifting
classical verification formalisms and results to the real-time
setting. \emph{Metric Temporal Logic} (\mtlpast{}),\footnote{In this article, we refer to the logic with
constrained `Until' and `Since' modalities exclusively as `\mtlpast{}',
and use the term `metric temporal logics'
in a broader sense to refer to temporal logics with modalities definable in \foone{} (see below).}
which extends \ltlpast{} by constraining the modalities by time intervals, was introduced
by Koymans~\cite{Koymans1990} in 1990 and has emerged as a central
real-time specification formalism.
\mtlpast{} enjoys two main semantics, depending intuitively on whether
atomic formulas are interpreted as \emph{state predicates} or as
(instantaneous) \emph{events}. In the former, the system is assumed to
be under observation at every instant in time, leading to a
`continuous' semantics based on \emph{signals},
whereas in the latter, observations of the system are taken to be
(finite or infinite) sequences of timestamped snapshots, leading to a
`pointwise' semantics based on \emph{timed words}---this is the prevalent
interpretation for systems modelled as timed automata~\cite{Alur1994}.
In both cases, the time domain is usually taken to be the non-negative
real numbers. Both semantics have been extensively studied; see,
e.g.,~\cite{Ouaknine2008} for a historical account.

Alongside these developments, researchers proposed the \emph{Monadic
   First-Order Logic of Order and Metric \emph{($\foone{}$)}} as a natural quantitative extension of $\mathsf{FO[<]}$.
Like \mtlpast{}, \foone{} can be interpreted over signals~\cite{Hirshfeld2004} or timed words~\cite{Wilke1994}.
An obvious question to ask is whether
\mtlpast{} has the same expressive power as \foone{}, i.e.,~an analogue of Kamp's theorem holds in the real-time setting.
Unfortunately, Hirshfeld
and Rabinovich~\cite{Hirshfeld2007} showed that no `finitary'
extension of \mtlpast{}---and \emph{a fortiori} \mtlpast{} itself---could have
the same expressive power as $\foone{}$ over the reals.\footnote{Hirshfeld and Rabinovich's result was only stated and
proved for the continuous semantics, but we believe that their
approach would also carry through for the pointwise semantics.
In any case, using different techniques Prabhakar and D'Souza~\cite{Prabhakar2006}
and Pandya and Shah~\cite{Pandya2011}
independently showed that \mtlpast{} is
strictly weaker than \foone{} in the pointwise semantics.}
Still, in the continuous semantics, \mtlpast{} can be made expressively complete for \foone{}
by extending the logic with an infinite family of `\emph{counting modalities}'~\cite{Hunter2013}
or considering only \emph{bounded} time domains~\cite{Ouaknine2009,Ouaknine2010}.
Nonetheless, and rather surprisingly, \mtlpast{} with counting modalities
remains strictly less expressive than $\foone{}$ over bounded time domains in the pointwise semantics, i.e.,~over timed words of
bounded duration.

\paragraph{\emph{Monitoring of real-time specifications}}
In recent years, \emph{runtime verification} (see~\cite{Leucker2009, Sokolsky2011} for surveys) has emerged as a light-weight complementary technique
to \emph{model checking}~\cite{Clarke1981, Queille1982}.
It is particularly useful for systems whose
internal details are either too complex to be modelled faithfully
or simply not accessible.
Roughly speaking, while in model checking one considers all behaviours of the model,
in runtime verification one focusses on one particular behaviour---the current one.
Given a specification $\varphi$ and a finite timed word $\rho$ (which we call a finite \emph{trace} in this context),
the \emph{prefix} problem asks whether all infinite traces
extending $\rho$ satisfy $\varphi$.
The \emph{monitoring} problem, as far as we are concerned here,
can be seen as an \emph{online}
version of the prefix problem where $\rho$ grows incrementally (i.e.,~one event at a time):
the monitoring procedure (\emph{monitor}) for $\varphi$ is executed in parallel with the system under scrutiny,
and it is required to output an answer when either (i)~all infinite extensions of the current trace satisfy $\varphi$, or (ii)~no infinite extension of the
current trace can satisfy $\varphi$.

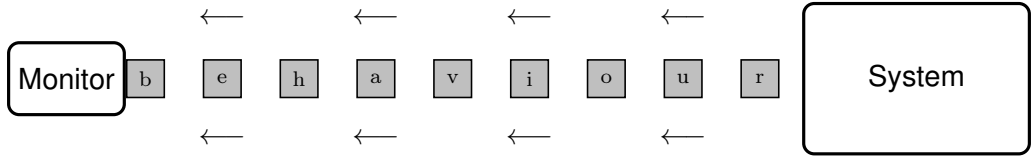
\begin{figure}[h]
\centering
\begin{tikzpicture}[->,>=stealth', auto, node distance=0.8cm,
                    semithick, bend angle=25, every state/.style={fill=none,draw=black,text=black,shape=rectangle}]

\node[state]					(v2) [very thick, minimum width=3cm, minimum height=2cm, rounded corners]	{\textsf{System}};

\node[state]					(vr) [fill=lightgray, left=0.3cm of v2, minimum width=0.5cm, minimum height=0.5cm] {\scriptsize r};
\node[state]					(vu) [fill=lightgray, left=0.5cm of vr, minimum width=0.5cm, minimum height=0.5cm] {\scriptsize u};
\node[state]					(vu') [draw=none, above of=vu] {$\longleftarrow$};
\node[state]					(vu') [draw=none, below of=vu] {$\longleftarrow$};
\node[state]					(vo) [fill=lightgray, left=0.5cm of vu, minimum width=0.5cm, minimum height=0.5cm] {\scriptsize o};
\node[state]					(vi) [fill=lightgray, left=0.5cm of vo, minimum width=0.5cm, minimum height=0.5cm] {\scriptsize i};
\node[state]					(vi') [draw=none, above of=vi] {$\longleftarrow$};
\node[state]					(vi') [draw=none, below of=vi] {$\longleftarrow$};
\node[state]					(vv) [fill=lightgray, left=0.5cm of vi, minimum width=0.5cm, minimum height=0.5cm] {\scriptsize v};
\node[state]					(va) [fill=lightgray, left=0.5cm of vv, minimum width=0.5cm, minimum height=0.5cm] {\scriptsize a};
\node[state]					(va') [draw=none, above of=va] {$\longleftarrow$};
\node[state]					(va') [draw=none, below of=va] {$\longleftarrow$};
\node[state]					(vh) [fill=lightgray, left=0.5cm of va, minimum width=0.5cm, minimum height=0.5cm] {\scriptsize h};
\node[state]					(ve) [fill=lightgray, left=0.5cm of vh, minimum width=0.5cm, minimum height=0.5cm] {\scriptsize e};
\node[state]					(ve') [draw=none, above of=ve] {$\longleftarrow$};
\node[state]					(ve') [draw=none, below of=ve] {$\longleftarrow$};
\node[state]					(vb) [fill=lightgray, left=0.5cm of ve, minimum width=0.5cm, minimum height=0.5cm] {\scriptsize b};

\node[state]					(v1) [very thick, left=0cm of vb, rounded corners]	{\textsf{Monitor}};

\end{tikzpicture}
\caption{A monitor receives the trace incrementally (one event at a time).}%
\label{fig:mon-example}
\end{figure}

Ideally, we would also like to require a monitoring procedure to be \emph{trace-length independent}~\cite{Rosu2012, Bauer2013} in the sense
that its time and space requirements should not depend on the length of
the input trace (this is important since input traces in practical applications tend to be very long; cf., e.g.,~\cite{Basin2014}).
In the untimed case, this is not difficult to achieve: one can translate
\ltlpast{} formulas into B\"uchi automata~\cite{Gastin2003, Dax2010a} and
turn them into efficient trace-length independent monitors~\cite{Arafat2005}.
Unfortunately, a number of obstacles hinder the application of this methodology
to the real-time setting:
it is known that \mtlpast{} is expressively incomparable with timed automata;
even though certain fragments of \mtlpast{} can be translated into timed automata,
the latter are not always determinisable as required for the purpose of monitoring~\cite{Baier2009}.
For this reason, researchers purposed automata-free monitoring procedures that work
directly with metric temporal logic formulas (e.g.,~\cite{Maler2004, Basin2011}).
However, it proved difficult to maintain trace-length independence while allowing
\mtlpast{} in its full generality, i.e.,~with unbounded intervals and nesting of future and past modalities.
Almost all monitoring procedures for metric temporal logics in the
literature have certain syntactic or semantic limitations, e.g.,
only allowing bounded future modalities or assuming integer time.
A notable exception is~\cite{Baldor2012} which handles full \mtlpast{} over
signals, but which unfortunately fails to be trace-length independent.

\paragraph{\emph{Contributions}}
We study the expressiveness of various fragments and extensions of \mtlpast{}
over timed words.
In particular, we highlight a fundamental deficiency in the pointwise interpretation of \mtlpast{}.
To amend this, we propose new (first-order definable) modalities \emph{generalised `Until'} ($\guntil$) and \emph{generalised `Since'} ($\gsince$). With these new modalities and the techniques developed in~\cite{Pandya2011, Ouaknine2009, Hunter2012},
we establish the following results:
\begin{enumerate}[label=(\roman*).] % chktex 36
\item There is a strict hierarchy of metric temporal logics based on their expressiveness over bounded timed words (see Figure~\ref{fig:expsummary}
where the arrows indicate `strictly more expressive than'
and the edges indicate `equally expressive'). Note that this hierarchy collapses in the continuous semantics.
\item The metric temporal logic with the new modalities $\guntil$ and $\gsince$ (denoted \mtlpastg{}) is expressively complete for \foone{}
over bounded timed words.
\item The time-bounded satisfiability and model-checking problems for \mtlpastg{} are $\mathrm{EXPSPACE}$-complete, the same as that of \mtlpast{}.
\item Any \mtlpastg{} formula is equivalent to a \emph{syntactically separated} one.
\item \mtlpastg{} is expressively complete for \fo{} (the rational variant of \foone{}) over unbounded (i.e.,~infinite non-Zeno) timed words
if we allow the use of rational constants in modalities.
\end{enumerate}

\begin{figure}[ht]
\begin{minipage}[b]{0.45\linewidth}
\centering

\begin{tikzpicture}[>=to, scale=.8, transform shape]

\node		(v0) at (0, 0)	{\foone{}};

\node		(v1c) at (0.5, -2) {\raisebox{5.5pt}{\mtlpastcnt{}}};

\node		(v1) at (-2, -2) {\mtlpast{}};

\node		(v2) at (-4, -4) {\mtl{}};

\path		(v0) edge [->] (v1)
			(v0) edge [->] (v1c)
			(v1) edge [->] (v2);

\end{tikzpicture}
\caption*{known results}
\end{minipage}
\begin{minipage}[b]{0.45\linewidth}
\centering
\begin{tikzpicture}[>=to, scale=.8, transform shape]

\node		(v) at (2.5, 0) {\mtlpastg{}};

\node		(v0) at (0, 0)	{\foone{}};

\node		(v0') at (-1, -1) {\mtlbpast{}};

\node		(v1) at (-2, -2) {\mtlpast{}};

\node		(v1c) at (0.8, -2) {\raisebox{5.5pt}{\mtlpastcnt{}}};

\node		(v1') at (-3, -3) {\mtlunit{}};

\node		(v2) at (-4, -4) {\mtl{}};

\path		(v0) edge [->] (v0')
			(v0') edge [->] (v1)
			(v1) edge [->] (v1')
			(v1) edge [-] (v1c)
			(v1') edge [->] (v2)
			(v0) edge [-] (v);

\end{tikzpicture}
\caption*{our results}
\end{minipage}
\caption{Expressiveness results over bounded timed words.}%
\label{fig:expsummary}
\end{figure}

\noindent
For monitoring, we focus on a restricted version of the monitoring problem of \mtlpastg{}, based on the
notion of \emph{informative prefixes}~\cite{Kupferman2001a}.
The main idea of our approach is to work with \mtlpastg{} formulas of a special form:
\ltlpast{} formulas over atomic formulas comprised of bounded \mtlpastg{} formulas.\footnote{It follows
from the syntactic separation result that
no expressiveness is sacrificed in restricting to this fragment.}
The truth values of bounded \mtlpastg{} formulas can be computed and stored efficiently with a dynamic programming algorithm; these values are then used as input to
deterministic finite automata obtained from `backbone' \ltlpast{} formulas.
As a result, we obtain the first trace-length independent
monitoring procedure for a metric temporal logic that subsumes \mtlpast{}.
The procedure is free of dynamic memory allocations, linked lists, etc., and hence can be
implemented efficiently (the \emph{amortised} running time per event is linear
in the number of subformulas in all bounded formulas). To be more precise:
\begin{enumerate}[label=(\roman*)., start=6] % chktex 36
\item We give a trace-length independent monitoring procedure (which detects informative good/bad prefixes) for \ltlpast{} formulas
over atomic formulas comprised of bounded \mtlpastg{} formulas.
\item For an arbitrary \mtlpastg{} formula, we show that its informative good/bad prefixes are preserved
by the syntactic rewriting rules (and thus can be monitored in a
trace-length independent manner).
\end{enumerate}

\paragraph{\emph{Related work}}
Bouyer, Chevalier, and Markey~\cite{Bouyer2005} showed that \mtl{} (the future-only fragment of \mtlpast{})
is strictly less expressive than \mtlpast{} in both the continuous and pointwise semantics.
This, together with the aforementioned results~\cite{Hirshfeld2007, Pandya2011}, form a strict hierarchy
of expressiveness that holds in the both semantics:
\[
\mtl{} \subsetneq \mtlpast{} \subsetneq \foone{} \,.
\]
Ouaknine, Rabinovich, and Worrell~\cite{Ouaknine2009} showed that the hierarchy collapses in the
continuous semantics when one considers bounded time domains of the form $\ropen{0, N}$.
Our results show that this is not the case in the pointwise semantics.

Another route to expressive completeness is by allowing rational constants.
In particular, counting modalities become expressible in \mtlpast{}~\cite{Bouyer2005}.
Exploiting this observation, Hunter, Ouaknine, and Worrell~\cite{Hunter2012} showed that \mtlpast{} with rational constants
is expressively complete for \fo{} in the continuous semantics.
However, as can be immediately derived from a result of Prabhakar and D'Souza~\cite{Prabhakar2006},
this pleasant result does not hold in the pointwise semantics.
On the other hand, D'Souza and Tabareau~\cite{DSouza2004} showed that
\mtlpast{} with rational constants is expressively complete for \textmd{\textsf{rec-TFO[$\eventually, \once$]}} (an `input-determined' fragment of \fo{}) in the pointwise semantics.
We complement these results by extending \mtlpast{} with the new modalities
$\guntil$ and $\gsince$ to make it expressively complete for \fo{} in the pointwise semantics.

In a pioneering work, Thati and Ro{\c{s}}u~\cite{Thati2005} proposed a rewriting-based
monitoring procedure for \mtlpast{} over integer-timed traces.
Their procedure is trace-length independent and amenable to efficient implementations.
However, trace-length independent monitoring of \mtlpast{} is not possible in dense real-time settings: a monitor would have to `remember' an
infinite number of timestamps. For this reason, researchers often impose a
\emph{bounded-variability} assumption on input traces, i.e.,~only a bounded number of events may occur in any time unit.
Under such an assumption, Nickovic and Piterman~\cite{Nickovic2010a} showed that \mtl{} formulas can be
translated into deterministic timed automata.
Unfortunately, their approach does not easily extend to full \mtlpast{}.

It is known that the non-punctual fragment of \mtlpast{}, called \mitlpast{}, can be
translated into timed automata. Since the standard constructions~\cite{Alur1992, Alur1996}
are notoriously complicated, there have been some proposals for simplified or
improved constructions~\cite{Maler2006, Kini2011, DSouza2013, Brihaye2014, Brihaye2017}.
The difficulty in using these constructions
for monitoring, again, lies in the fact that timed automata cannot be
determinised in general.  In principle one can carry out on-the-fly
determinisation for input traces of bounded variability (cf., e.g.,~\cite{Tripakis2002, Baier2009});
however, it is not clear that this approach can yield an efficient procedure.

\section{Preliminaries}

\subsection{Automata and logics for real-time}

\paragraph{\emph{Timed words}}

Let the \emph{time domain} $\mathbb{T}$ be a subinterval of $\mathbb{R}_{\geq 0}$ that contains $0$.
A \emph{time sequence} $\tau = \tau_0\tau_1\dots$ is a non-empty finite or infinite sequence over
$\mathbb{T}$ (\emph{timestamps}) that satisfies the requirements below (we
denote the length of $\tau$ by $|\tau|$):
\begin{itemize}
\item \emph{Initialisation}: $\tau_0 = 0$

\item \emph{Strict monotonicity}: For all $i$, $0 \leq i < |\tau| - 1$, we
have $\tau_i < \tau_{i+1}$.\footnote{This requirement is chosen to simplify
the presentation; all the results
still hold (with some minor modifications) in the case of weakly-monotonic time,
i.e.,~requiring instead $\tau_i \leq \tau_{i+1}$ for all $i$, $0 \leq i < |\tau| - 1$.}
\end{itemize}
If $\tau$ is infinite we require it to be unbounded, i.e.,~we disallow
so-called Zeno sequences.
Given a finite alphabet $\Sigma$, a \emph{$\mathbb{T}$-timed word} over
$\Sigma$ is a pair $\rho = (\sigma, \tau)$ where $\sigma =
\sigma_0\sigma_1\dots$ is a non-empty finite or infinite word over
$\Sigma$ and $\tau$ is a time sequence over $\mathbb{T}$ of the same length.
We refer to a $\mathbb{T}$-timed word simply as a \emph{timed word} when $\mathbb{T} = \mathbb{R}_{\geq 0}$.\footnote{By the non-Zeno requirement, if $\mathbb{T}$ is bounded then a $\mathbb{T}$-timed word must be a finite timed word.}
We refer the pair $(\sigma_i, \tau_i)$ as the \emph{$i^{th}$ event} in $\rho$,
and define the \emph{distance} between $i^{th}$ and $j^{th}$ ($i \leq j$) events to
be $\tau_j - \tau_i$.
In this way, a timed word can be equivalently regarded as a sequence of events.
We denote by $|\rho|$ the number of events in $\rho$.
A \emph{position} in $\rho$ is a number $i$ such that $0 \leq i < |\rho|$.
The \emph{duration} of $\rho$ is defined as $\tau_{|\rho| - 1}$ if $\rho$ is finite.
We write $t \in \rho$ if $t$ is equal to one of the timestamps in $\rho$.
For a finite alphabet $\Sigma$, we write $T\Sigma^*$ and $T\Sigma^\omega$ for the respective sets of
finite and infinite timed words over $\Sigma$.
A \emph{timed (finite-word) language} over $\Sigma$ is a subset of $T\Sigma^\omega$ ($T\Sigma^*$).

\paragraph{\emph{Timed automata}}
The most popular model for real-time systems are \emph{timed automata}~\cite{Alur1994}, introduced by Alur and Dill in the early 1990s. Timed automata extends finite automata by real-valued variables (called \emph{clocks}).

\begin{defi}
Given a set of clocks $X$, the set $\mathcal{G}(X)$ of clock constraints $g$ is
defined inductively by
\[
g ::= \mathbf{true} \,\mid\, x \bowtie c \,\mid\, g_1 \wedge g_2
\]
where $x \in X$, $c \in \mathbb{N}$ and $\bowtie \; \in \{<, \leq, >, \geq\}$.
\end{defi}
\begin{defi}
A (non-deterministic) timed automaton $\mathcal{A}$ is a tuple $\langle \Sigma, S, S_0, X, I, E, F \rangle$ where
\begin{itemize}
\item $\Sigma$ is a finite alphabet
\item $S$ is a finite set of locations
\item $S_0 \subseteq S$ is the set of initial locations
\item $X$ is a finite set of clocks
\item $I: S \mapsto \mathcal{G}(X)$ is a mapping that labels each location in $S$ with a clock constraint in $\mathcal{G}(X)$ (an `invariant')
\item $E \subseteq S \times \Sigma \times 2^X \times \mathcal{G}(X) \times S$ is the set of edges.
	An edge $\langle s, a, \lambda, g, s' \rangle$
	denotes an $a$-labelled edge from location $s$ to location $s'$ where $g$ (a `guard') specifies
	when the edge is enabled and $\lambda \subseteq X$ is the set of clocks to be reset with this edge
\item $F$ is the set of accepting locations.
\end{itemize}
\end{defi}

\noindent
We say that $\mathcal{A}$ is \emph{deterministic} if it (i) has only one initial location and
(ii) for each $s \in S$, $a \in \Sigma$ and every pair of edges $\langle s, a, \lambda_1, g_1, s_1 \rangle$,
$\langle s, a, \lambda_2, g_2, s_2 \rangle$, $g_1$ and $g_2$ are mutually exclusive (i.e.,~$g_1 \wedge g_2$ is unsatisfiable).
We say that $\mathcal{A}$ is \emph{complete} if for each $s \in S$ and $a \in \Sigma$,
the disjunction of the clock constraints of the $a$-labelled edges starting
at $s$ is a valid formula.

Assume that $\mathcal{A}$ has $n$ clocks. We define its set of clock values as
$\textsf{Val} = [0, c_{\maxit}] \cup \{\top\}$ where $c_{\maxit}$ is the maximum constant
appearing in $\mathcal{A}$. A \emph{state} of $\mathcal{A}$ as a pair $(s, \mathbf{v})$ where
$s \in S$ is a location and $\mathbf{v} \in \textsf{Val}^n$ is a \emph{clock valuation}. Write $\mathbf{v}(x)$ for
the value of clock $x$ in $\mathbf{v}$. We denote by $Q = S \times \textsf{Val}^n$ the set of all states of $\mathcal{A}$.
A \emph{run} of $\mathcal{A}$ on a timed word can be seen as follows: the automaton takes some edge when an event arrives, otherwise it stays in the same location as time elapses.
More precisely, $\mathcal{A}$ induces a labelled transition system $\mathcal{T_A} = \langle Q, \leadsto, \rightarrow \rangle$
where $\leadsto \; \subseteq Q \times \mathbb{R}_{> 0} \times Q$ is the \emph{delay-step relation} and $\rightarrow \; \subseteq
Q \times \Sigma \times Q$ is the \emph{discrete-step relation}. In these steps, corresponding invariants and guards must be met (define $\top > c$ for all constants $c$):
\begin{itemize}
\item For $(s, \mathbf{v}) \overset{t}{\leadsto} (s', \mathbf{v'})$, $s' = s$, $\mathbf{v'} = \mathbf{v} + t$
		and $\mathbf{v} + t' \models I(s)$ for all $0 \leq t' \leq t$.
\item For $(s, \mathbf{v}) \overset{a}{\rightarrow} (s', \mathbf{v'})$, there is an edge $\langle s, a, \lambda, g, s' \rangle \in E$
		such that $\mathbf{v'} = \mathbf{v}[\lambda := 0]$ and $\mathbf{v} \models g$.
\end{itemize}
The clock valuation $\mathbf{v} + t$ maps each clock $x$ to $\mathbf{v}(x) + t$ if $\mathbf{v}(x) + t \leq c_{\maxit}$, otherwise $\top$.
$\mathbf{v}[\lambda := 0]$ maps $x$ to $\mathbf{v}(x)$ if $x \notin \lambda$, otherwise $0$.
Formally, a run of $\mathcal{A}$ on $\rho = (\sigma, \tau)$ is an alternating
sequence of delay steps and discrete steps
\[
(s_0, \mathbf{v}_0) \overset{\sigma_0}{\rightarrow} (s_1, \mathbf{v}_1) \overset{d_0}{\leadsto} (s_2, \mathbf{v}_2)
\overset{\sigma_1}{\rightarrow} (s_3, \mathbf{v}_3) \overset{d_1}{\leadsto} (s_4, \mathbf{v}_4) \overset{\sigma_2}{\rightarrow} \dots
\]
where $d_i = \tau_{i+1} - \tau_{i}$ for $i \geq 0$, $s_0 \in S_0$ and $\mathbf{v}_0 = 0^n$.
A finite timed word $\rho'$ is \emph{accepted} by $\mathcal{A}$ if there is an \emph{accepting} run (i.e.,~ending in an
accepting location) of $\mathcal{A}$ on $u'$.
We can also equip $\mathcal{A}$ with a B\"uchi acceptance condition; in this case,
a run is \emph{accepting} if it visits an accepting location infinitely often,
and an infinite timed word $\rho$ is \emph{accepted} by $\mathcal{A}$ if there is
such a run of $\mathcal{A}$ on $\rho$.
The \emph{timed (finite-word) language} defined by $\mathcal{A}$ is the set of (finite) timed words
accepted by $\mathcal{A}$.
Note that timed automata are not closed under complementation; for example,
the complement of the timed language accepted by the timed automaton in the
example below cannot be recognised by any timed automaton.

\begin{exaC}[\cite{Alur2004}]
Consider the timed automaton with $\Sigma = \{a, b\}$ in Figure~\ref{fig:taexample}.
The automaton accepts timed words containing an $a$ event at some time $t$ such that no
event occurs at time $t + 1$.

\begin{figure}[ht]
\centering
\begin{tikzpicture}[->, >=stealth', shorten >=1pt, auto, node distance=6cm, transform shape, scale=0.8,
							semithick, bend angle=25, every state/.style={fill=none,draw=black,text=black,shape=circle}]

\node[state]					(l0) [initial left, initial text={}]	{\large $l_0$};
\node[state, accepting]					(l1) [right=4cm of l0]	{\large $l_1$};

\path
			(l0)	edge node [align=center] {\large $a$ \\ $x := 0$} (l1)
			(l1)	edge [loop above] node [align=center] {\large $a$, $b$ \\ $x \neq 1$} (l1)
			(l0)	edge [loop above] node [align=center] {\large $a$, $b$} (l0);
\end{tikzpicture}
\caption{A timed automaton.}%
\label{fig:taexample}
\end{figure}
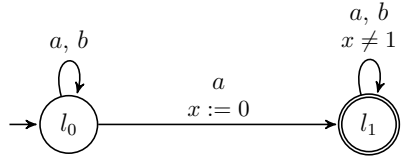

\end{exaC}

\paragraph{\emph{Monadic First-Order Logic of Order and Metric}}
We now define the \emph{Monadic First-Order Logic of Order and Metric} (\foone{})~\cite{Wilke1994}
which subsumes all the other logics discussed in this article.

\begin{defi}
Given a set of monadic predicates $\mathbf{P}$,
the set of \emph{\foone{}} formulas is generated by the grammar
\[
    \begin{array}{rcl}
    \vartheta & ::= & \mathbf{true} \,\mid\, P(x) \,\mid\, x < x' \,\mid\, d(x, x') \sim c \,\mid\, \vartheta_1 \wedge \vartheta_2 \,\mid\, \neg \vartheta \,\mid\,
    \exists x \, \vartheta \,,
    \end{array}
\]
where $P \in \mathbf{P}$, $x, x'$ are variables, $\sim \; \in \{ <, > \}$ and
$c \in \mathbb{N}$.\footnote{Note that whilst we refer to the logic as \foone{}, we adopt an equivalent definition where binary distance predicates $d(x, x') \sim c$ (as in~\cite{Wilke1994}) are used in place of the usual $+1$ function symbol.}
\end{defi}
The fragment where $d(x, x') \sim c$ is absent is called the
\emph{Monadic First-Order Logic of Order} (\textup{\textmd{\textsf{FO[$<$]}}}).

\paragraph{\emph{Metric temporal logics}}
Formulas of metric temporal logics are
\foone{} formulas (with a single free variable)
built from monadic predicates using Boolean connectives and \emph{modalities} (or \emph{operators}).
A $k$-ary modality is defined by an \mbox{\foone{}} formula $\varphi(x, X_1, \dots, X_k)$
with a single free variable $x$ and $k$ free monadic predicates $X_1, \dots, X_k$.
For example, the \mtlpast{}~\cite{Koymans1990} modality $\until_{(0, 5)}$
is defined by the \foone{} formula
\[
    \arraycolsep=0.5ex
    \begin{array}{rcll}
    \until_{(0, 5)}(x, X_1, X_2) & = & \exists x' \, \Bigl( & x < x' \wedge d(x, x') < 5 \wedge X_2(x') \\
    & & & {} \wedge \forall x'' \, \bigl( x < x'' \wedge x'' < x' \implies X_1(x'') \bigr) \Bigr) \,.
    \end{array}
\]
The \mtlpast{} formula $\varphi_1 \until_{(0, 5)} \varphi_2$ (usually written
in infix notation) is obtained by substituting \mtlpast{} formulas $\varphi_1, \varphi_2$ for $X_1, X_2$, respectively.

\begin{defi}
Given a set of monadic predicates $\mathbf{P}$, the set of \emph{\mtlpast{}} formulas is generated by the grammar
\[
    \begin{array}{rcl}
    \varphi & ::= & \mathbf{true} \,\mid\, P \,\mid\, \varphi_1
    \wedge \varphi_2 \,\mid\, \neg \varphi \,\mid\, \varphi_1 \until_I
    \varphi_2 \,\mid\, \varphi_1 \since_I \varphi_2 \,,
    \end{array}
\]
where $P \in \mathbf{P}$ and $I \subseteq (0, \infty)$ is an interval
with endpoints in $\mathbb{N} \cup \{\infty\}$.
\end{defi}
The (future-only) fragment \mtl{} is obtained by disallowing subformulas
of the form $\varphi_1 \since_I \varphi_2$.
We write $|I|$ for $\sup(I) - \inf(I)$.
If $I$ is not present as a subscript then it is assumed to be $(0, \infty)$.
We sometimes use pseudo-arithmetic expressions to denote intervals, e.g.,~`$\geq 1$' denotes $\ropen{1, \infty}$
and `$= 1$' denotes the singleton $\{1\}$.
We also employ the usual syntactic sugar,
e.g.,~$\mathbf{false} \equiv \neg \mathbf{true}$, $\eventually_I \varphi \equiv \mathbf{true} \, \until_I \varphi$,
$\once_I \varphi \equiv \mathbf{true} \since_I \varphi$,
$\globally_I \varphi \equiv \neg \eventually_I \neg
\varphi$ and $\nextx_I \varphi \equiv \mathbf{false} \until_I \varphi$, etc.
For convenience, we also use `weak' temporal operators as syntactic sugar, e.g.,~$\varphi_1 \until^w_I \varphi_2 \equiv \varphi_1
\wedge (\varphi_1 \until_I \varphi_2)$ if $0 \notin I$ and $\varphi_1
\until^w_I \varphi_2 \equiv \varphi_2 \vee \left( \varphi_1 \wedge
(\varphi_1 \until_I \varphi_2) \right)$ if $0 \in I$ (we allow $0 \in I$ in the case of weak temporal operators).
We denote by $|\varphi|$ the number of subformulas in $\varphi$.

\paragraph{\emph{The pointwise semantics}}

With each $\mathbb{T}$-timed word $\rho = (\sigma, \tau)$ over $\Sigma_{\mathbf{P}} = 2^\mathbf{P}$ we associate a structure $M_\rho$.
Its universe $U_\rho$ is the subset $\{ \tau_i \mid 0 \leq i < |\rho|\}$ of $\mathbb{T}$.
The order relation $<$ and monadic predicates in $\mathbf{P}$ are interpreted in the expected way, e.g.,~$P(\tau_i)$ holds in $M_\rho$ iff $P \in \sigma_i$.
The binary \emph{distance predicate} $d(x, x') \sim c$ holds iff $|x - x'| \sim c$.
The satisfaction relation is defined inductively as usual.
We write $M_\rho, t_0, \dots, t_{n-1} \models \vartheta(x_0, \dots, x_{n-1})$
(or $\rho, t_0, \dots, t_{n-1} \models \vartheta(x_0, \dots, x_{n-1})$)
if $t_0, \dots, t_{n-1} \in U_\rho$
and $\vartheta(t_0, \dots, t_{n-1})$ holds in $M_\rho$.
We say that two \foone{} formulas $\vartheta_1(x)$ and $\vartheta_2(x)$ are \emph{equivalent} over
$\mathbb{T}$-timed words if for all $\mathbb{T}$-timed words $\rho$ and $t \in U_\rho$,
\[
\rho, t \models \vartheta_1(x) \iff \rho, t \models \vartheta_2(x) \,.
\]
We say that a metric logic $L'$ is \emph{expressively complete} for metric logic $L$
over $\mathbb{T}$-timed words iff for any formula $\vartheta(x) \in L$,
there is an equivalent formula $\varphi(x) \in L'$ over $\mathbb{T}$-timed words.
We say that $L'$ is \emph{at least as expressive as}
(or \emph{more expressive than}) $L$ over $\mathbb{T}$-timed words (written $L \subseteq L'$)
iff for any formula $\vartheta \in L$, there is an \emph{initially equivalent} formula $\varphi \in L'$ over $\mathbb{T}$-timed words
(i.e.,~$\vartheta$ and $\varphi$ evaluates to the same truth value at the beginning of any $\mathbb{T}$-timed word).
If $L \subseteq L'$ but $L' \nsubseteq L$ then we say that $L'$ is \emph{strictly more expressive than} $L$
(or $L$ is \emph{strictly less expressive than} $L'$) over $\mathbb{T}$-timed words.

As we have seen earlier, each \mtlpast{} formula can be defined as an \foone{}
formula with a single free variable. Here, for the sake of completeness we
give an (equivalent) traditional inductive definition of the satisfaction relation for \mtlpast{}
over timed words. We write $\rho \models \varphi$ if $\rho, 0 \models \varphi$.
\begin{defi}
The satisfaction relation $\rho, i \models \varphi$ for an \emph{\mtlpast{}}
formula $\varphi$, a timed word $\rho = (\sigma, \tau)$ and a position $i$ in $\rho$ is defined as follows:
\begin{itemize}
\item $\rho, i \models \mathbf{true}$
\item $\rho, i \models P$ iff $P(\tau_i)$ holds in $M_\rho$
\item $\rho, i \models \varphi_1 \wedge \varphi_2$ iff $\rho, i
\models \varphi_1$ and $\rho, i \models \varphi_2$
\item $\rho, i \models \neg \varphi$ iff $\rho, i \not \models
\varphi$
\item $\rho, i \models \varphi_1 \until_I \varphi_2$ iff there exists $j$, $i < j < |\rho|$
such that $\rho, j \models \varphi_2$, $\tau_j - \tau_i \in I$
and $\rho, k \models \varphi_1$ for all $k$ with $i < k < j$
\item $\rho, i \models \varphi_1 \since_I \varphi_2$ iff there exists
$j$, $0 \leq j < i$ such that $\rho, j \models \varphi_2$, $\tau_i - \tau_j \in I$
and $\rho, k \models \varphi_1$ for all $k$ with $j < k < i$.
\end{itemize}
\end{defi}

\begin{exa}
The \mtl{} formula
\begin{equation} \label{for:example}
\varphi = \globally (P \implies \eventually_{< 3} Q)
\end{equation}
is satisfied by a timed word $\rho$ if and only if there is a $P$-event in $\rho$ (say at time $t$), and
there is a $Q$-event in $\rho$ with timestamp in $(t, t+3)$.

\end{exa}
\paragraph{\emph{Safety relative to the divergence of time}}

Recall that we require the timestamps of any infinite timed word to
be a strictly-increasing divergent sequence. Based upon this assumption, we
define \emph{safety properties} in exactly the same way as in the qualitative case~\cite{Alpern1987}; for example, (\ref{for:example}) is a safety property as any infinite timed word $u'$ violating $\varphi$ must have a prefix $u$
such that there is a $P$-event in $u$ with no $Q$-event in the following three time units.
On the other hand, had we allowed Zeno timed words, $\varphi$ would not be safety as
\[
(\{P\}, 0)(\{P\}, 1)(\{P\}, 1 + \frac{1}{2})(\{P\}, 1 + \frac{1}{2} + \frac{1}{4}) \dots
\]
violates $\varphi$ without having a prefix that cannot be extended into an infinite timed word
satisfying $\varphi$.
The notion we adopt here is called
\emph{safety relative to the divergence of time} in the literature~\cite{Henzinger1992}.

\paragraph{\emph{The continuous semantics}}

Another way to interpret metric logics is to regard time as a continuous entity;
a behaviour of a system can thus be viewed as a continuous function.
Formally, a $\mathbb{T}$-\emph{signal} over finite alphabet $\Sigma$ is a function $f: \mathbb{T} \mapsto \Sigma$ that
is \emph{finitely variable}, i.e.,~the restriction of $f$ to a subinterval of $\mathbb{T}$ of finite length
has only a finite number of discontinuities.
We refer to a $\mathbb{T}$-signal simply as a \emph{signal} when $\mathbb{T} = \mathbb{R}_{\geq 0}$.
With each signal $f$ over $\Sigma_{\mathbf{P}}$ we associate a structure $M_f$.
Its universe $U_f$ is $\mathbb{T}$.
The order relation $<$ and monadic predicates in $\mathbf{P}$ are interpreted in the expected way,
e.g., $P(x)$ holds in $M_f$ iff $P \in f(x)$.
The binary \emph{distance predicate} $d(x, x') \sim c$ holds iff $|x - x'| \sim c$.
We write $M_f, t_0, \ldots, t_{n-1} \models \vartheta(x_0, \ldots, x_{n-1})$
(or $f, t_0, \ldots, t_{n-1} \models \vartheta(x_0, \ldots, x_{n-1})$)
if $t_0, \ldots, t_{n-1} \in U_f$
and $\vartheta(t_0, \ldots, t_{n-1})$ holds in $M_f$.
The notions of equivalence of formulas, expressiveness of
metric logics, etc.\ are defined as in the case of timed words.

The satisfaction relation for \mtlpast{} over signals is defined as follows.
We write $f \models \varphi$ if $f, 0 \models \varphi$.
\begin{defi}
The satisfaction relation $f, t \models \varphi$ for an \emph{\mtlpast{}}
formula $\varphi$, a signal $f$ and $t \in U_f$ is defined as follows:
\begin{itemize}
\item $f, t \models P$ iff $P(t)$ holds in $M_f$
\item $f, t \models \mathbf{true}$
\item $f, t \models \varphi_1 \wedge \varphi_2$ iff $f, t \models \varphi_1$ and $f, t \models \varphi_2$
\item $f, t \models \neg \varphi$ iff $f, t \not \models \varphi$
\item $f, t \models \varphi_1 \until_I \varphi_2$ iff there exists $t' > t$, $t' \in \mathbb{T}$
such that $f, t' \models \varphi_2$, $t' - t \in I$
and $f, t'' \models \varphi_1$ for all $t''$ with $t < t'' < t'$
\item $f, t \models \varphi_1 \since_I \varphi_2$ iff there exists
$t' < t$, $t' \in \mathbb{T}$ such that $f, t' \models \varphi_2$, $t - t' \in I$
and $f, t'' \models \varphi_1$ for all $t''$ with $t' < t'' < t$.
\end{itemize}
\end{defi}

\paragraph{\emph{Relating the two semantics}}

Note that timed words can be regarded as a particular kind of signal:
for a given $\mathbb{T}$-timed word $\rho$ over $\Sigma_P$,
we can introduce a `silent' monadic predicate $P_\epsilon$ and construct the corresponding $\mathbb{T}$-signal $f^{\rho}$ over $\Sigma_{\mathbf{P}'}$,
where $\mathbf{P}' = \mathbf{P} \cup \{ P_\epsilon \}$, as follows:
\begin{itemize}
\item $f^{\rho}(\tau_i) = \sigma_i$ for all $i$, $0 \leq i < |\rho|$
\item $f^{\rho}(\tau_i) = \{P_\epsilon\}$.
\end{itemize}
This enables us to interpret metric logics over timed words `continuously'.
We can thus compare the expressiveness of metric logics in both semantics by restricting the models of the continuous interpretations of metric logics to signals of this form (i.e.,~$f^\rho$ for some timed word $\rho$).
For example, we say that continuous \foone{} is at least as
expressive as pointwise \foone{} since for each \foone{} formula $\vartheta_{\mathit{pw}}(x)$,
there is an `equivalent' \foone{} formula $\vartheta_{\mathit{cont}}(x)$ such that
$\rho, t \models \vartheta_{\mathit{pw}}(x)$ iff $f^{\rho}, t \models \vartheta_{\mathit{cont}}(x)$.

\begin{exa}
Consider the timed word $\rho$ illustrated in Figure~\ref{fig:semanticsexample}
where the red boxes denote $P$-events.
The \mtl{} formula
\[
\varphi = \eventually (\eventually_{=1} P)
\]
does not hold at the beginning of $\rho$ in the pointwise semantics (i.e.,~$\rho \not \models \varphi$) since
there is no event at exactly one time unit before the second event in $\rho$.
On the other hand, $\varphi$ holds
at the beginning of $\rho$ in the continuous semantics
(i.e.,~$f^\rho \models \varphi$) since
there is a point (at which $P_\epsilon$ holds) at exactly one time unit before
the second event in $f^\rho$.
We can, however, simulate the pointwise semantics with
\[
\varphi' = \eventually \Bigl(\neg P_\epsilon \wedge \bigl(\eventually_{=1} (\neg P_\epsilon \wedge P) \bigr)\Bigr)\,,
\]
for which we have $\pi \models \varphi$ iff $f^{\pi} \models \varphi'$
for all timed words $\pi$.

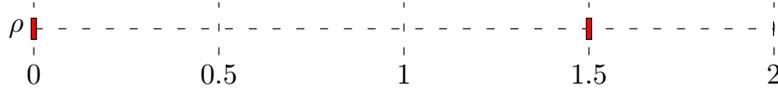
\begin{figure}[h]
\centering
\begin{tikzpicture}[scale=1.0]
\begin{scope}[>=latex]

\draw[|-|, loosely dashed] (0pt,0pt) -- (280pt,0pt) node[at start, left] {$\rho$};

\draw[loosely dashed] (0pt,-10pt) -- (0pt,10pt) node[at start,below] {$0$};
\draw[loosely dashed] (70pt,-10pt) -- (70pt,10pt) node[at start,below] {$0.5$};
\draw[loosely dashed] (140pt,-10pt) -- (140pt,10pt) node[at start,below] {$1$};
\draw[loosely dashed] (210pt,-10pt) -- (210pt,10pt) node[at start,below] {$1.5$};
\draw[loosely dashed] (280pt,-10pt) -- (280pt,10pt) node[at start,below] {$2$};

\draw[draw=black, fill=red] (1pt, -4pt) rectangle (-1pt, 4pt);

\draw[draw=black, fill=red] (209pt, -4pt) rectangle (211pt, 4pt);

\end{scope}

\end{tikzpicture}
\caption{The timed word $\rho$.}%
\label{fig:semanticsexample}
\end{figure}

\end{exa}
As we see in the example above, the pointwise and continuous interpretations
of metric logics differs in the range of first-order quantifiers.
While the ability to quantify over time points \emph{between} events appears to increase the expressiveness
of metric logics, this is not the case for \foone{} as both interpretations
are indeed equally expressive (when one considers only signals of the form $f^\rho$)~\cite{DSouza2007}.\footnote{The translation in~\cite{DSouza2007} also holds in a time-bounded setting with trivial modifications.}
By contrast, \mtlpast{} is strictly more expressive in the continuous semantics
than in the pointwise semantics~\cite{Prabhakar2006}.

\subsection{Model checking}

A key advantage in using \ltl{} (or \ltlpast{}) in verification is that its \emph{model-checking}
problem is $\mathrm{PSPACE}$-complete~\cite{Sistla1985}, much better than the complexity of the same problem for \textmd{\textsf{FO[$<$]}} (non-elementary~\cite{Stockmeyer1974}).
Given a B\"uchi automaton $\mathcal{A}$ that models the system and a specification written as an \ltl{} formula $\Phi$, the corresponding model-checking problem
asks whether the language defined by $\mathcal{A}$ is included in
the language defined by $\Phi$. By a fundamental result in verification---\ltl{} formulas
can be translated into B\"uchi automata~\cite{Wolper1983}---this reduces to the
\emph{emptiness} problem on the product B\"uchi automaton
of $\mathcal{A}$ and the B\"uchi automaton $\mathcal{B}_{\neg \Phi}$ translated from $\neg \Phi$.
The latter problem can be solved, e.g., by a standard fixed-point algorithm~\cite{Emerson1986}.
This is sometimes called the \emph{automata-theoretic approach} to \ltl{} model checking.

In the real-time setting, given a timed (B\"uchi) automaton $\mathcal{A}$ and a specification $\varphi$
(e.g., a formula of some metric logic), the  corresponding model-checking problem asks
whether the timed (finite-word) language defined by $\mathcal{A}$ is included in the timed (finite-word) language
defined by $\varphi$. By analogy with the untimed case, one may solve this problem
by first translating $\neg \varphi$ into a timed automaton $\mathcal{A}_{\neg \varphi}$
and then checking the emptiness of the product of $\mathcal{A}$ and $\mathcal{A}_{\neg \varphi}$.
This methodology works for certain metric logics;
for example, each formula of \mitl{} (the non-punctual fragment of \mtl{})
can be translated into a timed automaton, and the model-checking problem for timed automata
against \mitl{} is $\mathrm{EXPSPACE}$-complete~\cite{Alur1996}.
However, this does not apply to \mtl{} as
\mtl{} formulas, in general, cannot be translated into timed automata.

\subsection{Monitoring}

The \emph{prefix} problem~\cite{Bauer2013} asks the following:
given a specification $\Phi$ and a finite word $u$, do all infinite extensions
of $u$ satisfy $\Phi$? If the answer is `yes', then we say that $u'$ is a \emph{good prefix} for $\Phi$.
Similarly, $u$ is a \emph{bad prefix} for $\Phi$ if the answer to the dual problem is `yes', i.e.,~none of its infinite extensions satisfies $\Phi$.
The \emph{monitoring} problem takes instead a specification $\Phi$ and
an infinite word $u'$ as inputs. In contrast to standard decision problems,
the latter input is given \emph{incrementally}, i.e.,~one symbol at a time;
a monitor (a procedure that `solves' the monitoring problem) is required to
continuously check whether the currently accumulated finite word $u$
(a prefix of $u'$) is a good/bad prefix for $\Phi$ and report as necessary.

\section{Expressive completeness of \mtlpastg{} over bounded timed words}

In this section, we study the expressiveness of \mtlpast{} (and its various fragments and extensions) in
a time-bounded pointwise setting, i.e.,~all timed words are assumed to have durations less than a positive integer $N$.
We first recall \mtlpast{} EF games~\cite{Pandya2011},
which serves as our main tool in proving expressiveness results.
Then we demonstrate a strict hierarchy of metric temporal logics (based on their expressiveness
over bounded timed words) as we extend \mtl{} incrementally towards \foone{}.
Finally, we show that \mtlpast{}, equipped with both
the forwards and backwards temporal modalities generalised `Until' ($\guntil_I^c$) and
generalised `Since' ($\gsince_I^c$),
has precisely the same expressive
power as \foone{} over bounded time domains in the pointwise
semantics.
For the time-bounded satisfiability and model-checking problems,
we show that the relevant constructions (and hence the complexity bounds) for \mtlpast{} in~\cite{Ouaknine2009}
carry over to our new logic \mtlpastg{}.

\subsection{\mtlpast{} EF games}

Ehrenfeucht-Fra\"{\i}ss\'{e} games are handy tools in proving
the inexpressibility of certain properties in first-order logics.
In many proofs in this section,
we resort to (extended versions of) Pandya and Shah's \mtlpast{} \emph{EF games} on timed words~\cite{Pandya2011}, which
itself is a timed generalisation of Etessami and Wilke's \ltlpast{} EF games~\cite{Etessami1996}.

An $m$-round \mtlpast{} EF game starts with round $0$ and ends with round $m$.
The game is played by two players (\emph{Spoiler} and \emph{Duplicator}) on a pair of timed words $\rho$ and~$\rho'$.\footnote{We follow the convention
that \emph{Spoiler} is male and \emph{Duplicator} is female.}
A \emph{configuration} is a pair of positions $(i, j)$, respectively in $\rho$ and~$\rho'$.
In each round $r$ ($0 \leq r \leq m$), the game proceeds as follows.
\emph{Spoiler} first checks whether the two events that correspond to the current configuration $(i_r, j_r)$
in $\rho$ and $\rho'$ satisfy the same set of monadic predicates.
If this is not the case then he wins the game. Otherwise if $r < m$, \emph{Spoiler}
chooses an interval $I \subseteq (0, \infty)$ with endpoints in $\mathbb{N} \cup \{ \infty \}$
and plays either of the following moves:
\begin{itemize}
\item \emph{$\until_{I}$-move}: \emph{Spoiler} chooses one of the two timed words (say $\rho$).
He then picks $i_r'$ such that $i_r < i_r'$ and $\tau_{i_r'} - \tau_{i_r} \in I$ where $\tau_{i_r'}$ and $\tau_{i_r}$ are the corresponding timestamps in $\rho$ (if there is no such $i_r'$ then \emph{Duplicator} wins the game).
\emph{Duplicator} must choose a position $j_r'$ in $\rho'$ such that the difference of
the corresponding timestamps in $\rho'$ is in $I$. If she cannot find such a position then \emph{Spoiler}
wins the game. Otherwise, \emph{Spoiler} plays either of the following `parts':
	\begin{itemize}
	\item \emph{$\eventually$-part}: The game proceeds to the next round with $(i_{r+1}, j_{r+1}) = (i_r', j_r')$.
	\item \emph{$\until$-part}: If $j_r' = j_r + 1$ the game proceeds to the next round
	with $(i_{r+1}, j_{r+1}) = (i_r', j_r')$. If $i_r' = i_r + 1$ but $j_r' \neq j_r + 1$ then \emph{Spoiler} wins the game.
	Otherwise \emph{Spoiler} picks another position $j_r''$ in $\rho'$ such that $j_r < j_r'' < j_r'$.
	\emph{Duplicator} have to choose a position $i_r''$ in $\rho$ such that $i_r < i_r'' < i_r'$ in response.
	If she cannot find such a position then \emph{Spoiler} wins the game;
   otherwise the game proceeds to the next round with $(i_{r+1}, j_{r+1}) = (i_r'', j_r'')$.
	\end{itemize}
\item \emph{$\since_{I}$-move}: Defined symmetrically.
\end{itemize}
We say that \emph{Duplicator}
has a \emph{winning strategy} for the $m$-round \mtlpast{} EF game on $\rho$ and $\rho'$ that starts from configuration $(i, j)$
if and only if, no matter how \emph{Spoiler} plays, he cannot win the $m$-round \mtlpast{} EF game on $\rho$ and $\rho'$ with $(i_0, i_0) = (i, j)$.
If this is not the case then we say that \emph{Spoiler} has a winning strategy.

It is obvious that the moves in \mtlpast{} EF games
are closely related to the semantics of modalities in \mtlpast{} formulas.
For example, the $\until_I$-move can be seen as \emph{Spoiler}'s attempt
to verify that a formula of the form $\varphi_1 \until_I \varphi_2$
holds at $i_r$ in $\rho$ if and only if it holds at $j_r$ in $\rho'$: the $\eventually$-part
and the remaining rounds verify that $\varphi_2$ holds at $i_r'$ in $\rho$
iff it holds at $j_r'$ in $\rho'$, whereas the $\until$-part and the remaining rounds
verify that $\varphi_1$ holds at all $i_r''$, $i_r < i_r'' < i_r'$ in $\rho$
iff it holds at all $j_r''$, $j_r < j_r'' < j_r'$ in $\rho'$.
Formally, the following theorem relates the number of rounds of \mtlpast{} EF games to the
\emph{modal depth} (i.e.,~the maximal depth of nesting of modalities) of
\mtlpast{} formulas.

\begin{thmC}[\cite{Pandya2011}]\label{thm:mtlef}
For (finite) timed words $\rho, \rho'$ and an \emph{\mtlpast{}} formula $\varphi$ of modal depth $\leq m$,
if \emph{Duplicator} has a winning strategy for the $m$-round \emph{\mtlpast{}} EF game on
$\rho, \rho'$ with $(i_0, j_0) = (0, 0)$, then
\[
\rho \models \varphi \iff \rho' \models \varphi \,.
\]
\end{thmC}
In other words, $\rho, \rho'$ can be distinguished by an \mtlpast{} formula
of modal depth $\leq m$ if and only if \emph{Spoiler} has
a winning strategy for the $m$-round \mtlpast{} EF game on $\rho, \rho'$
with $(i_0, j_0) = (0, 0)$.
Note that specialised versions of Theorem~\ref{thm:mtlef} also hold for sublogics of \mtlpast{};
for example, the corresponding theorem for \mtl{} is obtained by banning the $\since_I$-move.
\begin{exa}
Consider the timed words $\rho$ and $\rho'$ illustrated in Figure~\ref{fig:mtlefexample} where
the white, red and blue boxes represent events at which no monadic predicate holds,
$P$-events, and $Q$-events, respectively. The positions are labelled above
the events.
\begin{figure}[h]
\centering
\begin{tikzpicture}[scale=1.0]
\begin{scope}[>=latex]

\draw[|-|, loosely dashed] (0pt,40pt) -- (140pt,40pt) node[at start, left] {$\rho$};
\draw[|-|, loosely dashed] (0pt,0pt) -- (140pt,0pt) node[at start, left] {$\rho'$};

\draw[loosely dashed] (0pt,-10pt) -- (0pt,0pt) node[at start,below] {$0$};
\draw[loosely dashed] (0pt, 15pt) -- (0pt,40pt);
\draw[loosely dashed] (140pt,-10pt) -- (140pt,50pt) node[at start,below] {$1$};

\draw[draw=black, fill=white] (1pt, -4pt) rectangle (-1pt, 4pt);
\draw[draw=black, fill=white] (1pt, 36pt) rectangle (-1pt, 44pt);
\node[above] at (0pt, 5pt) {{\tiny $0$}};
\node[above] at (0pt, 45pt) {{\tiny $0$}};

\draw[draw=black, fill=red] (19pt, -4pt) rectangle (21pt, 4pt);
\draw[draw=black, fill=red] (19pt, 36pt) rectangle (21pt, 44pt);
\node[above] at (20pt, 5pt) {{\tiny $1$}};
\node[above] at (20pt, 45pt) {{\tiny $1$}};

\draw[draw=black, fill=red] (39pt, -4pt) rectangle (41pt, 4pt);
\draw[draw=black, fill=red] (39pt, 36pt) rectangle (41pt, 44pt);
\node[above] at (40pt, 5pt) {{\tiny $2$}};
\node[above] at (40pt, 45pt) {{\tiny $2$}};

\draw[draw=black, fill=white] (59pt, -4pt) rectangle (61pt, 4pt);
\draw[draw=black, fill=red] (59pt, 36pt) rectangle (61pt, 44pt);
\node[above] at (60pt, 5pt) {{\tiny $3$}};
\node[above] at (60pt, 45pt) {{\tiny $3$}};

\draw[draw=black, fill=red] (79pt, -4pt) rectangle (81pt, 4pt);
\draw[draw=black, fill=red] (79pt, 36pt) rectangle (81pt, 44pt);
\node[above] at (80pt, 5pt) {{\tiny $4$}};
\node[above] at (80pt, 45pt) {{\tiny $4$}};

\draw[draw=black, fill=red] (99pt, -4pt) rectangle (101pt, 4pt);
\draw[draw=black, fill=red] (99pt, 36pt) rectangle (101pt, 44pt);
\node[above] at (100pt, 5pt) {{\tiny $5$}};
\node[above] at (100pt, 45pt) {{\tiny $5$}};

\draw[draw=black, fill=blue] (119pt, -4pt) rectangle (121pt, 4pt);
\draw[draw=black, fill=blue] (119pt, 36pt) rectangle (121pt, 44pt);
\node[above] at (120pt, 5pt) {{\tiny $6$}};
\node[above] at (120pt, 45pt) {{\tiny $6$}};

\end{scope}

\end{tikzpicture}
\caption{$\rho$ and $\rho'$ can be distinguished by $P \until Q$.}%
\label{fig:mtlefexample}
\end{figure}
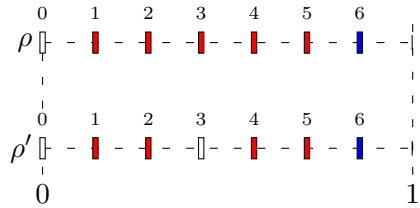

In the $1$-round \mtlpast{} EF game on $\rho$, $\rho'$ with $(i_0, j_0) = (0, 0)$,
a winning strategy for \emph{Spoiler} can be described as follows:
\begin{enumerate}
\item The two events that correspond to $(i_0, j_0) = (0, 0)$ in $\rho$ and $\rho'$ satisfy
the same set of monadic predicates, so \emph{Spoiler} does not win here.
\item \emph{Spoiler} chooses $I = (0, \infty)$ and $i_0' = 6$ in $\rho$.
\item If \emph{Duplicator} chooses $j_0' \neq 6$ in $\rho'$, she will lose at the beginning
of round $1$. So she chooses $j_0' = 6$.
\item \emph{Spoiler} plays the $\until$-part and chooses $j_0'' = 3$ in $\rho'$.
\item \emph{Duplicator} can only choose $i_0''$ in $\rho$ such that $1 \leq i_0'' \leq 5$.
But she will then lose at the beginning of round $1$.
\end{enumerate}
It follows that there is an \mtlpast{} formula of modal depth $1$
that distinguishes $\rho$ and $\rho'$.
One such formula is $P \until Q$, which can be obtained
from \emph{Spoiler}'s winning strategy above.
\end{exa}

\subsection{A hierarchy of expressiveness}\label{subsec:hierarchy}

We now present a sequence of successively more expressive
extensions of \mtl{} over bounded timed words.
The technique we use here is to construct two \emph{families} of
models---parametrised by $m$---such that there is a certain formula
of the more expressive logic telling them apart for all $m$, yet they cannot be distinguished by any
formula of the less expressive logic with modal depth $\leq m$ (i.e.,~\emph{Duplicator} has
a winning strategy in the corresponding $m$-round \mtlpast{} EF game).
Along the way we highlight the key features that give rise to the differences in expressiveness.
The necessity of new modalities
is justified by the fact that no known extension can lead to expressive completeness.

\paragraph{\emph{Definability of the beginning of time}}

Recall that \mtl{} and \foone{} have the same expressiveness over $\ropen{0, N}$-signals~\cite{Ouaknine2009}.
This result fails in the pointwise semantics.
\begin{prop}[Corollary of~{\cite[Section 8]{Prabhakar2006}}]
\emph{\mtlpast{}} is strictly more expressive than \emph{\mtl{}} over $\ropen{0, N}$-timed words.\footnote{The models constructed in~{\cite[Section 8]{Prabhakar2006}} are bounded timed words.}
\end{prop}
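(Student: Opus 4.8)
The containment $\mtl{} \subseteq \mtlpast{}$ is immediate, since $\mtl{}$ is by definition the $\since_I$-free fragment of $\mtlpast{}$. It therefore suffices to exhibit a single $\mtlpast{}$ formula that has no equivalent in $\mtl{}$ over $\ropen{0,N}$-timed words. The plan is to reuse the separation between $\mtl{}$ and $\foone{}$ established by Prabhakar and D'Souza~\cite{Prabhakar2006}, and to observe that their witness property---which they already show is not $\mtl{}$-definable---is in fact definable in $\mtlpast{}$ using a single past modality. This is what makes the statement a corollary of their work rather than a fresh inexpressibility argument.

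Concretely, \cite[Section~8]{Prabhakar2006} supplies, for every $m$, a pair of timed words $\rho_m, \rho'_m$ (which, by the footnote, may be taken to lie in $\ropen{0,N}$) together with a property $\Phi$ such that $\rho_m \models \Phi$ while $\rho'_m \not\models \Phi$, and yet \emph{Duplicator} has a winning strategy in the $m$-round future-only EF game on $\rho_m, \rho'_m$ starting from $(0,0)$. By the $\mtl{}$ specialisation of Theorem~\ref{thm:mtlef} (obtained by forbidding the $\since_I$-move), no $\mtl{}$ formula of modal depth $\leq m$ can distinguish $\rho_m$ from $\rho'_m$ at position $0$. If some $\mtl{}$ formula $\psi$ were initially equivalent to $\Phi$ over $\ropen{0,N}$-timed words, then taking $m$ to be the modal depth of $\psi$ and applying this to the pair $\rho_m, \rho'_m$ would contradict the fact that $\Phi$ distinguishes them; hence $\Phi$ is not $\mtl{}$-definable over $\ropen{0,N}$-timed words.

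It then remains to check that $\Phi$ \emph{is} expressible in $\mtlpast{}$, which is the only genuinely new step relative to~\cite{Prabhakar2006}. This is where the past modality is essential: the property anchors a metric measurement to an event lying strictly before the point at which it is evaluated---intuitively, the ``beginning of time'' landmark discussed above---and in the pointwise semantics this backward metric constraint cannot be ``flipped'' into an equivalent forward one, precisely the obstruction that \emph{Duplicator}'s strategy exploits. I would write $\Phi$ as an $\mtlpast{}$ formula whose outermost operators are future modalities reaching the relevant landmark, with a $\once_I$-subformula performing the backward measurement, and then verify by unwinding the pointwise satisfaction relation that this formula holds at position $0$ of a $\ropen{0,N}$-timed word exactly when $\Phi$ does.

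The main obstacle is this last verification, together with (if one prefers a self-contained argument to a direct appeal) reconstructing \emph{Duplicator}'s winning strategy in the future-only game: one must argue that, with only $\until_I$-moves available, \emph{Spoiler} can never force the configurations onto the single distinguishing event, since doing so would require fixing a distance measured backwards from a landmark that the future game cannot pin down. Given the families and Duplicator strategies of~\cite{Prabhakar2006}, however, the result follows as a corollary once the $\mtlpast{}$-definability of $\Phi$ is in hand, and the contrast with the continuous case (where $\mtl{}$ and $\foone{}$ coincide over $\ropen{0,N}$-signals~\cite{Ouaknine2009}) isolates the pointwise restriction of quantifiers to events as the source of the extra expressive power.
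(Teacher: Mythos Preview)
Your proposal is correct and matches the paper's treatment: the paper gives no proof at all beyond labelling the statement a corollary of~\cite[Section~8]{Prabhakar2006} and noting in the footnote that the models there are already bounded timed words. Your elaboration---that the separating property from Prabhakar--D'Souza is \mtlpast{}-definable via a past modality, and that their \emph{Duplicator} strategy for the future-only EF game then finishes the argument---is exactly the implicit reasoning the paper leaves to the reader.
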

To explain this discrepancy between the two semantics, observe that a distinctive feature
of the continuous interpretation of \mtl{} is exploited in~\cite{Ouaknine2009}:
in any $\ropen{0, N}$-signal, the formula $\eventually_{=(N-1)} \mathbf{true}$ holds in $\ropen{0, 1}$
and nowhere else. One can make use of conjunctions of similar formulas to determine the integer part of
the current instant (where the relevant formula is being evaluated).
Unfortunately, since the duration of a given bounded timed word is not known \emph{a priori},
this trick does not work for \mtl{} in the pointwise semantics.
For example, the formula $\eventually_{=1} \mathbf{true}$ does not hold at any position in the
$\ropen{0, 2}$-timed word $\rho = (\sigma_0, 0)(\sigma_1, 0.5)$.
However, the same effect can be achieved in \mtlpast{} by using past modalities.
Let
\[
\varphi_{i, i+1} = \once_{\ropen{i, i+1}} (\neg \once \mathbf{true})\label{def:intformulas}
\]
and
$\Phi_{\mathit{int}} = \{\varphi_{i, i+1} \mid i \in \mathbb{N}\}$.
Note that the subformula $\neg \once \mathbf{true}$ can only hold at the very first event (with timestamp $0$),
thus $\varphi_{i, i+1}$ holds only at events with timestamps in $\ropen{i, i + 1}$.
Denote by \mtlunit{} the extension of \mtl{} obtained by allowing these formulas as atomic formulas.
It turned out that this very restrictive use of past modalities strictly increases the expressiveness of \mtl{}
over bounded timed words. Indeed, the main result of this section (Theorem~\ref{thm:boundedexpcomp})
crucially depends on the use of these formulas.

\begin{prop}\label{prop:unit-strict}
\emph{\mtlunit{}} is strictly more expressive than \emph{\mtl{}} over $\ropen{0, N}$-timed words.
\end{prop}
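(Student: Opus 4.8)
The plan is to exhibit, for each $m$, a pair of $\ropen{0, N}$-timed words $\rho_m$ and $\rho_m'$ that (i) are distinguished by a fixed \mtlunit{} formula independent of $m$, yet (ii) cannot be distinguished by any \mtl{} formula of modal depth $\leq m$. By the \mtl{}-specialisation of Theorem~\ref{thm:mtlef} (i.e.~the EF-game characterisation with the $\since_I$-move banned), (ii) reduces to showing that \emph{Duplicator} has a winning strategy in the $m$-round \mtl{} EF game on $\rho_m, \rho_m'$ started from $(i_0, j_0) = (0, 0)$. Since \mtlunit{} extends \mtl{} only by permitting the atomic formulas $\varphi_{i, i+1}$, the separation witnessed by these families is exactly what certifies that the inclusion $\mtl{} \subseteq \mtlunit{}$ is strict over $\ropen{0, N}$-timed words.

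The key feature I would build the construction around is precisely the one the preceding discussion isolates: in the pointwise semantics \mtl{} has no way to \emph{anchor} a position to an absolute integer band of time, because the distance predicates are only relative to future/past events that actually occur, and there is no access to the origin $0$ except through past modalities. The formula $\varphi_{0, 1} = \once_{\ropen{0,1}}(\neg \once \mathbf{true})$, by contrast, holds exactly at the events whose timestamp lies in $\ropen{0, 1}$ (as noted after the definition of $\Phi_{\mathit{int}}$). So I would design $\rho_m$ and $\rho_m'$ to agree on all monadic predicates and on all \emph{relative} timing patterns that an $m$-round future game can probe, while differing only in the \emph{absolute} integer band in which some distinguished event sits — something $\varphi_{0,1}$ detects but no bounded-depth \mtl{} formula can. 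Concretely I would take two events carrying the same predicate, placed at timestamps in different integer bands (say one in $\ropen{0,1}$ and its counterpart in $\ropen{1,2}$), and pad the two words with long, locally-identical stretches of filler events so that the \mtl{} EF game, which can only move strictly forward and can only compare relative distances $\tau_{i'} - \tau_i \in I$, never gets to an integer-band mismatch within $m$ rounds.

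For step (i) I would verify directly that the anchoring event lies in $\ropen{0,1}$ in one word and not in the other, so that $\varphi_{0,1}$ (or a small Boolean combination of the $\varphi_{i,i+1}$) evaluates differently at the corresponding positions — and, with a little care in placing a reachable witness, differently at the \emph{initial} position, giving initial inequivalence as required by the definition of strict expressiveness. For step (ii), the \emph{Duplicator} strategy is the heart of the argument: she maintains across rounds an invariant that the current configuration $(i_r, j_r)$ has matching predicates and that the two suffixes remain ``$(m-r)$-round indistinguishable'' in the same structural sense, responding to each $\until_I$-move by picking a position realising the same interval $I$ and preserving the relative local pattern, and handling the $\until$-part by matching intermediate witnesses. \textbf{The main obstacle} I anticipate is exactly this bookkeeping: ensuring that \emph{Spoiler}'s freedom to choose arbitrary intervals $I$ with integer endpoints cannot be leveraged to expose the integer-band difference, i.e.~that every interval constraint \emph{Spoiler} can impose is simultaneously satisfiable, with a predicate-and-pattern-matching response, in both words. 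This forces the construction to be quantitatively tuned to $m$ — enough filler, and enough slack in the timestamps, that no sequence of $\leq m$ relative-distance constraints pins down the absolute band — and verifying that \emph{Duplicator}'s invariant is genuinely preserved under the $\until$-part (the subtlest case, where \emph{Spoiler} picks an intermediate position) is where the proof will require the most care.
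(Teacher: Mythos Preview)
Your proposal is correct and follows essentially the same approach as the paper: build, for each $m$, a pair of $\ropen{0,N}$-timed words that differ only in which integer band some near-boundary events fall into, exhibit a fixed \mtlunit{} formula using $\varphi_{0,1}$ that detects this difference, and show via the future-only EF game that \emph{Duplicator} survives $m$ rounds. The paper's concrete realisation is particularly economical---it uses no monadic predicates at all, placing a cluster of events equispaced around time $1$ and shifting the whole cluster by one step between the two models, so that one has two consecutive events in $\ropen{0,1}$ and the other only one; the distinguishing formula is $\eventually_{(0,1)}(\varphi_{0,1} \wedge \nextx \varphi_{0,1})$, and \emph{Duplicator}'s strategy reduces to the simple invariant that the configuration is either identical or off-by-one.
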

\begin{proof}
For a given $m \in \mathbb{N}$, we construct the following models:
\[
    \begin{array}{rcl}
    \mathcal{A}_{m} & = & (\emptyset, 0)(\emptyset, 1-\frac{1.5}{2m+5})(\emptyset, 1-\frac{0.5}{2m+5})\ldots(\emptyset, 1+\frac{m+2.5}{2m+5}) \,, \\
    \mathcal{B}_{m} & = & (\emptyset, 0)(\emptyset, 1-\frac{0.5}{2m+5})(\emptyset, 1+\frac{0.5}{2m+5})\ldots(\emptyset, 1+\frac{m+3.5}{2m+5}) \,.
    \end{array}
\]

\begin{figure}[ht]
\centering
\begin{tikzpicture}[scale=1.0]
\begin{scope}[>=latex]

\draw[|-|, loosely dashed] (0pt,40pt) -- (280pt,40pt) node[at start, left] {$\mathcal{A}_{m}$};
\draw[|-|, loosely dashed] (0pt,0pt) -- (280pt,0pt) node[at start, left] {$\mathcal{B}_{m}$};

\draw[loosely dashed] (0pt,-10pt) -- (0pt,50pt) node[at start,below] {$0$};
\draw[loosely dashed] (140pt,-10pt) -- (140pt,50pt) node[at start,below] {$1$};
\draw[loosely dashed] (210pt,-10pt) -- (210pt,50pt) node[at start,below] {$1.5$};
\draw[loosely dashed] (280pt,-10pt) -- (280pt,50pt) node[at start,below] {$2$};

\draw[draw=black, fill=white] (1pt, -4pt) rectangle (-1pt, 4pt);
\draw[draw=black, fill=white] (1pt, 36pt) rectangle (-1pt, 44pt);

\draw[draw=black, fill=white] (126pt, 36pt) rectangle (124pt, 44pt);

\draw[draw=black, fill=white] (136pt, -4pt) rectangle (134pt, 4pt);
\draw[draw=black, fill=white] (136pt, 36pt) rectangle (134pt, 44pt);

\draw[draw=black, fill=white] (146pt, -4pt) rectangle (144pt, 4pt);
\draw[draw=black, fill=white] (146pt, 36pt) rectangle (144pt, 44pt);

\draw[draw=black, fill=white] (156pt, -4pt) rectangle (154pt, 4pt);

\draw[-, very thick, loosely dotted] (165pt,20pt) -- (190pt,20pt);

\draw[draw=black, fill=white] (201pt, 36pt) rectangle (199pt, 44pt);

\draw[draw=black, fill=white] (211pt, -4pt) rectangle (209pt, 4pt);
\draw[draw=black, fill=white] (211pt, 36pt) rectangle (209pt, 44pt);

\draw[draw=black, fill=white] (221pt, -4pt) rectangle (219pt, 4pt);

\end{scope}

\end{tikzpicture}
\caption{Models $\mathcal{A}_{m}$ and $\mathcal{B}_{m}$.}%
\label{fig:unit-expressiveness}
\end{figure}
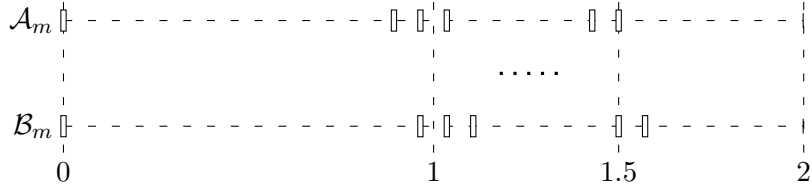
The models are illustrated in Figure~\ref{fig:unit-expressiveness}, where each white box represents an event (at which no monadic predicate holds).
We play an $m$-round \mtlpast{} EF game on $\mathcal{A}_{m}$, $\mathcal{B}_{m}$ and allow only $\until_{I}$-move.
After round $0$, either (i) $i_1 = j_1 \geq 1$ (in which case \emph{Duplicator} can, obviously, win the remaining rounds)
or (ii) $(i_1, j_1) = (2, 1)$ (\emph{Spoiler} chooses position $2$ in $\mathcal{A}_{m}$) or $(i_1, j_1) = (3, 2)$ (\emph{Spoiler} chooses position $2$ in $\mathcal{B}_{m}$). In the latter case, it is easy to verify that
in any remaining round $r$, \emph{Duplicator} can make $i_{r+1} = j_{r+1} \geq 1$ or $(i_{r+1}, j_{r+1}) = (i_r + 1, j_r + 1)$.
It follows from the \mtlpast{} EF Theorem that no \mtl{} formula of modal depth $\leq m$ can distinguish $\mathcal{A}_{m}$
and $\mathcal{B}_{m}$; however, the formula
\[
\eventually_{(0, 1)} ( \varphi_{0, 1} \wedge \nextx \varphi_{0, 1} ) \,,
\]
which says ``in the next time unit there are two events with timestamps in $\ropen{0, 1}$'', distinguishes $\mathcal{A}_{m}$ and $\mathcal{B}_{m}$ for any $m \in \mathbb{N}$
(when evaluated at position $0$).
\end{proof}

\paragraph{\emph{Past modalities}}
The conservative extension above uses past modalities in a very restricted way.
This is not sufficient for obtaining the full expressiveness of \mtlpast{}:
the following proposition says that non-trivial nesting of future modalities and past modalities
gives more expressiveness.
\begin{prop}\label{prop:past-strict}
\emph{\mtlpast{}} is strictly more expressive than \emph{\mtlunit{}} over $\ropen{0, N}$-timed words.
\end{prop}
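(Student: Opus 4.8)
The plan is to mirror the template of Proposition~\ref{prop:unit-strict}: exhibit two families of $\ropen{0, N}$-timed words $\mathcal{A}_m, \mathcal{B}_m$ (parametrised by the intended number of rounds $m$) together with a single fixed \mtlpast{} formula $\psi$ that separates them for every $m$, and then argue via \mtlpast{} EF games that no \mtlunit{} formula of modal depth $\leq m$ can tell $\mathcal{A}_m$ and $\mathcal{B}_m$ apart. Since every atomic formula $\varphi_{i, i+1}$ of $\Phi_{\mathit{int}}$ is itself a (purely past) \mtlpast{} formula, we have \mtlunit{} $\subseteq$ \mtlpast{}; so producing such a $\psi$ together with such a Duplicator strategy yields \mtlunit{}~$\subsetneq$~\mtlpast{} over $\ropen{0, N}$-timed words.

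For the separating formula I would use a genuine nesting of a past modality underneath a future modality --- the feature that \mtlunit{} lacks, since it confines past operators to the fixed atoms $\varphi_{i, i+1}$. Concretely I aim for $\psi$ of the shape $\eventually\bigl(P \wedge (\alpha \since_I \beta)\bigr)$, i.e.~``there is a future event at which, looking back across a distance in $I$ (chosen to straddle an integer boundary \emph{relative} to that event), one sees a $\beta$-event with $\alpha$ holding at all events in between''. The models will be designed so that this backward-visible configuration is the \emph{only} difference between $\mathcal{A}_m$ and $\mathcal{B}_m$: they will share an identical forward profile and, crucially, an identical pattern of integer intervals, so that the information exposed by $\Phi_{\mathit{int}}$ (merely the absolute integer part of the current timestamp) together with forward measurements cannot recover the relative backward distance that $\psi$ detects. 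As in Proposition~\ref{prop:unit-strict}, each $\mathcal{A}_m, \mathcal{B}_m$ will contain $\Theta(m)$ events clustered around an integer boundary, placed so that the distinguishing feature lies ``$m$ events deep'' from the vantage points Spoiler can reach by forward moves, whereas $\psi$ reaches it in constant modal depth.

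For the lower bound I would invoke the specialisation of Theorem~\ref{thm:mtlef} to \mtlunit{}: Spoiler is restricted to $\until_I$-moves (there is no $\since_I$-move, \mtlunit{} being future-only), but the atomic-agreement test at the start of each round is strengthened to additionally require that the two current positions satisfy the same formulas of $\Phi_{\mathit{int}}$ --- equivalently, that their timestamps lie in the same interval $\ropen{i, i+1}$. I would then describe Duplicator's strategy by maintaining the invariant that the two pebbles always sit in the same integer interval and in ``matching'' local positions within their respective clusters, and verify that any forward $\until_I$-move (for every interval $I$ Spoiler may select) can be answered while preserving this invariant. Because the two families agree on everything that is forward-and-integer-interval observable, Spoiler's forward probing never reaches the backward-only discrepancy within $m$ rounds.

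The hard part, and the step I would spend the most effort on, is the simultaneous balancing in the model construction: the words must differ enough for the nested past formula $\psi$ to separate them for \emph{every} $m$, yet agree closely enough --- on forward distances \emph{and} on the partition into integer intervals --- that Duplicator survives all $m$ rounds of the \emph{integer-interval-aware} future game. This extra atomic constraint is precisely what makes the argument harder than the corresponding \mtl{} separation: Duplicator must answer each $\until_I$-move with a position matching not only the relative interval $I$ but also the absolute integer interval of Spoiler's target, and the clusters around the boundary must be calibrated (much as the $\frac{1}{2m+5}$-spacings in Proposition~\ref{prop:unit-strict}) so that such a response always exists. Verifying that this calibration defeats every Spoiler strategy of length $\leq m$, while $\psi$ still separates the families, is the crux.
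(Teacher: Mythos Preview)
Your proposal is correct and follows essentially the same approach as the paper: build two families of $\ropen{0,N}$-timed words, separate them by an \mtlpast{} formula that nests a past operator under a future operator, and show Duplicator wins the $m$-round EF game in which Spoiler has only $\until_I$-moves but the atomic test additionally checks agreement on $\Phi_{\mathit{int}}$. The paper's realisation is a bit cleaner than your sketch: it uses proposition-free models (one family is obtained from the other by deleting a single event at time $0.5$ from a uniform grid on $\ropen{0,2}$), and the separating formula is simply $\globally_{(1,2)}(\once_{=1}\mathbf{true})$, so no propositions $P$ are needed and the ``backward-visible configuration'' is just the presence or absence of an event exactly one time unit earlier.
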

\begin{proof}
For a given $m \in \mathbb{N}$, we construct
\[
    \begin{array}{rcl}
    \mathcal{C}_{m} & = & (\emptyset, 0)(\emptyset, \frac{0.5}{2m+3})(\emptyset, \frac{1.5}{2m+3})\ldots(\emptyset, 2 - \frac{0.5}{2m+3}) \,.
    \end{array}
\]
$\mathcal{D}_{m}$ is constructed as $\mathcal{C}_{m}$ except that the event at time $\frac{m+1.5}{2m+3} = 0.5$ is missing.
\begin{figure}[h]
\centering
\begin{tikzpicture}[scale=1.0]
\begin{scope}[>=latex]

\draw[|-|, loosely dashed] (0pt,40pt) -- (260pt,40pt) node[at start, left] {$\mathcal{C}_{m}$};
\draw[|-|, loosely dashed] (0pt,0pt) -- (260pt,0pt) node[at start, left] {$\mathcal{D}_{m}$};

\draw[loosely dashed] (0pt,-10pt) -- (0pt,50pt) node[at start,below] {$0$};
\draw[loosely dashed] (130pt,-10pt) -- (130pt,50pt) node[at start,below] {$1$};
\draw[loosely dashed] (260pt,-10pt) -- (260pt,50pt) node[at start,below] {$2$};

\draw[draw=black, fill=white] (1pt, -4pt) rectangle (-1pt, 4pt);
\draw[draw=black, fill=white] (1pt, 36pt) rectangle (-1pt, 44pt);

\draw[draw=black, fill=white] (6pt, -4pt) rectangle (4pt, 4pt);
\draw[draw=black, fill=white] (6pt, 36pt) rectangle (4pt, 44pt);

\draw[draw=black, fill=white] (16pt, -4pt) rectangle (14pt, 4pt);
\draw[draw=black, fill=white] (16pt, 36pt) rectangle (14pt, 44pt);

\draw[-, very thick, loosely dotted] (26pt,20pt) -- (44pt,20pt);

\draw[draw=black, fill=white] (56pt, -4pt) rectangle (54pt, 4pt);
\draw[draw=black, fill=white] (56pt, 36pt) rectangle (54pt, 44pt);

\draw[draw=black, fill=white] (66pt, 36pt) rectangle (64pt, 44pt);

\draw[draw=black, fill=white] (76pt, -4pt) rectangle (74pt, 4pt);
\draw[draw=black, fill=white] (76pt, 36pt) rectangle (74pt, 44pt);

\draw[-, very thick, loosely dotted] (86pt,20pt) -- (104pt,20pt);

\draw[draw=black, fill=white] (114pt, -4pt) rectangle (116pt, 4pt);
\draw[draw=black, fill=white] (114pt, 36pt) rectangle (116pt, 44pt);

\draw[draw=black, fill=white] (124pt, -4pt) rectangle (126pt, 4pt);
\draw[draw=black, fill=white] (124pt, 36pt) rectangle (126pt, 44pt);

\draw[draw=black, fill=white] (136pt, -4pt) rectangle (134pt, 4pt);
\draw[draw=black, fill=white] (136pt, 36pt) rectangle (134pt, 44pt);

\draw[draw=black, fill=white] (146pt, -4pt) rectangle (144pt, 4pt);
\draw[draw=black, fill=white] (146pt, 36pt) rectangle (144pt, 44pt);

\draw[-, very thick, loosely dotted] (156pt,20pt) -- (234pt,20pt);

\draw[draw=black, fill=white] (246pt, -4pt) rectangle (244pt, 4pt);
\draw[draw=black, fill=white] (246pt, 36pt) rectangle (244pt, 44pt);

\draw[draw=black, fill=white] (256pt, -4pt) rectangle (254pt, 4pt);
\draw[draw=black, fill=white] (256pt, 36pt) rectangle (254pt, 44pt);

\end{scope}

\end{tikzpicture}
\caption{Models $\mathcal{C}_{m}$ and $\mathcal{D}_{m}$.}%
\label{fig:past-expressiveness}
\end{figure}
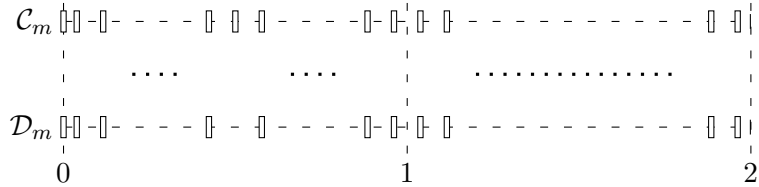

The models are illustrated in Figure~\ref{fig:past-expressiveness}, where each white box represents an event (at which no monadic predicate holds).
We play an $m$-round \mtlpast{} EF game on $\mathcal{C}_{m}$ and $\mathcal{D}_{m}$, allowing only $\until_{I}$-move.
For simplicity, assume that we can use special monadic predicates to refer to formulas in $\Phi_{\mathit{int}}$.
In each round $r$, \emph{Duplicator} can either make (i) $i_{r+1} = j_{r+1} + 1$ and $i_{r+1} \geq m + 3$ (in which case she can win the remaining rounds)
or (ii) $i_{r+1} = j_{r+1}$ and $i_{r+1}$ is not equal to $2m + 2$, $2m + 3$ or $4m + 5$.
It follows from the \mtlpast{} EF Theorem that no \mtlunit{} formula of modal depth $\leq m$ can distinguish $\mathcal{C}_{m}$
and $\mathcal{D}_{m}$; but the formula
\[
\globally_{(1, 2)} (\once_{= 1} \mathbf{true}) \,,
\]
which says ``for each event in $(1, 2)$ from now, there is a corresponding event exactly $1$ time unit earlier'',
distinguishes $\mathcal{C}_{m}$ and $\mathcal{D}_{m}$ for any $m \in \mathbb{N}$ (when evaluated at position $0$).
\end{proof}

\paragraph{\emph{Counting modalities}}
The modality $C_n(x, X)$ asserts that $X$ holds at least at $n$ points in the open interval $(x, x + 1)$.
The modalities $C_n$ for $n \geq 2$ are called \emph{counting modalities}.
It is well-known that these modalities are not expressible in \mtlpast{} over signals~\cite{Hirshfeld2007}.
For this reason, they (and variants thereof) are often used to prove
inexpressiveness results for various metric logics.
For example, the following property
\begin{itemize}
\item $P$ holds at an event at time $y$ in the future
\item $Q$ holds at an event at time $y' > y$
\item $R$ holds at an event at time $y'' > y' > y$
\item Both the $Q$-event and the $R$-event are within $(1, 2)$ from the $P$-event
\end{itemize}
can be expressed as the \foone{} formula
\[
    \arraycolsep=0.3ex
    \begin{array}{rcll}
    \vartheta_{\mathit{pqr}}(x) & = & \exists y \, \biggl( x < y \ \wedge & P(y) \wedge \exists y' \, \Bigl(y < y' \wedge d(y, y') > 1 \wedge d(y, y') < 2  \wedge Q(y') \\
    & & & {} \wedge \exists y'' \, \bigl(y' < y'' \wedge d(y, y'') > 1 \wedge d(y, y'') < 2 \wedge R(y'') \bigr)\Bigr)\biggr) \,,
    \end{array}
\]
yet it has no equivalent in \mtlpast{} over timed words~\cite{Pandya2011}.
The difficulty here is that while we can easily write `there is a $Q$-event within $(1, 2)$
from a $P$-event in the future' as $\eventually(P \wedge \eventually_{(1, 2)} Q)$, it is not possible
to express `there is a $R$-event after the $Q$-event' and `that $R$-event is within $(1, 2)$ from the $P$-event'
simultaneously in \mtlpast{}.
Indeed, it was shown recently that in the continuous semantics, \mtlpast{} extended with counting modalities and their past counterparts
(which we denote by \mtlpastcnt{})
is expressively complete for \foone{}~\cite{Hunter2013}.
In other words, counting modalities are exactly what separates
the expressiveness of \mtlpast{} and \foone{} in the continuous semantics.
In the time-bounded pointwise case, however,
they add no expressiveness to \mtlpast{}.
To see this, observe that the following formula is equivalent to $\vartheta_{\mathit{pqr}}(x)$ over $\ropen{0, N}$-timed words (we make use of the formulas in $\Phi_{\mathit{int}}$ defined earlier):
\[
    \arraycolsep=0.3ex
    \begin{array}{ll}
    \eventually \Biggl(\bigvee_{0 \leq i \leq N - 1} \biggl( P \wedge \varphi_{i, i + 1} \wedge \Bigl(& \underbrace{\eventually_{> 1} \bigl(Q \wedge \eventually(R \wedge \varphi_{i + 1, i + 2}) \bigr)}_\text{Case (i)} \\
    & {} \vee \underbrace{\eventually_{< 2} \bigl(R \wedge \varphi_{i + 2, i + 3} \wedge \once (Q \wedge \varphi_{i + 2, i + 3}) \bigr)}_\text{Case (ii)} \\
    & {} \vee \underbrace{\bigl(\eventually_{> 1} (Q \wedge \varphi_{i + 1, i + 2}) \wedge \eventually_{< 2} (R \wedge \varphi_{i + 2, i + 3}) \bigr)}_\text{Case (iii)} \Bigr) \biggr) \Biggr) \,.
    \end{array}
\]
The three cases that correspond to the subformulas are illustrated in Figure~\ref{fig:countingex}
where time is measured relative to the very first event (with timestamp $0$).
Note how we use the `integer boundaries' as an alternative distance measure and thus ensure that
both the $Q$-event and the $R$-event are within $(1, 2)$ from the $P$-event.

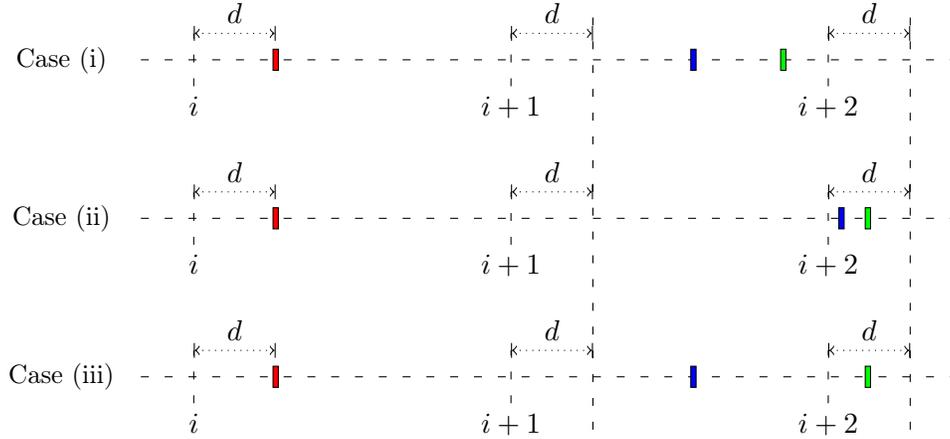
\begin{figure}[ht]
\centering
\begin{tikzpicture}[scale=1.0]
\begin{scope}[>=latex]

\draw[-, loosely dashed] (-20pt,0pt) -- (290pt,0pt);

\draw[-, loosely dashed] (0pt,-10pt) -- (0pt,10pt) node[at start, below] {$i$};

\draw[-, loosely dashed] (120pt,-10pt) -- (120pt,10pt) node[at start, below] {$i + 1$};
\draw[-, loosely dashed] (240pt,-10pt) -- (240pt,10pt) node[at start, below] {$i + 2$};

\draw[draw=black, fill=red] (32pt, -4pt) rectangle (30pt, 4pt);
\draw[draw=black, fill=blue] (190pt, -4pt) rectangle (188pt, 4pt);
\draw[draw=black, fill=green] (224pt, -4pt) rectangle (222pt, 4pt);

\draw[-, loosely dashed] (-20pt,-60pt) -- (290pt,-60pt);

\draw[-, loosely dashed] (0pt,-70pt) -- (0pt,-50pt) node[at start, below] {$i$};

\draw[-, loosely dashed] (120pt,-70pt) -- (120pt,-50pt) node[at start, below] {$i + 1$};
\draw[-, loosely dashed] (240pt,-70pt) -- (240pt,-50pt) node[at start, below] {$i + 2$};

\draw[draw=black, fill=red] (32pt, -64pt) rectangle (30pt, -56pt);
\draw[draw=black, fill=blue] (246pt, -64pt) rectangle (244pt, -56pt);
\draw[draw=black, fill=green] (256pt, -64pt) rectangle (254pt, -56pt);

\draw[-, loosely dashed] (-20pt,-120pt) -- (290pt,-120pt);

\draw[-, loosely dashed] (0pt,-130pt) -- (0pt,-110pt) node[at start, below] {$i$};

\draw[-, loosely dashed] (120pt,-130pt) -- (120pt,-110pt) node[at start, below] {$i + 1$};
\draw[-, loosely dashed] (240pt,-130pt) -- (240pt,-110pt) node[at start, below] {$i + 2$};

\draw[draw=black, fill=red] (32pt, -124pt) rectangle (30pt, -116pt);
\draw[draw=black, fill=blue] (190pt, -124pt) rectangle (188pt, -116pt);
\draw[draw=black, fill=green] (256pt, -124pt) rectangle (254pt, -116pt);

\draw[-, loosely dashed] (151pt,-140pt) -- (151pt,20pt);
\draw[-, loosely dashed] (271pt,-140pt) -- (271pt,20pt);

\node at (-50pt, 0pt) {\small{Case (i)}};
\node at (-50pt, -60pt) {\small{Case (ii)}};
\node at (-50pt, -120pt) {\small{Case (iii)}};

\end{scope}

\begin{scope}[>=to]
\draw[|<->|][dotted] (0pt,10pt)  -- (31pt,10pt) node[midway,above] {{$d$}};
\draw[|<->|][dotted] (120pt,10pt)  -- (151pt,10pt) node[midway,above] {{$d$}};
\draw[|<->|][dotted] (240pt,10pt)  -- (271pt,10pt) node[midway,above] {{$d$}};

\draw[|<->|][dotted] (0pt,-50pt)  -- (31pt,-50pt) node[midway,above] {{$d$}};
\draw[|<->|][dotted] (120pt,-50pt)  -- (151pt,-50pt) node[midway,above] {{$d$}};
\draw[|<->|][dotted] (240pt,-50pt)  -- (271pt,-50pt) node[midway,above] {{$d$}};

\draw[|<->|][dotted] (0pt,-110pt)  -- (31pt,-110pt) node[midway,above] {{$d$}};
\draw[|<->|][dotted] (120pt,-110pt)  -- (151pt,-110pt) node[midway,above] {{$d$}};
\draw[|<->|][dotted] (240pt,-110pt)  -- (271pt,-110pt) node[midway,above] {{$d$}};
\end{scope}

\end{tikzpicture}
\caption{Counting modalities is expressible in \mtlpast{} over $\ropen{0, N}$-timed words. The red, blue, and green boxes represent $P$-events, $Q$-events, and $R$-events respectively.}%
\label{fig:countingex}
\end{figure}
The same idea can readily be generalised to handle counting modalities and their past counterparts.
We therefore have the following proposition.

\begin{prop}\label{prop:countinguseless}
\emph{\mtlpast{}} is expressively complete for \emph{\mtlpastcnt{}} over $\ropen{0, N}$-timed words.
\end{prop}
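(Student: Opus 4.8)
The plan is to proceed by a uniform induction on formula structure. Since \mtlpastcnt{} extends \mtlpast{} by exactly the families $\{C_n\}_{n \geq 2}$ and $\{\overset{\leftarrow}{C}_n\}_{n \geq 2}$, and the Boolean connectives together with $\until_I$ and $\since_I$ already belong to \mtlpast{}, it suffices to show that whenever $X$ is replaced by an \mtlpast{} formula $\varphi$, both $C_n(x, \varphi)$ and $\overset{\leftarrow}{C}_n(x, \varphi)$ have \mtlpast{} equivalents over $\ropen{0, N}$-timed words; the full translation then follows bottom-up. The guiding observation, generalising the treatment of $\vartheta_{\mathit{pqr}}$ above, is that the counting window $(x, x+1)$ crosses at most one integer boundary, so if the current point lies in the slab $\ropen{i, i+1}$ (detected by $\varphi_{i, i+1} \in \Phi_{\mathit{int}}$) the window splits into a same-slab part $(x, i+1)$ and an adjacent-slab part $[i+1, x+1)$. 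The finitely many formulas of $\Phi_{\mathit{int}}$ thus furnish a global integer reference frame that lets us replace genuine metric counting by counting within and across slabs.

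For the same-slab part, ``at least $a$ $X$-events in $(x, i+1)$'' is expressed by $a$-fold nesting of the unconstrained strict future modality restricted to the current slab: an event satisfying $\varphi_{i, i+1}$ that is strictly later than $x$ automatically lies in $(x, i+1)$, so no metric subscript is needed and nesting forces the witnesses to be distinct and increasing. For the adjacent-slab part, ``at least $b$ $X$-events in $[i+1, x+1)$'' is the crux, and here I would reuse the idea behind Case~(ii) of the example: a single $\eventually_{(0, 1)}$ jumps from $x$ to an anchor $X$-event in slab $i+1$ (hence of distance $< 1$ from $x$), and then $b - 1$ nested $\once$ modalities, each guarded by $\varphi_{i+1, i+2}$, pick out earlier $X$-events of the same slab. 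Membership in slab $i+1$ supplies the lower bound (every such event is strictly after $x$), while being earlier than the anchor keeps each witness within distance $< 1$ of $x$; thus a single metric operator plus the slab guards pin down both endpoints. Taking the disjunction over the split $a + b = n$ and over the slab index $i$ gives
\[
C_n(x, X) \equiv \bigvee_{i=0}^{N-1} \Bigl( \varphi_{i, i+1} \wedge \bigvee_{a + b = n} \bigl( \chi^i_a \wedge \chi^{i+1}_b \bigr) \Bigr),
\]
with $\chi^i_a$ and $\chi^{i+1}_b$ the two \mtlpast{} counting formulas just described; the past modality $\overset{\leftarrow}{C}_n$ is handled symmetrically using $\since$, past operators of $\once$-type, and the previous slab.

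The main obstacle I expect is the careful verification of this equivalence rather than its discovery. One has to check both implications---that each disjunct forces the requisite number of \emph{distinct} $X$-events inside the open window, and conversely that $n$ such events can always be matched by some slab index and some split, choosing the anchor to be the latest in-window adjacent-slab witness---and to confirm the boundary behaviour: when $x$ sits exactly on an integer the adjacent-slab part is empty (only $b = 0$ can hold, consistent with the window being $(i, i+1)$), and for the top slab the guard $\varphi_{N, N+1}$ is unsatisfiable, so those terms vanish harmlessly. Finally, one notes that the construction is inherently time-bounded: it relies on the finite disjunction over $\{0, \dots, N-1\}$ and on the fact that $\Phi_{\mathit{int}}$ pins down the integer part of each timestamp, neither of which survives over unbounded timed words.
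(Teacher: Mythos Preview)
Your proposal is correct and follows essentially the same approach as the paper: both use the slab formulas $\varphi_{i,i+1}$ to locate the current point, split the counting window at the integer boundary, and replace metric counting by slab-guarded nesting together with a single metric operator for the cross-boundary part. The paper only sketches the argument via the three-case treatment of $\vartheta_{\mathit{pqr}}$ and then asserts that ``the same idea can readily be generalised''; your explicit disjunction over splits $a+b=n$ is precisely that generalisation made concrete.
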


\paragraph{\emph{Non-local properties (one reference point)}}
Proposition~\ref{prop:countinguseless} shows that a part of the expressiveness hierarchy
of metric logics over ($\mathbb{R}_{\geq 0}$-)timed words collapses in a time-bounded pointwise setting. % chktex 36
Nonetheless, \mtlpast{} is still not expressive enough to capture the whole of \foone{}
in such a setting. Recall that another feature of the continuous interpretation of \mtl{}
used in the proof in~\cite{Ouaknine2009} is that
$\eventually_{= k} \varphi$ holds at $t$ \emph{iff} $\varphi$ holds at $t + k$.
Suppose that we want to specify the following property over $\mathbf{P} = \{P, Q\}$
for some positive integer $a$ (let the current instant be $t_1$):
\begin{itemize}
\item There is an event at time $t_2 > t_1 + a$ where $Q$ holds
\item $P$ holds at all events in $(t_1 + a, t_2)$.
\end{itemize}
In the continuous semantics,  the property can easily be expressed as
the following \mtl{} formula
\[
\varphi_{\mathit{cont1}} = \eventually_{=a} \bigl( (P \vee P_{\epsilon}) \until Q \bigr)
\]
over signals of the form $f^\rho$ (over $\Sigma_{\mathbf{P'}}$ where $\mathbf{P'} = \mathbf{P} \cup \{P_\epsilon\}$); see Figure~\ref{fig:puntilq} for an example where
the formula $\varphi_{\mathit{cont1}}$ holds at $t_1$ in the continuous semantics.

\begin{figure}[h]
\centering
\begin{tikzpicture}[scale=1]
\begin{scope}[>=latex]

\draw[-, loosely dashed] (-5pt,0pt) -- (285pt,0pt);

\draw[-, loosely dashed] (0pt,-10pt) -- (0pt,10pt) node[at start, below] {$t_1$};

\draw[-, loosely dashed] (220pt,-10pt) -- (220pt,10pt) node[at start, below] {$t_1 + a$};
\draw[-, loosely dashed] (260pt,-10pt) -- (260pt,10pt) node[at start, below] {$t_2$};

\draw[draw=black, fill=white] (1pt, -4pt) rectangle (-1pt, 4pt);

\draw[draw=black, fill=red] (231pt, -4pt) rectangle (233pt, 4pt);

\draw[draw=black, fill=red] (240pt, -4pt) rectangle (242pt, 4pt);

\draw[draw=black, fill=red] (254pt, -4pt) rectangle (256pt, 4pt);

\draw[draw=black, fill=blue] (261pt, -4pt) rectangle (259pt, 4pt);

\end{scope}

\end{tikzpicture}
\caption{$\varphi_{\mathit{cont1}}$ holds at $t_1$ in the continuous semantics. The red boxes denote $P$-events and the blue boxes denote $Q$-events.}%
\label{fig:puntilq}
\end{figure}
Essentially, when the current instant is $t_1$, the continuous interpretation of \mtlpast{} allows one to speak of
events `from' $t_1 + a$, regardless of whether there is an actual event at $t_1 + a$.
As we will show, it is not possible to do the same with the pointwise interpretation of \mtlpast{} when there is no event at $t_1 + a$.
To remedy this issue within the pointwise semantic framework, we introduce a simple
family of modalities $\first_I$ (`Beginning')
and their past versions $\pfirst_I$.
They can be used to refer to the \emph{first} (earliest or latest, respectively) event in a given interval.
For example, we define the modality that asserts ``$X$ holds at the first event in $(a, b)$ relative
to the current instant'' as the following \foone{} formula:
%no other event in $(a, b)$ with a smaller timestamp:
\[
    \arraycolsep=0.3ex
    \begin{array}{rcll}
    \first_{(a, b)}(x, X) & = & \exists x' \, \Bigl( & x < x' \wedge d(x, x') > a \wedge d(x, x') < b \wedge X(x') \\
    & & & {} \wedge \nexists x'' \, \bigl(x < x'' \wedge x'' < x' \wedge d(x, x'') > a \bigr) \Bigr) \,.
    \end{array}
\]
The property above can now be written as $\first_{(a, \infty)} \bigl( Q \vee (P \until Q) \bigr)$ in
the pointwise semantics.
We refer to the extension of \mtlpast{} with $\first_I, \pfirst_I$
as \mtlbpast{}.\footnote{Readers may find the modalities $\first_I$ similar to
the modalities $\triangleright_I$ in \emph{Event-Clock Logic}~\cite{Henzinger1998}.
The difference is that the formula $\first_I \varphi$ requires
$\varphi$ to hold at the \emph{first} event in $I$, whereas the formula $\triangleright_I \varphi$
requires (i) $\varphi$ to hold at \emph{some} event in $I$ and that (ii)
$\varphi$ does not hold at any other event between the current instant and the time of that event.}
The following proposition states that this extension is indeed non-trivial.

\begin{prop}\label{prop:beginning-strict}
\emph{\mtlbpast{}} is strictly more expressive than \emph{\mtlpast{}} over $\ropen{0, N}$-timed words.
\end{prop}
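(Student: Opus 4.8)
The plan is to prove the two directions separately. The inclusion $\mtlpast{} \subseteq \mtlbpast{}$ is immediate, since $\mtlbpast{}$ is by definition the extension of $\mtlpast{}$ with the first-order-definable modalities $\first_I, \pfirst_I$, so every $\mtlpast{}$ formula is already a $\mtlbpast{}$ formula with the same semantics. All the content is in the strictness. Following the EF-game template of Propositions~\ref{prop:unit-strict} and~\ref{prop:past-strict}, I would exhibit a single $\mtlbpast{}$ formula $\Phi$ together with families $\mathcal{E}_m, \mathcal{F}_m$ of $\ropen{0, N}$-timed words (for any $N \geq 2$) such that $\mathcal{E}_m \models \Phi$ and $\mathcal{F}_m \not\models \Phi$ for every $m$, yet \emph{Duplicator} wins the $m$-round $\mtlpast{}$ EF game on $\mathcal{E}_m, \mathcal{F}_m$ from $(i_0,j_0)=(0,0)$. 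By Theorem~\ref{thm:mtlef} no $\mtlpast{}$ formula of modal depth $\leq m$ separates them; since any fixed $\mtlpast{}$ formula has some fixed modal depth $m_0$, it cannot be equivalent to $\Phi$ on $\mathcal{E}_{m_0}, \mathcal{F}_{m_0}$, giving $\mtlbpast{} \nsubseteq \mtlpast{}$.

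For the distinguishing formula I would take the ``one reference point'' property motivated just above the proposition, anchored at a \emph{non-integer} reference to neutralise the integer-boundary predicates $\Phi_{\mathit{int}}$ available to full $\mtlpast{}$:
\[
\Phi \;=\; \eventually\bigl(R \wedge \first_{(1,\infty)}(Q \vee (P \until Q))\bigr)\,,
\]
over $\mathbf{P} = \{R, P, Q\}$, where in both models $R$ marks a unique event at time $\tfrac12$. The subformula $\first_{(1,\infty)}\psi$ scans for the \emph{first} event strictly after time $\tfrac12 + 1 = \tfrac32$ and tests $\psi$ there. The crucial point, and the reason $\first$ adds power, is \emph{minimality}: $\mtlpast{}$ can assert ``some event at distance $>1$ from $R$ satisfies $\psi$'' via $\eventually_{(1,\infty)}\psi$, but to pin down the \emph{earliest} such event it would have to verify ``no event lies in $(\tfrac32, t^{*})$'' from the witness at time $t^{*}=\tfrac32+x$, and this requires a window of non-integer length $x<1$, which is unavailable because $\mtlpast{}$ intervals have integer endpoints; nor do the $\varphi_{i,i+1} \in \Phi_{\mathit{int}}$ help, since their boundaries sit at the integers $1,2,\dots$ and not at $\tfrac32$.

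For the models I would place, in both $\mathcal{E}_m$ and $\mathcal{F}_m$, the blank event at time $0$, then $R$ at $\tfrac12$, then a dense cluster of $\Theta(m)$ events confined to the single integer band $(1,2)$, with their labels chosen so that both words carry the \emph{same pool} of $P$-, $Q$-, and blank-labelled events at distance $>1$ from $R$ (so that no single $\eventually_I/\once_I$ query with integer-endpoint $I$ separates them, and no two events sit at an integer distance apart except where forced). The \emph{only} difference is the label pattern of the event that is minimal past $\tfrac32$: in $\mathcal{E}_m$ the events strictly between $\tfrac32$ and the next $Q$ are all $P$-labelled (so $\Phi$ holds), whereas in $\mathcal{F}_m$ a single blank event is slid from just below $\tfrac32$ to just above it, interrupting the $P$-run before the $Q$ (so $\Phi$ fails). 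The $\Theta(m)$ padding is exactly what hides this ``phase shift at the boundary'' from depth-$m$ play. \emph{Duplicator}'s strategy would maintain the invariant that the current configuration $(i_r, j_r)$ matches in monadic predicates and in integer-distance profile, and stays at least $m - r$ cluster-events away from the critical boundary slot, so that she can always answer an $\eventually$-part, an $\until$-part, or a $\since$-move without being cornered into the one position where the words differ.

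I expect the main obstacle to be the EF-game verification, specifically the $\until$-part of the $\until_{I}$-move: there \emph{Spoiler} picks an intermediate position $j_r''$ and forces \emph{Duplicator} to realise an in-between position, which is precisely his mechanism for probing the region just past $\tfrac32$ where the two words diverge. The heart of the argument is to show that the $\Theta(m)$ uniform padding prevents \emph{Spoiler} from ``zooming in'' to that slot within $m$ rounds, so \emph{Duplicator} can always keep a matching intermediate event in reserve; a secondary subtlety is checking that the half-integer placement of $R$ genuinely removes every integer-distance coincidence (so that $\once_{=k}\mathbf{true}$ and the $\varphi_{i,i+1}$ confer no separating power), making the parametrised construction robust for all $m$.
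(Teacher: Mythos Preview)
Your high-level plan coincides with the paper's: fix a $\first$-based separating formula, build model families $\mathcal{E}_m, \mathcal{F}_m$ that \emph{Duplicator} survives for $m$ rounds in the \mtlpast{} EF game, and invoke Theorem~\ref{thm:mtlef}. The gap is in the concrete construction.

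The decisive problem is that your blank event is distinctly labelled and, in $\mathcal{E}_m$, is the \emph{only} blank near the region past $\tfrac32$. Sliding it across $\tfrac32$ therefore changes what the query $\eventually_{(1,2)}(\neg P \wedge \neg Q \wedge \neg R)$ returns at $R$: false in $\mathcal{E}_m$ (you stipulate that the events in $(\tfrac32,2)$ before $Q$ are all $P$), true in $\mathcal{F}_m$. Hence the depth-$2$ \mtlpast{} formula
\[
\eventually\bigl(R \wedge \eventually_{(1,2)}(\neg P \wedge \neg Q \wedge \neg R)\bigr)
\]
already separates the two models, and \emph{Duplicator} loses in two rounds regardless of how much $\Theta(m)$ padding you add elsewhere. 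Your own requirement that ``both words carry the same pool of \dots\ events at distance $>1$ from $R$'' is exactly what a single boundary crossing of a uniquely-labelled event violates. There is also a problem with $\Phi$ itself: if the slid blank becomes the \emph{first} event past $\tfrac32$ (as ``just above it'' suggests) and is followed by $P,\dots,P,Q$, then $P \until Q$ holds at the blank, since strict $\until$ does not constrain the current position; so $\first_{(1,\infty)}(Q \vee (P \until Q))$ holds at $R$ in $\mathcal{F}_m$ as well and $\Phi$ fails to separate.

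The paper's construction is of a different character. Both models consist of $4m+12$ overlapping $(P,Q)$-segments with spacing chosen so that no two events sit at integer distance; the two models differ only in the \emph{timestamp} of a single $Q$-event, shifted by~$\varepsilon$. That shift does cross the distance-$1$ threshold from one particular $P$-event (this is precisely what $\first_{(1,2)}$ detects), but because there are many other $Q$-events on both sides of every such threshold, every integer-windowed \mtlpast{} existence query returns the same answer in both models. The separating formula is simply $\eventually(P \wedge \first_{(1,2)} P)$, and \emph{Duplicator}'s strategy tracks segment indices with drift bounded by the round number. To repair your approach you would need a similar design: the perturbed event must carry a label already abundant on both sides of every relevant integer threshold, so that no single $\eventually_I$/$\once_I$ probe can see it move.
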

\begin{proof}
The proof we give here is inspired by a proof in~{\cite[Section 5]{Pandya2011}}.
Given $m \in \mathbb{N}$, we describe models $\mathcal{E}_{m}$
and $\mathcal{F}_{m}$ that are indistinguishable by \mtlpast{} formulas of modal depth $\leq m$
but distinguishing in \mtlbpast{}.

We first describe $\mathcal{F}_{m}$.
Let $g = \frac{1}{2m + 6}$ and pick positive $\varepsilon < \frac{g}{\frac{1}{g} - 1}$.
The first event (at time $0$) satisfies $\neg P \wedge \neg Q$.
Then, a sequence of overlapping segments (arranged as described below) starts at time $\frac{0.5}{2m + 5}$; see Figure~\ref{fig:segment} for an illustration of a segment.
Each segment consists of an
event satisfying $P \wedge \neg Q$
and an event satisfying $\neg P \wedge Q$
(we refer to them as $P$-events and $Q$-events, respectively).
If the $P$-event in the $i^{\mathit{th}}$ segment is at time $t$, then its $Q$-event is at time $t+2g+ \frac{1}{2} \varepsilon$.
All $P$-events in neighbouring segments are separated by $g - \frac{g}{\frac{1}{g} - 1}$.
We put a total of $4m + 12$ segments.
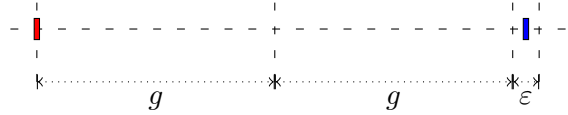
\begin{figure}[h]
\centering
\begin{tikzpicture}[scale=1]
\begin{scope}[>=latex]

\draw[-, loosely dashed] (-10pt,0pt) -- (200pt,0pt);

\draw[loosely dashed] (0pt,-20pt) -- (0pt,10pt);
\draw[loosely dashed] (90pt,-20pt) -- (90pt,10pt);
\draw[loosely dashed] (180pt,-20pt) -- (180pt,10pt);
\draw[loosely dashed] (190pt,-20pt) -- (190pt,10pt);

\draw[draw=black, fill=red] (1pt, -4pt) rectangle (-1pt, 4pt);

\draw[draw=black, fill=blue] (186pt, -4pt) rectangle (184pt, 4pt);

\end{scope}
\begin{scope}[>=to]
\draw[|<->|][dotted] (0pt,-20pt)  -- (90pt,-20pt) node[midway,below] {{$g$}};
\draw[|<->|][dotted] (90pt,-20pt)  -- (180pt,-20pt) node[midway,below] {$g$};
\draw[|<->|][dotted] (180pt,-20pt)  -- (190pt,-20pt) node[midway,below] {$\varepsilon$};
\end{scope}
\end{tikzpicture}
\caption{A single segment in $\mathcal{F}_{m}$. The red box denotes a $P$-event and the blue box denotes a $Q$-event.}%
\label{fig:segment}
\end{figure}

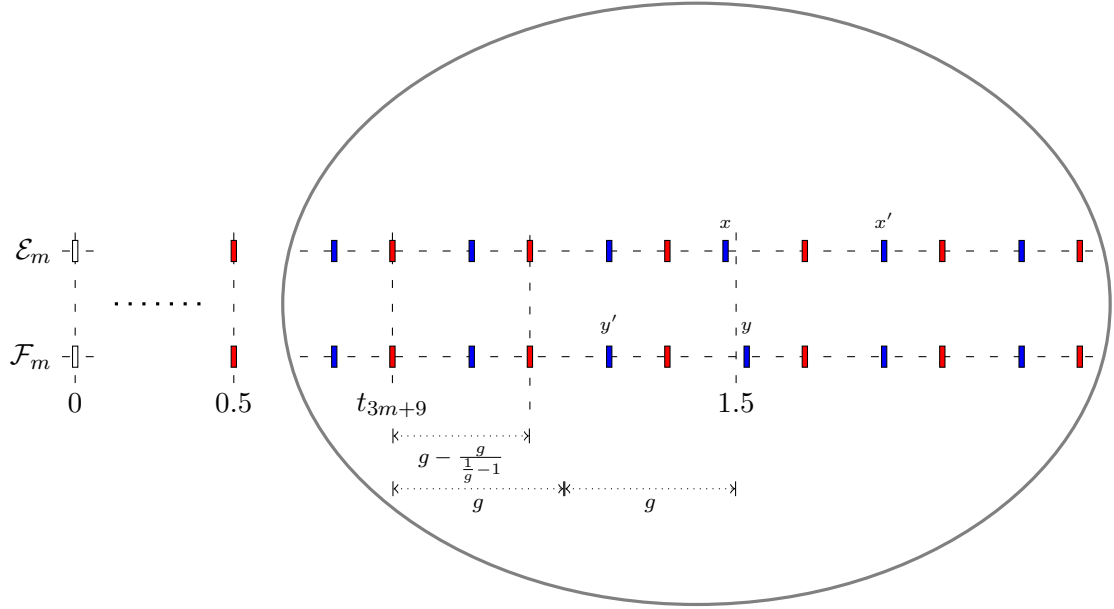
\begin{figure}[t]
\centering
\begin{tikzpicture}[scale=1]
\begin{scope}[>=latex]

\draw[-, loosely dashed] (-35pt,40pt) -- (265pt,40pt);
\draw[-, loosely dashed] (-35pt,0pt) -- (265pt,0pt);
\draw[-, loosely dashed] (-125pt,40pt) -- (-110pt,40pt) node[at start, left] {$\mathcal{E}_{m}$};
\draw[-, loosely dashed] (-125pt,0pt) -- (-110pt,0pt) node[at start, left] {$\mathcal{F}_{m}$};

\draw[-, very thick, loosely dotted] (-105pt,20pt) -- (-70pt,20pt);

\draw[very thick, draw=gray] (115pt, 20pt) ellipse (5.5cm and 4cm);

\draw[loosely dashed] (0pt,-10pt) -- (0pt,50pt) node[at start, below] {$t_{3m+9}$};
\draw[loosely dashed] (130pt,-10pt) -- (130pt,50pt) node[at start, below] {$1.5$};
\draw[loosely dashed] (-60pt,-10pt) -- (-60pt,50pt) node[at start, below] {$0.5$};
\draw[loosely dashed] (-120pt,-10pt) -- (-120pt,50pt) node[at start, below] {$0$};
\draw[loosely dashed] (52pt,-20pt) -- (52pt,50pt);

\draw[draw=black, fill=white] (-121pt, -4pt) rectangle (-119pt, 4pt);
\draw[draw=black, fill=white] (-121pt, 36pt) rectangle (-119pt, 44pt);

\draw[draw=black, fill=red] (-61pt, -4pt) rectangle (-59pt, 4pt);
\draw[draw=black, fill=red] (-61pt, 36pt) rectangle (-59pt, 44pt);

\draw[draw=black, fill=red] (1pt, -4pt) rectangle (-1pt, 4pt);
\draw[draw=black, fill=red] (1pt, 36pt) rectangle (-1pt, 44pt);

\draw[draw=black, fill=red] (53pt, -4pt) rectangle (51pt, 4pt);
\draw[draw=black, fill=red] (53pt, 36pt) rectangle (51pt, 44pt);

\draw[draw=black, fill=red] (105pt, -4pt) rectangle (103pt, 4pt);
\draw[draw=black, fill=red] (105pt, 36pt) rectangle (103pt, 44pt);

\draw[draw=black, fill=red] (157pt, -4pt) rectangle (155pt, 4pt);
\draw[draw=black, fill=red] (157pt, 36pt) rectangle (155pt, 44pt);

\draw[draw=black, fill=red] (209pt, -4pt) rectangle (207pt, 4pt);
\draw[draw=black, fill=red] (209pt, 36pt) rectangle (207pt, 44pt);

\draw[draw=black, fill=red] (261pt, -4pt) rectangle (259pt, 4pt);
\draw[draw=black, fill=red] (261pt, 36pt) rectangle (259pt, 44pt);

\draw[draw=black, fill=blue] (-21pt, -4pt) rectangle (-23pt, 4pt);
\draw[draw=black, fill=blue] (-21pt, 36pt) rectangle (-23pt, 44pt);

\draw[draw=black, fill=blue] (31pt, -4pt) rectangle (29pt, 4pt);
\draw[draw=black, fill=blue] (31pt, 36pt) rectangle (29pt, 44pt);

\draw[draw=black, fill=blue] (83pt, -4pt) rectangle (81pt, 4pt);
\node[above] at (82pt, 5pt) {{\tiny $y'$}};
\draw[draw=black, fill=blue] (83pt, 36pt) rectangle (81pt, 44pt);

\draw[draw=black, fill=blue] (135pt, -4pt) rectangle (133pt, 4pt);
\node[above] at (134pt, 5pt) {{\tiny $y$}};

\draw[draw=black, fill=blue] (127pt, 36pt) rectangle (125pt, 44pt);
\node[above] at (126pt, 45pt) {{\tiny $x$}};

\draw[draw=black, fill=blue] (187pt, -4pt) rectangle (185pt, 4pt);
\draw[draw=black, fill=blue] (187pt, 36pt) rectangle (185pt, 44pt);
\node[above] at (186pt, 45pt) {{\tiny $x'$}};

\draw[draw=black, fill=blue] (239pt, -4pt) rectangle (237pt, 4pt);
\draw[draw=black, fill=blue] (239pt, 36pt) rectangle (237pt, 44pt);

\end{scope}

\begin{scope}[>=to]
\draw[|<->|][dotted] (0pt,-30pt)  -- (52pt,-30pt) node[midway,below] {{\scriptsize $g - \frac{g}{\frac{1}{g} - 1}$}};
\draw[|<->|][dotted] (0pt,-50pt)  -- (65pt,-50pt) node[midway,below] {{\scriptsize $g$}};
\draw[|<->|][dotted] (65pt,-50pt)  -- (130pt,-50pt) node[midway,below] {{\scriptsize $g$}};
\end{scope}
\end{tikzpicture}
\caption{A close-up near the ${(3m+9)}^{\mathit{th}}$-segments in $\mathcal{E}_{m}$ and $\mathcal{F}_{m}$.}%
\label{fig:3mp9segments}
\end{figure}

$\mathcal{E}_{m}$ is almost identical to $\mathcal{F}_{m}$ except the ${(3m + 9)}^{\mathit{th}}$ segment.
Let this segment start at $t_{3m + 9}$. In $\mathcal{E}_{m}$, we move the corresponding $Q$-event to
$t+2g-\frac{1}{2} \varepsilon$ (see Figure~\ref{fig:3mp9segments}).
Note in particular that there are $P$-events at time $0.5$ in both models (in their ${(m+4)}^{\mathit{th}}$ segment).

The only difference in two models is a pair of $Q$-events, which we
denote by $x$ and $y$ respectively and write their corresponding timestamps
as $t_x$ and $t_y$ (see Figure~\ref{fig:3mp9segments}).
It is easy to verify that no two events are separated by an integer distance.
We say a configuration $(i, j)$ is \emph{identical} if $i = j$.
For $i \geq 1$, we denote by $\mathit{seg}(i)$ the segment that the $i^{\mathit{th}}$ event belongs to,
and we write $P(i)$ if the $i^{\mathit{th}}$ event is a $P$-event and $Q(i)$ if its a $Q$-event.

\begin{prop}\label{prop:beginning-strict-induction}
\emph{Duplicator} has a winning strategy for $m$-round \emph{\mtlpast{}} EF game
on $\mathcal{E}_{m}$ and $\mathcal{F}_{m}$ with $(i_0, j_0) = (0, 0)$.
In particular, she has a winning strategy such that for each round  $0 \leq r \leq m$,
the $i_r^{\mathit{th}}$ event in $\mathcal{E}_{m}$ and the $j_r^{\mathit{th}}$ event in $\mathcal{F}_{m}$
satisfy the same set of monadic predicates and
\begin{itemize}
\item if $i_r \neq j_r$, then
	\begin{itemize}
	\item $\mathit{seg}(i_r) - \mathit{seg}(j_r) < r$
	\item $(m + 1 - r) < \mathit{seg}(i_r), \mathit{seg}(j_r) < (m + 5 + r)$ or $(3m + 8 - r) < \mathit{seg}(i_r), \mathit{seg}(j_r) < (3m + 12 + r)$.
	\end{itemize}
\end{itemize}
\end{prop}

\noindent
We prove the proposition by induction on $r$. The idea is to try to make the resulting configurations identical.
When this is not possible \emph{Duplicator} simply imitates what \emph{Spoiler} does.
\begin{itemize}
\item \emph{Base step.} The proposition holds trivially for $(i_0, j_0) = (0, 0)$.
\item \emph{Induction step.} Suppose that the claim holds for $r < m$. We prove it
also holds for $r + 1$.
	\begin{itemize}
	\item $(i_r, j_r) = (0, 0)$:\\
	\emph{Duplicator} can always make $(i_{r+1}, j_{r+1})$ identical.
	\item $(i_r, j_r) \neq (0, 0)$ is identical:\\
	\emph{Duplicator} tries to make $(i_r', j_r')$ identical.
	This may only fail when
	\begin{itemize}
	\item $P(i_r)$, $P(j_r)$ and $\mathit{seg}(i_r) = \mathit{seg}(j_r) = m + 4$.
	\item $Q(i_r)$, $Q(j_r)$ and $\mathit{seg}(i_r) = \mathit{seg}(j_r) = 3m + 9$, i.e.,~$x$ and $y$.
	\end{itemize}
	In these cases, \emph{Duplicator} chooses another event in a neighbouring segment
	that minimises $|\mathit{seg}(i_r') - \mathit{seg}(j_r')|$.
	For example, if $(i_r, j_r)$ corresponds to $x$ and $y$ and \emph{Spoiler} chooses
	$j_r'$ such that $P(j_r')$ and $\mathit{seg}(j_r') = m + 4$ in a $\since_{(1, \infty)}$-move,
	\emph{Duplicator} chooses $i_r'$ with $\mathit{seg}(i_r') = m + 3$.
	If \emph{Spoiler} then plays $\once$-part, the resulting configuration $(i_{r+1}, j_{r+1}) = (i_r', j_r')$
	clearly satisfy the claim. If she plays $\since$-part, \emph{Duplicator} makes $(i_r'', j_r'')$
	identical whenever possible. Otherwise she chooses a suitable event that minimises
	$|\mathit{seg}(i_r'') - \mathit{seg}(j_r'')|$. For instance, if $Q(i_r'')$ and $\mathit{seg}(i_r'') = m + 1$,
	\emph{Duplicator} chooses $j_r''$ such that $Q(j_r'')$ and $\mathit{seg}(j_r'') = m + 2$.
	The resulting configuration $(i_{r+1}, j_{r+1}) = (i_r'', j_r'')$ clearly satisfies the claim.

	\item $(i_r, j_r)$ is not identical:\\
	\emph{Duplicator} tries to make $(i_r', j_r')$ identical. If this is not possible,
	then \emph{Duplicator} chooses an event that minimises $|\mathit{seg}(i_r') - \mathit{seg}(j_r')|$.
	For example, consider $\mathit{seg}(i_r) = m + 4$, $\mathit{seg}(j_r) = m + 3$ such that $P(i_r)$ and $P(j_r)$,
	and \emph{Spoiler} chooses $x$ in an $\until_{(0, 1)}$-move. In this case, \emph{Duplicator}
	cannot choose $y'$, but she may choose the first $Q$-event that happens before $y'$. \emph{Duplicator} responds to
	$\until$-parts and $\since$-parts in similar ways as before. It is easy to see that the claim holds.
	\end{itemize}

\end{itemize}
Proposition~\ref{prop:beginning-strict} now follows from Proposition~\ref{prop:beginning-strict-induction},
the \mtlpast{} EF Theorem, and the fact that
$\mathcal{E}_{m} \models \eventually (P \wedge \first_{(1, 2)} P)$ but
$\mathcal{F}_{m} \not \models \eventually (P \wedge \first_{(1, 2)} P)$.
\end{proof}

\paragraph{\emph{Non-local properties (two reference points)}}

Adding modalities $\first_I, \pfirst_I$ to \mtlpast{} allows one to specify properties
with respect to a distant time point even when there is no event at that point.
However, the following proposition shows that this is still not enough for expressive completeness.

\begin{prop}\label{prop:foone-strict}
\emph{\foone{}} is strictly more expressive than \emph{\mtlbpast} over $\ropen{0, N}$-timed words.
\end{prop}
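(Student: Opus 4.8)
\textbf{Extending the game to \mtlbpast{}.} Theorem~\ref{thm:mtlef} is stated for \mtlpast{}, so the first thing I would do is enlarge Pandya and Shah's EF game with two new moves matching the modalities $\first_I$ and $\pfirst_I$, and prove the corresponding EF theorem. Recalling that $\first_{(a,b)}\varphi$ holds at the \emph{earliest} event whose distance from the current instant exceeds $a$ (provided that event lies within $b$ and satisfies $\varphi$), the natural \emph{$\first_I$-move} is deterministic: \emph{Spoiler} fixes an interval $I$ with endpoints in $\mathbb{N}\cup\{\infty\}$; writing $i'$ (resp.\ $j'$) for the first position after the current one at distance $>\inf I$ in the two words, \emph{Spoiler} wins outright if this first-past-$\inf I$ event lies in $I$ in exactly one of the two words, the configuration advances to $(i',j')$ if it lies in $I$ in both, and the move is vacuous otherwise; the \emph{$\pfirst_I$-move} is symmetric towards the past. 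A routine induction on modal depth then gives the analogue of Theorem~\ref{thm:mtlef}: if \emph{Duplicator} wins the $m$-round game from $(0,0)$, then the two models agree on all \mtlbpast{} formulas of modal depth $\le m$.

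\textbf{Models and the separating formula.} Next I would build two families $\mathcal{G}_m$ and $\mathcal{H}_m$ of $\ropen{0,N}$-timed words, reusing the overlapping-segment template of Proposition~\ref{prop:beginning-strict}: a long chain of tightly packed segments, mutually offset by a sub-$\varepsilon$ amount, in which a single designated pair of events differs between the two models, placed so that every such event sits far from all integer boundaries and is therefore invisible to the formulas of $\Phi_{\mathit{int}}$. Crucially, unlike the one-reference-point case, the perturbed witness here must be characterisable only by its metric relation to \emph{two distinct} anchor events simultaneously. The distinguishing \foone{} formula would have the schematic two-reference-point shape
\[
\vartheta(x) = \exists u \, \exists v \, \exists w \, \bigl( x < u \wedge u < v \wedge v < w \wedge P(u) \wedge Q(v) \wedge P(w) \wedge d(x,u) < 1 \wedge d(x,w) > 1 \wedge d(u,v) < 1 \wedge d(v,w) < 1 \bigr) \,,
\]
whose essential feature is that the $Q$-witness $v$ is pinned by metric constraints against \emph{both} anchors $u$ and $w$, neither of which is the evaluation point $x$. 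The goal is to choose the perturbation so that $\mathcal{G}_m \models \vartheta$ while $\mathcal{H}_m \not\models \vartheta$ for every $m$ (evaluated at position $0$), exactly because \mtlbpast{} can re-anchor at an actual event via $\first_I$/$\pfirst_I$ but can only ever maintain a metric relation to \emph{one} anchor at a time.

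\textbf{Duplicator's strategy.} I would then exhibit \emph{Duplicator}'s winning strategy for the $m$-round \mtlbpast{} EF game by induction on the round, maintaining an invariant in the spirit of Proposition~\ref{prop:beginning-strict-induction}: the discrepancy between the two segment indices is bounded by the number of elapsed rounds, and both pebbles are confined to a shrinking band of segments straddling the perturbation. On $\until_I$-, $\since_I$-, $\eventually$- and $\until$-parts the analysis is as before (\emph{Duplicator} makes configurations identical when she can, and otherwise moves to a neighbouring segment minimising the index gap).

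\textbf{The main obstacle.} The hard part, and the reason the earlier model family is not directly reusable, is closing the induction against the new \emph{deterministic} $\first_I$ and $\pfirst_I$ moves. Because these moves jump to a \emph{uniquely determined} first-past-threshold event, \emph{Duplicator} enjoys no freedom of choice there: the construction must guarantee \emph{a priori} that, for every integer threshold, the first-past-threshold events in $\mathcal{G}_m$ and $\mathcal{H}_m$ land on corresponding events with identical local labelling and segment parity, and that they remain within the invariant's band. Thus the whole burden falls on the choice of spacing parameters (the analogues of $g$ and $\varepsilon$, with $\Theta(m)$ segments): they must be calibrated so that the single sub-$\varepsilon$ perturbation never alters \emph{which} event is first past any integer distance, whether queried forwards or backwards, so that every $\first_I$/$\pfirst_I$ move preserves the invariant just as the $\until_I$/$\since_I$ moves do. Verifying this robustness of the spacing against all such threshold queries, simultaneously with the ordinary temporal moves, is where essentially all of the work lies; once it is in place, Proposition~\ref{prop:foone-strict} follows from the extended EF theorem together with the fact that $\vartheta$ separates $\mathcal{G}_m$ and $\mathcal{H}_m$.
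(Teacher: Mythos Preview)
Your high-level plan---extend the EF game with $\first_I$/$\pfirst_I$ moves, build two model families, exhibit a Duplicator invariant, and separate with a two-anchor \foone{} formula---matches the paper. But the concrete construction you sketch diverges from the paper's, and in a direction that makes your life harder rather than easier.

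The paper does \emph{not} reuse the overlapping-segment template of Proposition~\ref{prop:beginning-strict}. Instead it borrows a much simpler idea from Prabhakar and D'Souza: the models $\mathcal{G}_m$ and $\mathcal{H}_m$ carry \emph{no monadic predicates at all}; events in $\ropen{0,1}$ are equispaced with step $\frac{1}{2m+3}$, events in $(1,2)$ are equispaced with the \emph{different} step $\frac{1}{2m+2}$, and $\mathcal{H}_m$ is obtained from $\mathcal{G}_m$ by deleting the single event at time $0.5$. Because the two unit intervals use incommensurable spacings, no two events sit at an integer distance. This trivialises exactly the obstacle you flag: since every $\first_I$/$\pfirst_I$ move crosses an integer threshold, and since the two models are identical in $(1,2)$ and differ only by one deleted event in $\ropen{0,1}$, the deterministic first-past-threshold event is either the same in both models or differs by a single position---which is precisely what the off-by-one invariant absorbs. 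The separating formula is then purely structural (no $P$, no $Q$): there exist two adjacent events in $(1,2)$ with no event in the interval obtained by shifting both back by one time unit, which holds in $\mathcal{H}_m$ (because of the gap at $0.5$) but not in $\mathcal{G}_m$ (because the $\ropen{0,1}$-events are strictly denser).

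Your proposed route---keeping the $P$/$Q$ segment template and calibrating a sub-$\varepsilon$ perturbation so that it never changes which event is first past any integer threshold---might be workable, but you have not actually specified the spacing, and the calibration is delicate: in the $\mathcal{E}_m$/$\mathcal{F}_m$ template the perturbed $Q$-event \emph{deliberately} crosses an integer-distance boundary (that is what made $\first_{(1,2)}P$ distinguish them), so the template cannot be reused as-is and you would have to design a genuinely new arrangement. The paper sidesteps all of this by dropping the propositions and using the incommensurable-spacing trick, which gives the no-integer-distance property for free and reduces the Duplicator argument to a short case analysis.
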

\begin{proof}
This is similar to a proof in~{\cite[Section 7]{Prabhakar2006}}.
Given $m \in \mathbb{N}$, we construct two models as follows.
Let
\[
    \arraycolsep=0.3ex
    \begin{array}{rcl}
    \mathcal{G}_{m} & = & (\emptyset, 0)(\emptyset, \frac{0.5}{2m+3})(\emptyset, \frac{1.5}{2m+3})\ldots(\emptyset, 1-\frac{0.5}{2m+3}) \\
    & & (\emptyset, 1+\frac{0.5}{2m+2})(\emptyset, 1+\frac{1.5}{2m+2})\ldots\ldots(\emptyset, 2-\frac{0.5}{2m+2}) \,.
    \end{array}
\]
$\mathcal{H}_{m}$ is constructed as $\mathcal{G}_{m}$ except that the event at time $\frac{m+1.5}{2m+3} = 0.5$ is missing.
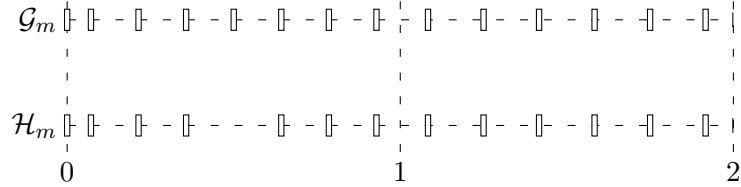
\begin{figure}[h]
\centering
\begin{tikzpicture}[scale=1.0]
\begin{scope}[>=latex]

\draw[|-|, loosely dashed] (0pt,40pt) -- (252pt,40pt) node[at start, left] {$\mathcal{G}_{m}$};
\draw[|-|, loosely dashed] (0pt,0pt) -- (252pt,0pt) node[at start, left] {$\mathcal{H}_{m}$};

\draw[loosely dashed] (0pt,-10pt) -- (0pt,50pt) node[at start,below] {$0$};
\draw[loosely dashed] (126pt,-10pt) -- (126pt,50pt) node[at start,below] {$1$};
\draw[loosely dashed] (252pt,-10pt) -- (252pt,50pt) node[at start,below] {$2$};

\draw[draw=black, fill=white] (1pt, -4pt) rectangle (-1pt, 4pt);
\draw[draw=black, fill=white] (1pt, 36pt) rectangle (-1pt, 44pt);

\draw[draw=black, fill=white] (10pt, -4pt) rectangle (8pt, 4pt);
\draw[draw=black, fill=white] (10pt, 36pt) rectangle (8pt, 44pt);

\draw[draw=black, fill=white] (28pt, -4pt) rectangle (26pt, 4pt);
\draw[draw=black, fill=white] (28pt, 36pt) rectangle (26pt, 44pt);

\draw[draw=black, fill=white] (46pt, -4pt) rectangle (44pt, 4pt);
\draw[draw=black, fill=white] (46pt, 36pt) rectangle (44pt, 44pt);

\draw[draw=black, fill=white] (64pt, 36pt) rectangle (62pt, 44pt);

\draw[draw=black, fill=white] (82pt, -4pt) rectangle (80pt, 4pt);
\draw[draw=black, fill=white] (82pt, 36pt) rectangle (80pt, 44pt);

\draw[draw=black, fill=white] (100pt, -4pt) rectangle (98pt, 4pt);
\draw[draw=black, fill=white] (100pt, 36pt) rectangle (98pt, 44pt);

\draw[draw=black, fill=white] (118pt, -4pt) rectangle (116pt, 4pt);
\draw[draw=black, fill=white] (118pt, 36pt) rectangle (116pt, 44pt);

\draw[draw=black, fill=white] (137.5pt, -4pt) rectangle (135.5pt, 4pt);
\draw[draw=black, fill=white] (137.5pt, 36pt) rectangle (135.5pt, 44pt);

\draw[draw=black, fill=white] (158.5pt, -4pt) rectangle (156.5pt, 4pt);
\draw[draw=black, fill=white] (158.5pt, 36pt) rectangle (156.5pt, 44pt);

\draw[draw=black, fill=white] (179.5pt, -4pt) rectangle (177.5pt, 4pt);
\draw[draw=black, fill=white] (179.5pt, 36pt) rectangle (177.5pt, 44pt);

\draw[draw=black, fill=white] (200.5pt, -4pt) rectangle (198.5pt, 4pt);
\draw[draw=black, fill=white] (200.5pt, 36pt) rectangle (198.5pt, 44pt);

\draw[draw=black, fill=white] (221.5pt, -4pt) rectangle (219.5pt, 4pt);
\draw[draw=black, fill=white] (221.5pt, 36pt) rectangle (219.5pt, 44pt);

\draw[draw=black, fill=white] (242.5pt, -4pt) rectangle (240.5pt, 4pt);
\draw[draw=black, fill=white] (242.5pt, 36pt) rectangle (240.5pt, 44pt);

\end{scope}

\end{tikzpicture}
\caption{Models $\mathcal{G}_{m}$ and $\mathcal{H}_{m}$ for $m = 2$.}%
\label{fig:prabhakarmodels}
\end{figure}

Figure~\ref{fig:prabhakarmodels} illustrates the models for the case $m = 2$ where white boxes
represent events at which no monadic predicate holds.
Observe that no two events are separated by an integer distance. We say that a configuration
$(i, j)$ is \emph{synchronised} if they correspond to events with the same timestamp.
Here we extend \mtlpast{} EF games with the following moves
to obtain \mtlbpast{} EF games:
\begin{itemize}
\item \emph{$\first_{I}$-move}: \emph{Spoiler} chooses one of the two timed words (say $\rho$)
and picks $i_r'$ such that (i) $\tau_{i_r'} - \tau_{i_r} \in I$ in $\rho$
and (ii) there is no position $i' < i_r'$ in $\rho$ such that $\tau_{i'} - \tau_{i_r} \in I$.
\emph{Duplicator} must choose a position $j_r'$ in $\rho'$ such that $j_r'$ is the
first position in $I$ relative to $j_r$ in $\rho'$. If she cannot find such a position
then \emph{Spoiler} wins the game.
\item \emph{$\pfirst_{I}$-move}: Defined symmetrically.
\end{itemize}

\begin{thm}[\mtlbpast{} EF Theorem]\label{thm:mtlbef}
For (finite) timed words $\rho$, $\rho'$ and an \emph{\mtlbpast{}} formula $\varphi$ of modal depth $\leq m$,
if \emph{Duplicator} has a winning strategy for the $m$-round \emph{\mtlbpast{}} EF game on
$\rho$, $\rho'$ with $(i_0, j_0) = (0, 0)$, then
\[
\rho \models \varphi \iff \rho' \models \varphi \,.
\]
\end{thm}

\begin{prop}\label{prop:foone-strict-induction}
\emph{Duplicator} has a winning strategy for $m$-round \emph{\mtlbpast{}} EF game
on $\mathcal{G}_{m}$ and $\mathcal{H}_{m}$ with $(i_0, j_0) = (0, 0)$.
In particular, she has a winning strategy such that for each round  $0 \leq r \leq m$,
the $i_r^{\mathit{th}}$ event in $\mathcal{G}_{m}$ and the $j_r^{\mathit{th}}$ event in $\mathcal{H}_{m}$
satisfy the same set of monadic predicates and
\begin{itemize}
\item if $(i_r, j_r)$ is not synchronised, then
	\begin{itemize}
	\item $|i_r - j_r| = 1$
	\item $(m + 1 - r) < i_r, j_r < (m + 3 + r)$ or $(3m + 4 - r) < i_r, j_r < (3m + 5 + r)$.
	\end{itemize}
\end{itemize}
\end{prop}

\noindent
We prove the proposition by induction on $r$. The idea, again, is to try to make the resulting configurations identical.
\begin{itemize}
\item \emph{Base step.} The proposition holds trivially for $(i_0, j_0) = (0, 0)$.
\item \emph{Induction step.} Suppose that the claim holds for $r < m$. We prove it
also holds for $r + 1$.
	\begin{itemize}
	\item $(i_r, j_r) = (0, 0)$:\\
	\emph{Duplicator} tries to make $(i_r', j_r')$ synchronised.
	If \emph{Spoiler} chooses $i_r' = m + 2$, \emph{Duplicator} chooses either
	$j_r' = m + 1$ or $j_r' = m + 2$.
	\item $(i_r, j_r) \neq (0, 0)$ is synchronised:\\
	\emph{Duplicator} tries to make $(i_r', j_r')$ synchronised.
	If this is not possible then \emph{Duplicator} chooses a suitable event that minimises $|i_r' - j_r'|$.
	It is easy to see that the resulting configuration $(i_{r+1}, j_{r+1})$ satisfies the claim
	regardless of how \emph{Spoiler} plays.
	\item $(i_r, j_r)$ is not synchronised:\\
	The strategy of \emph{Duplicator} is same as the case above.
	\end{itemize}
\end{itemize}
Proposition~\ref{prop:foone-strict} now follows from Proposition~\ref{prop:foone-strict-induction},
Theorem~\ref{thm:mtlbef}, and the fact that the \foone{} formula
\[
    \arraycolsep=0.3ex
    \begin{array}{ll}
    \exists x' \, \biggl(  d(x, x') > 1 \wedge d(x, x') < 2  \wedge \exists x'' \, \Bigl( & x' < x'' \wedge \nexists y' \, (x' < y' \wedge y' < x'') \\
    & {} \wedge d(x, x'') > 1 \wedge d(x, x'') < 2 \\
    & {} \wedge \nexists y'' \, \bigl( d(x', y'') < 1 \wedge d(x'', y'') > 1 \bigr) \Bigr) \biggr)
    \end{array}
\]
distinguishes $\mathcal{G}_{m}$ and $\mathcal{H}_{m}$ for any $m \in \mathbb{N}$ (when evaluated at position $0$).
This formula asserts that there is a pair of neighbouring events in $(1, 2)$ such that
there is no event between them if they are both mapped to exactly one time unit earlier.
\end{proof}

One way to understand why \mtlbpast{} is still less expressive than \foone{}
is to consider the arity of modalities.
Let the current instant be $t_1$, and suppose that we want
to specify the following property for some positive integers $a$ and $c$ ($a > c$):\footnote{We remark that a closely related yet different property
is used in~\cite{Lasota2008} to show that one-clock alternating timed automata and
timed automata are expressively incomparable.}
\begin{itemize}
\item There is an event at $t_2 > t_1 + a$ where $Q$ holds
\item $P$ holds at all events in $\bigl(t_1 + c, t_1 + c + (t_2 - t_1 - a)\bigr)$.
\end{itemize}
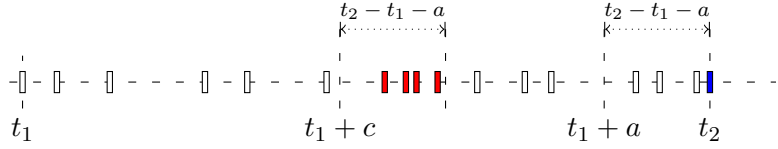
\begin{figure}[ht]
\centering
\begin{tikzpicture}[scale=1]
\begin{scope}[>=latex]

\draw[-, loosely dashed] (-5pt,0pt) -- (285pt,0pt);

\draw[-, loosely dashed] (0pt,-10pt) -- (0pt,10pt) node[at start, below] {$t_1$};

\draw[-, loosely dashed] (120pt,-10pt) -- (120pt,20pt) node[at start, below] {$t_1 + c$};
\draw[-, loosely dashed] (160pt,-10pt) -- (160pt,20pt);
\draw[-, loosely dashed] (220pt,-10pt) -- (220pt,20pt) node[at start, below] {$t_1 + a$};
\draw[-, loosely dashed] (260pt,-10pt) -- (260pt,20pt) node[at start, below] {$t_2$};

\draw[draw=black, fill=white] (1pt, -4pt) rectangle (-1pt, 4pt);
\draw[draw=black, fill=white] (14pt, -4pt) rectangle (12pt, 4pt);
\draw[draw=black, fill=white] (34pt, -4pt) rectangle (32pt, 4pt);

\draw[draw=black, fill=white] (70pt, -4pt) rectangle (68pt, 4pt);
\draw[draw=black, fill=white] (86pt, -4pt) rectangle (84pt, 4pt);
\draw[draw=black, fill=white] (114pt, -4pt) rectangle (116pt, 4pt);

\draw[draw=black, fill=red] (136pt, -4pt) rectangle (138pt, 4pt);

\draw[draw=black, fill=red] (144pt, -4pt) rectangle (146pt, 4pt);

\draw[draw=black, fill=red] (148pt, -4pt) rectangle (150pt, 4pt);

\draw[draw=black, fill=red] (156pt, -4pt) rectangle (158pt, 4pt);

\draw[draw=black, fill=white] (171pt, -4pt) rectangle (173pt, 4pt);

\draw[draw=black, fill=white] (191pt, -4pt) rectangle (189pt, 4pt);

\draw[draw=black, fill=white] (199pt, -4pt) rectangle (201pt, 4pt);

\draw[draw=black, fill=white] (231pt, -4pt) rectangle (233pt, 4pt);

\draw[draw=black, fill=white] (240pt, -4pt) rectangle (242pt, 4pt);

\draw[draw=black, fill=white] (254pt, -4pt) rectangle (256pt, 4pt);

\draw[draw=black, fill=blue] (261pt, -4pt) rectangle (259pt, 4pt);

\end{scope}

\begin{scope}[>=to]
\draw[|<->|][dotted] (120pt,20pt)  -- (160pt,20pt) node[midway,above] {{\scriptsize $t_2 - t_1 - a$}};
\draw[|<->|][dotted] (220pt,20pt)  -- (260pt,20pt) node[midway,above] {{\scriptsize $t_2 - t_1 - a$}};
\end{scope}

\end{tikzpicture}
\caption{$\varphi_{\mathit{cont2}}$ holds at $t_1$ in the continuous semantics. The red boxes denote $P$-events and the blue boxes denote $Q$-events.}%
\label{fig:puntilq2}
\end{figure}
See Figure~\ref{fig:puntilq2} for an example. In the continuous semantics, this property can
be expressed as the following simple formula over signals of the form $f^\rho$:
\[
\varphi_{\mathit{cont2}} = \bigl(\eventually_{=c} (P \vee P_{\epsilon}) \bigr) \until (\eventually_{=a} Q) \,.
\]
Observe how this formula talks about events from two (instead of just one) time points: $t_1 + c$ and $t_1 + a$.
In the same vein,
the following formula can be used to distinguish $\mathcal{G}_{m}$ and $\mathcal{H}_{m}$
(defined in the proof of Proposition~\ref{prop:foone-strict})
in the continuous semantics:
\[
\varphi_{\mathit{cont3}} = \eventually_{(1, 2)} \bigl( \neg P_\epsilon \wedge (\once_{= 1} P_\epsilon) \until (\neg P_\epsilon) \bigr) \label{for:cont3} \,.
\]
Indeed, to express such properties in the pointwise semantics, we need \emph{binary} variants of $\first_I, \pfirst_I$, which are exactly what we propose
in the next section.

\subsection{New modalities}

We define a family of modalities which can be understood as generalisations
of the usual `Until' and `Since' modalities.
Intuitively, these new modalities closely mimic
the meanings of formulas of the form $(\eventually_{={k_1}} \varphi_1) \until_{< k_3} (\eventually_{={k_2}} \varphi_2)$
or $(\eventually_{={k_1}} \varphi_1) \since_{< k_3} (\eventually_{={k_2}} \varphi_2)$
in the continuous semantics.

\paragraph{\emph{Generalised `Until' and `Since'}}

Let $I \subseteq (0, \infty)$ be an interval
with endpoints in $\mathbb{N} \cup \{\infty\}$
and $c \in \mathbb{N}$, $c \leq \inf(I)$.
The formula $\varphi_1 \guntil_{I}^c \varphi_2$, when imposed at $t_1$, asserts that
\begin{itemize}
\item There is an event at $t_2$ where $\varphi_2$ holds and $t_2 - t_1 \in I$
\item $\varphi_1$ holds at all events in the open interval $\biggl(t_1 + c, t_1 + c + \Bigl(t_2 - \bigl(t_1 + \inf(I)\bigr)\Bigr)\biggr)$.
\end{itemize}
For example, the formula $P \guntil_{(a, \infty)}^c Q$ (which is `equivalent' to $\varphi_{\mathit{cont2}}$ when the latter is interpreted over signals of the form $f^\rho$) holds at time $t_1$ in Figure~\ref{fig:puntilq2}.
Formally, for $I = (a, b) \subseteq (0, \infty)$, $a \in \mathbb{N}$, $b \in \mathbb{N} \cup \{ \infty \}$
and $c \in \mathbb{N}$ with $c \leq a$,
we define the \emph{generalised `Until'} modality $\guntil_{(a, b)}^c$ by the following \foone{} formula:
\[
    \arraycolsep=0.3ex
    \begin{array}{rclll}
    \guntil_{(a, b)}^c(x, X_1, X_2) & = & \exists x' \, \Bigl(& x < x' & {} \wedge  d(x, x') > a \wedge d(x, x') < b \wedge X_2(x') \\
    & & & {} \wedge \forall x'' \, & \bigl(x < x'' \wedge d(x, x'') > c \wedge x'' < x' \\
    & & & & \hspace{1mm} {} \wedge d(x', x'') > (a-c) \implies X_1(x'') \bigr) \Bigr) \,.
    \end{array}
\]
Symmetrically, we define the \emph{generalised `Since'} modality $\gsince_{(a, b)}^c$ as
\[
    \arraycolsep=0.3ex
    \begin{array}{rclll}
    \gsince_{(a, b)}^c(x, X_1, X_2) & = & \exists x' \, \Bigl(& x' < x & {} \wedge  d(x, x') > a \wedge d(x, x') < b \wedge X_2(x') \\
    & & & {} \wedge \forall x'' \, & \bigl(x'' < x \wedge d(x, x'') > c \wedge x' < x'' \\
    & & & & \hspace{1mm} {} \wedge d(x', x'') > (a-c) \implies X_1(x'') \bigr) \Bigr) \,.
    \end{array}
\]
We also define the modalities for $I \subseteq (0, \infty)$ being a half-open interval or a closed interval
in the expected way and refer to the logic obtained by adding these modalities to \mtlpast{}
as \mtlpastg{}. Note that the usual `Until' and `Since' modalities can be written in terms of the generalised
modalities. For instance,
\[
\varphi_1 \until_{(a, b)} \varphi_2 = \varphi_1 \guntil_{(a, b)}^{a} \varphi_2 \wedge \neg \left(\mathbf{true} \guntil_{\lopen{0, a}}^0 (\neg \varphi_1)\right) \,.
\]

\paragraph{\emph{More liberal bounds}}

In defining modalities $\guntil_{(a, b)}^c$ and $\gsince_{(a, b)}^c$
we required that $0 \leq c \leq a$. We now show that more liberal uses of bounds (constraining
intervals and superscript `$c$') are indeed syntactic sugars, and we therefore allow them in the rest of this section.
For instance, suppose that we want to
to assert the following property (which translates to $\bigl(\eventually_{=10} (\varphi_1 \vee P_{\epsilon}) \bigr) \until_{<3} (\eventually_{=2} \varphi_2)$ in the continuous semantics) at $t_1$:
\begin{itemize}
\item There is an event at $t_2$ where $\varphi_2$ holds and $t_2 - t_1 \in (2, 5)$
\item $\varphi_1$ holds at all events in $\bigl(t_1 + 10, t_1 + 10 + (t_2 - t_1 - 2)\bigr)$.
\end{itemize}
This can be expressed in \foone{} as
\[
    \arraycolsep=0.3ex
    \begin{array}{lll}
    \exists x' \, \Bigl(& x < x' & {} \wedge  d(x, x') > 2 \wedge d(x, x') < 5 \wedge X_2(x') \\
    & {} \wedge \forall x'' \, & \bigl(x < x'' \wedge d(x, x'') > 10 \wedge d(x', x'') < 8 \implies X_1(x'') \bigr) \Bigr)
    \end{array}
\]
where $X_1, X_2$ are to be substituted with $\varphi_1, \varphi_2$.
While we could define a modality
\[
    \guntil_{(2, 5)}^{10}(x, X_1, X_2)
\]
by this formula, this is not necessary as the formula is indeed equivalent to
\[
\eventually_{(2, 5)} \varphi_2 \wedge \neg \Bigl( (\neg \varphi_2) \guntil_{(10, 13)}^2 \bigl( \neg \varphi_1 \wedge \neg (\once_{= 8} \varphi_2) \bigr) \Bigr) \,.
\]
In the continuous semantics we can, of course, also refer points in the past
in such formulas, e.g.,~$(\once_{={k_1}} \varphi_1) \until_{< k_3} (\eventually_{={k_2}} \varphi_2)$.
We now generalise the idea above to handle these cases.
\begin{prop}
Let the current instant be $t_1$. The property (and its past counterpart):
\begin{itemize}
\item There is an event at $t_2$ where $\varphi_2$ holds and $t_2 - t_1 \in I$
\item $\varphi_1$ holds at all events in $\biggl(t_1 + c, t_1 + c + \Bigl(t_2 - \bigl(t_1 + \inf(I)\bigr)\Bigr)\biggr)$
\end{itemize}
where $I \subseteq (-\infty, \infty)$, $\inf(I) \in \mathbb{Z}$, $\sup(I) \in \mathbb{Z} \cup \{ \infty \}$ and $c \in \mathbb{Z}$
can be expressed with the modalities defined earlier (i.e.,~$\guntil_I^c, \gsince_I^c$ with $I \subseteq (0, \infty)$ and $c \leq \inf(I)$).
\end{prop}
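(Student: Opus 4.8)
The plan is to proceed by case analysis on the signs and relative magnitudes of $a := \inf(I)$, $b := \sup(I)$ and $c$, reducing every liberal combination to the already-defined modalities $\guntil_I^c, \gsince_I^c$ (with $I \subseteq (0,\infty)$ and $0 \le c \le \inf(I)$) together with the derived operators $\eventually_I, \once_I, \henceforth_I, \nextx$. First I would exploit future/past symmetry: since in the pointwise semantics the current instant $t_1$ is itself an event, any witnessing event $t_2$ satisfies $t_2 \ne t_1$, so I can split the existential over $t_2$ according to whether $t_2 > t_1$ or $t_2 < t_1$. Writing $I_{\mathrm{fut}} = I \cap (0,\infty)$ and $I_{\mathrm{past}} = I \cap (-\infty, 0)$, the property becomes the disjunction of its restriction to $I_{\mathrm{fut}}$ (handled with $\guntil$) and to $I_{\mathrm{past}}$ (handled by the mirror-image construction with $\gsince$, $\once$, $\henceforth$). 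Thus it suffices to treat a future interval with $a := \inf(I) \ge 0$ and an arbitrary $c \in \mathbb{Z}$.

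The essential new case is $c > a \ge 0$, which generalises the worked example. Here the $\varphi_1$-window $W(t_2) = \bigl(t_1 + c,\, t_2 + (c-a)\bigr)$ lies entirely to the future of $t_1 + c$, and its right endpoint is in general not an event, so it cannot be quantified over directly. I would overcome this by anchoring that endpoint to the actual $\varphi_2$-event $t_2$ through the fixed metric offset $c - a$, at the cost of a negation:
\[
\eventually_{(a,b)} \varphi_2 \;\wedge\; \neg\Bigl((\neg\varphi_2)\,\guntil_{(c,\,c+b-a)}^{\,a}\,\bigl(\neg\varphi_1 \wedge \neg(\once_{=(c-a)}\varphi_2)\bigr)\Bigr),
\]
with half-open or closed $I$ inducing the corresponding endpoints of $(c, c+b-a)$ and of the leading $\eventually$. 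The correctness rests on a monotonicity observation: all windows $W(t_2)$ share the left endpoint $t_1 + c$, so $t_2 \le t_2'$ implies $W(t_2) \subseteq W(t_2')$; hence the property holds iff the \emph{earliest} $\varphi_2$-event $t_2^\star \in I$ exists and $W(t_2^\star)$ contains no $\neg\varphi_1$ event. I would then unfold the inner $\guntil_{(c,c+b-a)}^a$ to check that its witness $e$, together with $\neg\once_{=(c-a)}\varphi_2$, is exactly a $\neg\varphi_1$ event lying inside $W(t_2^\star)$: the $\guntil$-window forces no $\varphi_2$-event in $(t_1+a,\, t_e-(c-a))$, the $\once$-clause forbids one at $t_e-(c-a)$, and these together place the earliest $\varphi_2$-event strictly after $t_e-(c-a)$, which is precisely the condition $e \in W(t_2^\star)$.

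It remains to eliminate a negative $c$, where the window starts in the past ($t_1 + c < t_1$) and may straddle $t_1$. I would split at the threshold $a - c$ (the value of $t_2 - t_1$ at which the window's right endpoint crosses $t_1$), writing $I = \bigl(I \cap (a, a-c)\bigr) \cup \bigl(I \cap [a-c, b)\bigr)$. On the upper part $I \cap [a-c, b)$ the window contains the full past segment $(t_1 + c, t_1)$ and a future segment, so the obligation factors as $\henceforth_{(0,-c)}\varphi_1 \wedge (\varphi_1 \guntil_{[a-c,\,b)}^{\,0} \varphi_2)$, a conjunction of a fixed past constraint and a \emph{legal} ($c' = 0 \le \inf$) generalised `Until'. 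On the lower part $I \cap (a, a-c)$ the whole window lies in the past while $t_2$ is still in the future; this is the exact mirror of the previous paragraph, handled by anchoring a bad \emph{past} event to the future $\varphi_2$-event via $\eventually_{=(a-c)}$ inside a $\gsince$-based formula. The symmetric past interval $I_{\mathrm{past}}$ is dispatched by the dualised versions of all of the above.

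The main obstacle throughout is precisely that the window boundaries---above all the right endpoint $t_2 + (c-a)$---need not coincide with events and so are not directly expressible by quantifying over the structure; the recurring device that resolves this is to pin each free-floating boundary to a genuine $\varphi_2$-event through a constant offset ($\once_{=(c-a)}$ or $\eventually_{=(a-c)}$) and to convert an existential-over-windows into a negated search for a single violating event, whose soundness hinges on the nestedness of the windows. The remaining work is bookkeeping: tracking open/closed endpoints through each rewrite, handling the degenerate sub-intervals that collapse when $b$ or $a-c$ meet an interval boundary, and the mild subtleties arising from $t_1$ itself being an event. None of these is conceptually hard, but verifying the equivalences case by case is where the effort lies.
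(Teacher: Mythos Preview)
Your treatment of the case $c > a \ge 0$ is correct and coincides with the paper's formula essentially verbatim, including the monotonicity argument via nested windows. Your split of the $c < 0$ case at the threshold $a - c$ is a genuinely different organisation from the paper (which instead splits according to whether there are events in $\bigl(t_1,\,t_1 + (t_2 - t_1 - a)\bigr)$), and your upper part $I \cap [a-c, b)$ is handled cleanly by the factorisation $\henceforth_{(0,-c)}\varphi_1 \wedge \bigl(\varphi_1 \guntil^{0}_{[a-c,b)} \varphi_2\bigr)$---modulo the point about $\varphi_1$ at $t_1$ itself, which you flag.

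The gap is in the lower part $I \cap (a, a-c)$. This is \emph{not} the mirror of the $c > a$ case. In the $c > a$ case, both the $\varphi_2$-range $(t_1+a,\,t_1+b)$ and the $\varphi_1$-window $(t_1+c,\,t_2+c-a)$ lie in the future, with the window farther out; the inner $\guntil^{a}_{(c,\,c+b-a)}$ can therefore search for the bad $\neg\varphi_1$-event $e$ in the far range while its built-in window imposes $\neg\varphi_2$ on the near range $(t_1+a,\,e-(c-a))$. In your lower part, the bad $\neg\varphi_1$-event $e$ is in the \emph{past} while the ``no $\varphi_2$ yet'' constraint you must couple to $e$ lives in the \emph{future} interval $\bigl(t_1+a,\,e+(a-c)\bigr]$. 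A $\gsince$ with witness $e$ has its window entirely in the past of $t_1$, so it cannot enforce that future $\neg\varphi_2$ constraint; and $\eventually_{=(a-c)}$ only pins the single endpoint $e+(a-c)$, not the whole interval. So the proposed ``$\gsince$-based formula with $\eventually_{=(a-c)}$'' does not express what you need.

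The paper closes this case with a different device: from $t_1$ it reaches the first $\varphi_2$-event $t_2$ via $(\neg\varphi_2)\guntil^{a}_{(a,b)}(\varphi_2 \wedge \psi')$, and at $t_2$ it evaluates $\psi' = \varphi_1 \gsince^{\,a+|c|}_{(a,b)} \mathbf{true}$. The point is that the event at $t_1$ itself serves as the $\gsince$-witness (hence $\mathbf{true}$ as second argument), which makes the $\gsince$-window come out exactly as $(t_1+c,\,t_2+c-a)$. This $\gsince$ has superscript $a+|c| = a - c > a = \inf$, i.e.\ it is the liberal ``$c' > \inf$'' case---but that case has just been established (by the past-symmetric version of your $c > a$ argument), so it is available as a building block. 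Your split at $a - c$ can be made to work, but the lower half needs a construction of this kind rather than a direct mirror.
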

\begin{proof}
Without loss of generality, we shall only focus on expressing the future version of the property for
the case of $I$ being an open interval.
To ease the presentation, we use the following convention in all the illustrations in this proof: the red boxes denote $\varphi_1$-events,
blue boxes denote $\varphi_2$-events, and white boxes denote events where neither $\varphi_1$ nor $\varphi_2$ hold.
We prove the claim in each of the following cases:
\begin{itemize}
\item \emph{$a \geq 0$ and $0 \leq c \leq a$}: This corresponds to the standard
version of $\guntil_I^c$ that we have already defined.
\item \emph{$a \geq 0$ and $c > a$}:
$\varphi_1 \guntil_{(a, b)}^c \varphi_2$ does not hold at $t_1$ if and only if one of the following holds \mbox{at $t_1$}:
	\begin{itemize}
	\item \emph{There is no $\varphi_2$-event in $(t_1 + a, t_1 + b)$}: This can be enforced by
	\[
	\neg (\eventually_{(a, b)} \varphi_2) \,.
	\]
	\item \emph{$\neg \varphi_1$ holds at an event at $t_3 \in \bigl( t_1 + c, t_1 + c + (b - a) \bigr)$ and there is no
			$\varphi_2$-event in $\lopen{t_1 + a, t_1 + a + (t_3 - t_1 - c)}$}: This can be enforced by
	\[
	(\neg \varphi_2) \guntil_{\bigl(c, c + (b - a)\bigr)}^a \bigl(\neg \varphi_1 \wedge \underbrace{\neg(\once_{=(c-a)} \varphi_2)}_{\psi} \bigr) \,.
	\]
We need the subformula $\psi$ to ensure that there is no $\varphi_2$-event at $t_1 + a + (t_3-t_1-c)$.
\begin{figure}[ht]
\centering
\begin{tikzpicture}[scale=1]
\begin{scope}[>=latex]

\draw[-, loosely dashed] (-5pt,0pt) -- (285pt,0pt);

\draw[-, loosely dashed] (0pt,-10pt) -- (0pt,10pt) node[at start, below] {$t_1$};

\draw[-, loosely dashed] (40pt,-10pt) -- (40pt,10pt) node[at start, below] {$t_1 + a$};
\draw[-, loosely dashed] (120pt,-10pt) -- (120pt,10pt) node[at start, below] {$t_1 + b$};
\draw[-, loosely dashed] (160pt,-10pt) -- (160pt,10pt) node[at start, below] {$t_1 + c$};

\draw[-, loosely dashed] (90pt,0pt) -- (90pt,20pt);
\draw[-, loosely dashed] (210pt,-10pt) -- (210pt,20pt) node[at start, below] {$t_3$};

\draw[draw=black, fill=white] (1pt, -4pt) rectangle (-1pt, 4pt);
\draw[draw=black, fill=white] (10pt, -4pt) rectangle (8pt, 4pt);
\draw[draw=black, fill=white] (30pt, -4pt) rectangle (28pt, 4pt);

\draw[draw=black, fill=white] (50pt, -4pt) rectangle (48pt, 4pt);
\draw[draw=black, fill=white] (80pt, -4pt) rectangle (78pt, 4pt);
\draw[draw=black, fill=white] (91pt, -4pt) rectangle (89pt, 4pt);

\draw[draw=black, fill=blue] (99pt, -4pt) rectangle (101pt, 4pt);

\draw[draw=black, fill=white] (130pt, -4pt) rectangle (128pt, 4pt);

\draw[draw=black, fill=white] (144pt, -4pt) rectangle (146pt, 4pt);

\draw[draw=black, fill=white] (150pt, -4pt) rectangle (148pt, 4pt);

\draw[draw=black, fill=white] (154pt, -4pt) rectangle (156pt, 4pt);

\draw[draw=black, fill=red] (165pt, -4pt) rectangle (163pt, 4pt);

\draw[draw=black, fill=red] (181pt, -4pt) rectangle (179pt, 4pt);

\draw[draw=black, fill=white] (211pt, -4pt) rectangle (209pt, 4pt);

\draw[draw=black, fill=white] (221pt, -4pt) rectangle (223pt, 4pt);

\draw[draw=black, fill=white] (240pt, -4pt) rectangle (242pt, 4pt);

\draw[draw=black, fill=white] (248pt, -4pt) rectangle (250pt, 4pt);

\draw[draw=black, fill=white] (260pt, -4pt) rectangle (258pt, 4pt);

\end{scope}

\begin{scope}[>=to]
\draw[|<->|][dotted] (40pt,20pt)  -- (90pt,20pt) node[midway,above] {{\scriptsize $t_3 - t_1 - c$}};
\draw[|<->|][dotted] (160pt,20pt)  -- (210pt,20pt) node[midway,above] {{\scriptsize $t_3 - t_1 - c$}};
\end{scope}

\end{tikzpicture}
\end{figure}
	\end{itemize}
	The desired formula is the conjunction of the negations of these two formulas.
\item \emph{$a \geq 0$ and $c < 0$}:
Let $t_2$ be the first time instant in $(t_1 + a, t_1 + b)$ where there is a $\varphi_2$-event.
Consider the following subcases:
\begin{itemize}
\item \emph{There is no event in $\bigl(t_1, t_1 + (t_2 - t_1 - a)\bigr)$}:
This can be enforced by
\[
\varphi = \mathbf{false} \, \guntil_{(a, b)}^0 \varphi_2 \,.
\]
Then we can enforce that $\varphi_1$ holds at all events in \circled{\scriptsize 1} in the illustration below
by
\[
\varphi' = (\neg \varphi_2) \guntil_{(a, b)}^{a} \bigl( \varphi_2 \wedge \underbrace{(\varphi_1 \gsince_{(a, b)}^{a + |c|} \mathbf{true})}_{\psi'} \bigr) \,.
\]
Observe that the subformula $\psi'$ must hold at $t_2$ if $\varphi_1$ holds at all events in \circled{\scriptsize 1}.
This is because, by assumption, there must be an event at $t_1$.
\begin{figure}[h]
\centering
\begin{tikzpicture}[scale=1]
\begin{scope}[>=latex]

\draw[-, loosely dashed] (-5pt,0pt) -- (325pt,0pt);

\draw[-, loosely dashed] (0pt,-10pt) -- (0pt,10pt) node[at start, below] {$t_1 + c$};
\draw[-, loosely dashed] (60pt,-40pt) -- (60pt,10pt);
\draw[-, loosely dashed] (0pt,-30pt) -- (0pt,-40pt);

\draw[-, loosely dashed] (120pt,-10pt) -- (120pt,10pt) node[at start, below] {$t_1$};
\draw[-, loosely dashed] (180pt,-40pt) -- (180pt,10pt);
\draw[-, loosely dashed] (120pt,-30pt) -- (120pt,-40pt);

\draw[-, loosely dashed] (220pt,-10pt) -- (220pt,10pt) node[at start, below] {$t_1 + a$};
\draw[-, loosely dashed] (280pt,-10pt) -- (280pt,10pt) node[at start, below] {$t_2$};
\draw[-, loosely dashed] (290pt,-30pt) -- (290pt,10pt) node[at start, below] {$t_1 + b$};

\draw[draw=black, fill=white] (1pt, -4pt) rectangle (-1pt, 4pt);
\draw[draw=black, fill=red] (14pt, -4pt) rectangle (12pt, 4pt);
\draw[draw=black, fill=red] (20pt, -4pt) rectangle (22pt, 4pt);

\draw[draw=black, fill=red] (36pt, -4pt) rectangle (38pt, 4pt);

\draw[draw=black, fill=white] (70pt, -4pt) rectangle (68pt, 4pt);
\draw[draw=black, fill=white] (86pt, -4pt) rectangle (84pt, 4pt);
\draw[draw=black, fill=white] (114pt, -4pt) rectangle (116pt, 4pt);

\draw[draw=black, fill=white] (119pt, -4pt) rectangle (121pt, 4pt);

\draw[draw=black, fill=white] (191pt, -4pt) rectangle (189pt, 4pt);

\draw[draw=black, fill=white] (199pt, -4pt) rectangle (201pt, 4pt);

\draw[draw=black, fill=white] (215pt, -4pt) rectangle (217pt, 4pt);

\draw[draw=black, fill=white] (231pt, -4pt) rectangle (233pt, 4pt);

\draw[draw=black, fill=white] (240pt, -4pt) rectangle (242pt, 4pt);

\draw[draw=black, fill=white] (250pt, -4pt) rectangle (252pt, 4pt);

\draw[draw=black, fill=white] (261pt, -4pt) rectangle (259pt, 4pt);

\draw[draw=black, fill=blue] (281pt, -4pt) rectangle (279pt, 4pt);

\draw[draw=black, fill=white] (304pt, -4pt) rectangle (306pt, 4pt);

\end{scope}

\begin{scope}[>=latex]
\draw[|<->|] (0pt,20pt)  -- (60pt,20pt) node[midway] {\circled{\footnotesize 1}};
\end{scope}

\begin{scope}[>=to]
\draw[|<->|][dotted] (0pt,-40pt)  -- (60pt,-40pt) node[midway,below] {{\scriptsize $t_2 - t_1 - a$}};
\draw[|<->|][dotted] (120pt,-40pt)  -- (180pt,-40pt) node[midway,below] {{\scriptsize $t_2 - t_1 - a$}};
\end{scope}

\end{tikzpicture}
\end{figure}

\item \emph{There are events in $\bigl( t_1, t_1 + (t_2 - t_1 - a) \bigr)$}:
In this case, $\varphi'$
can only ensure that $\varphi_1$ holds at all events in \circled{\footnotesize 2} (see the illustration below
where $d_1 + d_2 = t_2 - t_1 - a$).
We can enforce that $\varphi_1$ holds at all events in \circled{\footnotesize 1} by
\[
\varphi'' = \psi'' \until ( \varphi \wedge \psi''  )
\]
where
\[
\psi'' = \underbrace{(\varphi_1 \gsince_{(0, b-a)}^{|c|} \mathbf{true})}_{\psi'''} \wedge \neg (\once_{=|c|} \neg \varphi_1) \,.
\]
It is easy to see that $\varphi$ must hold at the last event in $\bigl( t_1, t_1 + (t_2 - t_1 - a) \bigr)$.
The correctness of our use of the subformula $\psi'''$ here again depends on the fact that there is an event at $t_1$.
\begin{figure}[h!]
\centering
\begin{tikzpicture}[scale=1]
\begin{scope}[>=latex]

\draw[-, loosely dashed] (-5pt,0pt) -- (325pt,0pt);

\draw[-, loosely dashed] (0pt,-10pt) -- (0pt,10pt) node[at start, below] {$t_1 + c$};
\draw[-, loosely dashed] (60pt,-40pt) -- (60pt,10pt);
\draw[-, loosely dashed] (0pt,-30pt) -- (0pt,-40pt);
\draw[-, loosely dashed] (30pt,10pt) -- (30pt,-40pt);

\draw[-, loosely dashed] (120pt,-10pt) -- (120pt,10pt) node[at start, below] {$t_1$};
\draw[-, loosely dashed] (180pt,-40pt) -- (180pt,10pt);
\draw[-, loosely dashed] (120pt,-30pt) -- (120pt,-40pt);

\draw[-, loosely dashed] (150pt,10pt) -- (150pt,-40pt);

\draw[-, loosely dashed] (220pt,-10pt) -- (220pt,10pt) node[at start, below] {$t_1 + a$};
\draw[-, loosely dashed] (280pt,-10pt) -- (280pt,10pt) node[at start, below] {$t_2$};
\draw[-, loosely dashed] (290pt,-30pt) -- (290pt,10pt) node[at start, below] {$t_1 + b$};

\draw[draw=black, fill=white] (1pt, -4pt) rectangle (-1pt, 4pt);
\draw[draw=black, fill=red] (14pt, -4pt) rectangle (12pt, 4pt);
\draw[draw=black, fill=red] (20pt, -4pt) rectangle (22pt, 4pt);

\draw[draw=black, fill=red] (36pt, -4pt) rectangle (38pt, 4pt);

\draw[draw=black, fill=white] (70pt, -4pt) rectangle (68pt, 4pt);
\draw[draw=black, fill=white] (86pt, -4pt) rectangle (84pt, 4pt);
\draw[draw=black, fill=white] (114pt, -4pt) rectangle (116pt, 4pt);

\draw[draw=black, fill=white] (119pt, -4pt) rectangle (121pt, 4pt);

\draw[draw=black, fill=white] (128pt, -4pt) rectangle (130pt, 4pt);

\draw[draw=black, fill=white] (136pt, -4pt) rectangle (138pt, 4pt);

\draw[draw=black, fill=white] (149pt, -4pt) rectangle (151pt, 4pt);

\draw[draw=black, fill=white] (191pt, -4pt) rectangle (189pt, 4pt);

\draw[draw=black, fill=white] (199pt, -4pt) rectangle (201pt, 4pt);

\draw[draw=black, fill=white] (215pt, -4pt) rectangle (217pt, 4pt);

\draw[draw=black, fill=white] (231pt, -4pt) rectangle (233pt, 4pt);

\draw[draw=black, fill=white] (240pt, -4pt) rectangle (242pt, 4pt);

\draw[draw=black, fill=white] (250pt, -4pt) rectangle (252pt, 4pt);

\draw[draw=black, fill=white] (261pt, -4pt) rectangle (259pt, 4pt);

\draw[draw=black, fill=blue] (281pt, -4pt) rectangle (279pt, 4pt);

\draw[draw=black, fill=white] (304pt, -4pt) rectangle (306pt, 4pt);

\end{scope}

\begin{scope}[>=latex]
\draw[|<->|] (0pt,20pt)  -- (30pt,20pt) node[midway] {\circled{\footnotesize 1}};
\draw[|<->|] (30pt,20pt)  -- (60pt,20pt) node[midway] {\circled{\footnotesize 2}};
\end{scope}

\begin{scope}[>=to]
\draw[|<->|][dotted] (0pt,-40pt)  -- (30pt,-40pt) node[midway,below] {{\scriptsize $d_1$}};
\draw[|<->|][dotted] (30pt,-40pt)  -- (60pt,-40pt) node[midway,below] {{\scriptsize $d_2$}};
\draw[|<->|][dotted] (120pt,-40pt)  -- (150pt,-40pt) node[midway,below] {{\scriptsize $d_1$}};
\draw[|<->|][dotted] (150pt,-40pt)  -- (180pt,-40pt) node[midway,below] {{\scriptsize $d_2$}};
\end{scope}

\end{tikzpicture}
\end{figure}
\end{itemize}
The desired formula is $\varphi' \wedge (\varphi \vee \varphi'')$.

\item \emph{$a < 0$ and $c \geq 0$}:
Without loss of generality we assume $a < b < 0$. Similar to the case \emph{$a \geq 0$ and $c > a$} above, the desired formula is
\[
\once_{(|b|, |a|)} \varphi_2 \wedge \neg \Bigl( (\neg \varphi_2) \guntil_{\bigl(c, c + (b-a)\bigr)}^{a} \bigl( \neg \varphi_1 \wedge \neg (\once_{= (c-a)} \varphi_2) \bigr) \Bigr) \,.
\]

\item \emph{$a < 0$ and $c \leq a$}:
Without loss of generality we assume $a < b < 0$.
Let $t_2$ be the first time instant in $(t_1 + a, t_1 + b)$ where there is a $\varphi_2$-event.
Similar to the case $a \geq 0$ and $c < 0$ above,
consider the following subcases:
\begin{itemize}
\item \emph{There is no event in $\bigl(t_1, t_1 + (t_2 - t_1 - a)\bigr)$}:
We enforce that $\varphi_1$ holds at all events in \circled{\footnotesize 1}
in the illustration below by
\[
\varphi''' = \mathbf{false} \, \guntil_{(a, b)}^0 \bigl( \varphi_2 \wedge (\varphi_1 \gsince_{(a, b)}^{a + |c|} \mathbf{true}) \bigr)\,.
\]
\begin{figure}[h]
\centering
\begin{tikzpicture}[scale=1]
\begin{scope}[>=latex]

\draw[-, loosely dashed] (-5pt,0pt) -- (325pt,0pt);

\draw[-, loosely dashed] (0pt,-10pt) -- (0pt,10pt) node[at start, below] {$t_1 + c$};
\draw[-, loosely dashed] (60pt,-10pt) -- (60pt,10pt);
\draw[-, loosely dashed] (0pt,-30pt) -- (0pt,-40pt);
\draw[-, loosely dashed] (60pt,-30pt) -- (60pt,-40pt);

\draw[-, loosely dashed] (120pt,-10pt) -- (120pt,10pt) node[at start, below] {$t_1 + a$};
\draw[-, loosely dashed] (180pt,-10pt) -- (180pt,10pt);

\draw[-, loosely dashed] (210pt,-30pt) -- (210pt,10pt) node[at start, below] {$t_1 + b$};

\draw[-, loosely dashed] (180pt,-10pt) -- (180pt,10pt) node[at start, below] {$t_2$};
\draw[-, loosely dashed] (230pt,-10pt) -- (230pt,10pt) node[at start, below] {$t_1$};
\draw[-, loosely dashed] (230pt,-30pt) -- (230pt,-40pt);
\draw[-, loosely dashed] (290pt,-40pt) -- (290pt,10pt);

\draw[draw=black, fill=white] (1pt, -4pt) rectangle (-1pt, 4pt);
\draw[draw=black, fill=red] (14pt, -4pt) rectangle (12pt, 4pt);
\draw[draw=black, fill=red] (20pt, -4pt) rectangle (22pt, 4pt);

\draw[draw=black, fill=red] (36pt, -4pt) rectangle (38pt, 4pt);

\draw[draw=black, fill=white] (70pt, -4pt) rectangle (68pt, 4pt);
\draw[draw=black, fill=white] (86pt, -4pt) rectangle (84pt, 4pt);
\draw[draw=black, fill=white] (114pt, -4pt) rectangle (116pt, 4pt);

\draw[draw=black, fill=white] (131pt, -4pt) rectangle (133pt, 4pt);

\draw[draw=black, fill=white] (140pt, -4pt) rectangle (142pt, 4pt);

\draw[draw=black, fill=white] (150pt, -4pt) rectangle (152pt, 4pt);

\draw[draw=black, fill=white] (166pt, -4pt) rectangle (168pt, 4pt);

\draw[draw=black, fill=blue] (181pt, -4pt) rectangle (179pt, 4pt);

\draw[draw=black, fill=white] (191pt, -4pt) rectangle (189pt, 4pt);

\draw[draw=black, fill=white] (199pt, -4pt) rectangle (201pt, 4pt);

\draw[draw=black, fill=white] (231pt, -4pt) rectangle (229pt, 4pt);

\draw[draw=black, fill=white] (304pt, -4pt) rectangle (306pt, 4pt);

\end{scope}

\begin{scope}[>=latex]
\draw[|<->|] (0pt,20pt)  -- (60pt,20pt) node[midway] {\circled{\footnotesize 1}};
\end{scope}

\begin{scope}[>=to]
\draw[|<->|][dotted] (0pt,-40pt)  -- (60pt,-40pt) node[midway,below] {{\scriptsize $t_2 - t_1 - a$}};
\draw[|<->|][dotted] (230pt,-40pt)  -- (290pt,-40pt) node[midway,below] {{\scriptsize $t_2 - t_1 - a$}};
\end{scope}

\end{tikzpicture}
\end{figure}

\item \emph{There are events in $\bigl( t_1, t_1 + (t_2 - t_1 - a) \bigr)$}:
We enforce that $\varphi_1$ holds at all events in \circled{\footnotesize 1} and \circled{\footnotesize 2}
in the illustration below (in which $d_1 + d_2 = t_2 - t_1 - a$) by
\[
\once_{(|b|, |a|)} \varphi_2 \wedge
\bigl( \psi'' \until ( \varphi''' \wedge \psi'' )  \bigr) \,,
\]
where $\psi''$ is defined in the case $a \geq 0$ and $c < 0$ above.
\begin{figure}[h]
\centering
\begin{tikzpicture}[scale=1]
\begin{scope}[>=latex]

\draw[-, loosely dashed] (-5pt,0pt) -- (325pt,0pt);

\draw[-, loosely dashed] (0pt,-10pt) -- (0pt,10pt) node[at start, below] {$t_1 + c$};
\draw[-, loosely dashed] (60pt,-10pt) -- (60pt,10pt);
\draw[-, loosely dashed] (0pt,-30pt) -- (0pt,-40pt);
\draw[-, loosely dashed] (60pt,-30pt) -- (60pt,-40pt);
\draw[-, loosely dashed] (30pt,10pt) -- (30pt,-40pt);

\draw[-, loosely dashed] (120pt,-10pt) -- (120pt,10pt) node[at start, below] {$t_1 + a$};
\draw[-, loosely dashed] (120pt,-30pt) -- (120pt,-40pt);
\draw[-, loosely dashed] (180pt,-10pt) -- (180pt,10pt);
\draw[-, loosely dashed] (180pt,-30pt) -- (180pt,-40pt);

\draw[-, loosely dashed] (210pt,-30pt) -- (210pt,10pt) node[at start, below] {$t_1 + b$};

\draw[-, loosely dashed] (180pt,-10pt) -- (180pt,10pt) node[at start, below] {$t_2$};
\draw[-, loosely dashed] (230pt,-10pt) -- (230pt,10pt) node[at start, below] {$t_1$};
\draw[-, loosely dashed] (230pt,-30pt) -- (230pt,-40pt);
\draw[-, loosely dashed] (290pt,-40pt) -- (290pt,10pt);
\draw[-, loosely dashed] (260pt,-40pt) -- (260pt,10pt);

\draw[draw=black, fill=white] (1pt, -4pt) rectangle (-1pt, 4pt);
\draw[draw=black, fill=red] (14pt, -4pt) rectangle (12pt, 4pt);
\draw[draw=black, fill=red] (20pt, -4pt) rectangle (22pt, 4pt);

\draw[draw=black, fill=red] (36pt, -4pt) rectangle (38pt, 4pt);

\draw[draw=black, fill=white] (70pt, -4pt) rectangle (68pt, 4pt);
\draw[draw=black, fill=white] (86pt, -4pt) rectangle (84pt, 4pt);
\draw[draw=black, fill=white] (114pt, -4pt) rectangle (116pt, 4pt);

\draw[draw=black, fill=white] (131pt, -4pt) rectangle (133pt, 4pt);

\draw[draw=black, fill=white] (140pt, -4pt) rectangle (142pt, 4pt);

\draw[draw=black, fill=white] (150pt, -4pt) rectangle (152pt, 4pt);

\draw[draw=black, fill=white] (166pt, -4pt) rectangle (168pt, 4pt);

\draw[draw=black, fill=blue] (181pt, -4pt) rectangle (179pt, 4pt);

\draw[draw=black, fill=white] (191pt, -4pt) rectangle (189pt, 4pt);

\draw[draw=black, fill=white] (199pt, -4pt) rectangle (201pt, 4pt);

\draw[draw=black, fill=white] (231pt, -4pt) rectangle (229pt, 4pt);

\draw[draw=black, fill=white] (238pt, -4pt) rectangle (240pt, 4pt);

\draw[draw=black, fill=white] (252pt, -4pt) rectangle (254pt, 4pt);

\draw[draw=black, fill=white] (259pt, -4pt) rectangle (261pt, 4pt);

\draw[draw=black, fill=white] (304pt, -4pt) rectangle (306pt, 4pt);

\end{scope}

\begin{scope}[>=latex]
\draw[|<->|] (0pt,20pt)  -- (30pt,20pt) node[midway] {\circled{\footnotesize 1}};
\draw[|<->|] (30pt,20pt)  -- (60pt,20pt) node[midway] {\circled{\footnotesize 2}};
\end{scope}

\begin{scope}[>=to]

\draw[|<->|][dotted] (0pt,-40pt)  -- (30pt,-40pt) node[midway,below] {{\scriptsize $d_1$}};
\draw[|<->|][dotted] (30pt,-40pt)  -- (60pt,-40pt) node[midway,below] {{\scriptsize $d_2$}};
\draw[|<->|][dotted] (230pt,-40pt)  -- (260pt,-40pt) node[midway,below] {{\scriptsize $d_1$}};
\draw[|<->|][dotted] (260pt,-40pt)  -- (290pt,-40pt) node[midway,below] {{\scriptsize $d_2$}};
\end{scope}

\end{tikzpicture}
\end{figure}

\end{itemize}
The desired formula is the disjunction of these two formulas.

\item \emph{$a < 0$ and $a < c < 0$}:
Without loss of generality we assume $a < b < 0$.
The desired formula is identical to the formula in the case $a < 0$ and $c \geq 0$ above. \qedhere
\end{itemize}
\end{proof}

\noindent
We can now give an \mtlpastg{} formula that distinguishes, in the pointwise semantics,
the models $\mathcal{G}_{m}$ and $\mathcal{H}_{m}$ in the proof of
Proposition~\ref{prop:foone-strict}:
\[
\eventually_{(1, 2)} \bigl( \mathbf{true} \wedge (\mathbf{false} \guntil_{> 0}^{-1} \mathbf{true}) \bigr) \,.
\]
This formula is `equivalent' to the formula $\varphi_{\mathit{cont3}}$
which distinguishes $\mathcal{G}_{m}$ and $\mathcal{H}_{m}$ in the continuous semantics.

\subsection{The translation}\label{subsec:translation}

We now give a translation from an arbitrary \foone{} formula with a single free variable into an equivalent \mtlpastg{} formula over $\ropen{0, N}$-timed words.
Our proof strategy closely follows~\cite{Ouaknine2009}:
first convert the formula into a non-metric formula, then
translate this formula into \ltlpast{}, and finally construct
an \mtlpastg{} formula equivalent to the original formula.
The crux of the translation is a \emph{`stacking' bijection} between $\ropen{0, N}$-timed words over $\Sigma_\mathbf{P}$
and a set of $\ropen{0, 1}$-timed words over an extended alphabet.
Roughly speaking, since the time domain is bounded, we can encode the integer parts of timestamps
with a bounded number of new monadic predicates.
This enables us to work instead with `stacked' $\ropen{0, 1}$-timed words, in which only the ordering of events are relevant.

\paragraph{\emph{Stacking bounded timed words}}
For each monadic predicate $P \in \mathbf{P}$, we introduce
fresh monadic predicates $P_i$, $0 \leq i \leq N - 1$ and let
the set of all these new monadic predicates be $\overline{\mathbf{P}}$.
The intended meaning is that for $x \in \ropen{0, 1}$, $P_i(x)$ holds in a stacked $\ropen{0, 1}$-timed word iff
$P$ holds at time $i + x$ in the corresponding $\ropen{0, N}$-timed word.
We also introduce $\overline{\mathbf{Q}} = \{Q_i \mid 0 \leq i \leq N - 1\}$
such that for $x \in \ropen{0, 1}$, $Q_i(x)$ holds in a stacked $\ropen{0, 1}$-timed word
iff there is an event at time $i + x$ in the corresponding $\ropen{0, N}$-timed word,
regardless of whether any $P \in \mathbf{P}$ holds there.
Let
\[
    \vartheta_{\mathit{event}} = \forall x \, \left(\bigvee_{0 \leq i \leq N - 1} Q_i(x)\right) \wedge \forall x \, \left(\bigwedge_{0 \leq i \leq N - 1} \bigl( P_i(x) \implies Q_i(x) \bigr) \right)
\]
and $\vartheta_{\mathit{init}} = \exists x \, \bigl( \nexists x' \, (x' < x) \wedge Q_0(x) \bigr)$.\label{for:wellformed}
There is an obvious `stacking' bijection (indicated by overlining) between $\ropen{0, N}$-timed words over $\Sigma_\mathbf{P}$ and
$\ropen{0, 1}$-timed words over $\Sigma_{\overline{\mathbf{P}} \cup \overline{\mathbf{Q}}}$
satisfying $\vartheta_{\mathit{event}} \wedge \vartheta_{\mathit{init}}$. For a concrete example, the stacked counterpart
of the $\ropen{0, 2}$-timed word
\[
\rho = (\{A\}, 0)(\{A, C\}, 0.3)(\{B\}, 1)(\{B, C\}, 1.5)
\]
with $\mathbf{P} = \{A, B, C\}$ is the $\ropen{0, 1}$-timed word:
\[
\overline{\rho} = (\{Q_0, Q_1, A_0, B_1\}, 0)(\{Q_0, A_0, C_0\}, 0.3)(\{Q_1, B_1, C_1\}, 0.5) \,.
\]

\paragraph{\emph{Stacking \foone{} formulas}}
Let $\vartheta(x)$ be an \foone{} formula with a single free variable $x$ where each
quantifier uses a fresh new variable.
Without loss of generality, we assume that $\vartheta(x)$ contains only existential quantifiers (this can be achieved by syntactic rewriting).
Replace the formula by
\[
\bigl(Q_0(x) \wedge \vartheta[x/x]\bigr) \vee \bigl(Q_1(x) \wedge \vartheta[x+1/x]\bigr) \vee \ldots \vee \bigl(Q_{N-1}(x) \wedge \vartheta[x + (N - 1)/x] \bigr)
\]
where $\vartheta[e/x]$ denotes the formula obtained by substituting all free occurrences of $x$
in $\vartheta$ by (an expression) $e$.
Then, similarly, recursively replace every subformula $\exists x' \, \theta$ by
\[
\exists x' \, \Bigl( \bigl( Q_0(x') \wedge \theta[x'/x'] \bigr) \vee \ldots \vee \bigl( Q_{N-1}(x') \wedge \theta[x' + (N - 1)/x'] \bigr) \Bigr) \,.
\]
Note that we do not actually have the $+k$ functions in our pointwise version of \foone{};
they are only used as annotations here and will be removed later,
e.g., $x' + k$ means that $Q_k(x')$ holds. We then carry out the following syntactic substitutions:
\begin{itemize}
\item For each inequality of the form $x_1 + k_1 < x_2 + k_2$, replace it with
	\begin{itemize}
	\item $x_1 < x_2$ if $k_1 = k_2$
	\item $\mathbf{true}$ if $k_1 < k_2$
	\item $\mathbf{false}$ if $k_1 > k_2$
	\end{itemize}
\item For each distance formula, e.g., $d(x_1 + k_1, x_2 + k_2) < 2$, replace it with
	\begin{itemize}
	\item $\mathbf{true}$ if $|k_1 - k_2| \leq 1$
	\item $x_2 < x_1$ if $k_2 - k_1 = 2$
	\item $x_1 < x_2$ if $k_1 - k_2 = 2$
	\item $\mathbf{false}$ if $|k_1 - k_2| > 2$
	\end{itemize}
\item Replace terms of the form $P(x_1 + k)$ with $P_k(x_1)$.
\end{itemize}

\noindent
This gives a non-metric first-order formula $\overline{\vartheta}(x)$ over $\overline{\mathbf{P}} \cup \overline{\mathbf{Q}}$.
Denote by $\mathit{frac}(t)$ the fractional part of a non-negative real $t$.
It is not hard to see that for each $\ropen{0, N}$-timed word $\rho = (\sigma, \tau)$ over $\Sigma_{\mathbf{P}}$
and its stacked counterpart $\overline{\rho}$, the following holds:
\begin{itemize}
\item $\rho, t \models \vartheta(x)$ implies $\overline{\rho}, \overline{t} \models \overline{\vartheta}(x)$ where $\overline{t} = \mathit{frac}(t)$
\item $\overline{\rho}, \overline{t} \models \overline{\vartheta}(x)$ implies there
	exists $t \in \rho$ with $\mathit{frac}(t) = \overline{t}$ such that $\rho, t \models \vartheta(x)$.
\end{itemize}
Moreover,
if $\rho, t \models \vartheta(x)$, then the integer part of $t$
indicates which disjunct in $\overline{\vartheta}(x)$ is satisfied
when $x$ is substituted with $\overline{t} = \mathit{frac}(t)$, and vice versa.
By Kamp's theorem~\cite{Kamp1968} (applied individually on each $\vartheta[x + i/x]$), $\overline{\vartheta}(x)$ is equivalent to an \ltlpast{} formula
$\overline{\varphi}$ of the following form:
\[
(Q_0 \wedge \overline{\varphi}_0) \vee (Q_1 \wedge \overline{\varphi}_1) \vee \ldots \vee (Q_{N-1} \wedge \overline{\varphi}_{N-1}) \,.
\]

\paragraph{\emph{Unstacking}}

We construct inductively an \mtlpastg{} formula $\psi$
for each subformula $\overline{\psi}$ of $\overline{\varphi}_i$ (for some $i \in \{0, \ldots, N - 1\}$).
Again, we make use of the formulas in $\Phi_{\mathit{int}}$ defined earlier.
\begin{itemize}
\item $\overline{\psi} = P_j$: Let
\[
\psi = (\varphi_{0, 1} \wedge \eventually_{=j} P) \vee \ldots \vee (\varphi_{j, j+1} \wedge P)
\vee \ldots \vee (\varphi_{N-1, N} \wedge \once_{=\left( (N-1) - j \right)} P) \,.
\]
\item $\overline{\psi} = Q_j$: Similarly, let
\[
\psi = (\varphi_{0, 1} \wedge \eventually_{=j} \mathbf{true}) \vee \ldots \vee (\varphi_{j, j+1} \wedge \mathbf{true})
\vee \ldots \vee (\varphi_{N-1, N} \wedge \once_{=\left( (N-1) - j \right)} \mathbf{true}) \,.
\]
\item $\overline{\psi} = \overline{\psi}_1 \until \overline{\psi}_2$:
Let $\psi^{j, k, l} = \psi_1 \guntil_{(j, j + 1)}^k (\psi_2 \wedge \varphi_{l, l + 1})$.
The desired formula is
\[
\psi = \bigvee_{0 \leq i \leq N - 1} \left(\varphi_{i, i + 1} \wedge
	\bigvee_{\substack{-i \leq j \leq (N-1) - i \\ l = i + j}} \left(
		\bigwedge_{-i \leq k \leq (N-1) - i} \psi^{j, k, l}
	\right)
\right) \,.
\]
\item $\overline{\psi} = \overline{\psi}_1 \since \overline{\psi}_2$: This is symmetric to the case of $\overline{\psi}_1 \until \overline{\psi}_2$.
\end{itemize}
The construction for the other cases are trivial and therefore omitted.
\begin{prop}\label{prop:translation}
Let $\overline{\psi}$ be a subformula of $\overline{\varphi}_i$ for some $i \in \{0, \ldots, N - 1\}$.
There is an \emph{\mtlpastg{}} formula $\psi$ such that for any $\ropen{0, N}$-timed word $\rho$,
$t \in \rho$ and $\mathit{frac}(t) = \overline{t} \in \overline{\rho}$, we have
\[
\overline{\rho}, \overline{t} \models \overline{\psi} \iff \rho, t \models \psi \,.
\]
\end{prop}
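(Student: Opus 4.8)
The plan is to prove the proposition by structural induction on the subformula $\overline{\psi}$, establishing the biconditional simultaneously for \emph{all} pairs $(t, \overline{t})$ with $t \in \rho$ and $\mathit{frac}(t) = \overline{t}$. This uniformity is essential, since a single position $\overline{t}$ of $\overline{\rho}$ may arise as the fractional part of several events of $\rho$ at distinct integer levels, and the induction hypothesis must apply to each of them. Throughout I write $p$ for the integer part of $t$, so that $t = p + \overline{t}$ and, by the definition of $\Phi_{\mathit{int}}$, the formula $\varphi_{p, p+1}$ holds at $t$ while $\varphi_{p', p'+1}$ fails for every $p' \neq p$.

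For the base cases $\overline{\psi} = P_j$ and $\overline{\psi} = Q_j$, I would observe that in the disjunction defining $\psi$ exactly one disjunct is unlocked, namely the one guarded by $\varphi_{p, p+1}$; its second conjunct tests $P$ (respectively $\mathbf{true}$) at distance $j - p$ from $t$, i.e.\ at time $t + (j - p) = j + \overline{t}$, using $\eventually_{=(j-p)}$ when $j > p$, $\once_{=(p-j)}$ when $j < p$, and the bare atom when $j = p$. By strict monotonicity there is at most one event at $j + \overline{t}$, so this matches the stacked semantics ($P_j(\overline{t})$ iff $P$ holds at $j + \overline{t}$ in $\rho$) exactly; when $j + \overline{t} \geq N$ both sides are false, as no such event exists and $Q_j$ (hence $P_j$) cannot hold. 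The Boolean cases $\neg$ and $\wedge$ are immediate, because the induction hypothesis is a biconditional at the \emph{same} pair $(t, \overline{t})$, so the translation commutes with $\neg$ and $\wedge$.

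The crux is the $\until$ case, the $\since$ case being symmetric. Fix $(t, \overline{t})$ with integer part $i$, so only the $i$-th outer disjunct of $\psi$ is active. For the target level $l = i + j$, the generalized modality $\psi^{j,k,l} = \psi_1 \guntil_{(j, j+1)}^k (\psi_2 \wedge \varphi_{l, l+1})$ places its witness $t_2$ at an event of level $l$ whose fractional part $\overline{t}'$ satisfies $\overline{t} < \overline{t}' < 1$ (as $t_2 - t \in (j, j+1)$ while $t_2 < l + 1$ by $\varphi_{l, l+1}$) and at which $\psi_2$ holds; and it forces $\psi_1$ at every event of level $i + k$ with fractional part in $(\overline{t}, \overline{t}')$, since the window $(t + k,\, t + k + (t_2 - t - j))$ computes to $(i + k + \overline{t},\, i + k + \overline{t}')$. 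I would first note that these generalized forms, with possibly negative $j$ and $k$, are legitimate \mtlpastg{} formulas by the preceding proposition on liberal bounds. For completeness I take a stacked witness $\overline{t}'$ for $\overline{\psi}_1 \until \overline{\psi}_2$, choose $l$ to be any level at which an event with fractional part $\overline{t}'$ occurs, and set $j = l - i$; the very same $t_2 = l + \overline{t}'$ then witnesses every conjunct $\psi^{j,k,l}$, and the stacked between-condition (transported by the induction hypothesis to all levels) discharges each window, so $\bigwedge_k \psi^{j,k,l}$ holds at $t$.

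The one genuine obstacle is soundness, because the separate conjuncts $\psi^{j,k,l}$ may \emph{a priori} select different witnesses $\overline{t}^{\prime(k)}$. I would resolve this by taking $\overline{t}'_{\min}$, the least fractional part exceeding $\overline{t}$ at which some level-$l$ event satisfies $\psi_2$; this exists because at least one conjunct has a witness, and by the induction hypothesis $\overline{\psi}_2$ holds at $\overline{t}'_{\min}$. To verify the stacked between-condition, take any position $\overline{t}''$ of $\overline{\rho}$ with $\overline{t} < \overline{t}'' < \overline{t}'_{\min}$; it is realised by an event of some level $m = i + k$, and the conjunct $\psi^{j,k,l}$ supplies a witness $\overline{t}^{\prime(k)} \geq \overline{t}'_{\min} > \overline{t}''$, so its window at level $m$ covers $\overline{t}''$ and forces $\psi_1$ there, whence the induction hypothesis yields $\overline{\psi}_1$ at $\overline{t}''$. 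Thus $\overline{t}'_{\min}$ witnesses $\overline{\psi}_1 \until \overline{\psi}_2$ at $\overline{t}$, and the induction closes.
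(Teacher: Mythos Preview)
Your proposal is correct and follows essentially the same approach as the paper: structural induction on $\overline{\psi}$, with the $\until$ case handled via the generalised modalities $\psi^{j,k,l}$ and a minimal-witness argument for soundness. Your treatment is in fact more explicit than the paper's on two points the paper glosses over: the appeal to the liberal-bounds proposition to justify the possibly negative $j$ and $k$, and the argument that taking the \emph{minimal} level-$l$ witness $\overline{t}'_{\min}$ resolves the issue of the conjuncts $\psi^{j,k,l}$ having a~priori different witnesses (the paper simply writes ``there is a (minimal) $\overline{t'}$ such that \ldots'' and asserts the between-condition without spelling out why each intermediate stacked position is covered by the appropriate conjunct's window).
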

\begin{proof}
Induction on the structure of $\overline{\psi}$ and $\psi$, where the latter is constructed as described above.
\begin{itemize}
\item $\overline{\psi} = P_j$: Assume $\overline{\rho}, \overline{t} \models \overline{\psi}$.
If $t = j + \overline{t}$, the disjunct $(\varphi_{j, j+1} \wedge P)$ of $\psi$
clearly holds at $t$ in $\rho$. If $t = j' + \overline{t}$ where $j' \neq j$, since there is a $P$-event
at time $j + \overline{t}$ in $\rho$, the $j'$-th disjunct of $\psi$ must hold at $t$ in $\rho$.
The proof for the other direction is similar.

\item $\overline{\psi} = \overline{\psi}_1 \until \overline{\psi}_2$:
Assume $\overline{\rho}, \overline{t} \models \overline{\psi}$
and let the `witness' (i.e.,~where $\overline{\psi}_2$ holds) be at $\overline{t'}$. By construction and the induction hypothesis, there is an event at $t' = l + \overline{t}$ in $\rho$ for some $l \in \{0, \ldots, N - 1\}$
such that $\rho, t' \models \psi_2$. Moreover, since we have $\overline{\rho}, \overline{t''} \models \overline{\psi_1}$
for all $\overline{t''}$, $\overline{t} < \overline{t''} < \overline{t'}$, we must have
$\rho, t'' \models \psi_1$
for all $t'' \in \rho$ with
$t'' = k' + \overline{t''}$ for some $\overline{t} < \overline{t''} < \overline{t'}$
and $0 \leq k' \leq N - 1$.
Now let $t = i + \overline{t}$ for some $i \in \{0, \ldots, N - 1\}$ be a timestamp in $\rho$
and let $j = l - i$. It is clear that $\rho, t \models \varphi_{i, i + 1}$
and
\[
    \rho, t \models \bigwedge_{\substack{0 \leq k' \leq N-1 \\ k = k' - i}} \psi^{j, k, l} \,,
\]
as required. For the other direction, let $t = i + \overline{t}$ for
some $i \in \{0, \ldots, N - 1\}$ and let
\[
    \rho, t \models \bigwedge_{-i \leq k \leq (N-1) - i} \psi^{j, k, l}
\]
for some $j \in \{-i, \ldots, (N - 1) - i\}$ and $l = i + j$. It follows that there is a
(minimal) $\overline{t'} > \overline{t}$ such that $\rho, l + \overline{t'} \models \psi_2$
and $\rho, k' + \overline{t''} \models \psi_1$
for all $t'' \in \rho$ with
$t'' = k' + \overline{t''}$ for some $\overline{t} < \overline{t''} < \overline{t'}$
and $0 \leq k' \leq N - 1$. The claim follows by construction and the induction hypothesis.

\end{itemize}
The other cases are trivial or symmetric.
\end{proof}

Using the construction above, we obtain an \mtlpastg{} formula $\varphi_i$ for each $\overline{\varphi}_i$.
Substitute them into $\overline{\varphi}$ and replace all remaining $Q_i$ by $\varphi_{i, i + 1}$
to obtain our final formula $\varphi$,
which is equivalent to the original \foone{} formula $\vartheta(x)$ over $\ropen{0, N}$-timed words.

\begin{prop}
For any $\ropen{0, N}$-timed word $\rho$ and $t \in \rho$, we have
\[
\rho, t \models \varphi(x) \iff \rho, t \models \vartheta(x) \,.
\]
\end{prop}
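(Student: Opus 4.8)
The plan is to assemble the three correspondences already established in this subsection—stacking, Kamp's theorem, and the unstacking of Proposition~\ref{prop:translation}—into a single chain of equivalences, taking care that the integer part of $t$ selects the same disjunct throughout. Fix an $\ropen{0,N}$-timed word $\rho$ and $t \in \rho$, and set $i_0 = \lfloor t \rfloor$ and $\overline{t} = \mathit{frac}(t)$, so $t = i_0 + \overline{t}$. By the stacking bijection $\overline{t}$ is a timestamp of $\overline{\rho}$, and since $\rho$ has an event precisely at $t = i_0 + \overline{t}$, the predicate $Q_{i_0}$ holds at $\overline{t}$ in $\overline{\rho}$.

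First I would unfold the shape of the final formula. Unstacking sends $\overline{\varphi} = \bigvee_{0 \leq i \leq N-1} (Q_i \wedge \overline{\varphi}_i)$ to $\varphi = \bigvee_{0 \leq i \leq N-1}(\varphi_{i,i+1} \wedge \varphi_i)$. Because $\varphi_{i,i+1}$ holds at an event of $\rho$ exactly when its timestamp lies in $\ropen{i,i+1}$, the integer part $i_0$ of $t$ singles out the only disjunct that can be satisfied at $t$, so $\rho, t \models \varphi$ iff $\rho, t \models \varphi_{i_0}$. This reduces the proposition to the index-$i_0$ equivalence $\rho, t \models \varphi_{i_0} \iff \rho, t \models \vartheta(x)$.

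Then I would chain the three correspondences on this one index:
\[
\rho, t \models \varphi_{i_0} \iff \overline{\rho}, \overline{t} \models \overline{\varphi}_{i_0} \iff \overline{\rho}, \overline{t} \models \overline{\vartheta}_{i_0} \iff \rho, t \models \vartheta(x) \,,
\]
where $\overline{\vartheta}_{i_0}$ denotes the stacked form of the disjunct $\vartheta[x + i_0/x]$, so that $\overline{\vartheta}(x) = \bigvee_i (Q_i \wedge \overline{\vartheta}_i)$. The first equivalence is Proposition~\ref{prop:translation} applied to the subformula $\overline{\varphi}_{i_0}$; the second is Kamp's theorem, which was applied disjunct-by-disjunct to produce $\overline{\varphi}_i$ from $\overline{\vartheta}_i$; and the third is the stacking correspondence, using that $\vartheta[x + i_0/x]$ evaluated at $\overline{t}$ is by definition $\vartheta$ evaluated at $i_0 + \overline{t} = t$, together with the fact that $Q_{i_0}(\overline{t})$ holds.

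The main obstacle—indeed the only nonroutine point—is the last equivalence, which requires the ``moreover'' refinement of the stacking correspondence in both directions: from $\rho, t \models \vartheta$ one reads off that it is precisely disjunct $i_0$ of $\overline{\vartheta}$ that is satisfied at $\overline{t}$, while conversely the presence of the event at $t$ (equivalently $Q_{i_0}(\overline{t})$) is what lets one promote satisfaction of the stacked disjunct at the fractional point $\overline{t}$ back to satisfaction of $\vartheta$ at the concrete timestamp $t$. Once the index $i_0$ is tracked consistently across the stacking, Kamp, and unstacking steps, the remainder is bookkeeping, with only the behaviour of $\varphi_{i,i+1}$ at the very first event (where $i_0 = 0$) requiring a separate, direct check.
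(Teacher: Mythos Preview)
Your proposal is correct and follows exactly the approach the paper relies on. The paper does not actually spell out a proof of this proposition; it states it as an immediate consequence of the preceding construction, and your chain of equivalences (stacking correspondence with the ``moreover'' clause to pin down the disjunct $i_0$, Kamp applied disjunct-by-disjunct, and Proposition~\ref{prop:translation} for unstacking) is precisely the intended argument made explicit.
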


We are now ready to state the main result of this section.

\begin{thm}\label{thm:boundedexpcomp}
\emph{\mtlpastg{}} is expressively complete for \emph{\foone{}} over $\ropen{0, N}$-timed words.
\end{thm}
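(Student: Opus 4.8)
The plan is to assemble the translation pipeline developed in this subsection into a claim that holds for an \emph{arbitrary} \foone{} formula. Given a formula $\vartheta(x)$ with a single free variable, I would first apply the \emph{stacking} construction to obtain the non-metric first-order formula $\overline{\vartheta}(x)$ over $\overline{\mathbf{P}} \cup \overline{\mathbf{Q}}$, using the bijection between $\ropen{0, N}$-timed words and the stacked $\ropen{0, 1}$-timed words that satisfy $\vartheta_{\mathit{event}} \wedge \vartheta_{\mathit{init}}$. Since $\overline{\vartheta}(x)$ lives in \textup{\textmd{\textsf{FO[$<$]}}}, Kamp's theorem~\cite{Kamp1968} (applied disjunct-by-disjunct as indicated earlier) yields an equivalent \ltlpast{} formula $\overline{\varphi} = \bigvee_{0 \leq i \leq N-1}(Q_i \wedge \overline{\varphi}_i)$, in which the guard $Q_i$ selects the disjunct governing instants whose integer part is $i$.

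Next I would invoke the \emph{unstacking} step together with Proposition~\ref{prop:translation}: for every subformula $\overline{\psi}$ of every $\overline{\varphi}_i$ this produces an \mtlpastg{} formula $\psi$ with $\overline{\rho}, \overline{t} \models \overline{\psi} \iff \rho, t \models \psi$ whenever $\mathit{frac}(t) = \overline{t}$. Applying it to each $\overline{\varphi}_i$ gives \mtlpastg{} formulas $\varphi_i$; substituting these back into $\overline{\varphi}$ and replacing each remaining guard $Q_i$ by $\varphi_{i, i+1}$ yields the final \mtlpastg{} formula $\varphi$. The preceding proposition then certifies $\rho, t \models \varphi(x) \iff \rho, t \models \vartheta(x)$ for all $\ropen{0, N}$-timed words $\rho$ and $t \in \rho$, i.e.,~$\varphi$ is equivalent to $\vartheta$ over $\ropen{0, N}$-timed words. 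As $\vartheta(x)$ was arbitrary, this is exactly expressive completeness of \mtlpastg{} for \foone{}.

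I expect the genuine content of the argument to sit entirely in the unstacking of the $\until$ and $\since$ cases, which is where the generalised modalities are indispensable. The subtlety is that an \ltlpast{} `until' $\overline{\psi}_1 \until \overline{\psi}_2$ lives on one fractional stack, yet after unstacking the witness for $\overline{\psi}_2$ and the events witnessing $\overline{\psi}_1$ can sit at different integer offsets $l$ and $k$ relative to the anchor offset $i$ of the current instant. The formula $\psi^{j, k, l} = \psi_1 \guntil_{(j, j+1)}^{k}(\psi_2 \wedge \varphi_{l, l+1})$ is designed for precisely this: the interval $(j, j+1)$ fixes the fractional ordering of the witness, while the superscript $k$ retargets the ``$\varphi_1$ holds throughout'' obligation to the correct integer reference point. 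Conjoining over $k$ and disjoining over the anchor $i$ (with $l = i + j$) reconstitutes the full constraint --- this binary-reference capability is exactly what \mtlpast{} lacks, per Proposition~\ref{prop:foone-strict}. Since the soundness of this bookkeeping is already the content of the induction in Proposition~\ref{prop:translation}, at the level of the theorem the only thing left to confirm is that assembling the $\varphi_i$ under the guards $\varphi_{i, i+1}$ correctly records the integer part of the evaluation point; everything else is routine.
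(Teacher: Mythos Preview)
Your proposal is correct and follows essentially the same approach as the paper: the theorem is stated immediately after the translation pipeline (stacking, Kamp, unstacking via Proposition~\ref{prop:translation}, and the final substitution of $\varphi_{i,i+1}$ for $Q_i$) and the proposition asserting $\rho, t \models \varphi(x) \iff \rho, t \models \vartheta(x)$, and is meant to be read as their direct consequence. Your commentary on why the $\psi^{j,k,l}$ construction with generalised `Until' is the crux of the unstacking step is accurate and matches the paper's discussion.
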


\subsection{Time-bounded verification}

We claim that the \emph{timed-bounded satisfiability} and \emph{time-bounded model-checking}
problems for \mtlpastg{} are $\mathrm{EXPSPACE}$-complete in both the pointwise and continuous semantics.

\begin{thm}
The time-bounded satisfiability problem for \emph{\mtlpastg{}} (in both the pointwise and continuous semantics)
is $\mathrm{EXPSPACE}$-complete.
\end{thm}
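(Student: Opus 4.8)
The plan is to derive both bounds from the corresponding facts about \mtlpast{} proved in~\cite{Ouaknine2009}. The lower bound is immediate: \mtlpast{} is a syntactic fragment of \mtlpastg{}, and its time-bounded satisfiability problem is already $\mathrm{EXPSPACE}$-hard in both semantics by~\cite{Ouaknine2009}, so hardness transfers verbatim. All the real work lies in the $\mathrm{EXPSPACE}$ upper bound, which I would establish separately for the two semantics.

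For the continuous semantics I would argue that the new modalities add nothing that \mtlpast{} cannot already express with a short formula. Over signals one can speak of the points $t_1 + c$ and $t_1 + \inf(I)$ regardless of whether events occur there, and this is exactly what makes $\guntil_I^c$ and $\gsince_I^c$ mimic continuous \mtlpast{} formulas of the form $(\eventually_{=k_1}\varphi_1) \until_{<k_3} (\eventually_{=k_2}\varphi_2)$ and their past duals (as witnessed by the explicit equivalents $\varphi_{\mathit{cont2}}$ and $\varphi_{\mathit{cont3}}$ appearing just before Section~\ref{subsec:translation}). Composing these definitions gives a polynomial-size, satisfiability-preserving rewriting of any \mtlpastg{} formula into a continuous \mtlpast{} formula, after which the $\mathrm{EXPSPACE}$ bound of~\cite{Ouaknine2009} applies directly.

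For the pointwise semantics I would instead adapt the \emph{stacking} construction of Section~\ref{subsec:translation} to reduce satisfiability to an \emph{untimed} problem. Given an \mtlpastg{} formula $\varphi$ (constants in binary) and a bound $N$, the stacking bijection identifies $\ropen{0, N}$-timed words with $\ropen{0,1}$-timed words over $\Sigma_{\overline{\mathbf P}\cup\overline{\mathbf Q}}$ satisfying $\vartheta_{\mathit{event}} \wedge \vartheta_{\mathit{init}}$. The crucial point is that on a $\ropen{0,1}$-timed word every pair of timestamps lies at distance strictly below $1$, so every metric constraint $d(x,x')\sim c$ collapses to a truth value determined solely by the track indices and the event ordering, exactly as in the syntactic substitutions of Section~\ref{subsec:translation}. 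Consequently each \mtlpastg{} modality stacks into an \ltlpast{} modality over the extended alphabet: a witness for $\varphi_1 \guntil_{(a,b)}^c \varphi_2$ at track $i$ must sit on a track $l$ with $l-i$ fixed by $(a,b)$ together with the comparison of fractional parts, and the `diagonal' interval $\bigl(t_1+c,\ t_2+c-a\bigr)$ over which $\varphi_1$ is required is delimited by the stacked positions $(i+c,\overline{t})$ and $(l+c-a,\overline{t_2})$, both pinpointable using the track predicates and the order within $\ropen{0,1}$. I would thus translate $\varphi$, conjoined with the well-formedness constraints, into an \ltlpast{} formula $\overline{\varphi}$ whose finite-word satisfiability coincides with that of $\varphi$ over $\ropen{0,N}$-timed words.

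It then remains to account for sizes. As $N$ and the constants are given in binary, the number of tracks, the new predicate set of size $\lvert\overline{\mathbf P}\cup\overline{\mathbf Q}\rvert = O(N\lvert\mathbf P\rvert)$, and the stacked formula $\overline{\varphi}$ are all at most exponential in the input. Finite-word \ltlpast{} satisfiability is $\mathrm{PSPACE}$-complete, so running the standard decision procedure on $\overline{\varphi}$ uses space polynomial in $\lvert\overline{\varphi}\rvert$, i.e.\ exponential in the input, placing the problem in $\mathrm{EXPSPACE}$. I expect the main obstacle to be verifying that the generalised modalities really do stack into \ltlpast{} without super-exponential blow-up: the interval governing $\varphi_1$ in $\guntil_I^c$ has an endpoint that slides with the witness position $t_2$, so the translation must range uniformly over the possible witness tracks $l$ while keeping the per-modality cost comparable to that of ordinary $\until$/$\since$ in the \mtlpast{} analysis of~\cite{Ouaknine2009}. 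Once this uniformity is in place, the complexity bookkeeping is identical to the \mtlpast{} case.
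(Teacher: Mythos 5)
Your proposal follows essentially the same route as the paper: the continuous case is dispatched by the linear-size rewriting of $\guntil_{(a,b)}^c$ into $(\eventually_{=c}\varphi_1)\until_{<b}(\eventually_{=a}\varphi_2)$ plus the bound from~\cite{Ouaknine2009}, and the pointwise case by stacking into an exponential-size finite-word \ltlpast{} satisfiability instance (the paper realises your ``range over witness tracks'' step via fresh predicates $F^\psi_i$ with constant-size defining equivalences per track, which resolves the sliding-endpoint worry you raise). The only divergence is cosmetic: for pointwise hardness the paper cites the Bounded-\mtlpast{} result of~\cite{Bouyer2007} rather than~\cite{Ouaknine2009}.
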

\begin{proof}
First note that for each \mtlpastg{} formula over timed words one can construct,
in linear time, an `equivalent' \mtlpastg{} formula over signals of the form $f^\rho$.
Then, in the continuous semantics, one can replace all subformulas of the form $\varphi_1 \guntil_{(a, b)}^c \varphi_2$
in an \mtlpastg{} formula by
\[
(\eventually_{= c} \varphi_1) \until_{< b} (\eventually_{= a} \varphi_2)
\]
(this can incur at most a linear blow-up). The claim therefore follows from~\cite{Ouaknine2009}.
However, for the sake of completeness, we give a direct proof (for the case of pointwise semantics)
along the lines of~\cite{Ouaknine2009} here; see Section~\ref{sec:conclusion} for a discussion on the practical implication.

For each subformula $\psi$ of a given \mtlpastg{} formula $\varphi$ and every $i \in \{0, \ldots, N\}$,
we introduce a monadic predicate $F^{\psi}_i$. We then add suitable
subformulas into $\overline{\varphi}$ to ensure that $F^{\psi}_i$ holds
at $\overline{t}$ in $\overline{\rho}$ iff $\psi$ holds at $t = \overline{t} + i$ in $\rho$.
As an example, let $A \guntil_{(2, 3)}^1 B$ be a subformula of $\varphi$.
We require the following formula to hold at every point in time
(assume that $i \leq N - 4$):
\[
    \arraycolsep=0.3ex
    \begin{array}{rcl}
    F_i^{A \guntil_{(2, 3)}^1 B} & \iff & \bigl( (F_{i+1}^Q \implies F_{i+1}^A) \until F_{i+2}^B\bigr) \\
    & & \vee \Bigl( \globally (F_{i+1}^Q \implies F_{i+1}^A) \wedge \once \bigl(F_{i+3}^B \wedge \henceforth (F_{i+2}^Q \implies F_{i+2}^A)\bigr) \Bigr)\,.
    \end{array}
\]
We also add the \ltl{} equivalents of $\vartheta_{\mathit{event}}$ and $\vartheta_{\mathit{init}}$
into $\overline{\varphi}$ as conjuncts.
It is clear that $\overline{\varphi}$ is of size exponential in the size of $\varphi$.
$\mathrm{EXPSPACE}$-hardness follows from the corresponding result of
\textmd{\textsf{Bounded-MTL}} (in the pointwise semantics) in~\cite{Bouyer2007}.
\end{proof}

Since the time-bounded model-checking problem and satisfiability problem
are inter-reducible in both the pointwise and continuous semantics~\cite{Wilke1994, Henzinger1998},
we have the following theorem.

\begin{thm}
The time-bounded model-checking problem for timed automata against \emph{\mtlpastg{}} (in both the pointwise
and continuous semantics) is $\mathrm{EXPSPACE}$-complete.
\end{thm}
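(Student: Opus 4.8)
The plan is to establish the result by the standard inter-reduction between the model-checking problem and the satisfiability problem, exploiting the fact that the time-bounded satisfiability problem for \mtlpastg{} has just been shown to be $\mathrm{EXPSPACE}$-complete. For the upper bound, recall that model checking asks whether $L(\mathcal{A}) \subseteq L(\varphi)$, which is equivalent to the \emph{unsatisfiability} over $\ropen{0, N}$-timed words of a property asserting ``$\rho$ is accepted by $\mathcal{A}$ and $\rho \not\models \varphi$''. First I would encode the accepting runs of the timed automaton $\mathcal{A} = \langle \Sigma, S, S_0, X, I, E, F \rangle$ as an \mtlpastg{} formula $\psi_{\mathcal{A}}$ over an extended set of monadic predicates: one predicate per location (or $\lceil \log |S| \rceil$ bits) recording the current location, and one predicate $\mathrm{reset}_x$ per clock $x \in X$ marking the events at which $x$ is reset. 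The value of a clock $x$ at an event is then the elapsed time since its most recent reset, so a guard or invariant of the form $x < c$ becomes a metric \emph{past} condition such as $(\neg \mathrm{reset}_x) \since_{<c} \mathrm{reset}_x$ (with analogous formulas for the other comparisons and for the boundary case in which $x$ is reset at the current event). Local edge- and labelling-consistency between consecutive events is captured with $\nextx$, initialisation with $\neg \once \mathbf{true}$, and acceptance—namely that the final event of a bounded (hence finite) timed word lies in $F$—with $\neg \eventually \mathbf{true}$. The resulting $\psi_{\mathcal{A}}$ has size polynomial in $|\mathcal{A}|$.

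Given this encoding, a model of $\psi_{\mathcal{A}} \wedge \neg \varphi$ is exactly an accepting run of $\mathcal{A}$ on an $\ropen{0, N}$-timed word that violates $\varphi$: the fresh predicates witness the run and are simply projected away (as $\varphi$ does not mention them). Hence $L(\mathcal{A}) \subseteq L(\varphi)$ holds if and only if $\psi_{\mathcal{A}} \wedge \neg \varphi$ is unsatisfiable over $\ropen{0, N}$-timed words. Since this conjunction has size polynomial in $|\mathcal{A}| + |\varphi|$, and $\mathrm{EXPSPACE}$ is closed under complement, membership in $\mathrm{EXPSPACE}$ follows directly from the preceding satisfiability theorem. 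The continuous case is handled identically, encoding runs over signals of the form $f^\rho$ and invoking the continuous satisfiability bound.

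For the lower bound I would reduce satisfiability to the complement of model checking: take $\mathcal{A}_{\mathrm{univ}}$ to be a single-location timed automaton accepting every $\ropen{0, N}$-timed word over $\Sigma_{\mathbf{P}}$, so that $L(\mathcal{A}_{\mathrm{univ}}) \subseteq L(\neg \varphi)$ holds precisely when $\varphi$ is unsatisfiable. As time-bounded satisfiability is $\mathrm{EXPSPACE}$-complete and $\mathrm{EXPSPACE}$ is closed under complement, the model-checking problem inherits $\mathrm{EXPSPACE}$-hardness, completing the proof.

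The step I expect to be the main obstacle is the run-encoding: one must verify that all features of the general (nondeterministic) timed automaton model—clock guards, location invariants, edge relations, and the finite-word acceptance condition forced by time-boundedness—are faithfully and polynomially expressible in \mtlpastg{} in the time-bounded pointwise setting. In particular one must check that the ``elapsed time since last reset'' idiom reproduces clock semantics correctly under strict monotonicity, and that an existentially-labelled witness run corresponds exactly to membership in $L(\mathcal{A})$ after projection. These are precisely the routine (but technical) verifications underlying the inter-reducibility results of Wilke~\cite{Wilke1994} and Henzinger et al.~\cite{Henzinger1998}, which I would cite to discharge them rather than redo in full.
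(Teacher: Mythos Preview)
Your proposal is correct and takes essentially the same approach as the paper: reduce model checking to (co-)satisfiability via the standard inter-reducibility of~\cite{Wilke1994, Henzinger1998}, and invoke the preceding $\mathrm{EXPSPACE}$-completeness result for time-bounded satisfiability. The paper is in fact considerably more terse than you are---it simply cites those two references for the inter-reduction and states the theorem---so your sketch of the run-encoding and the universal-automaton lower-bound reduction spells out what the paper leaves implicit.
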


\section{Expressive completeness of \mtlpastg{} over unbounded timed words}

Recall that the counting modality $C_2(x, X)$ asserts that $X$ holds at at least two points
in $(x, x+1)$. While the modality is not expressible in \mtlpast{}, it is equivalent to
the following \mtlpast{} formula with \emph{rational} constants:
\[
\eventually_{(0, \frac{1}{2})} (X \wedge \eventually_{(0, \frac{1}{2})} X)
\vee \eventually_{(\frac{1}{2}, 1)} (X \wedge \once_{(0, \frac{1}{2})} X)
\vee ( \eventually_{(0, \frac{1}{2})} X \wedge \eventually_{(\frac{1}{2}, 1)} X ) \,.
\]
Indeed, \mtlpast{} with rational constants
is expressively complete for \fo{} (the rational version of \foone{}) over signals~\cite{Hunter2012}.
Unfortunately, even with rational endpoints, \mtlpast{} is still less expressive than \foone{} in the pointwise semantics~\cite{Prabhakar2006}.
We show in this section that expressive completeness of \mtlpast{} over (infinite) timed words can be recovered by adding (the rational versions of) the modalities
generalised `Until' ($\guntil_I^c$) and generalised `Since' ($\gsince_I^c$)
we introduced in the last section.

Our presentation in this section essentially follows~\cite{Hunter2012}. We first give a set of rewriting
rules that `extract' unbounded temporal operators from the scopes of bounded temporal operators.
Then we invoke Gabbay's separation theorem~\cite{Gabbay1980} to obtain a syntactic separation result for \mtlpastg{}
in the pointwise semantics. Exploiting a normal form for \foone{} in~\cite{Gabbay1980},
we show that any bounded \fo{} formula can be rewritten into an \mtlpastg{} formula.
Finally, these ideas are combined to obtain the desired result.

\subsection{Syntactic separation of \mtlpastg{} formulas}\label{sec:separation}

We present a series of logical equivalence rules that can
be used to rewrite a given \mtlpastg{} formula into an equivalent
formula in which no unbounded temporal operators occurs within the
scope of a bounded temporal operator.  Only the rules for
open intervals are given, as the rules for other types of intervals are
straightforward variants.

\paragraph{\emph{A normal form for \mtlpastg{}}}
We say an \mtlpastg{} formula is in \emph{normal form} if it satisfies:
\begin{enumerate}[label=(\roman*).] % chktex 36
\item All occurrences of unbounded temporal operator are of the form $\until_{(0, \infty)}$, $\since_{(0, \infty)}$, $\globally_{(0, \infty)}$, $\henceforth_{(0, \infty)}$.
\item All other occurrences of temporal operators are of the form
		$\until_I$, $\since_I$, $\guntil^c_I$, $\gsince^c_I$ with \mbox{bounded $I$}.
\item Negation is only applied to monadic predicates or bounded temporal operators.
\item In any subformula of the form $\varphi_1 \until_I \varphi_2$, $\varphi_1 \since_I \varphi_2$,
$\eventually_I \varphi_2$, $\once_I \varphi_2$, $\varphi_1 \guntil_I^c \varphi_2$, $\varphi_1 \gsince_I^c \varphi_2$ where $I$ is bounded,
		$\varphi_1$ is a disjunction of subformulas and $\varphi_2$ is a conjunction of subformulas.
\end{enumerate}
We now describe how to rewrite a given formula into normal form.
To satisfy (i) and (ii), apply the usual rules (e.g.,~$\globally_I \varphi \iff \neg \eventually_I \neg \varphi$) and the rules:
\[
    \arraycolsep=0.3ex
    \begin{array}{rcl}
    \varphi_1 \until_{(a, \infty)} \varphi_2  & \iff &  \varphi_1 \until \varphi_2 \wedge
    \globally_{\lopen{0, a}} (\varphi_1 \wedge \varphi_1 \until \varphi_2) \\
    \varphi_1 \guntil^c_{(a, \infty)} \varphi_2 & \iff & \varphi_1 \guntil^c_{\lopen{a, 2a}} \varphi_2 \vee \Big( \eventually^w_{[0, c]} \big(\varphi_1 \until_{(a, \infty)} (\varphi_2 \vee \eventually_{\leq a - c} \varphi_2) \big) \Big) \,.
    \end{array}
\]
To satisfy (iii), use the usual rules and the rule:
\[
    \arraycolsep=0.3ex
    \begin{array}{rcl}
    \neg (\varphi_1 \until \varphi_2) & \iff & \globally \neg \varphi_2 \vee \big( \neg \varphi_2 \until (\neg \varphi_2 \wedge \neg \varphi_1) \big) \,.
    \end{array}
\]
For (iv), use the usual rules of Boolean algebra and the rules below:
\[
    \arraycolsep=0.3ex
    \begin{array}{rcl}
    \phi \until_I (\varphi_1 \vee \varphi_2) & \iff & (\phi \until_I \varphi_1) \vee (\phi \until_I \varphi_2) \\
    (\varphi_1 \wedge \varphi_2) \until_I \phi & \iff & (\varphi_1 \until_I \phi) \wedge (\varphi_2 \until_I \phi) \\
    \phi \guntil_I^c (\varphi_1 \vee \varphi_2) & \iff & (\phi \guntil_I^c \varphi_1) \vee (\phi \guntil_I^c \varphi_2) \\
    (\varphi_1 \wedge \varphi_2) \guntil_I^c \phi & \iff & (\varphi_1 \guntil_I^c \phi) \wedge (\varphi_2 \guntil_I^c \phi) \,.
    \end{array}
\]
The rules for past temporal operators are as symmetric.
We prove one of these rules as the others are simpler.
\begin{prop}\label{prop:removeunboundedguntil}
The following equivalence holds over infinite timed words:
\[
\varphi_1 \guntil^c_{(a, \infty)} \varphi_2 \iff \varphi_1 \guntil^c_{\lopen{a, 2a}} \varphi_2 \vee \Big( \eventually^w_{[0, c]} \big(\varphi_1 \until_{(a, \infty)} (\varphi_2 \vee \eventually_{\leq a - c} \varphi_2) \big) \Big) \,.
\]
\end{prop}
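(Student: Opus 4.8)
The plan is to unfold both sides at an arbitrary position of an infinite timed word $\rho=(\sigma,\tau)$ with timestamp $t_1$, and reduce everything to a statement about the \emph{witnessing} time $t_2$ at which $\varphi_2$ holds. By definition, $\varphi_1 \guntil^c_{(a,\infty)} \varphi_2$ holds at $t_1$ iff there is an event at some $t_2$ with $t_2-t_1>a$ satisfying $\varphi_2$, and $\varphi_1$ holds at every event in the open window $\left(t_1+c,\ t_2-(a-c)\right)$; note that the window's right endpoint sits exactly $a-c$ below $t_2$, since the modality measures its window from $t_1+\inf(I)=t_1+a$. The key idea is to split on the size of $t_2-t_1$: if $t_2-t_1\le 2a$ this is exactly the first disjunct $\varphi_1\guntil^c_{\lopen{a,2a}}\varphi_2$, and if $t_2-t_1>2a$ it will correspond to the second disjunct. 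Because only future/present events relative to $t_1$ are referenced (as $c\ge 0$), past events play no role and I may reason entirely to the right of $t_1$.

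For the direction from left to right, the bounded case is immediate since $\lopen{a,2a}\subseteq(a,\infty)$ and both modalities anchor their window at $t_1+a$. So assume $t_2-t_1>2a$. I would choose $s$ to be the \emph{latest} event with $\tau_s\le t_1+c$ (this exists, as the evaluation event at $t_1$ is a candidate and only finitely many events lie in a bounded interval), and $u$ to be the \emph{earliest} event with $\tau_u\ge t_2-(a-c)$ (this exists because $t_2$ itself is such an event, whence $\tau_u\le t_2$). By maximality of $s$ there are no events in $(\tau_s,t_1+c]$, and by minimality of $u$ there are none in $[t_2-(a-c),\tau_u)$; consequently the events of $(\tau_s,\tau_u)$ are exactly the events of the original window, all of which satisfy $\varphi_1$. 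Moreover $\tau_u-\tau_s\ge (t_2-(a-c))-(t_1+c)=t_2-t_1-a>a$, and $t_2-\tau_u\in[0,a-c]$ yields $\chi:=\varphi_2\vee\eventually_{\le a-c}\varphi_2$ at $u$. Hence $\varphi_1\until_{(a,\infty)}\chi$ holds at $s$, and since $\tau_s-t_1\in[0,c]$ the prefix $\eventually^w_{[0,c]}$ reaches $s$, so the second disjunct holds.

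For the direction from right to left, the first disjunct again yields the left-hand side verbatim (same witness, wider interval). For the second disjunct, unfolding gives an event $s$ with $\tau_s-t_1\in[0,c]$ at which $\varphi_1\until_{(a,\infty)}\chi$ holds, hence an event $u$ with $\tau_u-\tau_s>a$, $\chi$ at $u$, and $\varphi_1$ at every event of $(\tau_s,\tau_u)$; from $\chi$ I extract a $\varphi_2$-event $t_2$ with $t_2\in[\tau_u,\tau_u+(a-c)]$. Taking $t_2$ as the witness, one has $t_2-t_1\ge\tau_u-t_1>a$, and the inclusion $\left(t_1+c,\ t_2-(a-c)\right)\subseteq(\tau_s,\tau_u)$ holds because $t_1+c\ge\tau_s$ and $t_2-(a-c)\le\tau_u$, so $\varphi_1$ holds throughout the window, establishing the left-hand side. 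The main obstacle, which all the auxiliary pieces are designed to overcome, is that the unbounded $\until$ is anchored at an \emph{actual} event $s$ whereas the original window starts at the possibly event-free point $t_1+c$ and ends a distance $a-c$ short of the $\varphi_2$-witness: snapping $s$ and $u$ to the nearest surrounding events (justified by strict monotonicity and finiteness of events in bounded intervals) makes the two $\varphi_1$-regions coincide, the inner disjunct $\eventually_{\le a-c}\varphi_2$ bridges the terminal gap, and the threshold $2a$ is precisely what guarantees the strict lower bound $\tau_u-\tau_s>a$ needed to fire $\until_{(a,\infty)}$.
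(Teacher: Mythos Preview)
Your proof is correct and follows essentially the same approach as the paper's: both split on whether the $\varphi_2$-witness lies in $\lopen{a,2a}$ or beyond $2a$, and in the latter case anchor the $\until_{(a,\infty)}$ at the last event in $[t_1,t_1+c]$, using the inner $\eventually_{\leq a-c}\varphi_2$ to absorb the gap between the end of the $\varphi_1$-window and the actual $\varphi_2$-witness. The only cosmetic difference is that the paper further splits the $t_2>t_1+2a$ case on whether $\varphi_1$ continues to hold all the way up to the witness (using $w$ directly) or fails somewhere in $[t_2-(a-c),t_2)$ (using that failure point), whereas you handle both subcases uniformly by introducing the auxiliary event $u$ as the earliest event at or past $t_2-(a-c)$; your treatment is slightly cleaner but the underlying idea is identical.
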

\begin{proof}
Let the current position be $i$ and the witness be at position $w$. Consider the following cases:
\begin{itemize}
\item $\tau_w \in \lopen{\tau_i + a, \tau_i + 2a}$: $\varphi_1 \guntil^c_{\lopen{a, 2a}} \varphi_2$ clearly holds.
\item $\tau_w \in (\tau_i + 2a, \infty)$: Consider the following subcases:
	\begin{itemize}
	\item $\varphi_1$ holds at all positions $j < w$ such that $\tau_j > \tau_i + c$:
		$\varphi_1 \until_{(a, \infty)} \varphi_2$ holds at the maximal position $j'$ such that $\tau_{j'} \in [\tau_i, \tau_i + c]$.
	\item $\varphi_1$ holds at all positions $j < w$ such that $\tau_j > \tau_i + c$ and $\tau_w - \tau_j > a - c$:
		By assumption, there is a position $j'$ at which $\varphi_1$ does not hold and $\tau_w - \tau_{j'} \leq a - c$.
		Since $\tau_w > \tau_i + 2a$, we have $\tau_{j'} > \tau_i + a + c$.
		It follows that $\varphi_1 \until_{(a, \infty)} (\eventually_{\leq a - c} \varphi_2)$ holds at the maximal position in $[\tau_i, \tau_i + c]$.
	\end{itemize}
\end{itemize}
The other direction is obvious.
\end{proof}

\paragraph{\emph{Extracting unbounded operators from bounded operators}}
We now provide a set of rewriting rules that extract unbounded temporal operators from the scopes of bounded temporal operators.
In what follows, let $\varphi_\textit{xlb} = \mathbf{false} \until_{(0, b)} \mathbf{true}$,
$\varphi_\textit{ylb} = \mathbf{false} \since_{(0, b)} \mathbf{true}$ and
\[
    \arraycolsep=0.3ex
    \begin{array}{rcl}
    \varphi_\textit{ugb} & = & \Big( (\varphi_\textit{xlb} \implies \globally_{(b, 2b)} \varphi_1) \wedge \big( \neg \varphi_\textit{ylb} \implies ( \varphi_1 \wedge \globally_{\lopen{0, b}} \varphi_1 ) \big) \Big) \\[0.5em]
    & & \until \Bigg( \big(\varphi_1 \wedge (\varphi_1 \until_{(b, 2b)} \varphi_2)\big) \vee \bigg( \neg \varphi_\textit{ylb} \wedge \Big( \varphi_2 \vee \big(\varphi_1 \wedge (\varphi_1 \until_{\lopen{0, b}} \varphi_2) \big) \Big) \bigg) \Bigg) \,, \\[1.5em]
    \varphi_\textit{ggb} & = & \globally \Big( (\varphi_\textit{xlb} \implies \globally_{(b, 2b)} \varphi_1) \wedge \big( \neg \varphi_\textit{ylb} \implies ( \varphi_1 \wedge \globally_{\lopen{0, b}} \varphi_1 ) \big) \Big) \,.
    \end{array}
\]
The intended meanings of the formulas $\varphi_\textit{ugb}$ and $\varphi_\textit{ggb}$ are similar (yet not identical) to $\varphi_1 \guntil^b_{>b} \varphi_2$ and
$\neg \big(\mathbf{true} \guntil^b_{>b} (\neg \varphi_1) \big)$, respectively.
Indeed, the equivalences in the following proposition
still hold if we replace all occurrences of $\varphi_\textit{ugb}$ and $\varphi_\textit{ggb}$
by these simpler formulas.
We, however, have to use these complicated formulas here as we aim to pull the
unbounded `Until' operator to the outermost level.
The subformulas $\globally_{(b, 2b)} \varphi_1$ and $\globally_{\lopen{0, b}} \varphi_1$
assert that $\varphi_1$ holds continuously in short `strips', and
we use the subformulas $\varphi_\textit{xlb}$ and $\varphi_\textit{ylb}$ to ensure that each event
before the point where $\varphi_2$ holds is covered by such a strip.

\begin{prop}\label{prop:extract}
The following equivalences hold over infinite timed words:
\small
\[
    \arraycolsep=0.3ex
    \begin{array}{rcl}
    \theta \until_{(a, b)} \big( (\varphi_1 \until \varphi_2) \wedge \chi \big) & \iff & \theta \until_{(a, b)} \big( (\varphi_1 \until_{(0, 2b)} \varphi_2) \wedge \chi \big) \vee \Big( \big( \theta \until_{(a, b)} (\globally_{(0, 2b)} \varphi_1 \wedge \chi) \big) \wedge \varphi_\textit{ugb} \Big) \\[0.5em]
    \theta \until_{(a, b)} (\globally \varphi \wedge \chi) & \iff & \big( \theta \until_{(a, b)} (\globally_{(0, 2b)} \varphi \wedge \chi) \big) \wedge \varphi_\textit{ggb} \\[0.5em]
    \theta \until_{(a, b)} \big( (\varphi_1 \since \varphi_2) \wedge \chi \big) & \iff & \theta \until_{(a, b)} \big( (\varphi_1 \since_{(0, b)} \varphi_2) \wedge \chi \big) \vee \Big( \big( \theta \until_{(a, b)} (\henceforth_{(0, b)} \varphi_1 \wedge \chi) \big) \wedge \varphi_1 \since \varphi_2 \Big) \\[0.5em]
    \theta \until_{(a, b)} (\henceforth \varphi \wedge \chi) & \iff & \big( \theta \until_{(a, b)} (\henceforth_{(0, b)} \varphi \wedge \chi) \big) \wedge \henceforth \varphi \\[0.5em]
    \big( (\varphi_1 \until \varphi_2) \vee \chi \big) \until_{(a, b)} \theta & \iff & \big( (\varphi_1 \until_{(0, 2b)} \varphi_2) \vee \chi \big) \until_{(a, b)} \theta \\
    & & {} \vee \bigg( \Big( \big( (\varphi_1 \until_{(0, 2b)} \varphi_2) \vee \chi \big) \until_{(0, b)} (\globally_{(0, 2b)} \varphi_1) \Big) \wedge \eventually_{(a, b)} \theta \wedge \varphi_\textit{ugb} \bigg) \\[0.5em]
    \big( (\globally \varphi) \vee \chi \big) \until_{(a, b)} \theta & \iff & \chi \until_{(a, b)} \theta \vee \big( \chi \until_{(0, b)} (\globally_{(0, 2b)} \varphi_1) \wedge \eventually_{(a, b)} \theta \wedge \varphi_\textit{ggb} \big) \\[0.5em]
    \big( (\varphi_1 \since \varphi_2) \vee \chi \big) \until_{(a, b)} \theta & \iff & \big( (\varphi_1 \since_{(0, b)} \varphi_2) \vee \chi \big) \until_{(a, b)} \theta \\ & & {} \vee \bigg( \Big( \big(\henceforth_{(0, b)} \varphi_1 \vee (\varphi_1 \since_{(0, b)} \varphi_2) \vee \chi \big) \until_{(a, b)} \theta \Big) \wedge \varphi_1 \since \varphi_2 \bigg) \\[0.5em]
    \big( (\henceforth \varphi) \vee \chi \big) \until_{(a, b)} \theta & \iff & \chi \until_{(a, b)} \theta \vee \Big( \big( (\henceforth_{(0, b)} \varphi \vee \chi) \until_{(a, b)} \theta \big) \wedge \henceforth \varphi \Big) \,.
    \end{array}
\]
\end{prop}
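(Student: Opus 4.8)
The plan is to establish each of the eight equivalences by a direct semantic argument over an arbitrary infinite timed word $\rho$ and position $i$ (write $t_1 = \tau_i$), showing that the left-hand side holds at $i$ if and only if the right-hand side does. All eight rules rest on a single underlying dichotomy, so I would prove one representative rule in full --- the first, the future-`Until' nested in the right argument --- and obtain the rest by the same analysis together with the obvious future/past and left/right symmetries.

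For the first rule, suppose the left-hand side holds, with outer witness $w$ (so $\tau_w - t_1 \in (a,b)$, with $\theta$ on the open interval between $i$ and $w$ and $\chi \wedge (\varphi_1 \until \varphi_2)$ at $w$), and let $v > w$ be the witness of the inner unbounded `Until', where $\varphi_2$ holds and $\varphi_1$ holds throughout $(\tau_w, \tau_v)$. The key observation is that $\tau_w \in (t_1, t_1 + b)$, so either $\tau_v - \tau_w < 2b$ --- in which case $\varphi_1 \until_{(0,2b)} \varphi_2$ already holds at $w$ and the first disjunct is witnessed by the same $w$ --- or $\tau_v \geq \tau_w + 2b > t_1 + 2b$, in which case $\varphi_1$ holds throughout $(\tau_w, \tau_w + 2b)$, giving $\globally_{(0,2b)}\varphi_1$ at $w$ and hence $\theta \until_{(a,b)}(\globally_{(0,2b)}\varphi_1 \wedge \chi)$. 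In this second case $\varphi_1$ holds at every event of $(t_1 + b, \tau_v)$ (those below $t_1 + 2b$ are covered by the strip at $w$, the rest by the original run of $\varphi_1$) and $\varphi_2$ holds at $v$ with $\tau_v - t_1 > b$; this is exactly the `intended meaning' $\varphi_1 \guntil^b_{>b}\varphi_2$ that $\varphi_\textit{ugb}$ is designed to encode. The converse direction is a matter of reassembly: a bounded inner witness lifts to an unbounded one, and the tail supplied by $\varphi_\textit{ugb}$ glues onto the $\globally_{(0,2b)}\varphi_1$-strip at $w$ to recover $\varphi_1 \until \varphi_2$ at $w$, using once more that $\tau_w < t_1 + b < \tau_w + 2b$ so that the strip and the tail overlap with no gap.

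The main obstacle, and the only step that is not a routine witness chase, is justifying the helper formulas $\varphi_\textit{ugb}$ and $\varphi_\textit{ggb}$: I must prove that, evaluated at $i$, they are equivalent to $\varphi_1 \guntil^b_{>b}\varphi_2$ and to $\neg(\mathbf{true}\guntil^b_{>b}(\neg\varphi_1))$ respectively, i.e.\ to ``$\varphi_1$ holds at every event strictly after $t_1 + b$ up to a $\varphi_2$-event'' and ``$\varphi_1$ holds at every event strictly after $t_1 + b$''. The subtlety is precisely that these are \emph{a priori} unbounded conditions that must here be expressed by a single \emph{unbounded} `Until' whose invariant and goal perform only bounded ($\leq 2b$) look-ahead, so that the operator legitimately sits at the outermost level. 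I would prove this by a ``strips'' argument: the invariant forces $\varphi_1$ to hold on the short windows $(\tau_p + b, \tau_p + 2b)$ and $\lopen{\tau_p, \tau_p + b}$ at every intermediate position $p$, and one shows these windows tile $(t_1 + b, \tau_v)$ without a gap. The guards $\varphi_\textit{xlb} = \nextx_{(0,b)}\mathbf{true}$ and $\varphi_\textit{ylb} = \prevy_{(0,b)}\mathbf{true}$ are exactly what makes this work: $\varphi_\textit{xlb}$ detects that the next event is within $b$, so consecutive $(\tau_p+b,\tau_p+2b)$ windows overlap and the next strip is demanded only when needed, while $\neg\varphi_\textit{ylb}$ flags an event whose predecessor is more than $b$ away --- the situation in which a gap could open --- and there forces the extra window $\lopen{\tau_p, \tau_p + b}$ to plug it. Verifying both directions (every event of the target interval lies in some forced window; and conversely, whenever $\varphi_\textit{ugb}$ holds one genuinely extracts a $\varphi_2$-event past $t_1 + b$ with $\varphi_1$ throughout) is the technical heart of the proof.

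Finally, I would dispatch the remaining rules. The $\globally$/$\henceforth$ variants are degenerate instances in which the inner operator never terminates, so the disjunction collapses and the tail is simply $\varphi_\textit{ggb}$ (resp.\ $\henceforth\varphi$) conjoined with the bounded strip. The cases nesting a \emph{past} operator are simpler than the future ones: when the inner past-witness $v$ lies more than $b$ before the outer witness $w$, the inequality $\tau_w < t_1 + b$ forces $v < t_1$, so the past obligation is already visible from $i$ and the extracted tail is the plain unbounded operator $\varphi_1 \since \varphi_2$ (resp.\ $\henceforth\varphi$) at the outermost level --- no helper formula and no factor of $2b$ is needed. The four left-argument rules follow the identical dichotomy, with the extra conjunct $\eventually_{(a,b)}\theta$ recording the existence of the outer witness once the first argument has been weakened. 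Throughout, the normal-form rule of Proposition~\ref{prop:removeunboundedguntil} may be invoked to keep the extracted unbounded operators in the permitted shapes.
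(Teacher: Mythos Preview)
Your overall strategy matches the paper's: focus on the first rule, split on whether the inner witness lies within $2b$ of the outer witness, and justify $\varphi_\textit{ugb}$ by a ``strips'' tiling argument; the paper likewise sketches only the first rule and argues via exactly this dichotomy, with a case analysis on whether a suitable position $l$ with $\tau_w \in (\tau_l + b, \tau_l + 2b)$ exists.

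One point needs correction. You state that the key lemma is that $\varphi_\textit{ugb}$, evaluated at $i$, is \emph{equivalent} to $\varphi_1 \guntil^b_{>b}\varphi_2$. This is false as a standalone claim, and the paper explicitly warns that the intended meanings are ``similar (yet not identical)''. A concrete counterexample with $b=1$: take $\tau_i = 0$, $\tau_{i+1} = 0.5$, $\tau_{i+2} = 1.3$, $\tau_{i+3} = 2$, with $\varphi_2$ only at $i+2$ and $\varphi_1$ nowhere. Then $\varphi_1 \guntil^b_{>b}\varphi_2$ holds at $i$ (the interval $(1,1.3)$ contains no event), but $\varphi_\textit{ugb}$ fails: at position $i+1$ the guard $\varphi_\textit{xlb}$ holds (next event $0.8$ away), forcing $\globally_{(1,2)}\varphi_1$, which fails because of the event at $2$. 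So if you try to prove the standalone equivalence your argument will break. What is true, and what the paper proves directly, is that the \emph{conjunction} $\big(\theta \until_{(a,b)}(\globally_{(0,2b)}\varphi_1 \wedge \chi)\big) \wedge \varphi_\textit{ugb}$ behaves correctly: the extra hypothesis that $\globally_{(0,2b)}\varphi_1$ holds at some position in $(\tau_i+a,\tau_i+b)$ supplies the initial strip that rules out the kind of failure in the counterexample, and is also what lets the backward direction glue. Your ``strips'' description is already context-aware and is exactly the right argument---just do not factor it through a standalone equivalence lemma for $\varphi_\textit{ugb}$, but argue directly that the LHS and RHS of the rule coincide, as the paper does (for the converse, by contraposition).
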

\begin{proof}
We sketch the proof for the first rule. In what follows, let the current position be $i$.

For the forward direction, let the witness be at position $w$. If $\tau_w < \tau_j + 2b$ for some $j$ such that $\tau_j \in (\tau_i + a, \tau_i + b)$,
the subformula $\varphi_1 \until_{(0, 2b)} \varphi_2$ clearly holds at $j$ and we are done.
Otherwise, let $j$ be the maximal position such that $\tau_j \in (\tau_i + a, \tau_i + b)$.
We know that $\globally_{(0, 2b)} \varphi_1$ must hold at $j$, so
$(\varphi_\textit{xlb} \implies \globally_{(b, 2b)} \varphi_1)$,
$\varphi_\textit{ylb}$, and hence $\big( \neg \varphi_\textit{ylb} \implies ( \varphi_1 \wedge \globally_{\lopen{0, b}} \varphi_1 ) \big)$
must hold at all positions $j'$, $i < j' < j$.
Let $l > j$ be the minimal position such that $\tau_w \in (\tau_l + b, \tau_l + 2b)$.
Consider the following cases:
\begin{itemize}
\item There exists such $l$:
It is clear that $\big(\varphi_1 \wedge (\varphi_1 \until_{(b, 2b)} \varphi_2)\big)$ holds at $l$.
Since $\globally_{(b, 2b)} \varphi_1$ holds at all positions $j''$, $j \leq j'' < l$ by
the minimality of $l$, $(\varphi_\textit{xlb} \implies \globally_{(b, 2b)} \varphi_1)$ also holds at these positions.
For the other conjunct, note that $\varphi_\textit{ylb}$ holds at $j$ and $\varphi_1 \wedge \globally_{\lopen{0, b}} \varphi_1$ holds at all positions $j'''$, $j < j''' < l$.
\item There is no such $l$:
Consider the following cases:
	\begin{itemize}
	\item $\neg \varphi_\textit{ylb}$ and $\neg \once_{= b} \mathbf{true}$ hold at $w$:
	By assumption, there is no event in $(\tau_w - 2b, \tau_w)$. The proof is similar to the case where $l$ exists.
	\item $\neg \varphi_\textit{ylb}$ and $\once_{= b} \mathbf{true}$ hold at $w$: Let $l'$ be
	the position such that $\tau_{l'} = \tau_w - b$. By assumption, there is no event in $(\tau_{l'} - b, \tau_{l'})$.
	It follows that $\neg \varphi_\textit{ylb}$ and $\big(\varphi_1 \wedge (\varphi_1 \until_{\lopen{0, b}} \varphi_2) \big)$
	hold at $l'$. The proof is similar.
	\item $\varphi_\textit{ylb}$ holds at $w$:
	By assumption, there is no event in $(\tau_w - 2b, \tau_w - b)$. It is easy to see that there
	is a position such that $\neg \varphi_\textit{ylb} \wedge \big(\varphi_1 \wedge (\varphi_1 \until_{\lopen{0, b}} \varphi_2) \big)$
	holds. The proof is again similar.
	\end{itemize}
\end{itemize}

\noindent
We prove the other direction by contraposition. Consider the interesting case where $\globally_{(0, 2b)} \varphi_1$ holds at the maximal position $j$
such that $j \in (\tau_i + a, \tau_i + b)$, yet $\varphi_1 \until \varphi_2$ does not hold at $j$.
By assumption, there is no $\varphi_2$-event in $(\tau_j, \tau_j + 2b)$.
If $\varphi_2$ never holds in $\ropen{\tau_j + 2b, \infty}$
then we are done. Otherwise, let $l > j$ be the minimal position such that both $\varphi_1$ and $\varphi_2$ do not hold at $l$
(note that $\tau_l \geq \tau_j + 2b$).
It is clear that
\[
    \Bigg( \big(\varphi_1 \wedge (\varphi_1 \until_{(b, 2b)} \varphi_2)\big) \vee \bigg( \neg \varphi_\textit{ylb} \wedge \Big( \varphi_2 \vee \big(\varphi_1 \wedge (\varphi_1 \until_{\lopen{0, b}} \varphi_2) \big) \Big) \bigg) \Bigg)
\]
does not hold at all positions $j'$, $i < j' \leq l$. Consider the following cases:
\begin{itemize}
\item $\neg \varphi_\textit{ylb}$ holds at $l$: $\varphi_1 \wedge \globally_{\lopen{0, b}} \varphi_1$
does not hold at $l$, and therefore $\varphi_\textit{ugb}$ fails to hold at $i$.
\item $\varphi_\textit{ylb}$ holds at $l$: Consider the following cases:
	\begin{itemize}
	\item There is an event in $(\tau_l - 2b, \tau_l - b)$: Let $j''$ be the maximal position of such an event. We have $j'' + 1 < l$, $\tau_{j'' + 1} - \tau_{j''} \geq b$
	and $\tau_l - \tau_{j'' + 1} < b$. However, it follows that $\varphi_\textit{ylb}$ does not hold at $j'' + 1$
	and $\varphi_1 \wedge \globally_{\lopen{0, b}} \varphi_1$ holds at $j'' + 1$, which is a contradiction.
	\item There is no event in $(\tau_l - 2b, \tau_l - b)$: Let $j''$ be the minimal position such that
	$\tau_{j''} \in \ropen{\tau_l - b, \tau_l}$. It is clear that $\varphi_\textit{ylb}$ does not hold at $j''$ and
	$\varphi_1 \wedge \globally_{\lopen{0, b}} \varphi_1$ must hold at $j''$, which is a contradiction. \qedhere
	\end{itemize}
\end{itemize}
\end{proof}

\begin{prop}
The following equivalences hold over infinite timed words:
\small
\[
    \arraycolsep=0.3ex
    \begin{array}{rcl}
    \theta \guntil^c_{(a, b)} \big( (\varphi_1 \until \varphi_2) \wedge \chi \big) & \iff & \theta \guntil^c_{(a, b)} \big( (\varphi_1 \until_{(0, 2b)} \varphi_2) \wedge \chi \big) \vee \Big( \big( \theta \guntil^c_{(a, b)} (\globally_{(0, 2b)} \varphi_1 \wedge \chi) \big) \wedge \varphi_\textit{ugb} \Big) \\[0.5em]
    \theta \guntil^c_{(a, b)} (\globally \varphi \wedge \chi) & \iff & \big( \theta \guntil^c_{(a, b)} (\globally_{(0, 2b)} \varphi \wedge \chi) \big) \wedge \varphi_\textit{ggb} \\[0.5em]
    \theta \guntil^c_{(a, b)} \big( (\varphi_1 \since \varphi_2) \wedge \chi \big) & \iff & \theta \guntil^c_{(a, b)} \big( (\varphi_1 \since_{(0, b)} \varphi_2) \wedge \chi \big) \vee \Big( \big( \theta \guntil^c_{(a, b)} (\henceforth_{(0, b)} \varphi_1 \wedge \chi) \big) \wedge \varphi_1 \since \varphi_2 \Big) \\[0.5em]
    \theta \guntil^c_{(a, b)} (\henceforth \varphi \wedge \chi) & \iff & \big( \theta \guntil^c_{(a, b)} (\henceforth_{(0, b)} \varphi \wedge \chi) \big) \wedge \henceforth \varphi \\[0.5em]
    \big( (\varphi_1 \until \varphi_2) \vee \chi \big) \guntil^c_{(a, b)} \theta & \iff & \big( (\varphi_1 \until_{(0, 2b)} \varphi_2) \vee \chi \big) \guntil^c_{(a, b)} \theta \\
    & & {} \vee \bigg( \Big( \big( (\varphi_1 \until_{(0, 2b)} \varphi_2) \vee \chi \big) \guntil^c_{\big(c, c + (b - a)\big)} (\globally_{(0, 2b)} \varphi_1) \Big) \wedge \eventually_{(a, b)} \theta \wedge \varphi_\textit{ugb} \bigg) \\[0.5em]
    \big( (\globally \varphi) \vee \chi \big) \guntil^c_{(a, b)} \theta & \iff & \chi \guntil^c_{(a, b)} \theta
    \vee \big( \chi \guntil^c_{\big(c, c + (b - a)\big)} (\globally_{(0, 2b)} \varphi_1) \wedge \eventually_{(a, b)} \theta \wedge \varphi_\textit{ggb} \big) \\[0.5em]
    \big( (\varphi_1 \since \varphi_2) \vee \chi \big) \guntil^c_{(a, b)} \theta & \iff & \big( (\varphi_1 \since_{(0, b)} \varphi_2) \vee \chi \big) \guntil^c_{(a, b)} \theta \\
    & & {} \vee \bigg( \Big( \big(\henceforth_{(0, b)} \varphi_1 \vee (\varphi_1 \since_{(0, b)} \varphi_2) \vee \chi \big) \guntil^c_{(a, b)} \theta \Big) \wedge \varphi_1 \since \varphi_2 \bigg) \\[0.5em]
    \big( (\henceforth \varphi) \vee \chi \big) \guntil^c_{(a, b)} \theta & \iff & \chi \guntil^c_{(a, b)} \theta \vee \Big( \big( (\henceforth_{(0, b)} \varphi \vee \chi) \guntil^c_{(a, b)} \theta \big) \wedge \henceforth \varphi \Big) \,.
    \end{array}
\]
\end{prop}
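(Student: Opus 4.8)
The plan is to recognise that each of the eight equivalences is the $\guntil^c$-analogue of the corresponding equivalence in Proposition~\ref{prop:extract}, and that the proofs transfer once one isolates the single structural difference between $\until_{(a,b)}$ and $\guntil^c_{(a,b)}$. Both operators agree on the witness: each asserts a position $j$ with $\tau_j - \tau_i \in (a,b)$ at which the right argument holds. They differ only in where the left argument must hold --- the interval $(\tau_i, \tau_j)$ for $\until_{(a,b)}$, versus the shifted interval $\bigl(\tau_i + c,\, \tau_i + c + (\tau_j - \tau_i - a)\bigr)$ for $\guntil^c_{(a,b)}$. In every rule the extraction mechanism is the one used in Proposition~\ref{prop:extract}: one case-splits on whether the witness of the inner unbounded operator lies within $2b$ of the relevant position (so that a bounded operator such as $\until_{(0,2b)}$ or $\globally_{(0,2b)}$ suffices) or beyond it (so that a $2b$-wide strip $\globally_{(0,2b)}\varphi_1$ forms and the global tail is captured by $\varphi_\textit{ugb}$ or $\varphi_\textit{ggb}$, evaluated at the current position $i$).

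For the first four rules, where the unbounded operator occupies the right argument, I would replay the proof of Proposition~\ref{prop:extract} essentially verbatim. The rewriting acts only on the subformula at the witness $j$ and on the auxiliary formulas $\varphi_\textit{ugb}, \varphi_\textit{ggb}$ at $i$; neither mentions $\theta$ or the superscript $c$. The left-argument constraint, whether it lives on $(\tau_i,\tau_j)$ or on the shifted interval, is inherited identically by both sides of the equivalence and therefore factors out: for any fixed witness $j$, the claim reduces to the right-argument statement already proved in Proposition~\ref{prop:extract}. Since $j$ still ranges over positions with $\tau_j - \tau_i \in (a,b)$, the geometric facts driving the argument --- that the strip at the maximal such $j$ reaches back to $i$ through $\varphi_\textit{xlb} \implies \globally_{(b,2b)}\varphi_1$ and propagates forward until $\varphi_2$ --- are unchanged.

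The last four rules, where the unbounded operator occupies the left argument, split into two sub-families. When a \emph{past} operator ($\since$ or $\henceforth$) is extracted, the global condition appended as a conjunct, $\varphi_1 \since \varphi_2$ or $\henceforth \varphi$, is evaluated at $i$ and is insensitive to the shift by $c$; these two rules are thus obtained from their counterparts in Proposition~\ref{prop:extract} merely by substituting $\guntil^c_{(a,b)}$ for $\until_{(a,b)}$. The one genuine adjustment occurs when a \emph{future} operator ($\until$ or $\globally$) is extracted: now the left argument must hold throughout the shifted interval $\bigl(\tau_i+c,\, \tau_i+c+(\tau_j-\tau_i-a)\bigr)$, whose right endpoint ranges over $\bigl(\tau_i+c,\, \tau_i+c+(b-a)\bigr)$ as $\tau_j$ ranges over $(\tau_i+a,\tau_i+b)$. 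Consequently the auxiliary operator used to reach the position at which $\globally_{(0,2b)}\varphi_1$ holds must itself be a generalised `Until', which is exactly why $\until_{(0,b)}$ of Proposition~\ref{prop:extract} is replaced by $\guntil^c_{(c,\,c+(b-a))}$ (well-defined, since its lower bound equals its superscript $c$), the conjuncts $\eventually_{(a,b)}\theta$ and $\varphi_\textit{ugb}$ (or $\varphi_\textit{ggb}$) being kept intact. For the forward direction one reaches, along the shifted interval and via $\guntil^c_{(c,\,c+(b-a))}$, a position where $\globally_{(0,2b)}\varphi_1$ holds; the strip there, continued by $\varphi_\textit{ugb}$, then supplies a $\varphi_1 \until \varphi_2$ witness for every remaining event of the shifted interval, and the converse follows by contraposition as before.

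The hard part will be the interval bookkeeping in the two future-operator left-argument rules: I must check that a single $2b$-wide strip, reached by $\guntil^c_{(c,\,c+(b-a))}$ and continued by $\varphi_\textit{ugb}$, genuinely covers the whole shifted constraint interval together with the gap up to the eventual $\varphi_2$-witness, now that the shift by $c$ and the width $(b-a)$ enter the arithmetic. Since $c \leq a$ and $(b-a) < b$, the width $2b$ remains comfortably large enough, but verifying this --- and re-checking the event-gap boundary subcases (for example, $\neg\varphi_\textit{ylb} \wedge \once_{=b}\mathbf{true}$ holding at the relevant position) in the shifted setting --- is the only place where the argument departs from a mechanical transcription of Proposition~\ref{prop:extract}.
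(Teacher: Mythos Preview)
Your proposal is correct and matches the paper's approach: the paper gives no explicit proof for this proposition at all, stating the eight equivalences immediately after the detailed proof of Proposition~\ref{prop:extract} and leaving the reader to adapt that argument. Your analysis of how the adaptation goes---in particular, that the first four rules transfer verbatim since the left-argument constraint factors out, and that the only genuine change is the replacement of $\until_{(0,b)}$ by $\guntil^c_{(c,\,c+(b-a))}$ in the fifth and sixth rules to accommodate the shifted constraint interval---is exactly the content the paper leaves implicit.
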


\begin{lem}\label{lem:termination}
For any \emph{\mtlpastg{}} formula $\varphi$, we can use the rules above to obtain an
equivalent \emph{\mtlpastg{}} formula $\hat{\varphi}$ in which no unbounded temporal
operator appears in the scope of a bounded temporal operator.
In particular, all occurrences of $\guntil_I^c, \gsince_I^c$ have $I$ bounded.
\end{lem}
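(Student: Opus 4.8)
The plan is to prove the lemma as a termination (and correctness) result for the rewriting system consisting of the normal-form rules together with the extraction rules of Proposition~\ref{prop:extract} and its generalised-`Until' counterpart. First I would bring $\varphi$ into the normal form described before the lemma: applying the Boolean rules, the rules pushing negations inward, the distribution rules for condition~(iv), and Proposition~\ref{prop:removeunboundedguntil} to eliminate unbounded $\guntil^c,\gsince^c$. This first phase is easily seen to terminate and to preserve equivalence, so afterwards every unbounded operator is one of $\until,\since,\globally,\henceforth$ in unbounded form, every occurrence of $\guntil^c_I,\gsince^c_I$ already has $I$ bounded, and the argument of each bounded operator is a disjunction (on the left) / conjunction (on the right) of subformulas.

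Then I would repeatedly apply the extraction rules, re-establishing normal form after each step. Correctness of the terminal form is the easy half: each rule of Proposition~\ref{prop:extract} (and the $\guntil$-analogue) is an equivalence over infinite timed words, so $\hat{\varphi}\equiv\varphi$ throughout; and if some unbounded operator still occurred inside a bounded one in the terminal formula, then---because the formula is in normal form---there would be a topmost such occurrence appearing as a disjunct of the left argument, or a conjunct of the right argument, of some bounded operator $\until_{(a,b)}$, $\guntil^c_{(a,b)}$, etc. The eight listed rules cover exactly the cases $\varphi_1\until\varphi_2$, $\globally\varphi$, $\varphi_1\since\varphi_2$, $\henceforth\varphi$ in each of the two argument positions, so one of them would apply, contradicting terminality; hence no unbounded operator survives inside a bounded one, which is precisely the conclusion.

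The hard part is termination, and the obvious candidate measures all fail. A rule such as $\theta\until_{(a,b)}\big((\varphi_1\until\varphi_2)\wedge\chi\big)\iff\ldots$ \emph{duplicates} the non-extracted material $\theta$ and $\chi$ (they reappear in both the first disjunct and the left conjunct of the second), and it \emph{creates new bounded operators}: the bounded $\until_{(0,2b)}$ of the first disjunct, and the operators $\globally_{(b,2b)},\globally_{\lopen{0,b}},\until_{(b,2b)},\until_{\lopen{0,b}}$ buried inside $\varphi_\textit{ugb}$. Consequently the total number of unbounded operators trapped inside bounded ones can grow, the multiset of their sizes need not decrease (extracting a small clean-argument operator while a larger one is duplicated strictly increases it), and even the maximal bounded-nesting depth can increase, since the unbounded operators of $\varphi_1,\varphi_2$ move from depth one, inside the unbounded redex, to depth two, inside the new $\until_{(0,2b)}$. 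So neither size, nor count, nor nesting depth is monotone on its own.

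The resolution I would pursue rests on a single structural invariant: every unbounded operator that a rule \emph{newly} traps inside a bounded operator lies within a copy of $\varphi_1$ or $\varphi_2$---the arguments of the extracted operator---and these are \emph{strictly smaller} than the redex, whereas the duplicated material $\theta,\chi$ and the skeletons of $\varphi_\textit{ugb}$ and $\varphi_\textit{ggb}$ are assembled only from subterms already present. This is exactly what a recursive-path--style ordering is built for: I would fix a precedence in which a bounded operator \emph{whose scope contains an unbounded operator} outranks every operator and every clean subterm appearing on the right-hand sides, and take the induced lexicographic/multiset extension to formulas (equivalently, an ordinal measure weighting each trapped unbounded occurrence by a term in the size of its arguments, placed above its bounded-nesting depth). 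Under such a measure each rule application replaces the redex by terms built from strictly smaller arguments together with lower-ranked bounded material, so---despite duplication and the increase in raw depth---the measure strictly decreases, and the same precedence covers the generalised-`Until'/`Since' rules verbatim. Termination then follows from well-foundedness, and together with the correctness argument above this gives the lemma. The one point demanding real care, and where I expect most of the work to lie, is verifying that the chosen precedence genuinely dominates all the new bounded operators introduced inside $\varphi_\textit{ugb}$ and $\varphi_\textit{ggb}$, so that the ordering decreases for every rule simultaneously.
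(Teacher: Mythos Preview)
Your structural observation is exactly right---the only unbounded operators that an extraction rule newly places under a bounded operator lie inside the arguments $\varphi_1,\varphi_2$ of the extracted operator---but you then reach for a recursive-path ordering that you never actually define, leaving the hardest part (``verifying that the chosen precedence genuinely dominates\ldots'') as an admitted gap. The paper uses a far simpler measure that your own observation already supports: the \emph{unbounding depth} $ud(\varphi)$, defined as the modal depth of $\varphi$ counting only unbounded operators. Bounded operators contribute nothing to $ud$, so the fresh bounded operators inside $\varphi_\textit{ugb}$, $\varphi_\textit{ggb}$, and the $\until_{(0,2b)}$ replacements are invisible to the measure; what matters is only that their arguments $\varphi_1,\varphi_2$ satisfy $ud(\varphi_i)<ud(\varphi_1\until\varphi_2)$.

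The termination argument then becomes a two-level induction rather than an RPO. Let $k$ be the maximum of $ud$ over all subformulas whose outermost operator is bounded, and pick a \emph{minimal} such subformula $\psi$ with $ud(\psi)=k$. By minimality, every bounded-outermost disjunct or conjunct of the arguments of $\psi$ already has $ud<k$, so the only obstructions are the top-level unbounded disjuncts/conjuncts. Each extraction rule removes one obstruction (replacing it by bounded material with $ud\le k-1$) while at worst duplicating the remaining ones unchanged into two copies; processing $\psi$ is therefore a finite tree whose depth is bounded by the number of obstructions. Once all minimal $\psi$ are processed and the result is renormalised (which does not change $ud$), $k$ strictly drops. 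All three difficulties you flag---duplication, new bounded operators, increased bounded-nesting depth---are neutralised precisely because $ud$ ignores bounded operators entirely; there is no need to rank them in a precedence at all.
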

\begin{proof}
Define the \emph{unbounding depth} $ud(\varphi)$ of an \mtlpastg{} formula
$\varphi$ to be the modal depth of $\varphi$ counting only unbounded
operators. We demonstrate a rewriting process on $\varphi$ which terminates in an
equivalent formula $\hat{\varphi}$ such that any subformula $\hat{\psi}$ of $\hat{\varphi}$
with outermost operator bounded has $ud(\hat{\psi}) = 0$.

Assume that the input formula $\varphi$ is in normal form.
Let $k$ be the largest unbounding depth among all subformulas of $\varphi$
with bounded outermost operators.
We pick all minimal (w.r.t.\ subformula order) such subformulas $\psi$ with $ud(\psi) = k$.
By applying the rules in Section~\ref{sec:separation},
we can rewrite $\psi$ into $\psi'$ where all subformulas of $\psi'$
with bounded outermost operators have unbounded depths strictly
less than $k$.
We then substitute these $\psi'$ back into $\varphi$
to obtain $\varphi'$.
We repeat this step until there remain no bounded temporal operators with unbounding depth $k$.
The rules that rewrite a formula into normal form are used whenever necessary on
relevant subformulas---this never affects their unbounding depths, and note that we never
introduce $\guntil_I^c$ or $\gsince_I^c$.
It is easy to see that we will eventually obtain such a formula $\varphi^*$.
Now rewrite $\varphi^*$ into normal form
and start over again. This is to be repeated until we reach $\hat{\varphi}$.
\end{proof}

\paragraph{\emph{Completing the separation}}
We now have an \mtlpastg{} formula $\hat{\varphi}$ in which no unbounded temporal
operator appears in the scope of a bounded temporal operator.
If we regard each bounded subformula as a new monadic predicate, the formula $\hat{\varphi}$ can
be seen as an \ltlpast{} formula $\Phi$, on which Gabbay's separation theorem is applicable.
\begin{thmC}[{\cite[Theorem $3$]{Gabbay1980}}]
Every \emph{\ltlpast{}} formula is equivalent (over discrete complete models)
to a Boolean combination of
\begin{itemize}
\item atomic formulas
\item formulas of the form $\varphi_1 \until \varphi_2$ such that $\varphi_1$ and $\varphi_2$ use only $\until$
\item formulas of the form $\varphi_1 \since \varphi_2$ such that $\varphi_1$ and $\varphi_2$ use only $\since$.
\end{itemize}
\end{thmC}
\begin{lem}
Every \emph{\mtlpastg{}} formula is equivalent to a Boolean combination of
\begin{itemize}
\item bounded \mtlpastg{} formulas
\item formulas that use arbitrary $\until_I$ but only bounded $\since_I$, $\guntil_I^c$, $\gsince_I^c$
\item formulas that use arbitrary $\since_I$ but only bounded $\until_I$, $\guntil_I^c$, $\gsince_I^c$.
\end{itemize}
\end{lem}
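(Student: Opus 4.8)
The plan is to leverage Lemma~\ref{lem:termination} to reduce the problem to the untimed (propositional) setting, and then to apply Gabbay's separation theorem. First I would apply Lemma~\ref{lem:termination} to the given \mtlpastg{} formula to obtain an equivalent formula $\hat{\varphi}$ in which no unbounded temporal operator occurs within the scope of a bounded one, and in which every occurrence of $\guntil_I^c$ and $\gsince_I^c$ is bounded. Since the only unbounded operators remaining are $\until_{(0, \infty)}$, $\since_{(0, \infty)}$, $\globally_{(0, \infty)}$ and $\henceforth_{(0, \infty)}$, and the latter two are \ltlpast{}-definable from the former (e.g.\ $\globally \varphi \equiv \neg(\mathbf{true} \until \neg \varphi)$), all the unbounded behaviour of $\hat{\varphi}$ is that of plain \ltlpast{} over the sequence of events.

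Next I would abstract away the metric content. Let $\mathbf{A}$ be the set of maximal subformulas of $\hat{\varphi}$ whose outermost connective is a bounded temporal operator; by construction every such subformula is itself a \emph{bounded} \mtlpastg{} formula, and none lies inside another member of $\mathbf{A}$. Introducing a fresh monadic predicate $P_\psi$ for each $\psi \in \mathbf{A}$ and replacing each such $\psi$ by $P_\psi$ turns $\hat{\varphi}$ into an \ltlpast{} formula $\Phi$ over the extended predicate set $\mathbf{P} \cup \{P_\psi : \psi \in \mathbf{A}\}$. For any infinite timed word $\rho$ I would form the induced untimed word whose $i^{\mathit{th}}$ letter records which predicates of $\mathbf{P}$ hold at position $i$ together with the set of $\psi \in \mathbf{A}$ with $\rho, i \models \psi$; because $\until_{(0, \infty)}$ and $\since_{(0, \infty)}$ coincide with the ordinary strict untimed `Until'/`Since' over the event order (every pair $i < j$ trivially satisfies $\tau_j - \tau_i \in (0, \infty)$), we obtain $\rho, i \models \hat{\varphi}$ iff the induced word satisfies $\Phi$ at $i$ in the standard \ltlpast{} sense.

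I would then apply Gabbay's separation theorem to $\Phi$, obtaining an equivalent Boolean combination of (a) atomic formulas, (b) formulas built from $\until$ alone, and (c) formulas built from $\since$ alone, over the extended predicate set. Finally I would substitute each $P_\psi$ back by its bounded \mtlpastg{} formula $\psi$. The atomic pieces of type~(a) become either original monadic predicates or bounded \mtlpastg{} formulas, hence in all cases bounded \mtlpastg{} formulas. In a type-(b) piece the operator $\until_I$ may now occur with arbitrary $I$ --- unbounded from the \ltlpast{} skeleton and bounded from the substituted atoms --- whereas every $\since_I$, $\guntil_I^c$, $\gsince_I^c$ it contains sits inside a substituted member of $\mathbf{A}$ and is therefore bounded; this is exactly the second class in the statement, and type~(c) dually yields the third.

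The hard part, and the step that requires care, is justifying the passage to the untimed model and the applicability of Gabbay's theorem. Gabbay's theorem is stated over arbitrary discrete complete models, so I must check that the untimed word induced by an (infinite, non-Zeno) timed word is such a model --- whence the equivalence transfers back to every $\rho$ --- and that the correspondence $\rho, i \models \hat{\varphi} \iff \text{induced word} \models \Phi$ is exact, in particular that abstracting the bounded subformulas as atoms does not interfere with the remaining unbounded operators. This last point holds precisely because Lemma~\ref{lem:termination} guarantees that no unbounded operator sits above a bounded one. The only other thing to verify is the bookkeeping in the final step, namely that $\since_I$, $\guntil_I^c$ and $\gsince_I^c$ (resp.\ $\until_I$) never escape the substituted atoms into the $\until$-only (resp.\ $\since$-only) skeleton, which is immediate from the shape of $\hat{\varphi}$ and of Gabbay's normal form.
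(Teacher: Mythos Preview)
Your proposal is correct and follows exactly the paper's approach: apply Lemma~\ref{lem:termination}, abstract the maximal bounded subformulas as fresh predicates to obtain an \ltlpast{} formula, invoke Gabbay's separation theorem, and substitute back. One small slip in wording: where you write ``no unbounded operator sits above a bounded one'' you mean \emph{below} (i.e., in the scope of) --- the rest of your argument already uses the correct direction.
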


\noindent
We now prove the main theorem of this subsection: each \mtlpastg{} formula
is equivalent to a \emph{syntactically separated} \mtlpastg{} formula.
\begin{thm}\label{thm:separation}
Every \emph{\mtlpastg{}} formula can be written as a
Boolean combination of
\begin{itemize}
\item bounded \mtlpastg{} formulas
\item formulas of the form $\mathbf{false} \guntil_{\geq M}^M \varphi$ where $M \in \mathbb{N}$
\item formulas of the form $\mathbf{false} \gsince_{\geq M}^M \varphi$ where $M \in \mathbb{N}$.
\end{itemize}
\end{thm}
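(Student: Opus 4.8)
The plan is to assemble Theorem~\ref{thm:separation} from three ingredients already in place: Lemma~\ref{lem:termination}, Gabbay's separation theorem, and a final ``anchoring'' step that converts pure-future (resp.\ pure-past) formulas into the jump form $\mathbf{false} \guntil_{\geq M}^M \varphi$ (resp.\ $\mathbf{false} \gsince_{\geq M}^M \varphi$). First I would apply Lemma~\ref{lem:termination} to the given \mtlpastg{} formula to obtain an equivalent $\hat{\varphi}$ in which no unbounded temporal operator lies within the scope of a bounded one and every $\guntil_I^c, \gsince_I^c$ has $I$ bounded. Regarding each maximal bounded subformula as a fresh monadic predicate, $\hat{\varphi}$ becomes an \ltlpast{} formula whose only temporal operators are the unbounded $\until$ and $\since$ (together with the derived $\globally, \henceforth$). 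Gabbay's theorem then rewrites this backbone into a Boolean combination of atoms, pure-$\until$ formulas, and pure-$\since$ formulas; unfolding the atoms yields the intermediate lemma, i.e.\ a Boolean combination of bounded formulas, formulas using only unbounded future $\until$ over bounded subformulas, and their past mirror images.

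The decisive remaining step is to flatten each such pure-future formula $\Psi$ into a Boolean combination of bounded formulas and anchored jumps. Here I would fix $M$ strictly larger than every constant occurring in the bounded subformulas of $\Psi$, so that the truth value of any such subformula at a point $p$ depends only on the events within distance $M$ of $p$; in particular, beyond time $\tau_i + M$ the truth values of all bounded atoms are independent of the evaluation position $i$. I would then decompose the future of position $i$ into the window $(\tau_i, \tau_i + M]$, whose contribution is expressible by bounded formulas (bounded $\until$, $\eventually$, $\globally$ with interval $\leq M$), and the tail beyond $\tau_i + M$, which I reach by the single jump $\mathbf{false} \guntil_{\geq M}^M$ to the first event past distance $M$ and then evaluate a pure-future continuation there. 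The role of the superscript-$M$ generalised modality is precisely to name this far reference point even when no event sits exactly at distance $M$, which is why the new modalities (rather than the ordinary $\until_{\geq M}$ used in the continuous setting of~\cite{Hunter2012}) are indispensable in the pointwise semantics.

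Proceeding by induction on the number of nested unbounded operators, and applying the induction hypothesis to the pure-future continuation living at the anchor, produces the flat form claimed in the theorem; the pure-past case is symmetric via $\mathbf{false} \gsince_{\geq M}^M$, and a final pass through the normal-form rules collects everything into a single Boolean combination of the three listed shapes.

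The main obstacle I anticipate is the bridging of the unbounded $\until$ across the window/tail interface: because the body $\varphi_1$ of a subformula $\varphi_1 \until \varphi_2$ must hold continuously and the boundary $\tau_i + M$ need not carry an event, the equivalences must carefully coordinate a bounded ``$\globally_{\leq M} \varphi_1$'' condition inside the window with the behaviour at and immediately after the anchor --- much as the auxiliary formulas $\varphi_\textit{ugb}$ and $\varphi_\textit{ggb}$ do in Proposition~\ref{prop:extract}. A secondary obstacle is verifying that the induction on unbounded depth genuinely terminates, i.e.\ that each anchoring step strictly reduces the nesting of unbounded operators in the formula to which the hypothesis is applied, and that every rewrite is sound specifically over infinite non-Zeno timed words rather than over signals.
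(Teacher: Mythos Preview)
Your overall architecture matches the paper's: Lemma~\ref{lem:termination}, then Gabbay's theorem to reach the intermediate lemma, then an anchoring rewrite for the pure-future (resp.\ pure-past) pieces. The concrete mis-step is in where you place the induction hypothesis. You propose to ``apply the induction hypothesis to the pure-future continuation living at the anchor'', but the formula sitting under $\mathbf{false}\,\guntil_{\geq M}^M$ is already of the required shape irrespective of its internal complexity, so no recursion is needed there --- and none is available, since the natural continuation $\varphi_2 \vee \bigl(\varphi_1 \wedge (\varphi_1 \until \varphi_2)\bigr)$ contains the original $\varphi_1 \until \varphi_2$ and hence does \emph{not} have fewer unbounded operators. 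Your own worry about termination is a symptom of this: with the IH placed at the anchor, the induction simply does not decrease.

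The paper's fix is to recurse on the \emph{non-anchored} parts instead. For $\varphi = \varphi_1 \until \varphi_2$ it uses the single equivalence
\[
\varphi_1 \until_{< M} \varphi_2 \ \vee \ \Bigl( \globally_{<M}\,\varphi_1 \ \wedge \ \mathbf{false}\,\guntil_{\geq M}^M \bigl( \varphi_2 \vee (\varphi_1 \wedge (\varphi_1 \until \varphi_2)) \bigr) \Bigr)
\]
and applies the induction hypothesis to $\varphi_1 \until_{<M} \varphi_2$ and $\globally_{<M} \varphi_1$, each of which has strictly fewer unbounded $\until$ operators (the outermost one became bounded); the anchored conjunct is kept as-is. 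This also dissolves your ``bridging across the window/tail interface'' obstacle: the equivalence above is already the whole bridge, with no analogue of $\varphi_\textit{ugb}$ required. A minor side note: choosing $M$ larger than all constants in the bounded subformulas is unnecessary for Theorem~\ref{thm:separation} itself (any $M\in\mathbb{N}$ works in the displayed equivalence); that size constraint only enters later, in Section~\ref{sec:unboundedexpcomp}.
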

\begin{proof}
Suppose that we have an \mtlpastg{} formula $\varphi$ with no unbounded $\since$.
If $\varphi$ is bounded then we are done. Otherwise we can apply Lemma~\ref{lem:termination}
(note in particular that it does not introduce new unbounded $\until$ operators)
and further assume that $\varphi = \varphi_1 \until \varphi_2$.
Then, for any $M \in \mathbb{N}$, we can rewrite $\varphi$ into
\[
\varphi_1 \until_{< M} \varphi_2 \vee \Bigg( \globally_{<M} \varphi_1 \wedge \bigg( \mathbf{false} \guntil_{\geq M}^M \Big( \varphi_2 \vee \big( \varphi_1 \wedge (\varphi_1 \until \varphi_2) \big) \Big) \bigg) \Bigg) \,.
\]
It is clear that $\varphi_1$ and $\varphi_2$, and therefore $\varphi_1 \until_{< M} \varphi_2$ and $\globally_{<M} \varphi_1$,
have strictly fewer unbounded $\until$ operators than $\varphi$. By the induction hypothesis, $\varphi$
is equivalent to a syntactically separated \mtlpastg{} formula. The case of
formulas with no unbounded $\until$ is symmetric.
\end{proof}

\subsection{Expressing bounded \foone{} formulas}

In this section, we describe how to express bounded \foone{} formulas with a
single free variable in \mtlpastg{}. The use of rational constants
is crucial here; by~\cite{Hirshfeld2007}, \mtlpastg{} cannot express all counting modalities (which can be written as bounded \foone{} formulas) if only integer constants are allowed.
As some techniques here are exactly similar to that of Section~\ref{subsec:translation}, we omit certain explanations.

Suppose that we are given such a formula $\vartheta(x)$.
As before, we assume that each quantifier in $\vartheta(x)$ uses a fresh new variable
and $\vartheta(x)$ contains only existential quantifiers.
We say that $\vartheta(x)$ is \emph{$N$-bounded} if each subformula $\exists x' \, \psi$ of $\vartheta(x)$
is of the form
\[
\exists x' \, \big( (x' > x \implies d(x, x') < N) \wedge (x' < x \implies d(x, x') \leq N) \wedge \ldots \big) \,.
\]
Namely, $\vartheta(x)$ only refers to the events in the half-open interval $\ropen{x - N, x + n}$.
Similarly, we say that $\vartheta(x)$ is a \emph{unit formula} if each subformula $\exists x' \, \psi$ of $\vartheta(x)$
is of the form
\[
\exists x' \, \big(x' \geq x \wedge d(x, x') < 1 \wedge \ldots \big) \,.
\]
In this case, $\vartheta(x)$ only refers to the events in $\ropen{x, x+1}$.

\paragraph{\emph{Stacking events around a point}}

Let $\rho$ be an infinite timed word over $\Sigma_\mathbf{P}$,
$\overline{\mathbf{P}} = \{P_i \mid P \in \mathbf{P}, -N \leq i < N\}$ and $\overline{\mathbf{Q}} = \{Q_i \mid N \leq i < N \}$.
For each $t \in \rho$, we can construct a (finite) $\ropen{0, 1}$-timed word $\overline{\rho_t}$ over $\Sigma_{\overline{\mathbf{P}} \cup \overline{\mathbf{Q}}}$
that satisfies the following:
\begin{itemize}
\item For all $\overline{t} \in \ropen{0, 1}$ and $-N \leq i < N$, $P_i$ holds at $\overline{t} \in \overline{\rho_t}$ iff $P$ holds at $i + \overline{t} \in \rho$.
\item For all $\overline{t} \in \ropen{0, 1}$ and $-N \leq i < N$, $Q_i$ holds at $\overline{t} \in \overline{\rho_t}$ iff $i + \overline{t} \in \rho$.
\end{itemize}

\paragraph{\emph{Stacking $N$-bounded \foone{} formulas}}

Now let $\vartheta(x)$ be an $N$-bounded \foone{} formula.
Recursively replace every subformula $\exists x' \, \theta$ by
\[
\exists x' \, \Big( \big( Q_{-N}(x') \wedge \theta[x' + (-N)/x'] \big) \vee \ldots \vee \big( Q_{N-1}(x') \wedge \theta[x' + (N - 1)/x'] \big) \Big)
\]
where $\vartheta[e/x]$ denotes the formula obtained by substituting all free occurrences of $x$
in $\vartheta$ by $e$. We then carry out the following syntactic substitutions:
\begin{itemize}
\item For each inequality of the form $x_1 + k_1 < x_2 + k_2$, replace it with
	\begin{itemize}
	\item $x_1 < x_2$ if $k_1 = k_2$
	\item $\mathbf{true}$ if $k_1 < k_2$
	\item $\mathbf{false}$ if $k_1 > k_2$
	\end{itemize}
\item For each distance formula, e.g., $d(x_1 + k_1, x_2 + k_2) < 2$, replace it with
	\begin{itemize}
	\item $\mathbf{true}$ if $|k_1 - k_2| \leq 1$
	\item $x_2 < x_1$ if $k_2 - k_1 = 2$
	\item $x_1 < x_2$ if $k_1 - k_2 = 2$
	\item $\mathbf{false}$ if $|k_1 - k_2| > 2$
	\end{itemize}
\item Replace terms of the form $P(x_1 + k)$ with $P_k(x_1)$.
\end{itemize}
Finally, recursively replace every subformula $\exists x' \, \theta$ by $\exists x' \, \big( x' \geq x \wedge d(x, x') < 1 \wedge \theta \big)$.
This gives a unit formula $\overline{\vartheta}(x)$ such that for each $t \in \rho$,
\[
\rho, t \models \vartheta(x) \iff \overline{\rho_t}, 0 \models \overline{\vartheta}(x) \,.
\]

\paragraph{\emph{Unstacking}}

For each $\overline{\rho_t}$, we add an event at time $1$ (at which no monadic predicate holds) and call the resulting
$\ropen{0, 1}$-timed word $\overline{\rho_t}'$. It is clear that
\[
\overline{\rho_t}, 0 \models \overline{\vartheta}(x) \iff \overline{\rho_t}', 0, 1 \models \overline{\vartheta}'(x, y)
\]
where $\overline{\vartheta}'(x, y)$ is a non-metric \textmd{\textsf{FO[$<$]}} formula
obtained by replacing all distance formulas of the form $d(x, x') < 1$ with $x' < y$ in $\overline{\vartheta}(x)$.
We now invoke a normal form lemma from~\cite{Gabbay1980} to rewrite $\overline{\vartheta}'(x, y)$ into
a disjunction of \emph{decomposition formulas}.
\begin{lemC}[\cite{Gabbay1980}]
Every \emph{\textmd{\textsf{FO[$<$]}}} formula $\theta(x, y)$ in which all quantifications are of the form
$\exists x' \, (x' \geq x \wedge x' < y \wedge \ldots )$
is equivalent to a disjunction of decomposition formulas, i.e.,~\emph{\textmd{\textsf{FO[$<$]}}} formulas of the form
\[
    \arraycolsep=0.3ex
    \begin{array}{ll}
    x < y & {} \wedge \exists z_0 \ldots \exists z_n \, (x = z_0 < \cdots < z_n = y) \\
                    & {} \wedge \bigwedge \{ \Phi_i(z_i): 0 \leq i < n \} \\
                    & {} \wedge \bigwedge \{ \forall u \, \big(z_i < u < z_{i+1} \implies \Psi_i(u)\big): 0 \leq i < n \}
    \end{array}
\]
where $\Phi_i$ and $\Psi_i$ are \emph{\ltl{}} formulas.\footnote{This version of the lemma
follows from Lemma $4$ and Main Lemma in~\cite{Gabbay1980}.}
\end{lemC}

In fact, when the underlying order is discrete (as is the case here), we can further
postulate that $\Phi_i$ and $\Psi_i$ are simply Boolean combinations of atomic formulas~\cite{Dam1994}.
It follows that $\overline{\vartheta}(x)$ is equivalent to a disjunction of unit formulas $\overline{\delta}(x)$ of the form
\[
    \arraycolsep=0.3ex
    \begin{array}{ll}
    \exists z_0 \ldots \exists z_{n-1} \, (x = z_0 < \cdots < z_{n-1}) & {} \wedge d(x, z_{n - 1}) < 1  \\
                    & {} \wedge \bigwedge \{ \Phi_i(z_i): 0 \leq i < n \} \\
                    & {} \wedge \bigwedge \{ \forall u \, \big(z_i < u < z_{i+1} \implies \Psi_i(u)\big): 0 \leq i < n - 1 \} \\
                    & {} \wedge \forall u \, \big(z_{n-1} < u \wedge d(x, u) < 1 \implies \Psi_{n - 1}(u) \big)
    \end{array}
\]
where $\Phi_i$ and $\Psi_i$ are Boolean combinations of atomic formulas.

It remains to show that for each such unit formula $\overline{\delta}(x)$ and each $t \in \rho$, we can construct an \mtlpastg{} formula
$\varphi$ such that
\[
\overline{\rho_t}, 0 \models \overline{\delta}({x}) \iff \rho, t \models \varphi \,.
\]
For later convenience, we prove a stronger claim, i.e.,~we can handle
\fo{} formulas of the following form for any rational number $r$, $0 \leq r < 1$:
\[
    \arraycolsep=0.3ex
    \begin{array}{ll}
    \exists z_0 \ldots \exists z_{n-1} \, (x = z_0 < \cdots < z_{n-1}) & {} \wedge d(x, z_1) > r \wedge d(x, z_{n - 1}) < 1  \\
                    & {} \wedge \bigwedge \{ \Phi_i(z_i): 1 \leq i < n \} \\
                    & {} \wedge \forall u \, \big(x < u \wedge u < z_1 \wedge d(x, u) > r \implies \Psi_{0}(u) \big) \\
                    & {} \wedge \bigwedge \{ \forall u \, \big(z_i < u < z_{i+1} \implies \Psi_i(u)\big): 1 \leq i < n - 1 \} \\
                    & {} \wedge \forall u \, \big(z_{n-1} < u \wedge d(x, u) < 1 \implies \Psi_{n - 1}(u) \big) \,.
    \end{array}
\]
The proof is by induction on the number of existential quantifiers in $\overline{\delta}(x)$.
Before we proceed with the proof, we define a function $f$ that maps
a Boolean combination $\Omega$ of atomic formulas over $\overline{\mathbf{P}} \cup \overline{\mathbf{Q}}$
and $i$, $-N \leq i < N$ to an \mtlpastg{} formula $f(\Omega, i)$ over $\mathbf{P}$:
\begin{itemize}
\item $f(P_j, i) = \begin{cases}
						\once_{= (i - j)} P & \text{if } i > j\\
						P & \text{if } i = j \\
						\eventually_{= (j - i)} P & \text{if } i < j \\
						\end{cases}$
\item $f(Q_j, i) = \begin{cases}
						\once_{= (i - j)} \mathbf{true} & \text{if } i > j\\
						\mathbf{true} & \text{if } i = j \\
						\eventually_{= (j - i)} \mathbf{true} & \text{if } i < j \\
						\end{cases}$
\item $f(\mathbf{true}, i) = \mathbf{true}$
\item $f(\Omega_1 \wedge \Omega_2, i) = f(\Omega_1, i) \wedge f(\Omega_2, i)$
\item $f(\neg \Omega, i) = \neg f(\Omega, i)$.
\end{itemize}

\noindent
Now first consider the base step. We have
\[
\overline{\delta}(x) = \forall u \, \big(x < u \wedge d(x, u) > r \wedge d(x, u) < 1 \implies \Psi(u) \big)
\]
where $\Psi$ is a Boolean combination of atomic formulas. It is clear that
\[
\varphi = \bigwedge_{0 \leq i < N} \big(\globally_{(i + r, i + 1)} f(\Psi, i)\big) \wedge \bigwedge_{-N \leq i < 0} \big(\henceforth_{(|i + 1|, |i + r|)} f(\Psi, i)\big) \,.
\]

For the induction step we need to consider how $z_1$, \ldots, $z_{n-1}$ are scattered in $(r, 1)$.
Let us split $(r, 1)$ into an open interval $(r, r + \frac{1 - r}{2n})$ and $2n - 1$
half-open intervals $\ropen{r + \frac{1 - r}{2n}, r + \frac{2(1 - r)}{2n}}$, $\ropen{r + \frac{2(1-r)}{2n}, r + \frac{3(1-r)}{2n}}$, \ldots, $\ropen{r + \frac{(2n-1)(1-r)}{2n}, 1}$.
Consider the following cases:
\begin{enumerate}[label=(\roman*).] % chktex 36
\item $\{ z_1, \ldots, z_{n - 1} \} \subseteq (r, r + \frac{1 - r}{2n})$ or $\{ z_1, \ldots, z_{n - 1} \} \subseteq \ropen{r + \frac{k(1 - r)}{2n}, r + \frac{(k + 1)(1 - r)}{2n}}$ for some $k$, $1 \leq k < n$.
\item $\{ z_1, \ldots, z_{n - 1} \} \subseteq \ropen{r + \frac{k(1 - r)}{2n}, r + \frac{(k + 1)(1 - r)}{2n}}$ for some $k$, $n \leq k < 2n$.
\item There exists $k$, $1 \leq k < 2n$ and $l$, $1 \leq l < n - 1$ such that $z_l < r + \frac{k(1-r)}{2n} \leq z_{l + 1}$ (i.e.,~$z_1$, \ldots, $z_{n-1}$ are not in a single interval).
\end{enumerate}
We detail the construction of a formula $\psi$ in each case; the desired formula $\varphi$ is the disjunction of these $\psi$.
The correctness proofs are omitted as they are similar to the proof of Proposition~\ref{prop:translation}.
\begin{itemize}
\item Case (i):
Consider the subcase $z_1 > r + \frac{k(1-r)}{2n}$.  Let
\[
\overrightarrow{\varphi}^i_{n-1} = \bigwedge_{0 \leq j < N - i} \big( \globally_{(j, j + \frac{1-r}{2n})} f(\Psi_{n-1}, i + j) \big) \wedge \bigwedge_{-N - i \leq j < 0} \big( \henceforth_{(|j + \frac{1-r}{2n}|, |j|)} f(\Psi_{n-1}, i + j) \big)
\]
for all $i$, $-N \leq i < N$ and recursively define
\[
\overrightarrow{\varphi}^i_{m} = \bigvee_{-N - i \leq j < N - i} \bigg( \bigwedge_{-N - i \leq h < N - i} \Big( \big( f(\Psi_m, i + h) \big) \guntil_{(j, j + \frac{1-r}{2n})}^h \big( f(\Phi_{m+1}, i + j) \wedge \overrightarrow{\varphi}^{i+j}_{m+1} \big) \Big)  \bigg)
\]
for all $i$, $-N \leq i < N$ and $m$, $1 \leq m < n - 1$.  Let $\alpha_k$ be the conjunction of
\[
\bigwedge_{0 \leq i < N} \big( \globally_{\lopen{i + r, i + r + \frac{k(1-r)}{2n}}} f(\Psi_0, i) \big) \wedge \bigwedge_{-N \leq i < 0} \big( \henceforth_{\ropen{|i + r + \frac{k(1-r)}{2n}|, |i + r|}} f(\Psi_0, i) \big)
\]
and
\[
\bigvee_{-N \leq j < N} \bigg( \bigwedge_{-N \leq h < N} \Big( \big( f(\Psi_0, h) \big) \guntil_{(j + r + \frac{k(1-r)}{2n}, j + r + \frac{(k+1)(1-r)}{2n})}^{h + r + \frac{k(1-r)}{2n}} \big( f(\Phi_{1}, j) \wedge \overrightarrow{\varphi}^{j}_{1} \big) \Big)  \bigg)
\]
and
\[
\bigwedge_{0 \leq i < N} \big( \globally_{\ropen{i + r + \frac{(k + 1)(1-r)}{2n}, i + 1}} f(\Psi_{n-1}, i) \big) \wedge \bigwedge_{-N \leq i < 0} \big( \henceforth_{\lopen{|i + 1|, |i + r + \frac{(k + 1)(1-r)}{2n}|}} f(\Psi_{n - 1}, i) \big) \,.
\]
Similarly, we construct $\alpha_k'$ to handle the subcase $z_1 = r + \frac{k(1-r)}{2n}$.
The formula $\psi$ is the disjunction of formulas $\{ \alpha_k \mid 0 \leq k < n \}$ and $\{ \alpha_k' \mid 0 < k < n \}$.
\item Case (ii):
Let
\[
\overleftarrow{\varphi}^i_{1} = \bigwedge_{0 < j < N - i} \big( \globally_{(j - \frac{1-r}{2n}, j)} f(\Psi_{0}, i + j) \big) \wedge \bigwedge_{-N - i \leq j \leq 0} \big( \henceforth_{(|j|, |j -  \frac{1-r}{2n}|)} f(\Psi_{0}, i + j) \big)
\]
for all $i$, $-N \leq i < N$ and recursively define
\[
\overleftarrow{\varphi}^i_{m} = \bigvee_{-N - i \leq j < N - i} \bigg( \bigwedge_{-N - i \leq h < N - i} \Big( \big( f(\Psi_{m-1}, i + h) \big) \gsince_{(j, j + \frac{1-r}{2n})}^h \big( f(\Phi_{m-1}, i + j) \wedge \overleftarrow{\varphi}^{i+j}_{m-1} \big) \Big)  \bigg)
\]
for all $i$, $-N \leq i < N$ and $m$, $1 < m \leq n - 1$.  Let $\beta_k$ be the conjunction of
\[
\bigwedge_{0 \leq i < N} \big( \globally_{\ropen{i + r + \frac{(k + 1)(1-r)}{2n}, i + 1}} f(\Psi_{n-1}, i) \big) \wedge \bigwedge_{-N \leq i < 0} \big( \henceforth_{\lopen{|i + 1|, |i + r + \frac{(k+1)(1-r)}{2n}|}} f(\Psi_{n-1}, i) \big)
\]
and
\[
\bigvee_{-N \leq j < N} \bigg( \bigwedge_{-N \leq h < N} \Big( \big( f(\Psi_{n - 1}, h) \big) \gsince_{\lopen{-(j + r + \frac{(k + 1)(1-r)}{2n}), -(j + r + \frac{k(1-r)}{2n})}}^{-(h + r + \frac{(k + 1)(1-r)}{2n})} \big( f(\Phi_{n - 1}, j) \wedge \overleftarrow{\varphi}^{j}_{n - 1} \big) \Big)  \bigg)
\]
and
\[
\bigwedge_{0 \leq i < N} \big( \globally_{(i + r, i + r + \frac{k(1-r)}{2n})} f(\Psi_{0}, i) \big) \wedge \bigwedge_{-N \leq i < 0} \big( \henceforth_{(|i + r + \frac{k(1-r)}{2n}|, |i + r|)} f(\Psi_{0}, i) \big) \,.
\]
The formula $\psi$ is the disjunction of $\beta_k$, $n \leq k < 2n$.
\item Case (iii):
Suppose that $z_l < r + \frac{k(1-r)}{2n} \leq z_{l+1}$ for some $k$, $1 \leq k < 2n$
and $l$, $1 \leq l < n - 1$. Consider the following subcases:
	\begin{itemize}
	\item $r + \frac{k(1-r)}{2n} < z_{l+1}$:
	This can be handled by the conjunction of the formulas below:
	\begin{itemize}
	\item $\{z_1, \ldots, z_l\} \subseteq (r, r + \frac{k(1-r)}{2n})$: We can scale the corresponding \fo{} formula by
	$\frac{1}{r + \frac{k(1-r)}{2n}}$, apply the induction hypothesis (with $r' = \frac{r}{r + \frac{k(1-r)}{2n}}$)
	and scale the resulting \mtlpastg{} formula by $r + \frac{k(1-r)}{2n}$.
	\item $\{z_{l+1}, \ldots, z_{n-1}\} \subseteq (r + \frac{k(1-r)}{2n}, 1)$: We can set $r' = r + \frac{k(1-r)}{2n}$ and apply the induction hypothesis.
	\end{itemize}

	\item $r + \frac{k(1-r)}{2n} = z_{l+1}$: Exactly similar except that we also use the following formula as a conjunct:
	\[
		\bigvee_{0 \leq i < N} \big( \eventually_{= i + r + \frac{k(1-r)}{2n}} f(\Phi_{l + 1}, i) \big) \vee \bigvee_{-N \leq i < 0} \big( \once_{= |i + r + \frac{k(1-r)}{2n}|} f(\Phi_{l+1}, i) \big) \,.
	\]
	\end{itemize}
The formula $\psi$ is the disjunction of these formulas for all $k$, $1 \leq k < 2n$ and $l$, $1 \leq l < n - 1$.
\end{itemize}
Finally, observe that the original claim can be achieved by setting $r = 0$ and using the conjunct $f(\Phi_0, 0)$.
We can now state the following theorem.
\begin{thm}\label{thm:expcompforbounded}
For every $N$-bounded \emph{\foone{}} formula $\vartheta(x)$, there exists an equivalent \emph{\mtlpastg{}} formula $\varphi$ (with rational constants).
\end{thm}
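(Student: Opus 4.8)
The plan is to assemble the constructions developed above into a single chain of equivalences, producing the final formula as a disjunction over the normal-form decomposition formulas. First I would fix an arbitrary $N$-bounded \foone{} formula $\vartheta(x)$ and recall that, by the stacking construction, $\rho, t \models \vartheta(x) \iff \overline{\rho_t}, 0 \models \overline{\vartheta}(x)$ for every infinite timed word $\rho$ over $\Sigma_\mathbf{P}$ and every $t \in \rho$, where $\overline{\vartheta}(x)$ is a unit formula over the extended alphabet $\Sigma_{\overline{\mathbf{P}} \cup \overline{\mathbf{Q}}}$. The crucial feature of $\overline{\rho_t}$ is that it faithfully records the entire metric window $\ropen{t - N, t + N}$ around $t$ into one $\ropen{0,1}$-timed word, so that after stacking all distance constraints of $\vartheta$ up to $N$ become ordinary order constraints inside the unit interval together with the integer-offset bookkeeping carried by the predicates $P_i, Q_i$.

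Next I would pass to the non-metric $\mathsf{FO}[<]$ formula $\overline{\vartheta}'(x, y)$ obtained by marking the right endpoint of the unit interval with the auxiliary point $y$, and apply Gabbay's normal-form lemma (in its discrete-order sharpening, where the $\Phi_i, \Psi_i$ become Boolean combinations of atomic formulas) to rewrite $\overline{\vartheta}'(x, y)$ as a finite disjunction of decomposition formulas. Translating back, $\overline{\vartheta}(x)$ becomes equivalent to a disjunction of unit formulas $\overline{\delta}(x)$ of the displayed decomposition shape. It then suffices to produce, for each such $\overline{\delta}(x)$, a single \mtlpastg{} formula $\varphi_\delta$ (uniform in $t$) with $\overline{\rho_t}, 0 \models \overline{\delta}(x) \iff \rho, t \models \varphi_\delta$; the desired $\varphi$ is the disjunction of the $\varphi_\delta$, and the chain of equivalences yields $\rho, t \models \vartheta(x) \iff \rho, t \models \varphi$ for all $t \in \rho$.

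The construction of each $\varphi_\delta$ is the heart of the argument, and I would carry it out by induction on the number of existential quantifiers in $\overline{\delta}$, proving the slightly stronger statement that allows an arbitrary rational left offset $r \in \ropen{0,1}$. In the base case $\overline{\delta}$ is a single universally quantified constraint, translated directly into a conjunction of bounded $\globally_I$ and $\henceforth_I$ formulas via the auxiliary translation $f(\Psi, i)$, which unstacks a Boolean combination of atomic formulas at integer offset $i$ back into \mtlpastg{} using $\eventually_{=\cdot}$, $\once_{=\cdot}$. For the induction step I would subdivide $(r, 1)$ into $2n$ subintervals of equal rational width $\frac{1-r}{2n}$; this fineness guarantees that either all witnesses $z_1, \dots, z_{n-1}$ lie in a single subinterval (cases (i) and (ii), resolved by the recursively defined forward and backward formulas $\overrightarrow{\varphi}^i_m$ and $\overleftarrow{\varphi}^i_m$ built from $\guntil^c$ and $\gsince^c$), or a subinterval boundary separates them (case (iii), where I split the decomposition formula at that boundary, rescale by the rational boundary position, and invoke the induction hypothesis on both halves). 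Throughout, the outer disjunction over the integer target-offset $j$ and the inner conjunction over the reference-offset $h$ let $\guntil^c$ and $\gsince^c$ simulate the placement of the next witness while asserting the interior predicate $\Psi_m$ on every event in between---exactly the role these generalised modalities were introduced to play.

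The main obstacle is precisely the correctness of this inductive construction. The delicate point is that one fixed \mtlpastg{} formula must mirror $\overline{\delta}$ at position $0$ of $\overline{\rho_t}$ simultaneously for every $t$, even though the integer part of the absolute position of each stacked event varies with $t$; this is what forces the double indexing by $j$ and $h$ and the systematic use of the superscript $c$ in $\guntil^c, \gsince^c$ to align interior events against the correct reference point. I would also verify that the rational endpoints introduced by repeated halving stay within $\ropen{0,1}$ and that the recursion strictly decreases the quantifier count, so that it terminates. The detailed verifications (which parallel Proposition~\ref{prop:translation}) I would defer, and I would stress that rational constants are unavoidable: by~\cite{Hirshfeld2007} the integer-constant fragment cannot express the counting modalities, which are themselves bounded \foone{} formulas.
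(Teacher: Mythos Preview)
Your proposal is correct and follows essentially the same route as the paper: stack the $N$-bounded formula into a unit formula over the extended alphabet, pass to the two-variable $\mathsf{FO}[<]$ formula $\overline{\vartheta}'(x,y)$, apply Gabbay's decomposition lemma (sharpened to Boolean atoms over discrete orders), and then translate each resulting decomposition formula back into \mtlpastg{} by induction on the number of existential quantifiers using the $2n$-fold rational subdivision and the recursive formulas $\overrightarrow{\varphi}^i_m$, $\overleftarrow{\varphi}^i_m$ built from $\guntil^c$, $\gsince^c$. Your identification of the double indexing over $j$ and $h$ as the mechanism that makes a single formula work uniformly in $t$ is exactly the point the paper's construction hinges on.
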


\subsection{Expressive completeness of \mtlpastg{}}\label{sec:unboundedexpcomp}

In this section, we show that any \fo{} formula with one free variable
can be expressed as an \mtlpastg{} formula. The crucial idea here is that
we can separate formulas `far enough' that all references to a certain
variable become vacuous.
To this end, we define $\mathit{fr}(\varphi)$ and $\mathit{pr}(\varphi)$
(\emph{future-reach} and \emph{past-reach}) for an \mtlpastg{} formula $\varphi$ as follows:
\begin{itemize}
\item $\mathit{fr}(\mathbf{true}) = \mathit{pr}(\mathbf{true}) = \mathit{fr}(P) = \mathit{pr}(P) = 0$ for all $P \in \mathbf{P}$
\item $\mathit{fr}(\varphi_1 \until_I \varphi_2) = \sup(I) + \max\big(\mathit{fr}(\varphi_1), \mathit{fr}(\varphi_2)\big)$
\item $\mathit{pr}(\varphi_1 \since_I \varphi_2) = \sup(I) + \max\big(\mathit{pr}(\varphi_1), \mathit{pr}(\varphi_2)\big)$
\item $\mathit{fr}(\varphi_1 \since_I \varphi_2) = \max\big(\mathit{fr}(\varphi_1), \mathit{fr}(\varphi_2) - \inf(I)\big)$
\item $\mathit{pr}(\varphi_1 \until_I \varphi_2) = \max\big(\mathit{pr}(\varphi_1), \mathit{pr}(\varphi_2) - \inf(I)\big)$
\item $\mathit{fr}(\varphi_1 \guntil^c_I \varphi_2) = \max\big( c + |I| + \mathit{fr}(\varphi_1), \sup(I) + \mathit{fr}(\varphi_2) \big)$
\item $\mathit{pr}(\varphi_1 \gsince^c_I \varphi_2) = \max\big( c + |I| + \mathit{pr}(\varphi_1), \sup(I) + \mathit{pr}(\varphi_2) \big)$
\item $\mathit{fr}(\varphi_1 \gsince^c_I \varphi_2) = \max\big( \mathit{fr}(\varphi_1) - c, \mathit{fr}(\varphi_2) - \inf(I) \big)$
\item $\mathit{pr}(\varphi_1 \guntil^c_I \varphi_2) = \max\big( \mathit{pr}(\varphi_1) - c, \mathit{pr}(\varphi_2) - \inf(I) \big)$.
\end{itemize}
The cases for Boolean connectives are defined in the expected way.
Intuitively, these are meant as over-approximations of the lengths of the time horizons needed to
determine the truth value of $\varphi$.

\begin{thm}
For every \emph{\foone{}} formula $\vartheta(x)$, there exists an equivalent \emph{\mtlpastg{}} formula $\varphi$ (with rational constants).
\end{thm}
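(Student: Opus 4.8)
The plan is to prove the stronger statement that every \fo{} formula $\vartheta(x)$ (the rational variant, which subsumes \foone{}) is equivalent to an \mtlpastg{} formula with rational constants, following the strategy of~\cite{Hunter2012} and combining the two tools just established: bounded expressive completeness (Theorem~\ref{thm:expcompforbounded}) and syntactic separation of \mtlpastg{} (Theorem~\ref{thm:separation}). The reverse inclusion \mtlpastg{}~$\subseteq$~\fo{} is immediate, since every modality is \fo{}-definable by construction. First I would reduce $\vartheta(x)$ to a Boolean combination of \emph{one-sided} formulas by a first-order separation theorem over the Dedekind-complete order $\mathbb{R}_{\geq 0}$: $\vartheta(x)$ is equivalent to a Boolean combination of \emph{pure-future} formulas (all quantified variables ranging over $(x, \infty)$) and \emph{pure-past} formulas (all quantified variables ranging over $[0, x)$). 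One has to check that the binary distance predicates do not obstruct this: once a formula fixes the relative order of every pair of variables involving $x$, each $d(\cdot,\cdot)\sim c$ becomes a one-directional metric constraint, and separation carries over to the metric setting. By symmetry it then suffices to express a pure-future formula $\vartheta^{+}(x)$.

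For $\vartheta^{+}(x)$ I would localise its metric content. Fix an integer $M$ strictly exceeding every metric constant occurring in $\vartheta^{+}$. Over the bounded window $(x, x+M]$ the formula is $M$-bounded, so Theorem~\ref{thm:expcompforbounded} captures this `near' behaviour by an \mtlpastg{} formula with rational constants. For events beyond $x+M$, every distance predicate $d(x,\cdot)\sim c$ has its truth value fixed by the sole fact that the distance exceeds $M$; hence, relative to the anchor $x$, the `far' part of $\vartheta^{+}$ is order-only, the only surviving genuine metric constraints being those linking pairs of far events, which are themselves local (bounded) and fall under the induction. This is precisely what the functions $\mathit{fr}$ and $\mathit{pr}$ are designed to certify: they over-approximate the time horizon a subformula can probe, so taking $M$ to exceed the relevant $\mathit{fr}$ and $\mathit{pr}$ forces the near and far sub-formulas to be unable to reach across the window, making their cross-references vacuous and the decomposition faithful.

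The gluing step uses the separated normal form. I would re-anchor the evaluation point at the first event beyond distance $M$ by means of the operator $\mathbf{false}\guntil^{M}_{\geq M}(\cdot)$ furnished by Theorem~\ref{thm:separation}, apply the induction hypothesis to the far part there, and recombine the near formula, the jump, and the recursively obtained far formula using unbounded $\until$ and Boolean connectives; the pure-past case is handled symmetrically via $\gsince$ and $\mathbf{false}\gsince^{M}_{\geq M}(\cdot)$. Reassembling the Boolean structure from the separation step then yields an \mtlpastg{} formula equivalent to $\vartheta(x)$. The step I expect to be hardest is making this localisation fully rigorous: justifying first-order separation in the presence of the metric predicates, giving a near/far threshold that is itself \mtlpastg{}-definable and compatible with the jump operators, and choosing a complexity measure (on quantifier rank together with $\mathit{fr}$/$\mathit{pr}$) under which the far part is provably simpler, so that the recursion terminates and the reconstructed formula is genuinely equivalent.
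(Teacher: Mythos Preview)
Your plan has a genuine gap at its first step. You propose to reduce $\vartheta(x)$ to a Boolean combination of pure-future and pure-past \foone{} formulas via a first-order separation theorem. But Gabbay's separation is a temporal-logic result; its first-order analogue for \textsf{FO[$<$]} already goes through Kamp. For \foone{} no such off-the-shelf separation is available, and the obstruction is precisely the one you flag but do not resolve: a constraint like $d(y,z)<1$ with $y<x<z$ ties the two sides together in a way that merely fixing orders relative to $x$ does not dissolve (consider $\exists y>x\,\exists z<x\,(d(y,z)<1)$). Establishing metric first-order separation directly would essentially reprove the theorem by other means.

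The paper (following~\cite{Hunter2012}) sidesteps this by never separating at the first-order level. The induction is on quantifier depth. For $\vartheta(x)=\exists y\,\theta(x,y)$ one first \emph{eliminates $x$}: case-split on the truth values of all $P(x)$ via a map $\gamma$, and replace every remaining atom mentioning $x$---namely $x<z$, $z<x$, $d(x,z)\sim c$---by a fresh monadic predicate $P_<(z)$, $P_>(z)$, $P_{\sim c}(z)$. This yields formulas $\theta'_\gamma(y)$ of strictly smaller quantifier depth, so the induction hypothesis gives \mtlpastg{} formulas $\varphi_\gamma$ over the extended predicate set; the whole thing becomes $\bigvee_\gamma\bigl(\psi_\gamma\wedge(\once\varphi_\gamma\vee\varphi_\gamma\vee\eventually\varphi_\gamma)\bigr)$. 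Only now is separation invoked---but \mtlpastg{} separation (Theorem~\ref{thm:separation}), with $M$ chosen so large that inside every far component $\mathbf{false}\guntil^M_{\geq M}\psi$ or $\mathbf{false}\gsince^M_{\geq M}\psi$ each fresh predicate $P_<$, $P_>$, $P_{\sim c}$ has a fixed truth value and can be replaced by a constant. In the remaining bounded components the fresh predicates are substituted back, producing bounded \fo{} formulas, to which Theorem~\ref{thm:expcompforbounded} applies after scaling.

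This also dissolves your termination worry: the well-founded measure is simply quantifier depth, which drops by one when $x$ is eliminated. In your scheme, re-anchoring at a far event can reproduce a pure-future formula of the same quantifier rank and the same metric constants, so the recursion you sketch has no evident decreasing measure; the mix of quantifier rank with $\mathit{fr}/\mathit{pr}$ you suggest does not obviously help, since the ``far part'' need not have smaller reach.
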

\begin{proof}
The proof is by induction on the quantifier depth of $\vartheta(x)$.
In what follows, let the set of monadic predicates be $\mathbf{P}$.
As before, we assume that each quantifier in $\vartheta(x)$ uses a fresh new variable.

\begin{itemize}
\item \emph{Base step.} $\vartheta(x)$ is a Boolean combination of atomic formulas $P(x)$, $x < x$, $d(x, x) \sim c$, $\mathbf{true}$.
		We can replace them by $P$, $\mathbf{false}$, $0 \sim c$ and $\mathbf{true}$ respectively to obtain $\varphi$.
\item \emph{Induction step.} Without loss of generality assume that $\vartheta(x) = \exists y \, \theta(x, y)$.
		Our goal here is to remove $x$ from $\theta(x, y)$. For this purpose, we can remove $x < x$ and $d(x, x) \sim c$
		as before and use a mapping $\gamma: \mathbf{P} \mapsto \{ \mathbf{true}, \mathbf{false} \}$ to determine
		the truth value of $P(x)$ for each $P \in \mathbf{P}$. Thus we can rewrite $\exists y \, \theta(x, y)$ as
		\begin{equation}\label{eqn:gamma}
			\bigvee_{\gamma} \big( \eta_\gamma(x) \wedge \exists y \, \theta_\gamma(x, y) \big)
		\end{equation}
		where
        \[
            \eta_\gamma = \bigwedge_{P \in \mathbf{P}} \big(P(x) \iff \gamma(P)\big)
        \]
		and $\theta_\gamma(x, y)$ is obtained by replacing $P(x)$ with $\gamma(P)$ for all $P \in \mathbf{P}$ in $\theta(x, y)$.
Observe that in each $\theta_\gamma(x, y)$, $x$ only appears in atomic formulas of the form $x < z$, $z < x$, $d(x, z) \sim c$
		and $d(z, x) \sim c$ where $\sim \; \in \{<, >\}$. We now introduce new monadic predicates $P_{<}$, $P_{>}$, and $P_{\sim c}$ for each $d(x, z) \sim c$ or $d(z, x) \sim c$
		that correspond to these atomic formulas. Namely, we write $x < z$ as $P_{<}(z)$, $z < x$ as $P_{>}(z)$, and
		$d(x, z) \sim c$ or $d(z, x) \sim c$ as $P_{\sim c}(z)$. Apply these substitutions to (\ref{eqn:gamma})
		yields
		\begin{equation}\label{eqn:nox}
			\bigvee_{\gamma} \big( \eta_\gamma(x) \wedge \exists y \, \theta_\gamma'(y) \big)
		\end{equation}
		where $x$ does not occur in each $\theta_\gamma'(y)$. In particular, (\ref{eqn:gamma}) and (\ref{eqn:nox})
		have the same truth value at any given point if $P_{<}$, $P_{>}$ and all $P_{\sim c}$ are interpreted correctly
		with respect to that point.
Each $\eta_\gamma(x)$ is clearly equivalent to an \mtlpastg{} formula $\psi_\gamma$.
		By the induction hypothesis, each $\theta_\gamma'(y)$ is also equivalent to an \mtlpastg{} formula $\varphi_\gamma$.
		It follows that (\ref{eqn:nox}) is equivalent to the following \mtlpastg{} formula:
		\[
			\varphi' = \bigvee_{\gamma} \big( \psi_\gamma \wedge (\once \varphi_\gamma \vee \varphi_\gamma \vee \eventually \varphi_\gamma) \big) \,.
		\]
		By Theorem~\ref{thm:separation} and noting that $M \in \mathbb{N}$
		can be chosen arbitrarily, $\varphi'$ is equivalent to a Boolean combination $\varphi''$ of
		\begin{itemize}
		\item bounded formulas
		\item formulas of the form $\mathbf{false} \guntil_{\geq M}^M \psi$ such that $M > c_\maxit + \mathit{pr}(\psi)$
		\item formulas of the form $\mathbf{false} \gsince_{\geq M}^M \psi$ such that $M > c_\maxit + \mathit{fr}(\psi)$.
		\end{itemize}
		where $c_\maxit$ are the largest constants appearing in monadic predicates of the form
		$P_{\sim c}$ in respective formulas $\psi$.
Now suppose that $\varphi''$ is evaluated at $t_1$.
		For the formulas of the second type, since all references to $P_{<}$, $P_{>}$ and all $P_{\sim c}$
		must happen at time strictly greater than $t_1 + c_\maxit$, we can simply replace them with
		$\mathbf{true}$, $\mathbf{false}$ and $c_\maxit + 1 \sim c$ to obtain equivalent \mtlpastg{}
		formulas over $\mathbf{P}$. The formulas of the third type
		can be dealt with similarly. Finally, for the formulas of the first type, we replace
		$P_{<}$, $P_{>}$ and all $P_{\sim c}$ with $x < z$, $z < x$ and $d(x, z) \sim c$.
		The resulting formulas are clearly bounded \fo{} formulas. We can scale them to bounded \foone{} formulas,
		apply Theorem~\ref{thm:expcompforbounded} and then scale back
		to obtain equivalent \mtlpastg{} formulas over $\mathbf{P}$. \qedhere
\end{itemize}
\end{proof}

\noindent
The main result of this chapter now follows from a simple scaling argument.

\begin{thm}\label{thm:unboundedexpcomp}
\emph{\mtlpastg{}} with rational constants is expressively complete for \emph{\fo{}} over infinite timed words.
\end{thm}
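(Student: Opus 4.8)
The plan is to obtain this from the preceding theorem (expressive completeness of \mtlpastg{} for \foone{} over infinite timed words) by a scaling argument, exactly as the remark ``follows from a simple scaling argument'' anticipates. The point is that \fo{} differs from \foone{} only in allowing rational constants $q \in \mathbb{Q}$ in the distance predicates $d(x, x') \sim q$, so multiplying the time axis by a suitable integer $D$ converts all such constants into integers and thereby reduces the problem to the already-established \foone{} case.

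Concretely, I would first fix an \fo{} formula $\vartheta(x)$ and let $D \in \mathbb{N}$ be a common denominator of all the rational constants occurring in its distance predicates, so that $Dq \in \mathbb{N}$ for each such $q$. Replacing every atom $d(x, x') \sim q$ by $d(x, x') \sim Dq$ yields an \foone{} formula $\vartheta'(x)$ (all other atoms are unchanged). Next I introduce the scaling map on models: for an infinite timed word $\rho = (\sigma, \tau)$ define $\rho^{\times D} = (\sigma, D\tau)$, where $D\tau = (D\tau_0, D\tau_1, \dots)$. One checks that this is a bijection on infinite timed words: initialisation ($\tau_0 = 0$), strict monotonicity, and divergence (non-Zenoness) are all preserved and reflected by multiplication by the positive integer $D$, since the event sequence $\sigma$ is untouched and only timestamps are rescaled. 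Because $|D t - D t'| \sim Dq \iff |t - t'| \sim q$, a routine structural induction gives the semantic correspondence
\[
\rho, t \models \vartheta(x) \iff \rho^{\times D}, D t \models \vartheta'(x) \,,
\]
for every infinite timed word $\rho$ and $t \in \rho$.

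By the preceding theorem, $\vartheta'(x)$ is equivalent over infinite timed words to an \mtlpastg{} formula $\varphi'$ with integer constants. I then scale back on the syntactic side: let $\varphi$ be the \mtlpastg{} formula obtained from $\varphi'$ by dividing every interval endpoint and every superscript $c$ in each $\guntil_I^c, \gsince_I^c, \until_I, \since_I$ by $D$, yielding rational constants. A second structural induction---now over \mtlpastg{} modalities---establishes the dual correspondence $\rho, t \models \varphi \iff \rho^{\times D}, D t \models \varphi'$; the cases for $\guntil_I^c$ and $\gsince_I^c$ require checking that each metric side-condition (the membership $t_2 - t_1 \in I$, the distances to the superscript point $c$, and the bound $d(x', x'') > (a - c)$ in the defining \foone{} formula) is simultaneously rescaled by $D$, which is immediate since all of $I$, $c$, and the induced thresholds were divided by $D$. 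Chaining the two correspondences through the \foone{}-equivalence of $\vartheta'$ and $\varphi'$, and using that $\rho \mapsto \rho^{\times D}$ is a bijection (so that ranging over all $\rho^{\times D}$ is the same as ranging over all $\rho$), gives $\rho, t \models \vartheta(x) \iff \rho, t \models \varphi$ for all $\rho$ and $t$, which is the desired equivalence.

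The only real work---and hence the main obstacle---is verifying that both satisfaction relations genuinely commute with scaling, and in particular that the scaling map is a bijection that respects all the structural requirements on infinite timed words (especially divergence). The \mtlpastg{} direction is the more delicate of the two because the generalised modalities carry several metric parameters at once, so I would state the scaling correspondence as a separate lemma and prove it by induction on formula structure, treating $\guntil_I^c$ and $\gsince_I^c$ explicitly to make sure the interval, the superscript, and the implicit bound $a - c$ all scale coherently. Everything else is bookkeeping.
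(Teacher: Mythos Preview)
Your proposal is correct and matches the paper's approach exactly: the paper states only that the theorem ``follows from a simple scaling argument'' applied to the preceding \foone{} result, and you have spelled out precisely that argument. One minor inaccuracy worth fixing: the preceding theorem yields an \mtlpastg{} formula $\varphi'$ with \emph{rational} (not integer) constants, since rationals are already needed in the bounded case (Theorem~\ref{thm:expcompforbounded}); this does not affect your argument, as dividing rational constants by $D$ still gives rational constants.
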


\section{Monitoring of \mtlpastg{} properties}

While the expressive completeness result in the last section
may be interesting from a theoretical point of view,
it is unclear how it can benefit
practical verification tasks
as the model-checking problem for \mtl{} is
already undecidable~\cite{Ouaknine2006a}.
Nonetheless, we show that those results can be very useful in
\emph{monitoring}, a core element of \emph{runtime verification}.
We first define some basic notions used throughout
this section.
Then we give a bi-linear offline trace-checking algorithm for \mtlpastg{},
which is later modified to work in an online fashion (under a
bounded-variability assumption) and used as the basis of a
monitoring procedure for an `easy-to-monitor' fragment of \mtlpastg{}.\footnote{In this section
we assume that all timestamps are rational, can be finitely represented (e.g., with a built-in data type), and additions and subtractions on them can be done in constant time.}
The main advantage of the proposed procedure is that it is \emph{trace-length independent}, i.e.,~the space usage
is independent of the length of the (growing) trace.
Finally, we show that our approach extends to arbitrary \mtlpastg{} formulas
via the syntactic rewriting rules in Section~\ref{sec:separation}.

\subsection{Satisfaction over finite prefixes}

\paragraph{\emph{Truncated semantics}}

As one is naturally concerned with truncated traces in monitoring,
it is useful to define satisfaction relations of \mtlpastg{} formulas
over finite timed words.  To this end, we adopt a timed version of the
\emph{truncated semantics}~\cite{Eisner2003} which offers
\emph{strong} and \emph{weak} views on satisfaction over finite timed words.
Intuitively, these views indicate whether the satisfaction of
the formula on the whole (infinite) trace is `clearly' confirmed/refuted
by the finite prefix read so far.
In the strong view, one is \emph{pessimistic} on satisfaction---for example,
$\globally P$ can never be strongly satisfied by any finite timed word, as any such finite timed word
can be extended into an infinite timed word that violates the formula.
Conversely, in the weak view one is \emph{optimistic} on satisfaction---for example, $\eventually_{< 5} P$ is weakly satisfied by any finite
timed word whose timestamps are all strictly less than $5$, since
one can always extend such into an infinite timed word that satisfies the formula.
We also consider the \emph{neutral} view, which extends the
traditional \ltlpast{} semantics over finite words~\cite{Manna1995}
to \mtlpast{}. In what follows, we denote the strong, neutral and weak
satisfaction relations by $\modelsfs$, $\modelsfn$ and $\modelsfw$ respectively.
We write $\rho \modelsfs \varphi$ ($\rho \modelsfn \varphi$, $\rho \modelsfw \varphi$)
if $\rho, 0 \modelsfs \varphi$ ($\rho, 0 \modelsfn \varphi$, $\rho, 0 \modelsfw \varphi$).
\begin{defi}
The satisfaction relation $\rho, i \modelsfs \varphi$ for an
\emph{\mtlpastg{}} formula $\varphi$, a finite timed word $\rho =
(\sigma, \tau)$ and a position $i$, $0 \leq i < |\rho|$ is defined
as follows:
\begin{itemize}

\item $\rho, i \modelsfs P$ iff $P \in \sigma_i$

\item $\rho, i \modelsfs \mathbf{true}$

\item $\rho, i \modelsfs \varphi_1 \wedge \varphi_2$ iff $\rho, i \modelsfs \varphi_1$ and $\rho, i \modelsfs \varphi_2$

\item $\rho, i \modelsfs \neg \varphi$ iff $\rho, i \centernot \modelsfw \varphi$

\item $\rho, i \modelsfs \varphi_1 \until_I \varphi_2$ iff there
exists $j$, $i < j < |\rho|$ such that $\rho, j \modelsfs
\varphi_2$, $\tau_j - \tau_i \in I$, and $\rho, k \modelsfs
\varphi_1$ for all $k$ with $i < k < j$

\item $\rho, i \modelsfs \varphi_1 \since_I \varphi_2$ iff there exists $j$, $0 \leq j < i$ such
that $\rho, j \modelsfs \varphi_2$, $\tau_i - \tau_j \in I$,
and $\rho, k \modelsfs \varphi_1$ for all $k$ with $j < k < i$

\item $\rho, i \modelsfs \varphi_1 \guntil^c_I \varphi_2$ iff there exists $j$, $i < j < |\rho|$
such that $\rho, j \modelsfs \varphi_2$, $\tau_j - \tau_i \in I$,
and $\rho, k \modelsfs \varphi_1$ for all $k$, $i < k < j$ such that $\tau_{k} - \tau_i > c$ and
$\tau_j - \tau_{k} > a - c$ where $a = \inf(I)$

\item $\rho, i \modelsfs \varphi_1 \gsince^c_I \varphi_2$ iff there exists $j$, $0 \leq j < i$
such that $\rho, j \modelsfs \varphi_2$, $\tau_i - \tau_j \in I$,
and $\rho, k \modelsfs \varphi_1$ for all $k$, $j < k < i$ such that $\tau_i - \tau_{k} > c$
and $\tau_{k} - \tau_{j} > a - c$ where $a = \inf(I)$.

\end{itemize}
\end{defi}

\begin{defi}
The satisfaction relation $\rho, i \modelsfn \varphi$ for an
\emph{\mtlpastg{}} formula $\varphi$, a finite timed word $\rho =
(\sigma, \tau)$ and a position $i$, $0 \leq i < |\rho|$ is defined
as follows:
\begin{itemize}

\item $\rho, i \modelsfn P$ iff $P \in \sigma_i$

\item $\rho, i \modelsfn \mathbf{true}$

\item $\rho, i \modelsfn \varphi_1 \wedge \varphi_2$ iff $\rho, i \modelsfn \varphi_1$ and $\rho, i \modelsfn \varphi_2$

\item $\rho, i \modelsfn \neg \varphi$ iff $\rho, i \centernot \modelsfn \varphi$

\item $\rho, i \modelsfn \varphi_1 \until_I \varphi_2$ iff there
exists $j$, $i < j < |\rho|$ such that $\rho, j \modelsfn
\varphi_2$, $\tau_j - \tau_i \in I$, and $\rho, k \modelsfn
\varphi_1$ for all $k$ with $i < k < j$

\item $\rho, i \modelsfn \varphi_1 \since_I \varphi_2$ iff there exists $j$, $0 \leq j < i$ such
that $\rho, j \modelsfn \varphi_2$, $\tau_i - \tau_j \in I$,
and $\rho, k \modelsfn \varphi_1$ for all $k$ with $j < k < i$

\item $\rho, i \modelsfn \varphi_1 \guntil^c_I \varphi_2$ iff there exists $j$, $i < j < |\rho|$
such that $\rho, j \modelsfn \varphi_2$, $\tau_j - \tau_i \in I$,
and $\rho, k \modelsfn \varphi_1$ for all $k$, $i < k < j$ such that $\tau_{k} - \tau_i > c$ and
$\tau_j - \tau_{k} > a - c$ where $a = \inf(I)$

\item $\rho, i \modelsfn \varphi_1 \gsince^c_I \varphi_2$ iff there exists $j$, $0 \leq j < i$
such that $\rho, j \modelsfn \varphi_2$, $\tau_i - \tau_j \in I$,
and $\rho, k \modelsfn \varphi_1$ for all $k$, $j < k < i$ such that $\tau_i - \tau_{k} > c$
and $\tau_{k} - \tau_{j} > a - c$ where $a = \inf(I)$.

\end{itemize}
\end{defi}

\begin{defi}
The satisfaction relation $\rho, i \modelsfw \varphi$ for an
\emph{\mtlpastg{}} formula $\varphi$, a finite timed word $\rho =
(\sigma, \tau)$ and a position $i$, $0 \leq i < |\rho|$ is defined
as follows:
\begin{itemize}

\item $\rho, i \modelsfw P$ iff $P \in \sigma_i$

\item $\rho, i \modelsfw \mathbf{true}$

\item $\rho, i \modelsfw \varphi_1 \wedge \varphi_2$ iff $\rho, i
\modelsfw \varphi_1$ and $\rho, i \modelsfw \varphi_2$

\item $\rho, i \modelsfw \neg \varphi$ iff $\rho, i \centernot \modelsfs \varphi$

\item $\rho, i \modelsfw \varphi_1 \until_I \varphi_2$ iff either of
the following holds:
	\begin{itemize}
	\item there exists $j$, $i < j < |\rho|$ such that $\rho, j \modelsfw \varphi_2$, $\tau_j -
	\tau_i \in I$, and $\rho, k \modelsfw \varphi_1$ for all $k$ with $i < k < j$
	\item $\tau_{|\rho| - 1} - \tau_i < \sup(I)$ and $\rho, k \modelsfw \varphi_1$ for all $k$ with $i < k < |\rho|$
	\end{itemize}

\item $\rho, i \modelsfw \varphi_1 \since_I \varphi_2$ iff there exists
$j$, $0 \leq j < i$ such that $\rho, j \modelsfw \varphi_2$, $\tau_i
- \tau_j \in I$, and $\rho, k \modelsfw \varphi_1$ for all $k$ with $j < k < i$

\item $\rho, i \modelsfw \varphi_1 \guntil^c_I \varphi_2$ iff
either of the following holds:
	\begin{itemize}
	\item there exists $j$, $i < j < |\rho|$ such that $\rho, j \modelsfw \varphi_2,
	\tau_j - \tau_i \in I$, and $\rho, k \modelsfw \varphi_1$ for all $k$, $i < k < j$
	such that $\tau_{k} - \tau_i > c$ and $\tau_{j} - \tau_{k} > a - c$ where $a = \inf(I)$

	\item $\tau_{|\rho| - 1} - \tau_i < \sup(I)$ and $\rho, k \modelsfw \varphi_1$ for all $k$, $i < k < |\rho|$
	such that $\tau_{k} - \tau_i > c$ and $\tau_{|\rho| - 1} - \tau_{k} \geq a - c$ where $a = \inf(I)$
	\end{itemize}

\item $\rho, i \modelsfw \varphi_1 \gsince^c_I \varphi_2$ iff there exists $j$, $0 \leq j < i$ such that
$\rho, j \modelsfw \varphi_2$, $\tau_i - \tau_j \in I$,
and $\rho, k \modelsfw \varphi_1$ for all $k$, $j < k < i$
such that $\tau_i - \tau_{k} > c$ and $\tau_{k} - \tau_{j} > (a - c)$ where $a = \inf(I)$.
\end{itemize} \end{defi}

\begin{prop}\label{prop:stontow}
For a finite timed word $\rho$, a position $i$ in $\rho$ and an \emph{\mtlpastg{}} formula
$\varphi$,
\[
\rho, i \modelsfs \varphi \implies \rho, i \modelsfn \varphi \text{
     and } \rho, i \modelsfn \varphi \implies \rho, i \modelsfw
\varphi \,.
\]
\end{prop}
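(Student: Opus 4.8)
The plan is to prove the two implications \emph{simultaneously} by structural induction on $\varphi$, because they become entangled precisely at the negation connective and cannot be established in isolation. Write $(\ast)$ for the claim $\rho, i \modelsfs \varphi \implies \rho, i \modelsfn \varphi$ and $(\ast\ast)$ for $\rho, i \modelsfn \varphi \implies \rho, i \modelsfw \varphi$; the induction hypothesis asserts that both $(\ast)$ and $(\ast\ast)$ hold for every proper subformula of $\varphi$ and every position $i$ in $\rho$.

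The base cases $\varphi \in \{\mathbf{true}, P\}$ are immediate, since for these all three relations agree (they depend only on whether $P \in \sigma_i$), and the conjunction case is componentwise, following directly from the induction hypothesis applied to each conjunct. The crux of the argument is negation. Here the definitions are deliberately arranged so that $\modelsfs$ is defined via $\modelsfw$ and $\modelsfw$ via $\modelsfs$, which is exactly what lets the two implications feed into one another. To prove $(\ast)$ for $\neg \psi$, assume $\rho, i \modelsfs \neg \psi$, i.e.\ $\rho, i \centernot \modelsfw \psi$; the contrapositive of $(\ast\ast)$ for $\psi$ then yields $\rho, i \centernot \modelsfn \psi$, which is precisely $\rho, i \modelsfn \neg \psi$. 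Symmetrically, to prove $(\ast\ast)$ for $\neg \psi$, assume $\rho, i \modelsfn \neg \psi$, i.e.\ $\rho, i \centernot \modelsfn \psi$; the contrapositive of $(\ast)$ for $\psi$ gives $\rho, i \centernot \modelsfs \psi$, i.e.\ $\rho, i \modelsfw \neg \psi$. This cross-feeding step is the single genuinely delicate point, and it is what forces the two implications to be carried through the induction together.

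For the temporal modalities $\until_I$, $\since_I$, $\guntil^c_I$, $\gsince^c_I$ the argument is routine: each existential witness position $j$ and each universally quantified auxiliary position $k$ transfers from the stronger relation to the weaker one by the induction hypothesis applied to $\varphi_1$ and $\varphi_2$. For $(\ast)$ one reuses the same witness $j$ together with the same admissible range of $k$'s and upgrades $\modelsfs$ to $\modelsfn$ on the sub-obligations; for $(\ast\ast)$ one proceeds identically from $\modelsfn$ to $\modelsfw$, landing in the first disjunct of the weak clause. The only subtlety is that the weak clauses for the future operators $\until_I$ and $\guntil^c_I$ carry an additional disjunct (covering the case where the prefix ends before $\sup(I)$ has elapsed); this merely makes weak satisfaction easier to achieve and so never obstructs $(\ast\ast)$.

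I expect no difficulty beyond the negation case and the bookkeeping of the generalised-operator side conditions $\tau_k - \tau_i > c$ and $\tau_j - \tau_k > a - c$ (with $a = \inf(I)$), which are stated identically across the strong, neutral, and weak definitions and hence carry over unchanged when the sub-obligations are transferred. Thus the main obstacle is conceptual rather than computational: recognising that $(\ast)$ and $(\ast\ast)$ must be proved by a single joint induction, with the negation clause invoking the \emph{other} implication on the immediate subformula.
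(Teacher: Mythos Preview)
Your proof is correct and is exactly the standard argument for this kind of result. The paper itself states the proposition without proof, treating it as routine; your write-up supplies the expected simultaneous structural induction, with the correct observation that the negation clause forces the two implications to be carried together (since $\modelsfs$ on $\neg\psi$ is defined via $\modelsfw$ on $\psi$ and vice versa), and that the temporal cases are straightforward because the strong and neutral clauses are syntactically identical while the weak clause only adds an extra disjunct.
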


\paragraph{\emph{Informative prefixes}}
We say that $\rho$ is \emph{informative} for $\varphi$ if either
of the following holds:

\begin{itemize}
\item $\rho$ strongly satisfies $\varphi$, i.e.,~$\rho \modelsfs \varphi$. In this case we say that $\rho$ is an \emph{informative good prefix} for $\varphi$; or
\item $\rho$ fails to weakly satisfy $\varphi$, i.e.,~$\rho \centernot \modelsfw \varphi$. In this case we say that $\rho$ is an \emph{informative bad prefix} for $\varphi$.\footnote{Note that informative
good/bad prefixes are under-approximations of good/bad prefixes; see Section~\ref{sec:conclusion} for a discussion.}
\end{itemize}
The following proposition follows immediately from the definition of
informative prefixes. In words, negating (syntactically) a formula swaps its
set of informative good prefixes and informative bad prefixes.
\begin{prop}\label{prop:negxchggoodbad}
For an \emph{\mtlpastg{}} formula, a finite timed word $\rho$ is an informative good prefix for $\varphi$
if and only if $\rho$ is an informative bad prefix for $\neg \varphi$.
\end{prop}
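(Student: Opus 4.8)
The plan is to unfold the two notions directly from the definitions of the strong and weak truncated semantics, whose negation clauses are by design mutually dual, so no structural induction is needed. First I would recall the two definitions at stake: $\rho$ is an \emph{informative good prefix} for $\varphi$ precisely when $\rho \modelsfs \varphi$, that is $\rho, 0 \modelsfs \varphi$; and $\rho$ is an \emph{informative bad prefix} for $\neg \varphi$ precisely when $\rho \centernot \modelsfw \neg \varphi$, that is $\rho, 0 \centernot \modelsfw \neg \varphi$. The goal is thus to show $\rho, 0 \modelsfs \varphi \iff \rho, 0 \centernot \modelsfw \neg \varphi$.

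The crux is the single observation that the weak relation interprets negation in terms of the strong one: by the defining clause for $\modelsfw$ we have $\rho, i \modelsfw \neg \varphi$ iff $\rho, i \centernot \modelsfs \varphi$. Instantiating this at position $0$ gives $\rho, 0 \modelsfw \neg \varphi \iff \rho, 0 \centernot \modelsfs \varphi$, and negating both sides yields $\rho, 0 \centernot \modelsfw \neg \varphi \iff \rho, 0 \modelsfs \varphi$. Chaining these equivalences shows that $\rho$ is an informative bad prefix for $\neg \varphi$ if and only if $\rho \modelsfs \varphi$, i.e.\ if and only if $\rho$ is an informative good prefix for $\varphi$, which is exactly the statement. (The symmetric direction, that negating swaps informative good prefixes \emph{into} informative bad prefixes, is obtained by reading the same chain of equivalences in reverse, or equally by invoking the dual clause $\rho, i \modelsfs \neg \varphi$ iff $\rho, i \centernot \modelsfw \varphi$.)

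There is essentially no obstacle here: the result is a purely definitional consequence of the dual negation clauses of $\modelsfs$ and $\modelsfw$, and it makes no appeal to the semantics of the temporal modalities $\until_I$, $\since_I$, $\guntil_I^c$, $\gsince_I^c$ whatsoever. The only point worth flagging is that the argument uses the \emph{syntactic} negation $\neg \varphi$, so that the weak clause for negation is the one that applies at its outermost connective; since \mtlpastg{} formulas are closed under $\neg$ and the strong/weak negation clauses hold uniformly for every formula and every position, this is automatic. Consequently I would present the proof as a two-line unfolding rather than as an induction.
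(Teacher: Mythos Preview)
Your proof is correct and matches the paper's approach: the paper simply states that the proposition ``follows immediately from the definition of informative prefixes'' without giving an explicit argument, and your unfolding of the dual negation clauses of $\modelsfs$ and $\modelsfw$ is precisely that immediate justification.
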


     \begin{exa}\label{ex:pathologicallysafe} Consider the
     following formula over $\{P\}$: \[ \varphi = \eventually
     \globally (\neg P) \wedge \globally(P \implies
     \eventually_{< 3} P) \,.  \] We say that the finite timed word $\rho =
     (\{P\}, 0)(\{P\}, 2)(\emptyset, 5.5)$ is an informative bad
     prefix for $\varphi$ as the second conjunct has been `clearly' violated,
		i.e.,~there is a $P$-event with no $P$-event in the following three time units
     ($\rho \modelsfs \neg \varphi$, or equivalently $\rho \centernot \modelsfw \varphi$).
	  On the other hand, while $\rho' = (\{P\}, 0)(\{P\},
     2)(\{P\}, 4)$ is indeed a bad prefix for $\varphi$, it is not informative
	 as both the first and second conjuncts are not yet `clearly' violated
	($\rho' \centernot \modelsfs \neg \varphi$, or equivalently $\rho' \modelsfw \varphi$).
\end{exa}

     \begin{exa}\label{ex:intentionallysafe} Consider the
     following formula over $\{P\}$: \[ \varphi' = \globally (\neg P)
\wedge
\globally(P \implies \eventually_{<3} P) \,.  \]
 	 This formula is equivalent to the formula $\varphi$ in the
previous example. However, all the bad prefixes $\rho$ for $\varphi'$ are
informative ($\rho \modelsfs \neg \varphi$, or equivalently $\rho \centernot \modelsfw \varphi$).
\end{exa}

\subsection{Offline trace-checking algorithm}

\emph{Trace checking} can be seen as a much more restricted case of
model checking where one is only concerned with a single finite trace.
Formally, the trace-checking problem for \mtlpastg{} asks the following:
given a finite trace $\rho$ and an \emph{\mtlpastg{}} formula $\varphi$, is $\rho \modelsfn \varphi$?  An offline algorithm
for the problem is shown as
Algorithms~\ref{alg:pathcheckinguntil}~and~\ref{alg:pathcheckingguntil}.
For given $\rho$ and $\varphi$, the algorithm maintains a two-dimensional Boolean array \texttt{table} of $|\psi|$ rows and $|\rho|$ columns.
Each row is used to store the truth values of a subformula at all positions.
The algorithm proceeds by filling up the array \texttt{table} in a bottom-up manner, starting from minimal subformulas.
We only detail the cases for $\varphi_1 \until_I \varphi_2$ and $\varphi_1 \guntil_I^c \varphi_2$
as other cases are either symmetric or trivial.
In what follows, we write $x \leq I$ for $x < \sup(I)$ if $I$ is right-open and for $x \leq \sup(I)$ otherwise.
To ease the presentation we omit the array-bounds checks, e.g., the algorithm
should stop when $\mathit{ptr}$ ($\mathit{ptr1}$) reaches $-1$.

\begin{algorithm}[ht]
\caption{$\textsc{FillTable}(\texttt{table}, \varphi_1 \until_I \varphi_2)$}%
\label{alg:pathcheckinguntil}
\begin{algorithmic}[1]
	\State $\mathit{ptr} \gets |\rho| - 1$
	\For{$j = |\rho| - 1$ to $0$}

		\If{$\mathit{ptr} = j$}
			\State $\texttt{table}[\varphi_1 \until_I \varphi_2][\mathit{ptr}] \gets \bot$
			\State $\mathit{ptr} \gets \mathit{ptr} - 1$
		\EndIf

		\If{$\texttt{table}[\varphi_2][j]$}
			\If{$\mathit{ptr} = j - 1$}
				\State $\texttt{table}[\varphi_1 \until_I \varphi_2][\mathit{ptr}] \gets (\tau_j - \tau_{\mathit{ptr}} \in I)$
				\State $\mathit{ptr} \gets \mathit{ptr} - 1$
			\EndIf
			\While{$\texttt{table}[\varphi_1][\mathit{ptr} + 1] \wedge \tau_j - \tau_{\mathit{ptr}} \leq I$}
				\If{$\tau_j - \tau_{\mathit{ptr}} \in I$}
					\State $\texttt{table}[\varphi_1 \until_I \varphi_2][\mathit{ptr}] \gets \top$
				\Else
					\State $\texttt{table}[\varphi_1 \until_I \varphi_2][\mathit{ptr}] \gets \bot$
				\EndIf
				\State $\mathit{ptr} \gets \mathit{ptr} - 1$
			\EndWhile
		\EndIf

	\EndFor
\end{algorithmic}
\end{algorithm}

\begin{prop}
After executing $\textsc{FillTable}(\emph{\texttt{table}}, \varphi_1 \until_I \varphi_2)$, we have
\[
\emph{\texttt{table}}[\varphi_1 \until_I \varphi_2][i] \iff \rho, i \modelsfn \varphi_1 \until_I \varphi_2
\]
for all $0 \leq i < |\rho|$ if $\emph{\texttt{table}}[\varphi_1]$ and $\emph{\texttt{table}}[\varphi_2]$
were both correct.
\end{prop}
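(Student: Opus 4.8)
The plan is to verify the algorithm through a single loop invariant on the outer \textbf{for}-loop, after first observing that the pointer $\mathit{ptr}$ only ever decreases and that each entry $\texttt{table}[\varphi_1 \until_I \varphi_2][i]$ is written exactly once---at the unique moment when $\mathit{ptr} = i$, immediately before $\mathit{ptr}$ is decremented. It therefore suffices to show that the value written at each of the three assignment sites is correct. Throughout, write $U = \varphi_1 \until_I \varphi_2$, call a position $j'$ with $\rho, j' \modelsfn \varphi_2$ a \emph{witness}, and say that $j'$ \emph{serves} a position $i$ if $i < j'$, $\tau_{j'} - \tau_i \in I$, and $\rho, k \modelsfn \varphi_1$ for all $k$ with $i < k < j'$; by definition $\rho, i \modelsfn U$ holds iff some witness serves $i$. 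Since $\texttt{table}[\varphi_1]$ and $\texttt{table}[\varphi_2]$ are assumed correct, both conditions can be read off the table.

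I would maintain the following invariant, stated at the start of the outer iteration with index $j$ and with $p$ denoting the current value of $\mathit{ptr}$: \emph{(i)} $p \le j$; \emph{(ii)} for every position $i > p$, $\texttt{table}[U][i]$ holds iff $\rho, i \modelsfn U$; and \emph{(iii)} no witness $j' > j$ serves any position $i \le p$. Initialisation is vacuous, since $p = |\rho|-1 = j$. For maintenance I would inspect the three writes in turn. The site guarded by $\mathit{ptr} = j$ fires only when $p = j$, and then \emph{(iii)} (with $i = j$), together with the fact that any witness serving $j$ must lie strictly above $j$, gives $\rho, j \centernot \modelsfn U$, so writing $\bot$ is correct. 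The site guarded by $\mathit{ptr} = j-1$ fires only when $\varphi_2$ holds at $j$; here the only possible witness for $j-1$ is $j$ itself (higher ones are excluded by \emph{(iii)}, and no lower position can lie strictly above $j-1$), and $j$ serves $j-1$ exactly when $\tau_j - \tau_{j-1} \in I$, which is the value written. In the \textbf{while} loop the guard $\texttt{table}[\varphi_1][\mathit{ptr}+1]$ inductively guarantees that $\varphi_1$ holds throughout $(\mathit{ptr}, j)$ and the time guard keeps $\tau_j - \tau_{\mathit{ptr}} \le \sup(I)$, so the written value $\tau_j - \tau_{\mathit{ptr}} \in I$ holds precisely when $j$ serves $\mathit{ptr}$.

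The main obstacle is to argue that, at every \textbf{while}-loop write, $j$ serving the position is equivalent to \emph{some} witness serving it---that the current witness is the right one to consult. Higher witnesses are dispatched by \emph{(iii)}. The delicate case is a lower witness $j^-$ with $\mathit{ptr} < j^- < j$ that serves $\mathit{ptr}$ while $j$ does not: then $\inf(I) \le \tau_{j^-} - \tau_{\mathit{ptr}} < \tau_j - \tau_{\mathit{ptr}} \le \sup(I)$, and convexity of $I$ forces $\tau_j - \tau_{\mathit{ptr}} \in I$, the sole exception being $\tau_j - \tau_{\mathit{ptr}} = \sup(I)$ with $I$ right-open---which is exactly the value the guard $\mathord{\le} I$ excludes, so the loop would not have processed $\mathit{ptr}$ at all. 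Hence no such $j^-$ exists and the write is correct. To re-establish \emph{(iii)} for index $j-1$ I would observe that, when the loop halts at some $q$, it does so either because $\varphi_1$ fails at $q+1 < j$ or because $\tau_j - \tau_q > \sup(I)$; in both cases $j$ serves no position $\le q$, which is what \emph{(iii)} now demands of the freshly processed witness $j$ (lower indices are untouched, and higher witnesses are covered by the previous instance of \emph{(iii)} since the pointer only descended). Finally, because \emph{(i)} forces $\mathit{ptr}$ strictly below $j$ at the end of each iteration and the pointer is written-then-decremented one step at a time, it reaches $-1$ after the iteration $j = 0$; by \emph{(ii)} every entry has then been filled with its correct neutral truth value.
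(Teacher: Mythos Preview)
Your loop-invariant approach is more systematic than the paper's proof, which argues the two implications directly and rather tersely. The invariants (i)--(iii) are well chosen, and your treatment of the ``lower witness'' case via convexity of $I$ and the definition of $\le I$ is cleaner than what the paper offers. However, there is one genuine gap.

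In the \textbf{while} loop you claim that the guard $\texttt{table}[\varphi_1][\mathit{ptr}+1]$ ``inductively guarantees that $\varphi_1$ holds throughout $(\mathit{ptr},j)$''. This induction works only if the loop was entered with $\mathit{ptr}=j-2$, i.e., via the inner \textbf{if}. But it can happen that at the start of iteration $j$ we already have $p\le j-2$ (because a previous iteration $j''>j$ pushed the pointer down and then halted on the \emph{timing} condition, not on $\varphi_1$). In that case the inner \textbf{if} is skipped and the loop is entered at $\mathit{ptr}=p$; the guard then only tells you $\varphi_1$ holds at $p+1$, and subsequent iterations only add $p,p-1,\ldots$---nothing establishes $\varphi_1$ at positions $p+2,\ldots,j-1$. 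Without this, you cannot conclude that $j$ serves $\mathit{ptr}$ when you write $\top$, so the forward direction of correctness is unjustified at that site.

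The fix is easy and fits your framework: add a fourth clause to the invariant, namely that $\varphi_1$ holds at every position $k$ with $p+1<k<j$. This is vacuous initially; it is preserved when nothing happens (the range only shrinks as $j$ decreases); and when the \textbf{while} loop runs it extends the known range down to the new pointer via the guard checks, exactly as needed. Alternatively you can argue, as the paper implicitly does, that the step $\mathit{ptr}:k\to k-1$ for any $k$ with $k<j$ can only have occurred in a \textbf{while} loop (the other two sites would force the then-current outer index $\le k<j$, impossible since the index only decreases), and that loop's guard certified $\varphi_1$ at $k+1$; hence no $\varphi_1$-failure can lie strictly between $\mathit{ptr}+1$ and $j$. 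Either patch closes the gap.
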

\begin{proof}
Suppose that $\texttt{table}[\varphi_1 \until_I \varphi_2][i] = \top$.
Since each entry in $\texttt{table}[\varphi_1 \until_I \varphi_2]$ is filled exactly once,
it must be filled at either line $8$ or line $12$. In the former case it is clear that
$\rho, i \modelsfn \varphi_1 \until_I \varphi_2$.
In the latter case we must have
$\mathit{ptr} \leq j - 2$. If $\mathit{ptr} = j - 2$ then we are done, so we assume $\mathit{ptr} < j - 2$.
If there is a maximal position $\mathit{ptr}'$, $\mathit{ptr} + 1 < \mathit{ptr}' < j$ such that
$\texttt{table}[\varphi_1][\mathit{ptr}'] = \bot$, we must have $\mathit{ptr} + 1 = \mathit{ptr}'$, which
is a contradiction. We therefore conclude that $\rho, i \modelsfn \varphi_1 \until_I \varphi_2$.

For the other direction, assume $\rho, i \modelsfn \varphi_1 \until_I \varphi_2$ and let $j' > i$ be the witness
position, i.e.,~$\tau_{j'} - \tau_i \in I$, $\texttt{table}[\varphi_2][j'] = \top$ and $\texttt{table}[\varphi_1][j''] = \top$
for all $j''$, $i < j'' < j'$. Now consider $j = j'$. If $\mathit{ptr} \geq i$ then we are done.
So we assume $\mathit{ptr} < i$. If we already have $\texttt{table}[\varphi_1 \until_I \varphi_2][i] = \bot$,
then it must be the case that $\tau_{j'} - \tau_i \notin I$, which is a contradiction.
Therefore we must have $\texttt{table}[\varphi_1 \until_I \varphi_2][i] = \top$.
\end{proof}

\begin{algorithm}[ht]
\caption{$\textsc{FillTable}(\texttt{table}, \varphi_1 \guntil_I^c \varphi_2)$}%
\label{alg:pathcheckingguntil}
\begin{algorithmic}[1]
	\State $\mathit{ptr1}$, $\mathit{ptr2} \gets |\rho| - 1$
	\For{$j = |\rho| - 1$ to $0$}

		\While{$\tau_j - \tau_{\mathit{ptr2}} \leq \inf(I) - c \vee \texttt{table}[\varphi_1][\mathit{ptr2}]$}
			\State $\mathit{ptr2} \gets \mathit{ptr2} - 1$
		\EndWhile

		\If{$\mathit{ptr1} = j$}
			\State $\texttt{table}[\varphi_1 \guntil_I^c \varphi_2][\mathit{ptr1}] \gets \bot$
			\State $\mathit{ptr1} \gets \mathit{ptr1} - 1$
		\EndIf

		\If{$\texttt{table}[\varphi_2][j]$}
			\If{$\mathit{ptr1} = j - 1$}
				\State $\texttt{table}[\varphi_1 \guntil_I^c \varphi_2][\mathit{ptr1}] \gets (\tau_j - \tau_{\mathit{ptr1}} \in I)$
				\State $\mathit{ptr1} \gets \mathit{ptr1} - 1$
			\EndIf
			\While{$\tau_j - \tau_{\mathit{ptr1}} \leq I \wedge \tau_{\mathit{ptr2}} - \tau_{\mathit{ptr1}} \leq c$}
				\If{$\tau_j - \tau_{\mathit{ptr1}} \in I$}
					\State $\texttt{table}[\varphi_1 \guntil_I^c \varphi_2][\mathit{ptr1}] \gets \top$
				\Else
					\State $\texttt{table}[\varphi_1 \guntil_I^c \varphi_2][\mathit{ptr1}] \gets \bot$
				\EndIf
				\State $\mathit{ptr1} \gets \mathit{ptr1} - 1$
			\EndWhile
		\EndIf

	\EndFor
\end{algorithmic}
\end{algorithm}

\begin{prop}
After executing $\textsc{FillTable}(\emph{\texttt{table}}, \varphi_1 \guntil_I^c \varphi_2)$, we have
\[
\emph{\texttt{table}}[\varphi_1 \guntil_I^c \varphi_2][i] \iff \rho, i \modelsfn \varphi_1 \guntil_I^c \varphi_2
\]
for all $0 \leq i < |\rho|$ if $\emph{\texttt{table}}[\varphi_1]$ and $\emph{\texttt{table}}[\varphi_2]$
were both correct.
\end{prop}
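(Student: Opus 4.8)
The plan is to follow the correctness argument already supplied for $\textsc{FillTable}(\texttt{table}, \varphi_1 \until_I \varphi_2)$, adapting it to the two pointers and to the \emph{punctured} shape of the window in which $\varphi_1$ must hold. Write $a = \inf(I)$, and for positions $i < j$ let $W(i,j)$ be the set of $k$ with $i < k < j$, $\tau_k - \tau_i > c$ and $\tau_j - \tau_k > a - c$; by definition $\rho, i \modelsfn \varphi_1 \guntil_I^c \varphi_2$ iff some $\varphi_2$-position $j > i$ has $\tau_j - \tau_i \in I$ and $\rho,k\modelsfn\varphi_1$ for all $k \in W(i,j)$. As in the $\until$ case I will use that $\mathit{ptr1}$ is non-increasing and that each entry of the output row is written exactly once, so it suffices to prove the two implications separately. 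The governing invariant I would establish first is that, after the first \textbf{while} loop of iteration $j$, $\mathit{ptr2}$ is the greatest index $p$ (or $-1$) with $\tau_j - \tau_p > a-c$ and $\rho, p \centernot\modelsfn \varphi_1$ --- the latest $\varphi_1$-violation strictly below the upper cutoff $\tau_j - (a-c)$. The point is that this quantity is monotonically non-increasing as $j$ decreases (lowering $j$ lowers the cutoff, shrinking the set of qualifying $p$), which is exactly why a pointer that is only ever decremented can track it across all iterations; the guard $\tau_j - \tau_{\mathit{ptr2}} \le a-c \vee \texttt{table}[\varphi_1][\mathit{ptr2}]$ skips precisely the positions above the cutoff and those satisfying $\varphi_1$, so every position strictly between $\mathit{ptr2}$ and the cutoff satisfies $\varphi_1$.

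For the forward direction, suppose $\texttt{table}[\varphi_1 \guntil_I^c \varphi_2][i] = \top$. It was set either at the $\mathit{ptr1} = j-1$ special case --- where the witness is the immediate successor $j = i+1$, so $W(i,j) = \emptyset$ and only $\rho, j \modelsfn \varphi_2$ with $\tau_j - \tau_i \in I$ is required --- or inside the inner \textbf{while}, where the guard yields $\tau_{\mathit{ptr2}} - \tau_i \le c$ and $\tau_j - \tau_i \le \sup(I)$ and the test yields $\tau_j - \tau_i \in I$. In the latter case any $k \in W(i,j)$ satisfies $\tau_k > \tau_i + c \ge \tau_{\mathit{ptr2}}$, hence $k > \mathit{ptr2}$ by strict monotonicity, and $\tau_k < \tau_j - (a-c)$, so $k$ lies below the cutoff; by the $\mathit{ptr2}$ invariant $\varphi_1$ holds at every such $k$, and since $\rho, j \modelsfn \varphi_2$ and $\tau_j - \tau_i \in I$ we conclude $\rho, i \modelsfn \varphi_1 \guntil_I^c \varphi_2$.

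For the backward direction I would take a valid witness $j'$ for $i$ and examine iteration $j = j'$. If $\mathit{ptr1} \ge i$ at its start, then because $\varphi_1$ holds throughout $W(i,j')$ the violation tracked by $\mathit{ptr2}$ must lie at $\tau \le \tau_i + c$, so $\tau_{\mathit{ptr2}} - \tau_{\mathit{ptr1}} \le \tau_{\mathit{ptr2}} - \tau_i \le c$ and $\tau_{j'} - \tau_{\mathit{ptr1}} \le \tau_{j'} - \tau_i \le \sup(I)$ hold for every $\mathit{ptr1} \in \{i, \ldots\}$; hence the loop descends to $i$ and writes $\top$ using $\tau_{j'} - \tau_i \in I$. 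If instead $\mathit{ptr1} < i$, then $i$ was written at some earlier iteration $j'' > j'$, and I would rule out its being written $\bot$: a $\bot$ from the inner loop forces $\tau_{j''} - \tau_i \le a$, while the $\mathit{ptr1} = j''-1$ special case forces $i = j''-1$, and both contradict the existence of $j'$ with $i < j' < j''$ and $\tau_{j'} - \tau_i \ge a$ (since strict monotonicity gives $\tau_{j''} - \tau_i > \tau_{j'} - \tau_i$). Thus $i$ was written $\top$, completing the implication; together with the forward direction and the fact that every entry is written exactly once, this gives the stated equivalence.

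I expect the main obstacle to be the $\mathit{ptr2}$ invariant and its interaction with the lower cutoff. One must argue carefully that \emph{reusing} $\mathit{ptr2}$ across iterations (rather than recomputing it) still yields the latest $\varphi_1$-violation below each successive, decreasing cutoff, and that the single test $\tau_{\mathit{ptr2}} - \tau_{\mathit{ptr1}} \le c$ is enough to certify that the \emph{entire} punctured window $W(i,j)$ is violation-free. This is precisely the step where the $c$-offset distinguishes $\guntil_I^c$ from ordinary $\until_I$ and is the reason the second pointer exists; once it is in place, the bookkeeping that each entry is written exactly once and the use of the $j$-decreasing order in the ``already written'' subcase are routine adaptations of the $\until_I$ proof.
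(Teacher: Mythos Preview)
Your proposal is correct and follows exactly the approach the paper takes: the paper's own proof consists of nothing more than stating the $\mathit{ptr2}$ invariant (``after line~5, $\mathit{ptr2}$ is equal to the maximal position such that $\tau_j - \tau_{\mathit{ptr2}} > \inf(I) - c$ and $\texttt{table}[\varphi_1][\mathit{ptr2}] = \bot$'') and then deferring to the $\until_I$ argument. Your write-up is simply a careful unfolding of that deferral, with the two-case backward direction and the forward direction handled just as in the paper's $\until_I$ proof; the only minor omission is the line-7 case ($\mathit{ptr1} = j''$) in the ``already written'' subcase, which is immediately ruled out by $j'' > j' > i$.
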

\begin{proof}
Observe that after line $5$, $\mathit{ptr2}$ is equal to the maximal position
such that $\tau_j - \tau_{\mathit{ptr2}} > \inf(I) - c$ and $\texttt{table}[\varphi_1][\mathit{ptr2}] = \bot$.
The proof is very similar to the case of $\varphi_1 \until_I \varphi_2$.
\end{proof}

\subsection{Monitoring procedure}\label{subsec:monitoring}

Conceptually, we can regard a monitor as a \emph{deterministic} automaton
over finite traces.
The monitoring process, then, can be carried out by simply moving a token as directed by
the prefix.
However, it is well-known that in a dense real-time setting, such a monitor
(say, which accepts all the bad prefixes for $\varphi$)
needs an unbounded number of clocks and therefore cannot be realised
in practice~\cite{Alur1992, Maler2005, Reynolds2014}.
For this reason, we shall from now on assume that all input traces have
variability at most $k_{\mathit{var}}$, i.e.,~there are
at most $k_{\mathit{var}}$ events in any (open) unit time interval.
Based on this assumption, we give a monitoring procedure
for \mtlpastg{} formulas of the form
\[
\hat{\varphi} = \Phi(\psi_1, \ldots, \psi_m)
\]
where $\psi_1$, $\ldots$, $\psi_m$ are \emph{bounded} \mtlpastg{} formulas and $\Phi$ is an
\ltlpast{} formula.
The main idea is similar to the one used in the previous section:
we introduce new propositions $\mathbf{Q} = \{Q_1, \ldots, Q_m\}$
that correspond to $\psi_1$, $\ldots$, $\psi_m$.
In this way, we can monitor $\Phi$ as an \ltlpast{} property over $\mathbf{Q}$.\footnote{A similar idea is used in~\cite{Finkbeiner2009} to synthesise smaller monitor circuits for \ltl{} formulas.}
Since these propositions correspond to bounded formulas,
their truth values can be obtained by running the trace-checking algorithm on subtraces:
as the input trace has variability at most $k_{\mathit{var}}$,
we only need to store a `sliding window' of a certain size.

\paragraph{\emph{The untimed \ltlpast{} part}}

We recall briefly the standard methodology to construct finite automata that accept
exactly the informative good/bad prefixes for a given \ltl{} formula~\cite{Kupferman2001a}.
Given such a formula $\Psi$, first use a standard construction~\cite{Vardi1996} to
obtain a language-equivalent alternating B\"uchi automaton
$\mathcal{A}_\Psi$. Then redefine its accepting set to be
the empty set and treat it as an automaton over finite words; the
resulting automaton $\mathcal{A}_\Psi^\mathit{true}$ accepts exactly all
informative good prefixes for $\Psi$. In particular, one can determinise
$\mathcal{A}_\Psi^\mathit{true}$ with the usual subset
construction. The same can be done for $\neg \Psi$ to obtain a
deterministic automaton that accepts exactly the informative bad prefixes for $\Psi$.

In our case, we first translate the \ltlpast{} formulas $\Phi$ and $\neg
\Phi$ into a pair of \emph{two-way} alternating B\"uchi automata~\cite{Gastin2003}.
With the same modifications, we obtain two automata
that accept informative good prefixes and informative bad prefixes for $\Phi$.
We then apply existing procedures that translate two-way alternating automata over
finite words into deterministic automata (e.g.,~\cite{Birget1993}) and obtain
$\mathcal{D}_{\mathit{good}}$ and $\mathcal{D}_\mathit{bad}$, respectively.
To detect both types of prefixes simultaneously,
we will execute $\mathcal{D}_\mathit{good}$ and $\mathcal{D}_\mathit{bad}$ in parallel.

\begin{prop}
For an \emph{\mtlpastg{}} formula $\hat{\varphi}$ of the form described above, the automata $\mathcal{D}_\mathit{good}$
and $\mathcal{D}_\mathit{bad}$ are of size $2^{2^{O(|\Phi|)}}$ where $\Phi$ is
the `backbone' \emph{\ltlpast{}} formula.
\end{prop}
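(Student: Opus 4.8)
The plan is to follow the pipeline $\Phi \mapsto \mathcal{A}_\Phi \mapsto \mathcal{A}_\Phi^{\mathit{true}} \mapsto \mathcal{D}_{\mathit{good}}$ described just above the proposition and to bound the state count contributed by each stage, showing that the whole construction costs at most a double exponential in $|\Phi|$. I would treat $\mathcal{D}_{\mathit{good}}$ in detail; the bound for $\mathcal{D}_{\mathit{bad}}$ follows identically, since it is obtained from $\neg \Phi$ and $|\neg \Phi| = O(|\Phi|)$.

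First I would recall that the standard translation of an \ltlpast{} formula into a two-way alternating B\"uchi automaton~\cite{Gastin2003, Vardi1996} yields an automaton $\mathcal{A}_\Phi$ whose states can be taken to be (roughly) the subformulas of $\Phi$, so that $\mathcal{A}_\Phi$ has $n = O(|\Phi|)$ states; crucially, the two-way tape head lets the past modalities of $\Phi$ be handled without any additional states. Redefining the accepting set to be empty and reading $\mathcal{A}_\Phi$ over finite words---the truncation of~\cite{Kupferman2001a}, which carries over verbatim to the two-way setting---produces a two-way alternating automaton over finite words $\mathcal{A}_\Phi^{\mathit{true}}$ that accepts exactly the informative good prefixes for $\Phi$ and still has $n = O(|\Phi|)$ states.

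The heart of the argument is the determinization $\mathcal{A}_\Phi^{\mathit{true}} \mapsto \mathcal{D}_{\mathit{good}}$, for which I would invoke the conversion of two-way alternating finite automata into deterministic finite automata~\cite{Birget1993}. The determinized automaton tracks, after reading a prefix, a set of \emph{crossing profiles} (behaviour tables) over the $n$ states of $\mathcal{A}_\Phi^{\mathit{true}}$; each profile is a relation on the state set and hence an object of size $2^{O(n)}$, and a set of such profiles gives $2^{2^{O(n)}}$ possible states. Substituting $n = O(|\Phi|)$ then yields $|\mathcal{D}_{\mathit{good}}| \le 2^{2^{O(|\Phi|)}}$.

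The hard part will be ensuring that the bound does not degrade to a \emph{triple} exponential: alternation and two-wayness are two distinct sources of blow-up, and the na\"ive route of first removing two-wayness to obtain a one-way alternating automaton (exponential) and then determinizing (a further double exponential) would overshoot. The point to verify is that both effects can be absorbed into a single profile-based subset construction, so that two-wayness contributes only to the \emph{inner} exponent while alternation supplies the outer powerset. I would finally note that the alphabet $2^{\mathbf{Q}}$ has size at most $2^{|\Phi|}$ and therefore affects only the transition function, not the number of states, and that running $\mathcal{D}_{\mathit{good}}$ and $\mathcal{D}_{\mathit{bad}}$ in parallel merely multiplies the two state counts and so stays within $2^{2^{O(|\Phi|)}}$.
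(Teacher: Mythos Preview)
Your proposal is correct and follows exactly the approach the paper relies on; indeed, the paper does not give an explicit proof of this proposition at all, but simply states it as a consequence of the pipeline described in the preceding paragraph (the translation of~\cite{Gastin2003} followed by the determinisation of~\cite{Birget1993}). Your write-up is therefore more detailed than the paper itself, and your attention to the triple-exponential pitfall is well placed---that is precisely the point one must check when combining two-wayness with alternation.
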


\paragraph{\emph{Na\"{\i}ve evaluation of the bounded metric parts}}

In what follows, let $l_{\mathit{fr}}(\psi) = k_{\mathit{var}} \cdot \lceil \mathit{fr}(\psi) \rceil$
and $l_{\mathit{pr}}(\psi) = k_{\mathit{var}} \cdot \lceil \mathit{pr}(\psi) \rceil$
(the functions $\mathit{fr}$ and $\mathit{pr}$ are defined in Section~\ref{sec:unboundedexpcomp}).
Suppose that we want to obtain the truth value of $\psi_i$ at
position $j$ in the input trace $\rho = (\sigma, \tau)$. Since $\psi_i$ is bounded, only the events occurring between $\tau_j -
\mathit{pr}(\psi_i)$ and $\tau_j + \mathit{fr}(\psi_i)$ can affect the truth value of
$\psi_i$ at $j$. This implies that $\rho, j \models \psi_i
\iff \rho', j \modelsfn \psi_i$ where $\rho'$ is a
prefix of $\rho$ that contains all the events between $\tau_j -
\mathit{pr}(\psi_i)$ and $\tau_j + \mathit{fr}(\psi_i)$ in $\rho$. Since $\rho$ is of bounded
variability $k_{\mathit{var}}$, there can be at most $l_\mathit{pr}(\psi_i) + 1 + l_\mathit{fr}(\psi_i)$
events between $\tau_j - pr(\psi_i)$ and $\tau_j + fr(\psi_i)$. It
follows that we can simply `record' all events in this interval
with a two-dimensional array of $l_\mathit{pr}(\psi_i) + 1 + l_\mathit{fr}(\psi_i)$ columns and
$1 + |\psi_i|$ rows: the first row is used to store the timestamps whereas the other rows are used to store the truth values.
Intuitively, the two-dimensional array acts as a sliding window around position $j$ in $\rho$.
Now consider all the propositions in $\mathbf{Q}$: their truth values
at position $j$ can be evaluated using a two-dimensional array
of $l_\mathit{pr}^\mathbf{Q} + 1 + l_\mathit{fr}^\mathbf{Q}$ columns and $1 + |\psi_1| + \cdots + |\psi_m|$ rows where $l_\mathit{pr}^\mathbf{Q} =
\displaystyle{\max_{1 \leq i \leq m} {l_\mathit{pr}(\psi_i)}}$ and $l_\mathit{fr}^\mathbf{Q} = \displaystyle{\max_{1 \leq i \leq m}
{l_\mathit{fr}(\psi_i)}}$. Each row can be filled in
time $O(l_\mathit{pr}^\mathbf{Q} + 1 + l_\mathit{fr}^\mathbf{Q})$ with the trace-checking algorithm.
Overall, we need a two-dimensional array of size $O(k_{\mathit{var}} \cdot c_{\mathit{sum}} \cdot |\hat{\varphi}|)$ where $c_{\mathit{sum}}$ is the sum of the constants in
$\hat{\varphi}$; for each position $j$, we need time $O(k_{\mathit{var}} \cdot
c_{\mathit{sum}} \cdot |\hat{\varphi}|)$ to obtain the truth values of all
propositions in $\mathbf{Q}$, which are then used as input to $\mathcal{D}_\mathit{good}$ and $\mathcal{D}_\mathit{bad}$.

\paragraph{\emph{Incremental evaluation of the bounded metric parts}}
While the procedure above uses only bounded space, it is clearly inefficient as for each $j$
we have to fill the whole two-dimensional array from scratch.
This is because some of the filled entries (other than those for position $j$)
may depend on the events outside of the sliding window, and thus can be incorrect.
We now describe an optimisation which enables the reuse of
previously filled entries.

We first deal with the simpler case of past subformulas.
Observe that as the trace-checking algorithm is filling a row for
$\varphi_1 \since_I \varphi_2$ or $\varphi_1 \gsince_I^c \varphi_2$,
the variables $\mathit{ptr}$, $\mathit{ptr1}$ and $\mathit{ptr2}$ all increases \emph{monotonically}.
This implies that for past subformulas, the trace-checking algorithm
can be used in an online manner: simply suspend the algorithm
when we have filled all entries using the truth values of $\varphi_1$ and $\varphi_2$ (at
various positions) that are currently known, and resume the algorithm when the
truth values of $\varphi_1$ and $\varphi_2$ (at some other positions)
that are previously unknown become available.

The case of future subformulas is more involved.
Suppose that we want to evaluate the truth value of a subformula $P_1 \until_{(a, b)} P_2$ at position $j$
in the input trace $\rho = (\sigma, \tau)$.
It is clear that the value may depend on future events
if $\tau_j + b$ is greater than the timestamp of the last acquired event.
However, observe that even when this is the case, we may still do the evaluation if any of the following holds:
\begin{itemize}
\item $P_1$ fails to hold at some position $j'$ such that
$\tau_{j'}$ is less or equal than the timestamp of the last acquired event.
In this case, we know that all the truth values of $P_1 \until_{(a, b)} P_2$ at positions $< j'$
cannot depend on the events at positions $> j'$.
\item $P_2$ holds at some position $j' > j$ and $P_1$ holds at all positions $j''$, $j < j'' < j'$.
In this case, the truth values of $P_1 \until_{(a, b)} P_2$ at positions $k < j'$
where $\tau_{j'} - \tau_{k} \in (a, b)$ are $\top$ and
do not depend on the events at positions $> j'$.
\end{itemize}

\noindent
We generalise this observation to handle the general case of
updating the row for $\varphi_1 \until_{(a, b)} \varphi_2$.
First of all, we maintain indices $j_{\varphi_1}$, $j_{\varphi_2}$, $j_{\varphi_1 \until_I \varphi_2}$
which point to the first unknown entries in the rows for
$\varphi_1$, $\varphi_2$ and $\varphi_1 \until_I \varphi_2$.
Let $t_{\maxit} = \min \{ \tau_{(j_{\varphi_1} - 1)}, \tau_{(j_{\varphi_2} - 1)} \}$ and update
its value when either $j_{\varphi_1}$ or $j_{\varphi_2}$ changes.
Whenever $t_{\maxit}$ is updated to a new value,
we also update the following indices:
\begin{itemize}
\item $j_1$ is the maximal position such that $\tau_{j_1} + b \leq t_\maxit$
\item $j_2$ is the maximal position such that $\tau_{j_2} \leq t_\maxit$ and $\varphi_2$ holds at $j_2$
\item $j_3$ is the maximal position such that $\tau_{j_3} + a < \tau_{j_2}$
\item $j_4$ is the maximal position such that $\tau_{j_4} \leq t_\maxit$ and $\varphi_1$ does not hold at $j_4$.
\end{itemize}
Now, after both the rows for $\varphi_1$ and $\varphi_2$ have been updated, if any of $j_1$, $j_3$, $j_4 - 1$ is greater or equal than $j_{\varphi_1 \until_I \varphi_2}$,
we let $j_5 = \max\{j_1, j_3, j_4 - 1\}$ and start Algorithm~\ref{alg:pathcheckinguntil}
from line $3$ with $\mathit{ptr} = j_5$ and $j = j_2$. We run the algorithm
till all the entries at positions $\leq j_5$ in the row for $\varphi_1 \until_I \varphi_2$
have been filled.
The crucial observation here is that $j_1$, $j_2$, $j_3$, $j_4$ all increase monotonically,
and therefore can be maintained in amortised linear time.
Also, the truth value of any subformula at any position will be filled only once.
The case of $\varphi_1 \guntil_{(a, b)}^c \varphi_2$ is similar (but slightly more involved).
These observations imply that each entry in the two-dimensional array can be filled in
amortised constant time.
Assuming that moving a token on a deterministic finite automaton takes constant time,
we can state the following theorem.

\begin{thm}
For an \emph{\mtlpastg{}} formula $\hat{\varphi}$ of the form described earlier and an infinite
trace of variability $k_\mathit{var}$, our monitoring procedure
uses two DFAs of size $2^{2^{O(|\Phi|)}}$, a two-dimensional array of size $O(k_\mathit{var} \cdot c_\mathit{sum} \cdot
|\hat{\varphi}|)$ where $c_\mathit{sum}$ is the sum of the constants in
$\hat{\varphi}$, and amortised time $O(|\hat{\varphi}|)$ per event.
\end{thm}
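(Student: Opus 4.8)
The plan is to assemble the claimed resource bound by accounting separately for the two ingredients of the monitor---the deterministic automata $\mathcal{D}_\mathit{good}$, $\mathcal{D}_\mathit{bad}$ obtained from the backbone \ltlpast{} formula $\Phi$, and the sliding-window evaluation of the bounded metric subformulas $\psi_1, \ldots, \psi_m$---and then to combine them. The size $2^{2^{O(|\Phi|)}}$ of the two DFAs is exactly the bound already established in the preceding proposition (a two-way alternating B\"uchi automaton built from $\Phi$ and from $\neg\Phi$, redefined as a finite-word automaton accepting informative good/bad prefixes, then determinised); so for this part I would simply invoke that result. Hence the automaton component contributes the claimed doubly-exponential space but, since moving a token on a DFA is assumed to cost constant time, contributes only $O(1)$ to the per-event running time.

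For the space occupied by the metric part, I would recall from the na\"{\i}ve-evaluation discussion that the truth value of $\psi_i$ at position $j$ depends only on the events in $[\tau_j - \mathit{pr}(\psi_i),\, \tau_j + \mathit{fr}(\psi_i)]$, of which, under variability $k_{\mathit{var}}$, there are at most $l_{\mathit{pr}}(\psi_i) + 1 + l_{\mathit{fr}}(\psi_i)$. Taking maxima over $i$ for the column count and summing the $1 + |\psi_1| + \cdots + |\psi_m|$ rows, the single two-dimensional array has $O(l_{\mathit{pr}}^{\mathbf{Q}} + 1 + l_{\mathit{fr}}^{\mathbf{Q}})$ columns and $O(|\hat{\varphi}|)$ rows; since $l_{\mathit{pr}}^{\mathbf{Q}}, l_{\mathit{fr}}^{\mathbf{Q}} = O(k_{\mathit{var}} \cdot c_{\mathit{sum}})$, this yields the stated space bound $O(k_{\mathit{var}} \cdot c_{\mathit{sum}} \cdot |\hat{\varphi}|)$, crucially independent of the trace length.

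The heart of the argument, and the main obstacle, is the amortised $O(|\hat{\varphi}|)$ time per event. Here I would formalise the incremental-evaluation observations. For past subformulas, the pointers $\mathit{ptr}$, $\mathit{ptr1}$, $\mathit{ptr2}$ in Algorithms~\ref{alg:pathcheckinguntil} and~\ref{alg:pathcheckingguntil} increase monotonically, so the trace-checker can simply be suspended and resumed as new truth values of $\varphi_1, \varphi_2$ become available. For future subformulas $\varphi_1 \until_{(a,b)} \varphi_2$ (and, with more care, $\varphi_1 \guntil_{(a,b)}^c \varphi_2$) the point is that the auxiliary indices $j_1, j_2, j_3, j_4$ driving the restarts all advance monotonically in the position index, so that each array entry is filled exactly once over the whole run; the total work spent on the metric rows is thus linear in the number of events times $O(|\hat{\varphi}|)$, giving amortised $O(|\hat{\varphi}|)$ per event. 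The subtle point, where I expect to spend the most effort, is verifying that the restart threshold $j_5 = \max\{j_1, j_3, j_4 - 1\}$ never forces recomputation of an already-settled entry and never skips an entry that has become determinable---i.e.\ that the monotone pointers correctly detect precisely when a prefix of the row has become independent of all future events, using the bounded-variability assumption. Once this \emph{fill-once} invariant is secured, the three bounds combine to give the theorem, with the whole procedure manifestly trace-length independent.
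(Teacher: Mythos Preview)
Your proposal is correct and follows essentially the same approach as the paper: the theorem is stated there as a summary of the preceding discussion, and the paper's justification is precisely the three ingredients you identify---the earlier proposition for the DFA sizes, the na\"{\i}ve-evaluation paragraph for the array size, and the incremental-evaluation paragraph (monotone pointers, fill-once) for the amortised time bound. Your plan to make the fill-once invariant explicit is exactly the point the paper leaves implicit with ``the truth value of any subformula at any position will be filled only once.''
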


\paragraph{\emph{Correctness}}

We now show that our procedure is sound and complete for detecting informative prefixes.
\begin{prop}\label{prop:boundedpartinformative} For a bounded
\emph{\mtlpastg{}} formula $\psi$, a finite trace $\rho = (\sigma, \tau)$ and a
position $0 \leq i < |\rho|$ such that $\tau_i + \mathit{fr}(\psi) \leq
\tau_{|\rho| - 1}$, we have \[ \rho, i
\modelsfs \psi \iff \rho, i \modelsfn \psi \iff
\rho, i \modelsfw \psi \,. \] \end{prop}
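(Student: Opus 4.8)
The plan is to reduce the three-way equivalence to a single implication and then induct on the structure of $\psi$. By Proposition~\ref{prop:stontow} the implications $\rho, i \modelsfs \psi \Rightarrow \rho, i \modelsfn \psi \Rightarrow \rho, i \modelsfw \psi$ hold unconditionally, so it suffices to prove the reverse implication $\rho, i \modelsfw \psi \Rightarrow \rho, i \modelsfs \psi$ under the hypothesis $\tau_i + \mathit{fr}(\psi) \leq \tau_{|\rho| - 1}$; chaining this with the two free implications collapses all three relations. I would prove $\rho, i \modelsfw \psi \Rightarrow \rho, i \modelsfs \psi$ by structural induction on $\psi$, treating $\tau_i + \mathit{fr}(\psi) \leq \tau_{|\rho|-1}$ as a side condition that has to be re-established for every recursive invocation. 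The atomic and conjunction cases are immediate (for $\varphi_1 \wedge \varphi_2$ note $\mathit{fr}(\varphi_i) \leq \mathit{fr}(\psi)$, so the side condition descends to both conjuncts). Negation is handled by contraposition: since $\mathit{fr}(\neg\varphi) = \mathit{fr}(\varphi)$, the induction hypothesis yields $\rho, i \modelsfw \varphi \Rightarrow \rho, i \modelsfs \varphi$, whose contrapositive turns $\rho, i \modelsfw \neg\varphi$ (that is, $\rho, i \centernot\modelsfs \varphi$) into $\rho, i \modelsfs \neg\varphi$ (that is, $\rho, i \centernot\modelsfw \varphi$).

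For the past modalities $\since_I$ and $\gsince^c_I$ the three definitions have syntactically identical shape, differing only in the satisfaction relation applied to the immediate subformulas, so the work is pure bookkeeping. Given a weak witness $j < i$ for $\varphi_1 \since_I \varphi_2$ (resp.\ $\varphi_1 \gsince^c_I \varphi_2$), I would use the induction hypothesis to upgrade $\rho, j \modelsfw \varphi_2$ and each intermediate $\rho, k \modelsfw \varphi_1$ to their strong counterparts, producing a strong witness of the same form. The only thing to verify is that the side condition propagates: from $\tau_j \leq \tau_i - \inf(I)$ one gets $\tau_j + \mathit{fr}(\varphi_2) \leq \tau_i + (\mathit{fr}(\varphi_2) - \inf(I)) \leq \tau_i + \mathit{fr}(\psi) \leq \tau_{|\rho|-1}$, and from $\tau_k < \tau_i$ one gets $\tau_k + \mathit{fr}(\varphi_1) < \tau_i + \mathit{fr}(\varphi_1) \leq \tau_i + \mathit{fr}(\psi) \leq \tau_{|\rho|-1}$, both using the defining inequalities for $\mathit{fr}$ of a `Since' (resp.\ generalised `Since') formula.

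The crux, and the main obstacle, is the future modalities $\until_I$ and $\guntil^c_I$, whose \emph{weak} semantics carries an extra optimistic disjunct that declares satisfaction merely because the trace ends too early, namely when $\tau_{|\rho|-1} - \tau_i < \sup(I)$. The entire role of the future-reach bound is to eliminate this disjunct. Since $\mathit{fr}(\varphi_1 \until_I \varphi_2) = \sup(I) + \max\bigl(\mathit{fr}(\varphi_1), \mathit{fr}(\varphi_2)\bigr) \geq \sup(I)$, and likewise $\mathit{fr}(\varphi_1 \guntil^c_I \varphi_2) \geq \sup(I) + \mathit{fr}(\varphi_2) \geq \sup(I)$, the hypothesis $\tau_i + \mathit{fr}(\psi) \leq \tau_{|\rho|-1}$ forces $\tau_{|\rho|-1} - \tau_i \geq \sup(I)$, directly contradicting the precondition of the optimistic disjunct. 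With that disjunct discarded, weak satisfaction must arise from a genuine witness $j$ with $\tau_j - \tau_i \in I$, whose clause has exactly the same shape as the strong one; applying the induction hypothesis to $\varphi_2$ at $j$ and to $\varphi_1$ at the intermediate positions $k$ (their side conditions following from $\tau_j \leq \tau_i + \sup(I)$ and $\tau_k < \tau_j$ together with $\mathit{fr}(\psi) \geq \sup(I) + \mathit{fr}(\varphi_\bullet)$) yields the required strong witness.

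The single delicate point to pin down is the inequality $\mathit{fr}(\psi) \geq \sup(I)$ used to kill the optimistic clause: it rests on the immediate subformulas having \emph{nonnegative} future-reach, which is precisely the reading of $\mathit{fr}$ as a time-horizon \emph{length} (one may clamp each defining equation at $0$ without altering any truth value, since a negative future-reach only records that no future information is needed). I would therefore record this nonnegativity convention up front as the enabling observation, so that the $\max$-terms appearing in the future-reach of `Until' and generalised `Until' formulas can never pull $\mathit{fr}(\psi)$ below $\sup(I)$. Everything else in the argument is routine unfolding of the three satisfaction definitions and careful transport of the side condition through the induction.
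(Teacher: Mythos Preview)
The paper states this proposition without proof, so there is no argument to compare against; your proposal is exactly the structural induction one would expect the authors had in mind, and it is correct. Your identification of the one genuinely non-routine point---that the ``optimistic'' disjunct in the weak semantics of $\until_I$ and $\guntil^c_I$ is ruled out because $\mathit{fr}(\psi) \geq \sup(I)$, which in turn needs $\mathit{fr}$ to be nonnegative on subformulas---is spot on, and your proposed fix (clamping $\mathit{fr}$ at $0$) is the right way to handle the fact that the paper's definition of $\mathit{fr}$ for $\gsince^c_I$ can formally go negative.

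One small refinement: in the $\guntil^c_I$ case, propagating the side condition to $\varphi_1$ at an intermediate position $k$ uses not just $\tau_k < \tau_j$ but the sharper bound $\tau_k < \tau_j - (a - c) \leq \tau_i + c + |I|$ (coming from the constraint $\tau_j - \tau_k > a - c$ in the semantics), which then matches the $c + |I| + \mathit{fr}(\varphi_1)$ term in the definition of $\mathit{fr}(\varphi_1 \guntil^c_I \varphi_2)$. Your sketch folds this into ``similar to $\until_I$'', which is fine for a proposal, but you should spell it out in the final write-up.
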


\begin{prop}\label{prop:lengthen}
For an \emph{\mtlpastg{}} formula $\varphi$, a finite trace $\rho$ and a position $i$ in $\rho$,
if $\rho$ is a prefix of a longer finite trace $\rho'$, then
\[
\rho, i \modelsfs \varphi \implies \rho', i \modelsfs \varphi
\text{ and } \rho, i \centernot \modelsfw \varphi \implies \rho', i
\centernot \modelsfw \varphi \,.
\]
\end{prop}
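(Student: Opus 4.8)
The plan is to prove both implications simultaneously by structural induction on $\varphi$, strengthened so that it holds at every position of $\rho$ at once. Because the clauses $\rho, i \modelsfs \neg\varphi$ iff $\rho, i \centernot\modelsfw \varphi$ and $\rho, i \modelsfw \neg\varphi$ iff $\rho, i \centernot\modelsfs \varphi$ interchange the strong and weak relations, the two implications cannot be established in isolation, so I would carry them together. It is cleanest to state the second implication in contrapositive form, so that the induction hypothesis reads: for every position $p$ in $\rho$,
\[
(\mathrm{A})\ \rho, p \modelsfs \varphi \implies \rho', p \modelsfs \varphi
\quad\text{and}\quad
(\mathrm{B})\ \rho', p \modelsfw \varphi \implies \rho, p \modelsfw \varphi \,.
\]
Throughout I would use that, since $\rho$ is a prefix of $\rho'$, every position $p < |\rho|$ carries the same label and the same timestamp in both words.

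First I would dispatch the routine cases. For $\varphi = \mathbf{true}$ and $\varphi = P$ both $(\mathrm{A})$ and $(\mathrm{B})$ are immediate from the coincidence of labels at positions of $\rho$. For conjunction and negation the claims follow directly from the induction hypothesis; in the negation step $(\mathrm{A})$ for $\neg\varphi$ is exactly $(\mathrm{B})$ for $\varphi$ written contrapositively, and vice versa, which is precisely why the two statements must be proved in tandem. The genuinely past modalities $\since_I$ and $\gsince_I^c$ are also easy: their strong and weak semantics quantify only over positions $j,k$ with $j \le k < i < |\rho|$, all of which already lie in $\rho$ with unchanged timestamps, so the witnessing positions transfer verbatim and the induction hypothesis closes both directions.

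The main obstacle is the future modalities $\until_I$ and $\guntil_I^c$ in the weak view, i.e.\ implication $(\mathrm{B})$, because weak satisfaction of an `Until' carries a \emph{length-dependent} disjunct. For $(\mathrm{A})$ there is no difficulty: a strong witness $j < |\rho|$ is still a position of $\rho'$ with the same timestamp, so induction hypothesis $(\mathrm{A})$ applied at $j$ and at the intermediate positions reproduces the strong witness in $\rho'$. For $(\mathrm{B})$, assume $\rho', i \modelsfw \varphi_1 \until_I \varphi_2$ and split on where the witness lies. If the existential disjunct holds with a witness $j < |\rho|$, then $j$ and all intermediate positions already belong to $\rho$, and induction hypothesis $(\mathrm{B})$ for $\varphi_1,\varphi_2$ gives the same witness in $\rho$. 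If instead $j \ge |\rho|$ (the witness lies in the extension), I would fall back to the length-sensitive disjunct for $\rho$: strict monotonicity of timestamps gives $\tau_{|\rho|-1} - \tau_i < \tau_j - \tau_i \le \sup(I)$, hence $\tau_{|\rho|-1} - \tau_i < \sup(I)$, while every position $k$ with $i < k < |\rho|$ referenced by that disjunct automatically satisfies $i < k < j$, so induction hypothesis $(\mathrm{B})$ supplies $\rho, k \modelsfw \varphi_1$. Finally, if $\rho'$ satisfied the formula only through its own length-dependent disjunct, then $\tau_{|\rho|-1} \le \tau_{|\rho'|-1}$ and the inclusion of position ranges immediately yields the same disjunct for $\rho$. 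The case of $\guntil_I^c$ is handled identically, the only extra bookkeeping being that the side-conditions $\tau_k - \tau_i > c$ and $\tau_j - \tau_k > a-c$ (with $a = \inf(I)$) must persist when $j$ is replaced by $|\rho|-1$; here again strict monotonicity converts the non-strict bound $\tau_{|\rho|-1} - \tau_k \ge a-c$ into $\tau_j - \tau_k > a-c$ whenever $j \ge |\rho|$. The subtle point throughout is exactly this conversion between the closed comparison `$\in I$' of the existential disjunct and the strict comparison `$< \sup(I)$' of the length-dependent disjunct, which is what strict monotonicity of the time sequence guarantees.
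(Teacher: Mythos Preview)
Your proposal is correct and is exactly the argument one would expect; the paper in fact states this proposition without proof, treating it as a routine structural induction. Your decision to prove $(\mathrm{A})$ and the contrapositive $(\mathrm{B})$ simultaneously is necessary (because of the negation clause), and your handling of the length-dependent disjunct in the weak semantics of $\until_I$ and $\guntil_I^c$---in particular the use of strict monotonicity to turn $\tau_{|\rho|-1}-\tau_k \ge a-c$ into $\tau_j-\tau_k > a-c$ when $j \ge |\rho|$---is the only non-trivial step and is carried out correctly.
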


\begin{thm}[Soundness]\label{thm:soundness}
In our procedure, if we ever reach an accepting state of
$\mathcal{D}_\mathit{good}$ ($\mathcal{D}_\mathit{bad}$) via a finite word $u \in
\Sigma_Q^*$, then we must eventually read an
informative good (bad) prefix for $\hat{\varphi}$. \end{thm}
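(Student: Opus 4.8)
The plan is to factor the argument through the untimed backbone $\Phi$ and to exploit that the propositions in $\mathbf{Q}$ are fed to the automata only once their underlying bounded subformulas have \emph{settled}. Write $\rho$ for the finite trace read at the moment $\mathcal{D}_\mathit{good}$ reaches an accepting state via $u \in \Sigma_Q^*$, and set $\ell = |u|$. First I would invoke the construction of Section~\ref{subsec:monitoring}: by the Kupferman--Vardi characterisation of informative prefixes (adapted to two-way automata via~\cite{Gastin2003, Birget1993}), $\mathcal{D}_\mathit{good}$ accepts exactly the informative good prefixes for the \ltlpast{} formula $\Phi$ over $\mathbf{Q}$. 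Hence acceptance of $u$ yields $u \modelsfs \Phi$ in the untimed truncated semantics, and this strong satisfaction consults only positions in $\{0, \dots, \ell - 1\}$.

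Second, I would pin down the meaning of the symbols of $u$. The monitor commits the $Q_k$-value of a position $i$ only after $\tau_i + \mathit{fr}(\psi_k) \le \tau_{|\rho| - 1}$, i.e.\ once $i$ is settled for $\psi_k$. By Proposition~\ref{prop:boundedpartinformative} the strong, neutral and weak views of $\psi_k$ then coincide at $i$, and by Proposition~\ref{prop:lengthen} this common value is preserved under every lengthening of $\rho$. Thus each symbol of $u$ records the stable, view-independent truth value of $\psi_k$ at the corresponding settled position of $\rho$ and of every extension of $\rho$.

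The core step is a \emph{Lifting Lemma}, proved by induction on $\Phi$: for every settled position $i < \ell$ one has $u, i \modelsfs \Phi \implies \rho, i \modelsfs \hat{\varphi}$ and, dually, $u, i \centernot\modelsfw \Phi \implies \rho, i \centernot\modelsfw \hat{\varphi}$. The atomic case is exactly the second step; the Boolean cases are routine once we know $\modelsfs$ and $\modelsfw$ are interchanged by negation (Proposition~\ref{prop:stontow}); and since every temporal operator of $\Phi$ is the \emph{unbounded} $\until$ or $\since$, its strong and weak semantics over the timed word $\rho$ reduce to the position-based untimed semantics over $u$. Specialising to $i = 0$ gives $\rho \modelsfs \hat{\varphi}$, so the trace $\rho$ just read is an informative good prefix for $\hat{\varphi}$, which is what the theorem asks. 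The bad case follows symmetrically, or directly from Proposition~\ref{prop:negxchggoodbad} together with the observation that $\mathcal{D}_\mathit{bad}$ is the good-prefix automaton for $\neg \Phi$, which corresponds to $\neg \hat{\varphi}$.

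The main obstacle is the Lifting Lemma, and specifically two interacting issues: the truncated semantics splits into a strong and a weak view which negations inside $\Phi$ swap, and $\rho$ carries an \emph{unsettled tail} (positions $\ge \ell$) present only to settle the last committed symbol of $u$. Both are resolved by the same observation: strong satisfaction of $\Phi$ over a finite word of length $\ell$ only ever references positions $< \ell$, all of which are settled, so by the second step the relevant atoms agree in all three views there, collapsing the strong/weak asymmetry; and any witness demanded by a strong unbounded $\until$ sits at such a settled position of $\rho$, which the unsettled tail can only preserve, never destroy. A final point requiring care is that, for the bad case, a weak-view failure of $\Phi$ must propagate to $\hat{\varphi}$; this again uses Proposition~\ref{prop:stontow} and the upward preservation of non-weak satisfaction guaranteed by Proposition~\ref{prop:lengthen}.
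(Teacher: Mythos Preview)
Your approach is essentially the paper's: obtain $u \modelsfs \Phi$ from acceptance by $\mathcal{D}_\mathit{good}$, then lift to $\rho \modelsfs \hat{\varphi}$ via a structural induction over the subformulas of $\Phi$, with Proposition~\ref{prop:boundedpartinformative} supplying the atomic base case (your Lifting Lemma is exactly the paper's displayed claim, just stated at the top level rather than uniformly for subformulas). One small correction: the exchange of $\modelsfs$ and $\modelsfw$ under negation is built into the definition of the truncated semantics, not Proposition~\ref{prop:stontow}, which only records the implication chain $\modelsfs \Rightarrow \modelsfn \Rightarrow \modelsfw$.
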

\begin{proof} For such $u$ and a corresponding $\rho = (\sigma, \tau)$
such that $\tau_{|u| - 1} + l_\mathit{fr}^Q \leq \tau_{|\rho| - 1}$, we have
\[ \forall i \in \ropen{0, |u|} \, \big( (u, i \modelsfs \Psi \implies \rho, i \modelsfs \psi)
\wedge
(u, i \centernot
\modelsfw \Psi \implies \rho, i \centernot \modelsfw \psi) \big)
 \] where $\Psi$ is a subformula of $\Phi$ and $\psi =
\Psi(\psi_1, \ldots, \psi_m)$. This can easily be proved by
structural induction and Proposition~\ref{prop:boundedpartinformative}. If $u$ is accepted by $\mathcal{D}_\mathit{good}$, we
have $u, 0 \modelsfs \Phi$ by construction. By the above we have $\rho, 0
\modelsfs \Phi(\psi_1, \ldots, \psi_m)$, as desired. The case of
$\mathcal{D}_\mathit{bad}$ is symmetric.
\end{proof}

\begin{thm}[Completeness]\label{thm:completeness}
Whenever we read an informative good (bad) prefix $\rho = (\sigma,
\tau)$ for $\hat{\varphi}$, $\mathcal{D}_\mathit{good}$ ($\mathcal{D}_\mathit{bad}$) must
eventually reach an accepting state. \end{thm}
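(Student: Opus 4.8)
The plan is to prove the exact converse of the correspondence used in Theorem~\ref{thm:soundness}. It suffices to treat the good case, the bad case being symmetric (or immediate from Proposition~\ref{prop:negxchggoodbad} applied to $\neg\hat{\varphi}$). Recall that, by construction, $\mathcal{D}_\mathit{good}$ accepts a finite word $u$ over $\Sigma_\mathbf{Q}$ precisely when $u \modelsfs \Phi$. So, writing $\psi = \Psi(\psi_1,\ldots,\psi_m)$ for each subformula $\Psi$ of the backbone $\Phi$, the goal is to show that once $\rho$ is an informative good prefix for $\hat{\varphi}$ (i.e.\ $\rho \modelsfs \hat{\varphi}$), the word fed to $\mathcal{D}_\mathit{good}$ eventually satisfies $u \modelsfs \Phi$, whence $\mathcal{D}_\mathit{good}$ is in an accepting state after reading it.

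First I would pin down the word $u$. Since the monitored trace is infinite and non-Zeno its timestamps diverge, so after finitely many further events the monitor will have read an extension $\rho'$ of $\rho$ with $\tau_{|\rho|-1} + l_\mathit{fr}^\mathbf{Q} \leq \tau_{|\rho'|-1}$; let $u$ be the confirmed $\Sigma_\mathbf{Q}$-word the monitor has emitted for positions $0,\ldots,|\rho|-1$ by that time (this is where \emph{``eventually''} comes from). Every such position is then interior in $\rho'$, so by Proposition~\ref{prop:boundedpartinformative} the letter $u_i$ is stable: $Q_l \in u_i$ iff $\rho', i \modelsfs \psi_l$ iff $\rho', i \modelsfw \psi_l$. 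Combined with monotonicity under extension (Proposition~\ref{prop:lengthen}), this yields the base-case correspondence $\rho, i \modelsfs \psi_l \Rightarrow Q_l \in u_i$ and $\rho, i \centernot\modelsfw \psi_l \Rightarrow Q_l \notin u_i$, for all $i < |\rho|$.

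The core step is then a structural-induction lemma dual to the one in the proof of Theorem~\ref{thm:soundness}: for every subformula $\Psi$ of $\Phi$ and every position $i < |\rho|$,
\[
\big(\rho, i \modelsfs \psi \Rightarrow u, i \modelsfs \Psi\big) \ \wedge\ \big(\rho, i \centernot\modelsfw \psi \Rightarrow u, i \centernot\modelsfw \Psi\big),
\]
proved by simultaneous induction on $\Psi$. The base case is the correspondence just established; negation is handled by the truncated-semantics dualities $\modelsfs\neg = \centernot\modelsfw$ and $\modelsfw\neg = \centernot\modelsfs$, which interlock the two implications. Since $\Phi$ is \emph{untimed}, its $\until$ and $\since$ act over the common position structure shared by $\rho$ and $u$, and a strong witness for $\rho, i \modelsfs \Psi$ always lies within $[0,|\rho|)$, hence in the confirmed range of $u$, so the induction hypothesis transfers it unchanged. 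Instantiating the lemma at $i = 0$ gives $\rho \modelsfs \Phi(\psi_1,\ldots,\psi_m) \Rightarrow u \modelsfs \Phi$, so $u$ is an informative good prefix for $\Phi$ and $\mathcal{D}_\mathit{good}$ accepts, as required.

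I expect the main obstacle to be the $\centernot\modelsfw$ step for $\until$ (and $\guntil^c_I$), where one must rule out a \emph{spurious} weak witness appearing in $u$, since $u$ may be longer than $\rho$. The observation that makes the induction go through is a confinement argument: from $\rho, i \centernot\modelsfw (\varphi_1 \until \varphi_2)$ one extracts the least position $k_0 \in (i,|\rho|)$ at which $\varphi_1$ fails weakly and shows that $\varphi_2$ fails weakly throughout $(i,k_0]$; any candidate weak witness $j$ in $u$ must then satisfy $j \leq k_0 < |\rho|$, so the induction hypothesis for $\varphi_2$ applies to $j$ and contradicts it. This confinement, together with the fact that the evidence for an informative good prefix never reaches past $\rho$, is precisely where Propositions~\ref{prop:lengthen} and~\ref{prop:boundedpartinformative} are used in tandem.
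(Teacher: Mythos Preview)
Your approach is essentially identical to the paper's: prove by structural induction on the backbone $\Phi$ that $\rho, i \modelsfs \psi \Rightarrow u', i \modelsfs \Psi$ and $\rho, i \centernot\modelsfw \psi \Rightarrow u', i \centernot\modelsfw \Psi$ for the word $u'$ of length $|\rho|$ obtained once enough look-ahead has arrived, with the base case discharged via Propositions~\ref{prop:lengthen} and~\ref{prop:boundedpartinformative}. Your final paragraph is unnecessary and slightly self-contradictory, though: having already fixed $|u| = |\rho|$, there is no position of $u$ outside $[0,|\rho|)$ and hence no ``spurious weak witness'' to confine (and note that $\guntil^c_I$ never occurs in the untimed backbone $\Phi$---it lives only inside the bounded $\psi_l$, which you have already handled at the base).
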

\begin{proof}
For the finite word $u' \in \Sigma_Q^*$ obtained a bit later with $|u'| = |\rho|$,
\[
\forall i \in \ropen{0, |u'|} \, \big( (\rho, i \modelsfs \psi
\implies u', i \modelsfs \Psi) \wedge (\rho, i \centernot
\modelsfw \psi \implies u', i \centernot \modelsfw \Psi)
\big) \] where $\Psi$ is a subformula of $\Phi$ and $\psi =
\Psi(\psi_1, \ldots, \psi_m)$. This can be proved by
structural induction and Proposition~\ref{prop:lengthen}. The theorem follows.
\end{proof}

\subsection{Preservation of informative prefixes}

As we have seen earlier in Example~\ref{ex:pathologicallysafe} and~\ref{ex:intentionallysafe},
it is possible for two equivalent \mtlpastg{} formulas to have different sets of informative good/bad prefixes.
In this section, we show that this is cannot be the case when the two formulas
are related by one of the rewriting rules in Section~\ref{sec:separation}.
In other words, the rewriting rules in
Section~\ref{sec:separation} preserves the `informativeness' of
formulas.

\begin{lem}\label{lem:informativenesspreservation}
For an \emph{\mtlpastg{}} formula $\varphi$, let $\varphi'$ be the formula obtained from $\varphi$
by applying one of the rules in Section~\ref{sec:separation} on
some of its subformula. We have \[
\rho \modelsfs \varphi \iff \rho \modelsfs \varphi'
\text{ and } \rho \modelsfw \varphi \iff \rho \modelsfw
\varphi' \,. \]
\end{lem}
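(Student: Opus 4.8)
The plan is to decompose the lemma into two independent ingredients: a \emph{congruence} principle for the combined strong/weak satisfaction relation, and a \emph{rule-by-rule} verification that each equivalence of Section~\ref{sec:separation} holds in the truncated finite-trace semantics (not merely neutrally over infinite timed words). Write $\psi \approx \psi'$ to mean that for every finite timed word $\rho$ and every position $i$ in $\rho$ we have both $\rho, i \modelsfs \psi \iff \rho, i \modelsfs \psi'$ and $\rho, i \modelsfw \psi \iff \rho, i \modelsfw \psi'$. Since $\varphi'$ is obtained from $\varphi$ by rewriting a single subformula occurrence, the lemma follows immediately once (a)~$\approx$ is shown to be a congruence for all \mtlpastg{} connectives, and (b)~the two sides of the rewriting rule in question are shown to be $\approx$-related.

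For (a), I would argue by structural induction on the context surrounding the rewritten occurrence, carrying both the strong and the weak clause through the induction simultaneously. The point is that, inspecting the definitions of $\modelsfs$ and $\modelsfw$, every temporal operator is \emph{same-flavour}: the strong reading of a compound refers only to the strong readings of its immediate subformulas, and likewise for the weak reading (including the extra ``run-off-the-end'' disjuncts of $\until_I$ and $\guntil^c_I$, which mention only $\modelsfw$ of the subformulas). Hence conjunction and the four temporal modalities preserve $\approx$ directly. The only flavour-flipping constructor is negation, where $\rho,i\modelsfs\neg\chi \iff \rho,i\centernot\modelsfw\chi$ and $\rho,i\modelsfw\neg\chi \iff \rho,i\centernot\modelsfs\chi$; this is exactly why both relations must be tracked together, and it makes the polarity of the rewritten occurrence irrelevant. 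This closes the induction.

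For (b), each rule $\psi \iff \psi'$ of Section~\ref{sec:separation} was established as a \emph{neutral} equivalence over infinite timed words; I would re-run each of those witness-based case analyses under $\modelsfs$ and under $\modelsfw$ over an arbitrary finite trace. Under the strong reading the unbounded operators are pessimistic (e.g.\ an unbounded $\until$ or a $\globally$ can never be discharged by the finite suffix), and the existing arguments transcribe with the bound ``$j < |\rho|$'' on witness positions; the Boolean and normal-form rules, together with Proposition~\ref{prop:stontow}, are then routine. I would explicitly note that the familiar untimed shortcut---identifying strong satisfaction with neutral satisfaction on an all-false $\omega$-padding and weak with an all-true padding---is \emph{not} available here: in the pointwise metric semantics the strong/weak value of a punctual formula such as $\eventually_{=1}\mathbf{true}$ depends on the actual presence of an event at a prescribed time, which no fixed padding can reproduce. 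This is what forces the direct, rule-local reasoning.

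The hard part will be the weak reading of the \emph{extraction} rules, namely Proposition~\ref{prop:extract} and its generalised-``Until'' counterpart, together with Proposition~\ref{prop:removeunboundedguntil}. Their neutral proofs already hinge on delicate bookkeeping of maximal witness positions and the ``strip'' subformulas $\globally_{(b,2b)}\varphi_1$ and $\globally_{\lopen{0,b}}\varphi_1$; under $\modelsfw$ one must additionally match, on both sides of each equivalence, the boundary case in which the relevant obligation is deferred past the end of the trace (the clause ``$\tau_{|\rho|-1}-\tau_i<\sup(I)$ and $\varphi_1$ holds throughout the remaining window''). I expect the bulk of the work, and the only genuine subtlety, to lie in showing that this deferral is triggered on the left-hand side exactly when it is on the right-hand side; the strong reading and all the remaining rules should then follow by the same, but simpler, transcription.
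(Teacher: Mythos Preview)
Your proposal is correct and follows essentially the same approach as the paper: the paper opens with ``Since the satisfaction relations are defined inductively, we can work directly on the relevant subformulas'' (your congruence step (a)) and then verifies representative rules under both $\modelsfs$ and $\modelsfw$ (your step (b)), singling out the extraction rules of Proposition~\ref{prop:extract} as the ones requiring the most care. Your explicit articulation of why both flavours must be carried together through negation, and your remark that the $\omega$-padding trick fails in the pointwise metric setting, are not in the paper but are welcome clarifications rather than a different route.
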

Given the lemma above, we can state the following theorem on any \mtlpast{} formula
$\varphi$ and the equivalent formula $\hat{\varphi}$ (of our desired form)
obtained from $\varphi$ by applying the rewriting rules in Section~\ref{sec:separation}.
\begin{thm}\label{thm:coincide}
The set of informative good prefixes of $\varphi$ coincides with the
set of informative good prefixes of $\hat{\varphi}$. The same holds for
informative bad prefixes. \end{thm}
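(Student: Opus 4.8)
The plan is to obtain Theorem~\ref{thm:coincide} as an iterated application of Lemma~\ref{lem:informativenesspreservation}. The first step is to recall the definitions: $\rho$ is an informative good prefix for a formula $\psi$ exactly when $\rho \modelsfs \psi$, and an informative bad prefix exactly when $\rho \centernot\modelsfw \psi$. Hence it suffices to establish, for every finite trace $\rho$, the two equivalences
\[
\rho \modelsfs \varphi \iff \rho \modelsfs \hat{\varphi}
\]
and
\[
\rho \modelsfw \varphi \iff \rho \modelsfw \hat{\varphi} \,.
\]
The first of these shows that $\varphi$ and $\hat{\varphi}$ have the same informative good prefixes, and the second, read contrapositively, shows that they have the same informative bad prefixes.

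First I would make explicit that the passage from $\varphi$ to $\hat{\varphi}$ is realised by a \emph{finite sequence} of single rewriting steps. By inspection of Section~\ref{sec:separation}---the normal-form rules, Proposition~\ref{prop:removeunboundedguntil}, the extraction rules of Proposition~\ref{prop:extract} together with their $\guntil$/$\gsince$ counterparts, and the separation step underlying Theorem~\ref{thm:separation}---and using Lemma~\ref{lem:termination} to guarantee termination, one fixes a derivation
\[
\varphi = \varphi_0 \to \varphi_1 \to \cdots \to \varphi_n = \hat{\varphi}
\]
in which each $\varphi_{k+1}$ is obtained from $\varphi_k$ by applying exactly one of the rules of Section~\ref{sec:separation} to one of its subformulas. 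Since $\hat{\varphi}$ is only specified as \emph{some} syntactically separated formula produced by the rewriting process, the point here is simply to fix one such witnessing derivation and argue about it.

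Then I would prove the two displayed equivalences by induction on $n$. The base case $n = 0$ is immediate. For the inductive step, Lemma~\ref{lem:informativenesspreservation} applies verbatim to the single step $\varphi_k \to \varphi_{k+1}$; crucially, the lemma is already stated for a rule applied to an arbitrary subformula, so no separate argument about nested occurrences is needed. It yields $\rho \modelsfs \varphi_k \iff \rho \modelsfs \varphi_{k+1}$ and $\rho \modelsfw \varphi_k \iff \rho \modelsfw \varphi_{k+1}$. Chaining these equivalences by transitivity along the whole derivation gives $\rho \modelsfs \varphi \iff \rho \modelsfs \hat{\varphi}$ and $\rho \modelsfw \varphi \iff \rho \modelsfw \hat{\varphi}$, which is exactly what was required.

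I do not expect the theorem itself to present a genuine obstacle: all the real work was already carried out in Lemma~\ref{lem:informativenesspreservation}, whose proof must verify, rule by rule, that each rewriting equivalence is faithful in the \emph{strong} and \emph{weak} truncated semantics---and not merely truth-preserving in the neutral/infinite semantics, which (as Examples~\ref{ex:pathologicallysafe} and~\ref{ex:intentionallysafe} show) would be insufficient to guarantee coincidence of informative prefixes. The only thing to be vigilant about in the present proof is the bookkeeping: confirming that the full normalisation-and-separation pipeline decomposes into atomic rule applications, each covered by the lemma, so that the transitivity of $\iff$ may legitimately be invoked. Once that is granted, Theorem~\ref{thm:coincide} follows at once.
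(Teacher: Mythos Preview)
Your proposal is correct and matches the paper's approach: the paper states Theorem~\ref{thm:coincide} as an immediate consequence of Lemma~\ref{lem:informativenesspreservation} without spelling out the induction, and your write-up simply makes that chaining argument explicit. One small remark: the $\hat{\varphi}$ needed for the monitoring procedure is already obtained via Lemma~\ref{lem:termination} (normal-form plus extraction rules), so you need not invoke the full separation of Theorem~\ref{thm:separation}; restricting the derivation to those rules keeps every step squarely within the scope of Lemma~\ref{lem:informativenesspreservation}.
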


We now have a way to detect the informative good/bad prefixes for an arbitrary
\mtlpastg{} formula $\varphi$: use the rewriting rules to obtain $\hat{\varphi}$, and apply
the monitoring procedure we described in the last subsection.
The monitor only needs a bounded amount of memory,
even for complicated \mtlpastg{} formulas with arbitrary nestings of (bounded and unbounded) past and future operators.
\begin{proof}[Proof of Lemma~\ref{lem:informativenesspreservation}]

Since the satisfaction relations are defined inductively, we can work directly on the relevant subformulas.
We would like to prove that for a finite timed word $\rho$ and a position $i$ in $\rho$,
\[
\rho, i \modelsfs \phi \iff \rho, i \modelsfs \phi' \text{ and } \rho, i \modelsfw \phi \iff \rho, i \modelsfw \phi'
\]
where $\phi \iff \phi'$ matches one of the rules in Section~\ref{sec:separation}.
For a group of similar rules we will only prove a representative one, as the proof for others follow similarly. In the following let the LHS be $\phi$ and RHS be $\phi'$.

\begin{itemize}[itemsep=0.7em,topsep=0.7em]
\item $\varphi_1 \until_{(a, \infty)} \varphi_2  \iff  \varphi_1 \until \varphi_2 \wedge \globally_{\lopen{0, a}} (\varphi_1 \wedge \varphi_1 \until \varphi_2)$:
	\begin{itemize}[itemsep=0.5em,topsep=0.5em]
	\item $\rho, i \modelsfs \phi \iff \rho, i \modelsfs \phi'$:

	Assume $\rho, i \modelsfs \phi$. By definition we have $\rho, i \modelsfs \varphi_1 \until \varphi_2$.
	If there is no event in $\lopen{\tau_i, \tau_i + a}$, since there must be an event in $\lopen{\tau_i + a, \tau_{|\rho| - 1}}$, we are done.
	If there are events in $\lopen{\tau_i, \tau_i + a}$, then for all $j$ such that $\tau_j - \tau_i \in \lopen{0, a}$ we have $\rho, j \centernot \modelsfw \neg \varphi_1$.
	Also for all such $j$ we have $\rho, j \centernot \modelsfw \neg \varphi_1 \until \varphi_2$ since it is obvious that $\rho, j \modelsfs \varphi_1 \until \varphi_2$.
	For the other direction, if the witness (for $\rho, i \modelsfs \varphi_1 \until \varphi_2$) is in $(\tau_i + a, \tau_{|\rho| - 1})$ then we are done. If this is
	not the case, since $\rho, i \centernot \modelsfw \eventually_{\lopen{0, a}} \big( \neg \varphi_1 \vee \neg (\varphi_1 \until \varphi_2) \big)$, we must have $\tau_{|\rho| - 1} \geq a$.
	Now for all $j$ such that $\tau_j - \tau_i \in \lopen{0, a}$ we have $\rho, j \modelsfs \varphi_1$ and $\rho, j \modelsfs \varphi_1 \until \varphi_2$, which
	imply $\rho, i \modelsfs \phi$.

	\item $\rho, i \modelsfw \phi \iff \rho, i \modelsfw \phi'$:

	Assume $\rho, i \modelsfw \phi$. This holds if there is a witness in $(a, \infty)$ or $\rho, i \modelsfw \globally \varphi_1$.
	In both cases we have $\rho, i \modelsfw \varphi_1 \until \varphi_2$. If there is no event in $\lopen{\tau_i, \tau_i + a}$ then we are done.
	If there is a witness, then for all such $j$ that $\tau_j - \tau_i \in \lopen{0, a}$ we have $\rho, j \modelsfw \varphi_1$ and $\rho, j \modelsfw \varphi_1 \until \varphi_2$.
	If there is no witness then for all such $j$ we again have $\rho, j \modelsfw \varphi_1$ and $\rho, j \modelsfw \varphi_1 \until \varphi_2$.
	For the other direction, if there is no event in $\lopen{\tau_i, \tau_i + a}$ we are done. If there are events in $\lopen{\tau_i, \tau_i + a}$, all $j$
	such that $\tau_j - \tau_i \in \lopen{0, a}$ will satisfy $\rho, j \modelsfw \varphi_1$ and $\rho, j \modelsfw \varphi_1 \until \varphi_2$.
	This clearly gives $\rho, i \modelsfw \phi$.
	\end{itemize}

\item $\varphi_1 \guntil^c_{(a, \infty)} \varphi_2  \iff  \varphi_1 \guntil^c_{\lopen{a, 2a}} \varphi_2 \vee \Big( \eventually^w_{[0, c]} \big(\varphi_1 \until_{(a, \infty)} (\varphi_2 \vee \eventually_{\leq a - c} \varphi_2) \big) \Big)$:

    \noindent
	The proof is very similar to the proof of Proposition~\ref{prop:removeunboundedguntil}.

\item $\neg (\varphi_1 \until \varphi_2) \iff \globally \neg \varphi_2 \vee \big( \neg \varphi_2 \until (\neg \varphi_2 \wedge \neg \varphi_1) \big)$:
	\begin{itemize}[itemsep=0.5em,topsep=0.5em]
	\item $\rho, i \modelsfs \phi \iff \rho, i \modelsfs \phi'$:

	Assume $\rho, i \modelsfs \phi \iff \rho, i \centernot \modelsfw \varphi_1 \until \varphi_2$.
	This implies that $\varphi_1$ fails to hold before $\varphi_2$ holds, and we have $\rho, i \modelsfs \neg \varphi_2 \until (\neg \varphi_2 \wedge \neg \varphi_1)$.
	For the other direction note that $\rho, i \centernot \modelsfs \globally \neg \varphi_2$, the second disjunct must be satisfied,
	and it is easy to see that $\rho, i \modelsfs \phi$.

	\item $\rho, i \modelsfw \phi \iff \rho, i \modelsfw \phi'$:

	Assume $\rho, i \modelsfw \neg (\varphi_1 \until \varphi_2) \iff \rho, i \centernot \modelsfs \varphi_1 \until \varphi_2$.
	This implies either $\rho, j \centernot \modelsfs \varphi_2 \iff \rho, j \modelsfw \neg \varphi_2$ for all $j > i$ in $\rho$
	(this gives $\rho, i \modelsfw \globally \neg \varphi_2$) or $\varphi_1$ fails to hold before $\varphi_2$ holds---$\rho, i \modelsfw \neg \varphi_2 \until (\neg \varphi_2 \wedge \neg \varphi_1)$.
	For the other direction, if $\rho, i \modelsfw \globally \neg \varphi_2 \iff \rho, i \centernot \modelsfs \eventually \varphi_2$
	then $\rho, i \modelsfs \varphi_1 \until \varphi_2$ cannot hold. If $\rho, i \modelsfw \neg \varphi_2 \until (\neg \varphi_2 \wedge \neg \varphi_1)$ then
	either $\rho, i \modelsfw \globally \neg \varphi_2$ or there is a witness, and it is easy to see that $\rho, i \modelsfs \varphi_1 \until \varphi_2$ cannot hold.
	\end{itemize}

\item $\theta \until_{(a, b)} \big( (\varphi_1 \until \varphi_2) \wedge \chi \big) \iff \theta \until_{(a, b)} \big( (\varphi_1 \until_{(0, 2b)} \varphi_2) \wedge \chi \big) \\
\vee \Big( \big( \theta \until_{(a, b)} (\globally_{(0, 2b)} \varphi_1 \wedge \chi) \big) \wedge \varphi_{ugb} \Big)$:

	The proof is very similar to the proof of Proposition~\ref{prop:extract}.

\item $\big( (\varphi_1 \until \varphi_2) \vee \chi \big) \until_{(a, b)} \theta \iff \big( (\varphi_1 \until_{(0, 2b)} \varphi_2) \vee \chi \big) \until_{(a, b)} \theta \\
\vee \bigg( \Big( \big( (\varphi_1 \until_{(0, 2b)} \varphi_2) \vee \chi \big) \until_{(0, b)} (\globally_{(0, 2b)} \varphi_1) \Big) \wedge \eventually_{(a, b)} \theta \wedge \varphi_{ugb} \bigg)$

	\begin{itemize}[itemsep=0.5em,topsep=0.5em]
	\item $\rho, i \modelsfs \phi \iff \rho, i \modelsfs \phi'$:

	Assume $\rho, i \modelsfs \phi$. It is obvious that $\rho, i \modelsfs \eventually_{(a, b)} \theta$ holds. If the first disjunct of $\phi'$ does not hold,
	then $\rho, i \modelsfs \big( (\varphi_1 \until_{(0, 2b)} \varphi_2) \vee \chi \big) \until_{(0, b)} (\globally_{(0, 2b)} \varphi_1)$ must hold.
	The last conjunct holds by an argument similar to the proof of Proposition~\ref{prop:extract}.
	For the other direction, if the first disjunct of $\phi'$ holds then we are done.
	If it does not hold, then there must be a witness (at which $\varphi_2$ holds)
	in $[\tau_i + 2b, \tau_{|\rho| - 1}]$, and it is easy to see that $\rho, i \modelsfs \phi$.

	\item $\rho, i \modelsfw \phi \iff \rho, i \modelsfw \phi'$:

	Assume $\rho, i \modelsfw \phi$. If the first disjunct of $\phi'$ does not hold then there must be events in $[\tau_i + 2b, \tau_{|\rho| - 1}]$.
	It follows that $\rho, i \modelsfw \big( (\varphi_1 \until_{(0, 2b)} \varphi_2) \vee \chi \big) \until_{(0, b)} (\globally_{(0, 2b)} \varphi_1)$ and
	$\rho, i \modelsfw \eventually_{(a, b)} \theta$ must hold. The rest is similar to the proof to Proposition~\ref{prop:extract}.
	For the other direction, if the first disjunct of $\phi'$ holds then we are done. Otherwise if $\tau_{|\rho| - 1} < b$, it is easy to see that
	$\rho, i \modelsfw \phi$. If this is not the case then the proof again follows Proposition~\ref{prop:extract}. \qedhere
	\end{itemize}

\end{itemize}
\end{proof}

\section{Conclusion and future work}\label{sec:conclusion}

\paragraph{\emph{Expressive completeness over bounded timed words}}
We showed that \mtlpast{} extended with our new modalities
`\emph{generalised Until}' and `\emph{generalised Since}' (\mtlpastg{}) is expressively complete for \foone{} over bounded timed words.
Moreover, the time-bounded satisfiability and model-checking problems for \mtlpastg{} remain $\mathrm{EXPSPACE}$-complete, same as that of \mtlpast{}.
The situation here is similar to \ltlpast{} over general,
possibly non-Dedekind complete, linear orders (e.g., the rationals):
in this case, \ltlpast{} can be made expressively complete (for \textmd{\textsf{FO[$<$]}}) by adding the Stavi modalities~\cite{Gabbay1994},
yet the complexity of the satisfiability problem remains $\mathrm{PSPACE}$-complete~\cite{Rabinovich2010}.
Along the way, we also obtained a strict hierarchy of metric temporal logics
based on their expressiveness over bounded timed words.

One drawback of the modalities $\guntil_I^c$ and $\gsince_I^c$ is that they are not very intuitive.
However, as we proved that simpler versions
of these modalities ($\first_I$ and $\pfirst_I$) are strictly less expressive,
we believe it is unlikely that any other expressively complete extension of \mtlpast{} could be much simpler than ours.

The satisfiability and model-checking procedures for \mtlpast{} over time-bounded signals in~\cite{Ouaknine2009}
are based on the satisfiability procedure for \ltlpast{} over signals in~\cite{Reynolds2010}.
While the satisfiability problem for \ltlpast{} remains $\mathrm{PSPACE}$-complete when interpreted over signals,
very few implementations are currently available~\cite{French2013}.
This is in sharp contrast with the discrete case where a number of mature, industrial-strength tools (e.g., SPIN~\cite{Holzmann1997}) are readily available.
Our results enable the direct application of these tools to
time-bounded verification.
Whether this yields efficiency gains in practice, however, can only be
evaluated empirically, which we leave as future work.

\paragraph{\emph{Expressive completeness over unbounded timed words}}
Building upon a previous work of Hunter, Ouaknine and Worrell~\cite{Hunter2012}, we showed that the \emph{rational} version of \mtlpastg{} is expressively complete for \fo{}
over infinite timed words. The result answers an implicit open question in a long line of research started in~\cite{Alur1990}
and further developed in~\cite{Bouyer2005, Prabhakar2006, DSouza2006a, Pandya2011}.

It is known that the \emph{integer} version of \mtlpast{} extended with counting modalities (and their past counterparts)
is expressively complete for \foone{} over the reals~\cite{Hunter2013}.\footnote{This result is stronger than~\cite{Hunter2012} as counting modalities (and their past counterparts) can be expressed in \mtlpast{} with rational endpoints.}
We conjecture that the analogous result holds in the pointwise semantics, i.e.,~the integer version of \mtlpastg{}
becomes expressively complete for \foone{} when we add counting modalities.
Adapting the proof in~\cite{Hunter2013} to the pointwise case, however, is not a straight-forward task.
In particular, the proof relies on the expressive completeness of \mitlpast{} with counting modalities
for \textmd{\textsf{Q2MLO}}~\cite{Hirshfeld2004},
a result that itself requires a highly non-trivial proof~\cite{Hirshfeld2006}
and seems to hold only in the continuous semantics.

Besides expressiveness, another major concern in the study of metric logics
is \emph{decidability}.
We intend to investigate whether the expressiveness of
\mitl{} or \mitlpast{} can be enhanced with the new modalities
while retaining decidability.
Specifically, we would like to answer the following question:
what is the complexity of the satisfiability problem for the logic obtained by
adding $\first_I$ (with non-singular $I$) into \mitl{}?
Since $\first_I$ can be expressed in one-clock alternating timed automata,
it can possibly be handled in the framework of~\cite{Brihaye2014}.
More generally, we may consider \mitlpast{} extended with $\first_I$ and $\pfirst_I$ (with non-singular $I$);
it is not clear whether allowing these modalities simultaneously
leads to undecidability.

\paragraph{\emph{Monitoring}}

We identified an `easy-to-monitor' fragment of \mtlpastg{}, for which we proposed
an efficient trace-length independent monitoring procedure.
This fragment is much more expressive than the fragments previously considered
in the literature.
Moreover, we showed that informative good/bad prefixes
are preserved by the syntactic rewriting rules in Section~\ref{sec:separation}.
It follows that the informative good/bad prefixes for an arbitrary \mtlpastg{} formula
can be monitored in a trace-length independent fashion,
thus overcoming a long-standing barrier to the runtime verification
of real-time properties.

For an arbitrary \mtlpastg{} formula, the syntactic rewriting process could potentially induce a non-elementary blow-up.
In practice, however, the resulting formula is often of
comparable size to the original one, which itself is typically small.
For example, consider the following formula:
\[
\globally \big( \texttt{ChangeGear} \implies \eventually_{(0, 30)} (\texttt{InjectFuel} \wedge \once \texttt{InjectAir}) \big) \,.
\]
The resulting formula after rewriting is
\[
    \arraycolsep=0.3ex
    \begin{array}{rcl}
    \globally \big( \texttt{ChangeGear} & \implies & \eventually_{(0, 30)} (\texttt{InjectFuel} \wedge \once_{(0, 30)} \texttt{InjectAir}) \\
    & & {} \vee ( \eventually_{(0, 30)} \texttt{InjectFuel} \wedge \once \texttt{InjectAir} ) \big) \,.
    \end{array}
\]
In fact, it can be argued that most common real-time specification patterns~\cite{Konrad2005} belong syntactically to our `easy-to-monitor' fragment
and thus need no rewriting.
Another way to alleviate the issue is to allow more liberal syntax
(or more derived operators).
For example, the procedure described in Section~\ref{subsec:monitoring}
can handle subformulas with unbounded past without modification.

To detect informative bad prefixes, our monitoring procedure uses a deterministic finite automaton
doubly-exponential in the size of the input formula.
While such a blow-up is rarely a problem in practice (see~\cite[Section $2.5$]{Bauer2011}),
it would be better if it could be avoided altogether.
In the untimed setting, it is known that if a safety property can be written as an \ltlpast{} formula,
then it is equivalent to a formula of the form $\globally \psi$
where $\psi$ is a past-only \ltlpast{} formula~\cite{Lichtenstein1985}.
So, if we restrict our attention to safety properties,
it suffices to consider formulas of this form,
for which there is an efficient monitoring procedure that
uses $O(| \psi |)$ time (per event) and $O(| \psi |)$ space~\cite{Havelund2001}.
Unfortunately, the question of whether a corresponding result holds for \mtlpast{} (or similar metric temporal logics) is still open.

Our procedure detects only \emph{informative} good/bad prefixes, which themselves can
be regarded as easily-checkable certificates for the fulfilment/violation of the property.
While we believe this limitation is in no way severe---in fact, the limitation
is implicit in almost all current approaches to monitoring real-time properties---there are certain
practical scenarios where detecting \emph{all} good/bad prefixes is preferred.
We could have used two deterministic finite automata that detect all good/bad prefixes
for the backbone \ltlpast{} formula, but still they cannot detect all good/bad prefixes
for the whole formula (consider Example~\ref{ex:pathologicallysafe}).
We leave as future work a procedure that detects all good/bad prefixes.

Finally, we remark that the offline trace-checking problem is of independent theoretical interest~\cite{Markey2003}.
It is known that the trace-checking problem for \ltlpast{}~\cite{Kuhtz2009} and \mtlpast{}~\cite{Bundala2014}
are both in $\mathrm{AC^1[\log \mathrm{DCFL}]}$, yet their precise complexity is still open.
It would be interesting to see whether the construction for \mtlpast{} in~\cite{Bundala2014} carries over to \mtlpastg{}.

\bibliographystyle{alpha}
\bibliography{refs}

\end{document}